\newcommand{\restrict}[1]{\ptcx{here starts a restrict environment, not to be displayed in the arxiv version}{\color{red} #1}}
\newcommand{\ptcx}[1]{\mnote{{\bf ptc:} {\color{red} #1}}}
\newcommand{\mnotex}[1]
{\protect{\stepcounter{mnotecount}}$^{\mbox{\footnotesize
$
\bullet$\themnotecount}}$ \marginpar{
\raggedright\tiny\em
$\!\!\!\!\!\!\,\bullet$\themnotecount: #1} }
\newcommand{\jamesr}[1]{{\color{red}\mnote{{\color{red}{\bf jg:}
#1} }}}
\newcommand{\jamesx}[1]{}
\renewcommand{\jamesx}[1]{{\mnote{{\color{black}{\bf jg:}
#1} }}}
\newcommand{\h}[2]{#1\dotfill\ #2\\\ptc{fixme}}
\newcommand{\imcDpS}{\,\,\mathring{\!\!\mcD}{}^+(\hyp)}
\def\nz{\ifmmode {I\hskip -3pt N} \else {\hbox {$I\hskip -3pt N$}}\fi}
\def\zz{\ifmmode {Z\hskip -4.8pt Z} \else
       {\hbox {$Z\hskip -4.8pt Z$}}\fi}
\def\qz{\ifmmode {Q\hskip -5.0pt\vrule height6.0pt depth 0pt
       \hskip 6pt} \else {\hbox
       {$Q\hskip -5.0pt\vrule height6.0pt depth 0pt\hskip 6pt$}}\fi}
\def\rz{\ifmmode {I\hskip -3pt R} \else {\hbox {$I\hskip -3pt R$}}\fi}
\def\cz{\ifmmode {C\hskip -4.8pt\vrule height5.8pt\hskip 6.3pt} \else
       {\hbox {$C\hskip -4.8pt\vrule height5.8pt\hskip 6.3pt$}}\fi}
\def\au{{\setbox0=\hbox{\lower1.36775ex\hbox{''}\kern-.05em}\dp0=.36775ex\hs
kip0pt\box0}}
\def\ao{{}\kern-.10em\hbox{``}}
\newcommand\Gregbeq{\begin{eqnarray}}
\newcommand\Gregeeq{\end{eqnarray}}
\newcommand{\mod}{\ \mathrm{mod}\, } 
\newcommand{\scri}{{\mycal I}}%
\def\h1{{\hat 1}}
\def\h2{{\hat 2}}
\def\3f{\frac{3}{2}}
\newcommand{\mcDpmSI}{{ {\mcD}_{I}^\pm(\hyp)}}
\newcommand{\mcDSI}{{\mcD_{I}(\hyp)}}
\newcommand{\mcDSpI}{{\mcD_{I}^+(\hyp)}}
\newcommand{\mcDSmI}{{\mcD_{I}^-(\hyp)}}
\newcommand{\imcDSI}{{\,\,\,\mathring{\!\!\!\mcD}_{I}(\hyp)}}
\newcommand{\imcDpSI}{{\,\,\,\mathring{\!\!\!\mcD}{}_{I}^+(\hyp)}}
\newcommand{\mcDSJ}{{\mcD_{J}(\hyp)}}
\newcommand{\mcDSpJ}{{\mcD_{J}^+(\hyp)}}
\newcommand{\mcDpmSJ}{{ {\mcD}_{J}^\pm(\hyp)}}
\newcommand{\cI}{{\check{I}{}}}
\newcommand{\roscoff}[1]{}
\global\let\AddToReset=\@addtoreset}
\DeclareFontFamily{OT1}{rsfs}{}
\DeclareFontShape{OT1}{rsfs}{m}{n}{ <-7> rsfs5 <7-10> rsfs7 <10-> rsfs10}{}
\DeclareMathAlphabet{\mycal}{OT1}{rsfs}{m}{n}
\global\let\AddToReset=\@addtoreset}
\newcounter{mnotecount}[section]
\renewcommand{\themnotecount}{\thesection.\arabic{mnotecount}}
\newcommand{\xmcop}{X_{\mcO_p}}\newcommand{\xmcopi}{X_{\mcO_{p_i}}}\newcommand{\xmcoq}{X_{\mcO_q}}
\newcommand{\jlcax}[1]{}
\newcommand{\eean}{\nonumber\end{eqnarray}}
\newcommand{\ptcr}[1]{{\color{red}\mnote{{\color{red}{\bf ptc:}
#1} }}}
\newcommand{\kk}[1]{}
\newcommand{\mcH}{{\mycal H}}
\newcommand{\vx}{{\vec x}}
\newcommand{\sHh}{\hyp_{\mcH}}
\newcommand{\vz}{{\vec z}}
\newcommand{\vg}{\gamma}
\newcommand{\dG}{\dot \Gamma}
\newcommand{\pxhere}{{\phi_{\vx}(s)}}
\newcommand{\pxpm}{{\phi_{\pm}(s)}}
\newcommand{\hmcM}{\,\,\,\,\widehat{\!\!\!\!\mcM}}
\newcommand{\beq}{\begin{equation}}
\newcommand{\FS}       
                  {F}
\newcommand{\HS} 
       {H_{\mbox{\scriptsize volume}}}
\newcommand{\trh}{\mathrm{tr}_h}
\newcommand{\dotJ}{{\dot J}}
\newcommand{\mcE}{{\mycal E}}%
\newcommand{\ihyp}{\,\,\mathring{\!\!\hyp}}
\newcommand{\eeal}[1]{\label{#1}\end{eqnarray}}
\newcommand{\C}{{\mathbb C}}
\newcommand{\bed}{\begin{deqarr}}
\newcommand{\eed}{\end{deqarr}}
\newcommand{\bedl}[1]{\begin{deqarr}\label{#1}}
\newcommand{\eedl}[2]{\arrlabel{#1}\label{#2}\end{deqarr}}
\newcommand{\loc}{\textrm{\scriptsize\upshape loc}}
\newcommand{\mcO}{{\mycal O}}
\newcommand{\mcU}{{\mycal U}}
\newcommand{\mcK}{{\mycal K}}
\newcommand{\bel}[1]{\begin{equation}\label{#1}}
\newcommand{\bea}{\begin{eqnarray}}
\newcommand{\bean}{\begin{eqnarray}\nonumber}
\newcommand{\beal}[1]{\begin{eqnarray}\label{#1}}
\newcommand{\eea}{\end{eqnarray}}
\newcommand{\distb}{\dist_{\backmg}}
\newcommand{\dist}{\mathrm{dist}}
\newcommand{\lorentz}{\sigma}
\newcommand{\lorp}{\lorentz_p} 
\newcommand{\nn}{\nonumber}
\newcommand{\Eq}[1]{Equation~\eq{#1}}
\newcommand{\Eqsone}[1]{Equations~\eq{#1}}
\newcommand{\Eqs}[2]{Equations~\eq{#1}-\eq{#2}}
\def\typeout{:<+ #.tex}\include{#}\typeout{:<-}1{\typeout{:<+ #1.tex}\include{#1}\typeout{:<-}}
\newcommand{\qed}{\hfill $\Box$ \medskip}
\newcommand{\proof}{\noindent {\sc Proof:\ }}
\newcommand{\be}{\begin{equation}}
\newcommand{\eeq}{\end{equation}}
\newcommand{\ee}{\end{equation}}
\newcommand{\beqa}{\begin{eqnarray}}
\newcommand{\eeqa}{\end{eqnarray}}
\newcommand{\beqan}{\begin{eqnarray*}}
\newcommand{\eeqan}{\end{eqnarray*}}
\newcommand{\ba}{\begin{array}}
\newcommand{\ea}{\end{array}}
\newcommand{\backmg}{h} 
\newcommand{\const}{\mbox{\rm const}} 
\newcommand{\hyp}{\mycal S}
\newcommand{\mcM}{{\mycal M}}
\newcommand{\mcD}{{\mycal D}}
\newcommand{\mcW}{{\mycal W}}
\newcommand{\mcV}{{\mycal V}}
\newcommand{\cU}{{\cal U}}
\newcommand{\mnote}[1]
{\protect{\stepcounter{mnotecount}}$^{\mbox{\footnotesize
$
\bullet$\themnotecount}}$ \marginpar{
\raggedright\tiny\em
$\!\!\!\!\!\!\,\bullet$\themnotecount: #1} }
\newcommand{\warn}[1]
{\protect{\stepcounter{mnotecount}}$^{\mbox{\footnotesize
$
\bullet$\themnotecount}}$ \marginpar{
\raggedright\tiny\em
$\!\!\!\!\!\!\,\bullet$\themnotecount: {\bf Warning:} #1} }
\newcommand{\R}{\mathbb R}
\newcommand{\N}{\mathbb N}
\newcommand{\eq}[1]{(\ref{#1})}
\newcommand{\hmcK}{\;\;\;\widehat{\!\!\!{\mcK}}}
\newcommand{\ptc}[1]{\mnote{{\bf ptc:}#1}}
\newcommand{\diag}{\mbox{\rm diag}}
\newcommand{\Lg}{{g}} 
\newcommand{\Ric}{\mbox{\rm Ric}}
\newcommand{\beqar}{\begin{deqarr}}
\newcommand{\eeqar}{\end{deqarr}}
\newcommand{\beaa}{\begin{eqnarray*}}
\newcommand{\eeaa}{\end{eqnarray*}}
\newcommand{\ohyp}{\,\,\overline{\!\!\hyp}} 
\newcommand{\cP}{{\cal P}}
\newcommand{\hg}{{\hat g}}
\newcommand{\eg}{{\emph{e.g.,\/}}}
\newcommand{\chindex}[1]{\index{#1}}
\newcommand{\dotg}{{\dot \gamma}}
\newcommand{\Mgk}{(\mcM,g)_{C^k}}
\newcommand{\Mgz}{(\mcM,g)_{C^0}}
\newcommand{\Mcont}{Consider a spacetime $\Mgz$. }
\newcommand{\Mginfty}{(\mcM,g)_{C^\infty}}
\newcommand{\MCinfty}{Consider a spacetime $\Mginfty$. }
\newcommand{\Mgthreek}{(\mcM,g)_{C^2}}
\newcommand{\MCthreek}{Consider a spacetime $\Mgthreek$. }
\newcommand{\Mgtwo}{(\mcM,g)_{C^2}}
\newcommand{\MCtwok}{Consider a spacetime $\Mgtwo$. }
\newcommand{\Mgoneone}{(\mcM,g)_{C^{1,1}}}
\newcommand{\MConeone}{Consider a spacetime $\Mgoneone$. }
\newcommand{\bethm}{\begin{Theorem}}
\newcommand{\et}{\end{Theorem}}
\newcommand{\bl}{\begin{Lemma}}
\renewcommand{\ptcx}[1]{}
\renewcommand{\jamesx}[1]{}
\renewcommand{\restrict}[1]{\ptcx{here starts a restrict environment, not   displayed in the arxiv version} }
\renewcommand{\roscoff}[1]{#1}
\renewcommand{\mnotex}[1]{#1}
\newtheorem{theorem}{Theorem}[section]
\newtheorem{Theorem} {\sc  Theorem\rm} [section]
\newtheorem{Corollary} [Theorem] {\sc  Corollary\rm}
\newtheorem{Lemma} [Theorem] {\sc  Lemma\rm}
\newtheorem{Proposition} [Theorem] {\sc  Proposition\rm}
\newtheorem{Definition}[Theorem]{\sc  Definition\rm}
\newtheorem{Remark}[Theorem]{\sc  Remark\rm}
\newtheorem{remark}[Theorem]{\sc  Remark\rm}
\newtheorem{Remarks}[Theorem]{\sc  Remarks\rm}
\newtheorem{Example} [Theorem] {\sc  Example\rm}
\newcommand{\fcoco}{\small}
\newtheorem{cocoExa}[Theorem]{\sc Example\rm}
\theoremstyle{nonumberplain}\theorembodyfont{\fcoco}
\newtheorem{coco}{}
\begin{document}
\frontmatter
\title{Elements of causality theory\protect\thanks{Preprint UWThPh-2011-32}}
\author{Piotr T.\ Chru\'{s}ciel%
\thanks{Supported in part by the Polish Ministry of Science and
Higher Education grant Nr N N201 372736.}
\\
University of
Vienna\\{\small\url{http://homepage.univie.ac.at/piotr.chrusciel/}}
}

\maketitle

\setcounter{tocdepth}{2} \tableofcontents

\mainmatter

\chapter{Introduction}

These notes present some elements of causality theory, as useful  to study
general relativity.
They amount to an incremental compilation (and thus are far from being well
synchronized and balanced between topics) of notes for lectures  I held  at
various summer schools over the years.
While they are not as complete as other presentations of the topic
\cite{HE,Beem-Ehrlich:Lorentz2,BONeill,MinguzziSanchez,PenroseDiffTopo,Kriele},
there is some originality in that the whole treatment is based on a definition
of causal curves which
allows one to simplify many arguments.

Now, in light of studies of the
Einstein equations with metrics of low
differentiability~\cite{WangCones,WangRicci,KlainermanRodnianski:r1,%
Maxwell:Rough,Maxwell:Compact}, it is of interest to understand the
differentiability needed for the causality part of the theory. The standard
references are either vague about differentiability, or assume smoothness of the
metric. In our
presentation we keep track of the differentiability of the metric needed for the arguments.
This leads to a coherent causality theory for $C^2$ metrics.
This differentiability threshold  can be traced back to Proposition~\ref{Paccum}, page~\pageref{Paccum} below, as used to prove that accumulation curves of causal curves are causal, as well as to the deformation lemma \ref{Lpushup0}, page~\pageref{Lpushup0}. The threshold for the accumulation result, and some of its consequences, is relaxed in~\cite{ChGrant} using different methods,
and note that the approach there requires developing first a causality theory for smooth metrics in any case.

Given the number of alternative more complete treatments, it is not clear
that the above two reasons justify making the notes public. However, the notes
serve as a crossreference for the accompanying notes~\cite{ChGrant} on causality
for continuous metrics, which is the main reason for posting.

The reader is warned that some of our proofs  do not
apply to metrics which are not $C^2$, and that a few key results (e.g.,
deforming not-everywhere-null-causal curves to timelike ones keeping end points
fixed) are plain wrong for metrics with lower differentiability.

\chapter{Causality}
 \label{SC}

 \ptcx{make sure that completeness.tex, on the choquet-bruhat geroch theorem,
and continuous.tex, on causality with low differentiability, are included in
the levoca runs; some introductory comments from ContinuousCausality and Causality.tex could be moved here}

Unless explicitly indicated otherwise, or otherwise clear from the context. we consider manifolds
equipped with a  $C^\infty$ atlas and a continuous metric. As already pointed out, the considerations below give a coherent causality
theory for metrics which are $C^2$ and manifolds which are $C^3$. However, for many considerations  a metric of $C^0$ differentiability class suffices. We will strive to indicate explicitly the differentiability of the metric needed as the presentation evolves.

\section{Time orientation}\label{SCTo}

\ptcx{check Seifert's approach to causality; compare defs} Recall that at
each point $p\in\mcM$ the set of timelike vectors in $T_pM$ has
precisely two components. A \emph{time-orientation of\/} $T_p\mcM
$ is the assignment of the name ``future pointing vectors'' to
one of those components; vectors in the remaining component
are then called ``past pointing''. The set of future pointing
timelike, or causal, vectors, is stable under addition and
multiplication by positive numbers; similarly for past pointing
ones. (In particular this implies convexity.) In order to see
this, suppose that $X=(X^0,\vec X)$ and $Y=(Y^0,\vec Y)$ are
timelike future pointing, in an ON-frame this is equivalent to
$$ |\vec X|< X^0\;, \quad |\vec Y| < Y^0\;,$$
and the inequality
$$ |\vec X+\vec Y|\le |\vec X|+|\vec Y|<X^0+ Y^0$$ follows.
Two timelike vectors $X$ and $Y$ have the same time orientation
if and only if
\bel{eDC1} g(X,Y) < 0
 \;;
 \ee
this  is immediate\restrict{from \eq{elsp1}} in an ON frame in which $X$ is
proportional to $e_0$.

 A
time-orientation of $T_p\mcM $ can always be propagated to a
neighborhood of $p$ by choosing any continuous vector field $X$
defined around $p$ which is timelike and future pointing at
$p$. By continuity of the metric and of $X$, the vector field
$X$ will be timelike in a sufficiently small neighborhood
$\mcO_p$ of $p$, and for $q\in\mcO_p$ one can define future
pointing vectors at $q$ as those lying in the same component of
the set of
timelike
vectors as $X(q)$: for $q\in \mcO_p$ the vector $Y\in T_q\mcM $
will be said to be timelike future pointing if and only if
$g(Y,X(q))< 0$.
 A
Lorentzian manifold is said to be \emph{time-orientable\/} if
such locally defined time-orientations can be defined globally
in a consistent way; that is, we can cover $\mcM $ by
coordinate neighborhoods $\mcO_p$, each equipped with a vector
field $\xmcop $, such that $g(\xmcop ,\xmcoq)<0$ on $\mcO_p\cap
\mcO_q$.

Some Lorentzian manifolds will not be time-orientable, as is shown by
the flat metric\footnote{In two dimensions $-g$ is a
Lorentzian metric whenever $g$ is, and the operation $g\to -g$
has the effect of interchanging the role of space and of time.
\restrict{The reader will notice that while the M\"obius strip with the
flat metric $g$ of Figure~\ref{FC1} is not time-orientable, it
becomes time-orientable when equipped with $-g$.}} on the
M\"obius strip.
\restrict{, \emph{cf.}\/ Figure~\ref{FC1}. \ptc{figure missing; a figure can be obtained from $x  =   [R+scos(1/2t)]cost$,
$y   =   [R+scos(1/2t)]sint $, and $ z   =   ssin(1/2t)$,
according to
\url{http://mathworld.wolfram.com/MoebiusStrip.html}; a cool
mobius ring on
\url{http://www.ka-gold-jewelry.com/p-products/mobius-ring-silver.php},
file saved in the figures directory; make a drawing of two
metrics, one which is time orientable, and one which is not}
\begin{figure}\label{FC1}
\vspace{6cm} \caption{The M\"obius strip, with the flat metric is
 $-dt^2+dx^2$ (so that
 the light cones are at $45^o$) provides an example of a
two dimensional Lorentzian manifold which is not time-orientable.}
\end{figure}
}
On a
time-orientable manifold there are precisely two choices of
time-orientation possible, and $(\mcM ,g)$ will be said
\emph{time oriented\/} when such a choice has been made.
This leads us to the fundamental definition:
\begin{Definition}
  \label{DC1}
A pair $(\mcM ,g)$ will be called a \emph{space-time\/} if
$(\mcM ,g)$ is a time-oriented Lorentzian manifold. We write $\Mgk$ to denote a space-time with a metric of $C^k$-differentiability class.
\end{Definition}

\begin{coco}
\begin{Remark}
\label{RDoubleCover}A Lorentzian manifold $(\mcM,g)$ which is
\emph{not\/} time-orientable has a double cover which
is~\cite{Geroch:topology}. The proof goes as follows: Choose
any $p_0\in \mcM$ and set
$$
\hmcM:=\{(p,\gamma): \mbox{ $p$ is a point of $\mcM$ and $\gamma$ is a
continuous curve from $p$ to $p_0$} \}/\sim\;,
$$
where $\sim$ is the following equivalence relation:
$(p,\gamma)\sim (p',\gamma')$ if $p=p'$ and if there exists a
continuous timelike vector field defined along the curve
obtained by first following $\gamma$ from $p$ to $p_0$ and then
$\gamma'$ from $p_0$ to $p'=p$. The usual arguments from the
theory of covering spaces show that $\hmcM$ can be equipped
with a manifold structure, and covers $\mcM$ twice. $\hmcM$ is
then equipped with the pull-back $\hat g$ of $g$ using the
covering map; time-orientability of $(\hmcM, \hat g)$ should
be clear. Furthermore, any time-orientable cover of $\mcM$ also
covers $\hmcM$, so $\hmcM$ can be thought-of as the smallest
time-oriented covering of $\mcM$.
\end{Remark}
\end{coco}

On any space-time there always exists a globally defined future
directed timelike vector field
--- to show this, consider the locally defined timelike vector fields
$\xmcop $ defined on neighborhoods $\mcO_p$ as described above.
One can choose a locally finite covering of $\mcM $ by such
neighborhoods $\mcO_{p_i}$, $i\in\N$, and construct a globally
defined vector field $X$ on $\mcM $ by setting $$X=\sum_i
\phi_i \xmcopi\;,$$ where the functions $\phi_i$ form a
partition of unity dominated by the covering
$\{\mcO_{p_i}\}_{i\in\N}$. The resulting vector field will be
timelike future pointing everywhere, as follows from the fact
that the sum of an arbitrary number of future pointing timelike
vectors is a future pointing timelike vector.

Now, non-compact manifolds always admit a nowhere vanishing
vector field. However, compact manifolds possess a nowhere
vanishing vector field if and only if
they have vanishing Euler characteristic $\chi$.
\jamesx{Check whether this requires orientability.}
More generally, if $M$ is a compact, orientable manifold, then
the Poincar\'{e}--Hopf theorem (see, e.g.,~\cite{GuilleminPollack})
implies that the index of any smooth vector field, $X$, on $M$ (i.e. the zeroes of $X$ counted with signs) satisfies
\[
\mathrm{index}(X) = \chi(M).
\]
As such, if $M$ admits a non-vanishing vector field $X$, then $\mathrm{index}(X) = 0$
and, hence, $\chi(M) = 0$. Conversely, if $M$ has $\chi(M) = 0$, then any smooth
vector field $X$ on $M$ is of index zero. A theorem of Hopf then implies that there
exists a non-vanishing vector field on $M$ homotopic to $X$.
\jamesx{We still need a reference for the second part. Concerning the non-compact case, there is an incomprehensible article on
MathOverflow by Thurston. I will keep looking for a better source.}
\ptcx{it could be convenient to find a reference here; Markus
contains absurd statements, Guillemin Pollack do it only in one
direction; Yvonne gives Chern, Ann Math \cite{Chern} mais ca n'y est pas; Milnor and Stasheff or Hirsch's differential topology?
; there are some absurd
statements in~\cite{Markus}?}

\restrict{
\begin{remark}
\ptcr{this is a remark by james grant}
There is still a slight problem with the non-orientable case. If $M$ is non-orientable,
then we may consider the oriented double-cover $\pi \colon \tilde{M} \to M$. Since $\tilde{M}$ is
compact and orientable, it follows that $\chi(\tilde{M}) = 0$ if and only if there exists a non-vanishing,
continuous vector field $X$ on $\tilde{M}$. If we consider the push-forward $\pi_* X$, this gives a
double-valued vector field on $M$. In order to define a Lorentzian metric, we require that the two
values of $\pi_* X$ differ only by a sign. It is not clear, however, that $X$ can be chosen in this way.
On the other hand, if we have a line element field on $M$ (i.e. an assignment of a pair of equal and
opposite vectors $\pm V$ at each point in $M$~\cite[pp.~39]{HE}), then we should be able to construct this as an image of
a vector field on $\tilde{M}$.
\jamesr{I will think about this.}
\end{remark}
}

As such, vanishing of the Euler characteristic, $\chi$,
is a necessary and sufficient condition of topological nature for a compact, orientable manifold
to be a time-orientable Lorentzian manifold. We actually have the following:

\begin{Proposition}
 \label{PDC1}
A manifold $\mcM$ admits a space-time structure if and only if
there exists a nowhere vanishing, continuous vector field on $\mcM$.
\end{Proposition}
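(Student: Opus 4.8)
The plan is to prove the two implications separately. The forward direction is essentially a restatement of the construction carried out in the paragraphs immediately preceding the proposition, so the substance lies in the converse.

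For the ``only if'' direction, suppose $(\mcM,g)$ carries a space-time structure. Then I would invoke the globally defined future-directed timelike vector field constructed just above, obtained from a partition of unity subordinate to a locally finite cover $\{\mcO_{p_i}\}$ via the local fields $\xmcopi$. A timelike vector is in particular nonzero, so this field is continuous and nowhere vanishing, which is exactly what is required; no further work is needed here.

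For the ``if'' direction, suppose $X$ is a nowhere vanishing continuous vector field on $\mcM$. First I would fix an auxiliary Riemannian metric $h$ on $\mcM$, which exists on any paracompact manifold by the usual partition-of-unity argument and which may be taken smooth. Writing $X^\flat := h(X,\cdot)$ and $\lambda := h(X,X)$ --- a strictly positive continuous function, since $X$ never vanishes --- I would define
$$ g := h - \frac{2}{\lambda}\, X^\flat \otimes X^\flat\;. $$
This is a continuous symmetric $2$-tensor. The decisive computation is pointwise and purely algebraic: at each $p$, choosing an $h$-orthonormal frame $(e_0,\dots,e_{n-1})$ with $e_0 = X/\sqrt{\lambda}$, one finds $X^\flat = \sqrt{\lambda}\,e^0$ in the dual coframe, whence $g = \mathrm{diag}(-1,1,\dots,1)$. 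Thus $g$ is a genuine Lorentzian metric, and $g(X,X) = \lambda - (2/\lambda)\lambda^2 = -\lambda < 0$, so $X$ is everywhere $g$-timelike.

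It then remains to produce a time orientation. Since $X$ is a single globally defined, continuous, everywhere timelike field, I would simply declare $X(q)$ future pointing at every $q$: by the criterion \eq{eDC1}, a timelike $Y\in T_q\mcM$ is future pointing iff $g(Y,X(q))<0$, and because one uses the \emph{same} global field on every chart, the overlap consistency condition $g(\xmcop,\xmcoq)<0$ is automatic. Hence $(\mcM,g)$ is a time-oriented Lorentzian manifold, i.e.\ a space-time. The construction respects the differentiability conventions of the text: with $h$ smooth and $X$ merely continuous, $g$ is continuous, matching the standing $C^0$ hypotheses. The only genuine subtlety --- and the step I would treat most carefully --- is the signature verification, which must be phrased at the level of the pointwise algebra through the adapted orthonormal frame, precisely so that no differentiability beyond that of $X$ and $h$ is silently invoked; everything else is routine.
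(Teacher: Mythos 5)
Your proof is correct and follows essentially the same route as the paper: the ``only if'' direction cites the partition-of-unity construction of a global timelike field, and the ``if'' direction uses exactly the metric $g = h - 2\,h(\cdot,X)h(\cdot,X)/h(X,X)$ of \eq{ePDC1}, with time orientation obtained by declaring the global field $X$ future pointing. The only difference is that you spell out the signature verification via an $h$-orthonormal frame adapted to $X$, a detail the paper leaves implicit.
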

\proof The necessity of the existence of a nowhere vanishing
vector field on $\mcM$ has already been established.
Conversely, suppose that such a vector field $X$ exists, and
let $h$ be any Riemannian metric on $\mcM$. Then the formula
\bel{ePDC1} g(Y,Z)= h(Y,Z) - 2\frac{h(Y,X)h(Z,X)}{h(X,X)}
\ee
defines a Lorentzian metric on $\mcM$. Finally, the existence
of a globally defined timelike vector field $X$ on a Lorentzian
manifold $(\mcM,g)$ implies time-orientability of $\mcM$ in the
obvious way -- choose $\mcO_p=\mcM$ and $\xmcop =X$.
 \qed

Summarising, non-compact $\mcM$'s always admit both a
Lorentzian metric and a space-time structure. Now, because the
Euler characteristic of a double-cover of $\mcM$ is zero if and
only if that of $\mcM$ is, it follows from
Remark~\ref{RDoubleCover} and Proposition~\ref{PDC1} that
compact $\mcM$'s admit a Lorentzian metric if and only if they
have vanishing Euler characteristic. For example, no Lorentzian
metrics exist on $S^2$.

\section{Normal coordinates}\label{SCnc}

Given a $C^2$ metric, for $p\in \mcM $ the exponential map
$$
 \exp_p:T_p\mcM \to \mcM
$$
is defined as follows; if $X$ is a vector in the tangent
space $ T_p\mcM $, then  $\exp_p(X)\in \mcM $ is the point
reached by following a geodesic with initial point $p$ and
initial tangent vector $X\in T_p\mcM $ for an affine distance
one, provided that the geodesic in question can be continued
that far. Now an affinely parameterized geodesic solves the
equation \be\label{geodeq} \nabla_{\dot x}\dot x = 0 \quad
\Longleftrightarrow \quad \frac{d^2x^\mu}{d s^2} =
-\Gamma^\mu_{\alpha\beta}\frac{dx^\alpha}{d
  s}\frac{dx^\beta}{d s}\;,
  \ee
  where the
$\Gamma^\mu_{\alpha\beta}$'s are the Christoffel symbols of the
metric $g$, defined as
\begin{eqnarray}\displaystyle
\label{Chs} & \Gamma^\mu_{\alpha\beta} := \displaystyle\frac 12
g^{\mu\sigma}\left(\frac{\partial g_{\sigma \alpha}}{\partial
x^\beta} +\frac{\partial g_{\sigma \beta}}{\partial x^\alpha}-
\frac{\partial g_{ \alpha\beta}}{\partial x^\sigma}\right)\;, &
\\ &
g^{\mu\sigma}:=g^{\#}(dx^\mu,dx^\sigma)\;,\qquad
g_{\alpha\beta}:=g(\partial_\alpha,\partial_\beta)\;.
&\label{Chs2}
\end{eqnarray}
Here, as elsewhere, we use the symbol $g^{\#}$ to denote the
``contravariant metric", that is, the metric on $T^*\mcM$
constructed out from $g$ in the canonical way (\restrict{see
Appendix~\ref{Srli}; }the matrix $g^{\alpha\beta}$ is thus the
matrix inverse to $g_{\alpha\beta}$). However, it is usual in
the literature to use the same symbol $g$ for the metric
$g^\#$, as well as for all other metrics induced by $g$ on
tensor bundles over $\mcM$, and we will often do so.

Equations~\eq{Chs}-\eq{Chs2} show that when the metric is
 of
$C^{1,1}$ differentiability class, then the Christoffel symbols
are Lipschitz continuous, which guarantees local existence and
uniqueness of solutions of \eq{geodeq}. Due to the lack of
uniqueness\footnote{Examples of $C^{1,\alpha}$ metrics with
non-unique null
  geodesics for $0<\alpha <1$ can be found in~\cite[Appendix F]{SCC} and~\cite{ChGrant},
  compare~\cite{Hartman} for spacelike geodesics in a Riemannian context.
  Here $C^{k,\alpha}$ is the space of $k$ times differentiable functions (or
  maps, or sections --- whichever is the case should be clear from the
  context), the $k$'th derivatives of which satisfy, locally, a
  H\"older condition of order $\alpha$.} of the Cauchy problem for \eq{geodeq}
  for metrics which are not $C^{1,1}$, various problems arise when attempting
to develop causality theory on manifolds
with a metric with less regularity\footnote{One can construct large classes of
solutions to the Cauchy problem for the vacuum Einstein
equations which are {\em not} of $C^{1,1}$ differentiability
class~\cite{KlainermanRodnianski:r1,KlainermanRodnianski:r2,SmithTataru,BahouriChemin}\restrict{, see Section \ref{???}}.
This leads to a  mismatch in differentiability
between the Cauchy problem and causality theory which has not been completely clarified yet.} than
$C^{1,1}$,   addressed in~\cite{ChGrant} (see also~\cite{SorkinWoolgar,KeyeMartin}).
\ptcx{needs synchronizing, similarly for the footnote}

The domain $\mcU_p$ of $\exp_p$ is always the largest subset of
$T_p\mcM $ on which the exponential map is defined. By
construction, and by homogeneity properties of solutions of
\eq{geodeq} under a linear change of parameterization (see
\eq{just}), the set $\mcU_p$ is star-shaped with respect to the
origin (this means that if $X\in \mcU_p$ then we also have
$\lambda X\in\mcU_p$ for all $\lambda\in[0,1]$). When the
metric is $C^{1,1}$, continuity of solutions of ODE's upon
initial values shows that $\mcU_p$ is an open neighborhood of
the origin of $T_p\mcM $.

The exponential map is neither surjective nor injective in
general. For example, on $\R\times S^1$ with the flat metric
$-dt^2+dx^2$, the ``left-directed'' null geodesics
$\Gamma_-(s)=(s,-s \;\mathrm{mod} \, 2\pi )$ and the
``right-directed'' null geodesics $\Gamma_+(s)=(s,s
\;\mathrm{mod} \, 2\pi )$ meet again after going each ``half of
the way around $S^1$", and injectivity fails. Both in de-Sitter
and in anti-de-Sitter\ptcx{ ref? more detail?} space-time all
timelike geodesics meet again at a point, which leads to lack
of surjectivity of the exponential map.

 A Lorentzian manifold is said to be \emph{geodesically
complete} if all geodesics can be defined for all real values
of affine parameter; this is equivalent to the requirement that
for all $p\in \mcM $ the domain of the exponential map is
$T_p\mcM $. One also talks about \emph{timelike
  geodesically complete} space-times, \emph{future timelike
  geodesically complete} space-times, \emph{etc.}, with those notions defined in an
obvious way.

\begin{coco}
It follows from the \emph{Hopf--Rinow\/} theorem~\cite{MilnorMorse,HopfRinow}
that \emph{\underline{compact Riemannian} manifolds are
geodesically complete}. There is no Lorentzian analogue of
this, the standard counter-example proceeds as follows:
\begin{Example}\label{Exgeoinc}
Consider the following symmetric tensor field on $\R^2$:
\bel{PC1.0}g=\frac{2dx dy}{x^2+y^2}\;.\ee We have
\bel{PC1.01}g_{\mu\nu}=\frac{1}{x^2+y^2}\left[\begin{array}{cc}
0 & 1 \\ 1 & 0 \end{array}\right]\quad\Longrightarrow \quad
\det g_{\mu\nu}= - \frac{1}{(x^2+y^2)^2}\;,\ee which shows that
$g$ is indeed a Lorentzian metric. Note that for all $\lambda\in
\R^*$ the maps
$$\R^2\ni(x,y)\to \phi_\lambda(x,y):=(\lambda x, \lambda y)$$
are isometries of $g$:
$$\phi^*_\lambda g =\frac{2d(\lambda x) d(\lambda y)}{(\lambda x)^2+(\lambda y)^2}=\frac{2dx dy}{x^2+y^2} =
g\;.$$ It follows that for any $1\ne \lambda>0$ the metric $g$
passes to the quotient space
$$\left\{\R^2\setminus\{0\}\right\}/\phi_\lambda=\{(x,y)\sim (\lambda x,\lambda
y)\}\approx S^1\times S^1 = {\mathbb T}^2\;.$$ (Clearly the
quotient spaces with $\lambda$ and $1/\lambda$ are the same, so
without loss of generality one can assume $\lambda >1$.) In
order to show geodesic incompletess of $g$ we will use the
following result:
\begin{Proposition}\label{Pgeoinc}
\MCtwok
Let $f$ be a function such that $g(\nabla f,\nabla f)$ is a
constant. Then the integral curves of $\nabla f$ are affinely
parameterized geodesics.
\end{Proposition}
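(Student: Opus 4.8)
The plan is to show directly that the gradient vector field $V := \nabla f$ satisfies the geodesic equation $\nabla_V V = 0$ in its covariant form \eq{geodeq}. Since the integral curves of $V$ are by definition the curves $x(s)$ with $\dot x^\mu = V^\mu(x(s))$, and since $\nabla_{\dot x}\dot x$ evaluated along such a curve is precisely $\nabla_V V$, vanishing of the latter will give that these curves are geodesics, affinely parameterized by $s$.

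First I would work in a coordinate frame, writing $V^\mu = g^{\mu\nu}\partial_\nu f = \nabla^\mu f$, so that $(\nabla_V V)^\mu = \nabla^\nu f\,\nabla_\nu\nabla^\mu f$. The central identity to establish is
$$
\nabla_V V = \frac12\,\nabla\!\left(g(\nabla f,\nabla f)\right)\;,
$$
that is, $\nabla^\nu f\,\nabla_\nu\nabla^\mu f = \frac12\,\nabla^\mu\!\left(\nabla^\alpha f\,\nabla_\alpha f\right)$.

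The key step is the symmetry of the Hessian of a function: because the Levi--Civita connection is torsion-free, $\nabla_\mu\nabla_\nu f = \partial_\mu\partial_\nu f - \Gamma^\lambda_{\mu\nu}\partial_\lambda f$ is symmetric in its lower indices (both $\partial_\mu\partial_\nu f$ and the $\Gamma^\lambda_{\mu\nu}$ of \eq{Chs} are symmetric there). Raising an index yields $\nabla_\nu\nabla^\mu f = \nabla^\mu\nabla_\nu f$. Substituting this into the left-hand side turns it into $\nabla^\nu f\,\nabla^\mu\nabla_\nu f$; on the right-hand side, metric compatibility $\nabla g = 0$ lets the derivative $\nabla^\mu$ pass through the inverse metric, and the product rule gives $\frac12\,\nabla^\mu(\nabla^\alpha f\,\nabla_\alpha f) = \nabla^\alpha f\,\nabla^\mu\nabla_\alpha f$. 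After relabeling the contracted index the two sides coincide, establishing the identity.

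Finally, since by hypothesis $g(\nabla f,\nabla f)$ is constant, its gradient vanishes, whence $\nabla_V V = 0$; the integral curves are therefore geodesics, and because the right-hand side is exactly zero rather than merely proportional to $V$, the parameter furnished by the flow of $V$ is affine. I expect the only point requiring genuine care to be this Hessian symmetry together with the attendant differentiability bookkeeping: the argument uses the $C^2$ hypothesis on the metric, so that the Christoffel symbols \eq{Chs} are $C^1$ and the geodesic equation \eq{geodeq} is meaningful with unique solutions, along with the tacit assumption that $f$ is $C^2$, so that $\nabla f$ is a $C^1$ vector field whose integral curves and covariant derivative are well defined.
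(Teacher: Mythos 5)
Your proof is correct and follows essentially the same route as the paper's: both set $X=\nabla f$, use torsion-freeness to swap the order of covariant derivatives in $\nabla^i f\,\nabla_i\nabla^j f$, recognize the result as $\frac12\nabla^j\bigl(g(\nabla f,\nabla f)\bigr)$, and conclude from the constancy hypothesis that $\nabla_X X=0$. The only difference is presentational: you make explicit the Hessian symmetry, the metric-compatibility step, and the differentiability assumptions that the paper's one-line computation leaves implicit.
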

\proof Let $X:= \nabla f$, we have \beaa (\nabla_X X)^j & = &
\nabla ^i f \nabla _i \nabla^j f
 =  \nabla ^i f
\nabla ^j \nabla_i f
\\& = & \frac 12 \nabla ^j( \nabla^i f \nabla _i
 f)=\frac 12 \nabla ^j(g(\nabla f,\nabla f)) = 0\;.
\eeaa\qed

Returning to the metric \eq{PC1.0}, let $f=x$, by \eq{PC1.01}
we have
$$g^{\mu\nu}=(g^{\mu\nu})^{-1}=({x^2+y^2})\left[\begin{array}{cc} 0 & 1 \\ 1 &
0 \end{array}\right]\;,$$ so that
$$\nabla f =({x^2+y^2}) \partial_y \quad \Longrightarrow \quad g(\nabla f,
\nabla f)=0\;.$$ Proposition~\ref{Pgeoinc} shows that the integral
curves of $\nabla f$ are null affinely parameterized geodesics.
Let $\gamma(s)=(x^\mu(s))$ be any such integral curve, thus
$$\frac {dx^\mu}{ds} = \nabla^\mu f \quad \Longrightarrow \quad \frac
{dx}{ds}=0\;, \quad \frac {dy}{ds}= (x^2+y^2)\;.$$ It follows that
$x(s)=x(0)$ for all $s$. The equation for $y$ is easily
integrated; for our purposes it is sufficient to consider that
integral curve which passes through $(0,y_0)\in
\R^2\setminus\{0\}$, $y_0>0$ --- we then have $x(s)=0$ for all
$s$ and \bel{PC1.03}\frac {dy}{ds}= y^2\quad \Longrightarrow
\quad y(s) = \frac{y_0}{1-y_0s}\;.\ee This shows that $y(s)$
runs away to infinity as $s$ approaches
$$s_\infty:= \frac 1 {y_0}\;.$$
It follows that $\gamma$ is indeed incomplete on
$\R^2\setminus\{0\}$. To see that it is also incomplete on the
quotient torus
$\left\{\R^2\setminus\{0\}\right\}/\phi_\lambda$, $\lambda
>1$, note that the
image of $\gamma(s)=(0,y(s))$ under the equivalence relation
$\sim$ is a circle, and there exists a sequence $s_j\to
s_\infty$ such that $\gamma(s_j)$ passes again and again
through its starting point:
$$ y(s_j)=\lambda ^j y_0 \quad \Longrightarrow \quad (0,y(s_j))\sim (0,y_0)
\ \mbox{ in } \ \left\{\R^2\setminus\{0\}\right\}/\phi_\lambda
\;.$$ By \eq{PC1.03} we have $$\frac {dy}{ds}(s_j)=
\left(y(s_j)\right)^2 =(\lambda_jy_0 )^2
\displaystyle\longrightarrow_{s_j\to s_\infty}\infty\;,$$ which
shows that the sequence of tangents $(dy/ds)(s_j)$ at $(0,y_0)$
blows up as $j$ tends to infinity. This clearly implies that
$\gamma$ cannot be extended beyond $s_\infty$ as a $C^1$ curve.
\end{Example}
\end{coco}

When the metric is $C^2$,
the inverse
function theorem%
\footnote{It follows from the invariance-of-domain theorem that one can  construct normal coordinates for
$C^{1,1}$ metrics. However, those coordinates will only be continuous a priori,  which is a problem for some
arguments below; note that one cannot even calculate the metric functions in such coordinates.
It is conceivable that Clarke's implicit function
theorem~\cite{Clarke:optimization} could provide some more information in this context.
We
have not investigated this line of thought any further, as the approach in~\cite{ChGrant} provides
more general results anyway.}
shows that there
exists a neighborhood $\mcV_p\subset \mcU_p$ of the origin in $
\R^{\,\dim \mcM}$ on which the exponential map is a
diffeomorphism between $\mcV_p$ and its image
$$\mcO_p:=\exp_p (\mcV_p)\subset \mcM\;.$$
This allows one to define
 \emph{normal
coordinates} centred at $p$:

\begin{Proposition}
  \label{PC1}
  Let $(\mcM ,g)$ be a $C^3$ Lorentzian manifold with $C^{2}$ metric $g$.
  For every $p\in \mcM $ there exists an open coordinate neighborhood
  $\mcO_p$ of $p$, in which $p$ is mapped to the origin of
  $\R^{n+1}$, such that the coordinate rays $s\to sx^\mu$ are affinely parameterized
geodesics. If the metric $g$ is expressed in the resulting
coordinates $(x^\mu)=(x^0,\vec x)\in
\mcV_p$,
then
\begin{eqnarray}
\label{eq:PC1.1} & g_{\mu\nu}(0)=\eta_{\mu\nu}\;.  &
\end{eqnarray}
If $g$ is $C^3$ then we also have
\begin{eqnarray}
\label{eq:PC1.1b} &
\partial_\sigma g_{\mu\nu}(0)=0\;. &
\end{eqnarray}
Further, if the function $\sigma_p:\mcO_p\to\R$ is defined by
the formula
\bel{simpq}
\fbox{$\sigma_p(\exp_p(x^\mu)):=\eta_{\mu\nu}x^\mu
x^\nu\equiv-(x^0)^2+(\vec x)^2$}
 \;,
\ee
then
\begin{eqnarray}& \fbox{$\nabla\sigma_p\ \mbox{ is\ \ } \cases{
\mbox{ timelike } & $ \cases{ \mbox{past directed} & { on} $
\{q\,|\,\sigma_p(q)<0\;, \; x^0(q)<0\}$, \cr \mbox{future directed}
& { on} $ \{q\,|\,\sigma_p(q)<0\;, \; x^0(q)>0\}$,  }$ \cr \mbox{
null } & $\cases{ \mbox{past directed} & { on} $
\{q\,|\,\sigma_p(q)=0\;, \; x^0(q)<0\}$, \cr \mbox{future directed}
& { on} $ \{q\,|\,\sigma_p(q)=0\;, \; x^0(q)>0\}$,  }$ \cr \mbox{
spacelike } & \phantom{xxxxxxxxxxxxxxxxxi} on $  \{q\,|\,\sigma_p(q)>0\}$ .}$} \nn\\
&&\label{eq:PC1.3}
\end{eqnarray}
%
%
%
\end{Proposition}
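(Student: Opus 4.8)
The plan is to build the chart directly from the exponential map and then read off all four assertions from two ingredients: the defining property of $\exp_p$ (radial lines are geodesics) and the Gauss Lemma. First I would fix a basis $(e_\mu)$ of $T_p\mcM$ that is orthonormal for $g_p$, with $g_p(e_\mu,e_\nu)=\eta_{\mu\nu}$ and $e_0$ future directed; this identifies $T_p\mcM$ with $\R^{n+1}$ by $X=x^\mu e_\mu\mapsto(x^\mu)$. Since $g$ is $C^2$, the discussion preceding the Proposition supplies a star-shaped $\mcV_p\subset\mcU_p$ on which $\exp_p$ is a diffeomorphism onto $\mcO_p:=\exp_p(\mcV_p)$, and I take $\Phi:=\exp_p^{-1}\colon\mcO_p\to\mcV_p$ as the chart. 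By the very definition of $\exp_p$, the coordinate ray $s\mapsto sx^\mu$ is the image $\exp_p(sX)$ of the affinely parameterized geodesic with $\dot\gamma(0)=X$, which is the first claim. For \eq{eq:PC1.1} I would use the standard fact $d(\exp_p)_0=\mathrm{id}_{T_p\mcM}$ (differentiate $s\mapsto\exp_p(sX)$ at $s=0$), so that the coordinate frame satisfies $\partial_\mu|_p=e_\mu$ and hence $g_{\mu\nu}(0)=g_p(e_\mu,e_\nu)=\eta_{\mu\nu}$.

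For \eq{eq:PC1.1b} I would feed the radial geodesics back into \eq{geodeq}. Since $s\mapsto sx^\mu$ is a geodesic for every fixed $(x^\mu)$ and has vanishing coordinate acceleration, \eq{geodeq} forces $\Gamma^\mu_{\alpha\beta}(sx)\,x^\alpha x^\beta=0$ for all admissible $s$ and $x$; evaluating at $s=0$ and polarizing in $x$ (using symmetry of $\Gamma$ in its lower indices) yields $\Gamma^\mu_{\alpha\beta}(0)=0$. The relation $\partial_\sigma g_{\mu\nu}=g_{\lambda\nu}\Gamma^\lambda_{\sigma\mu}+g_{\mu\lambda}\Gamma^\lambda_{\sigma\nu}$, obtained from \eq{Chs} by solving for the metric derivatives, then gives $\partial_\sigma g_{\mu\nu}(0)=0$. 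This is exactly the step that consumes the extra derivative: with only a $C^2$ metric the chart $\Phi$ is merely $C^1$ and the components $g_{\mu\nu}$ in these coordinates are a priori only continuous, so $\partial_\sigma g_{\mu\nu}(0)$ need not even be defined; assuming $g\in C^3$ makes $\exp_p$ of class $C^2$ and $g_{\mu\nu}$ of class $C^1$ in normal coordinates, legitimizing the computation.

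The substantive part is \eq{eq:PC1.3}, and here the key is the Gauss Lemma: in these coordinates $g_{\mu\nu}(x)\,x^\nu=\eta_{\mu\nu}x^\nu=:x_\mu$. I would prove it by the classical variation argument, considering $\gamma(s,t)=\exp_p\big(s(X+tW)\big)$ with $W=w^\mu e_\mu$, and showing that $f(s):=g(\partial_s\gamma,\partial_t\gamma)$ satisfies $f(0)=0$ and, using $\nabla_s\partial_s\gamma=0$, the torsion-free symmetry $\nabla_s\partial_t\gamma=\nabla_t\partial_s\gamma$, and constancy of $g(\partial_s\gamma,\partial_s\gamma)$ along each geodesic, that $\frac{df}{ds}=g_p(X,W)$; hence $f(1)=g_p(X,W)$, which in coordinates reads $g_{\mu\nu}(x)x^\mu w^\nu=\eta_{\mu\nu}x^\mu w^\nu$ for all $w$. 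Granting this and writing $R:=x^\mu\partial_\mu$ for the radial field, I compute $\partial_\mu\sigma_p=2x_\mu$ from \eq{simpq}, whence $\nabla\sigma_p=2g^{\mu\nu}x_\nu\,\partial_\mu=2x^\mu\partial_\mu=2R$ after inverting the Gauss relation, and $g(\nabla\sigma_p,\nabla\sigma_p)=4g_{\mu\nu}x^\mu x^\nu=4\eta_{\mu\nu}x^\mu x^\nu=4\sigma_p$. The trichotomy timelike/null/spacelike is then immediate from the sign of $\sigma_p$ (with $\nabla\sigma_p\neq0$ for $x\neq0$ giving genuine nullity on $\{\sigma_p=0\}$). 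For the time orientation I would note that $R$ at $q=\exp_p(x)$ is the endpoint velocity $\dot\gamma(1)$ of the radial geodesic with $\dot\gamma(0)=X$, and that $g_p(X,e_0)=-x^0$, so $X$ — and hence, by the impossibility of a nonzero causal vector flipping time-orientation along a geodesic where $g(\dot\gamma,\dot\gamma)$ is constant, also $R$ and $\nabla\sigma_p=2R$ — is future directed precisely when $x^0>0$ and past directed when $x^0<0$, as asserted.

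The only real obstacle is the Gauss Lemma together with the regularity bookkeeping: the variation argument tacitly differentiates the geodesic flow in its initial data, which is precisely what $C^2$ of the metric buys, and one must keep separate the threshold for \eq{eq:PC1.1b} (needing $g\in C^3$) from that for \eq{eq:PC1.1} and \eq{eq:PC1.3} (valid already for $C^2$). Everything else is linear algebra and ODE bookkeeping.
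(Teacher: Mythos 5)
Your proof is correct and follows essentially the same route as the paper: normal coordinates built from $\exp_p$ with an ON frame, $\Gamma^\mu{}_{\alpha\beta}(0)=0$ by polarizing the radial geodesic equation plus metric compatibility for \eq{eq:PC1.1b}, and a Gauss-lemma variation through geodesics (using $\nabla_s\partial_s\gamma=0$, the symmetry of mixed derivatives, and constancy of $g(\partial_s\gamma,\partial_s\gamma)$ along each ray) for \eq{eq:PC1.3}, with the same regularity bookkeeping ($C^3$ only needed for \eq{eq:PC1.1b}). The only cosmetic difference is that you use the full Gauss lemma $g_{\mu\nu}(x)x^\nu=\eta_{\mu\nu}x^\nu$ to get the exact identities $\nabla\sigma_p=2x^\mu\partial_\mu$ and $g(\nabla\sigma_p,\nabla\sigma_p)=4\sigma_p$, whereas the paper proves orthogonality of the radial field to the level sets of $\sigma_p$ and then argues by proportionality; your version is in fact slightly tidier regarding the sign needed for the time-orientation claims.
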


\begin{Remark}
 \label{R25X11.1}
The definition of normal coordinates leads to $C^{k-1}$ coordinate functions if the metric is $C^k$. Hence
the metric, when expressed in normal coordinates, will be of $C^{k-2}$ differentiability class. This implies that there is a $C^2$ threshold for the introduction of normal coordinates, and that two derivatives are lost when expressing the metric in those coordinates. This is irrelevant for our purposes, as the main point here is that for $C^2$ metrics there  exists a function $\sigma_p$ satisfying \eq{P3a} below, together with the following three facts:
\begin{enumerate}
 \item $q\mapsto\sigma_p(q)$ is differentiable; \item $(q,p)\mapsto\sigma_p(q)$ is continuous; and
 \item if $g_n$ converges to $g$ in $C^2$, then the corresponding functions $
 \sigma_p$ converge in $C^0$.
\end{enumerate}
These are standard properties of solutions of ODEs (cf., e.g., \cite{Teschl}).
\end{Remark}

\begin{Remark} The coefficients of a Taylor expansion of
    $g_{\mu\nu}$ in normal
    coordinates can be expressed in terms of the Riemann tensor and
    its covariant derivatives\restrict{, see
    Section~\ref{Sshcan}} (compare~\cite{MSV,MoncriefNI}).
\end{Remark}

\proof Let us start by justifying that the implicit function
theorem can indeed be applied: Let $x^\mu$ be any coordinate
system around $p$, and let $e_a=e_a{}^\mu\partial_\mu$ be any
ON frame at $p$. Let
$$X=X^ae_a= X^ae_a{}^\mu \partial_\mu\in T_p \mcM$$ and let
 $x^\mu(s,X)$ denote the affinely parameterized
geodesic passing by $p$ at $s=0$, with tangent vector $$\dot
x^\mu(0,X):= \frac {dx^\mu(s,X)}{ds}\Big|_{s=0} =
X^ae_a{}^\mu\;.$$ Homogeneity properties of the ODE \eq{geodeq}
under the change of parameter $s\to \lambda s$ together with
uniqueness of solutions of ODE's show that for any constant
$a\ne 0$ we have
$$x^\mu(as,X/a) = x^\mu(s,X)\;.
$$ This, in turn, implies that there exist functions $\gamma^\mu$ such that
\bel{just} x^\mu(s,X) = \gamma^\mu(sX)\;.\ee From \eq{geodeq}
and \eq{just} we have
$$x^\mu(s,X) =x^\mu_0 + sX^ae_a{}^\mu + O((s|X|)^2)\;. $$ Here $x^\mu_0 $
are the coordinates of $p$, $|X|$ denotes the norm of $X$ with
respect to some auxiliary Riemannian metric on $M$, while the
$O((s|X|)^2)$ term is justified by \eq{just}. The usual
considerations of the proof that solutions of ODE's are
differentiable functions of their initial conditions show that
\beaa\frac {\partial x^\mu(s,X)}{\partial X^a} &=& \frac
{\partial (x^\mu_0 + sX^ae_a{}^\mu)}{\partial X^a}+ O(s^2)|X|
\\ & = & se_a{}^\mu + O(s^2)|X|\;.\eeaa
At $s=1$ one thus obtains \bel{just.b}\frac {\partial
x^\mu(1,X)}{\partial X^a}=e_a{}^\mu + O(|X|)\;.\ee This shows
that $\partial x^\mu/\partial X^a$ will be bijective at $X=0$
provided that $\det e_a{}^\mu\ne 0$. But this last inequality
can be obtained by taking the determinant of the equation
\bel{just.a} g(e_a,e_b)=g_{\mu\nu} e_a{}^\mu e_b^\nu\quad
\Longrightarrow \quad -1 = (\det g_{\mu\nu}) (\det
e_a{}^\mu)^2\;.\ee This justifies the use of the implicit
function theorem to obtain existence of the neighborhood
$\mcO_p$ announced in the statement of the proposition. Clearly
$\mcO_p$ can be chosen to be star-shaped with
respect to $p$. 
\Eq{just.b} and the fact that $e^\mu{}_a$ is an ON-frame show
that
$$
 g(\partial_a,\partial_b)\Big|_{X^a=0}=g_{\mu\nu} e_a{}^\mu e_b^\nu \Big|_{X^a=0}=
 \eta_{ab}\;,
$$
which establishes in \eq{eq:PC1.1}.

By construction the rays
$$
 s\to\gamma^a(s):=sX^a
$$
are affinely parameterized geodesics with tangent $\dot \gamma=
X^a\partial_a$, which gives
\beaa 0 = \left(\nabla_{\dot
\gamma}\dot \gamma \right)^a& = & \underbrace{\frac {d^2
(sX^a)}{ds^2}}_{=0} + \Gamma^a{}_{bc}(sX^d)X^b X^c
\\& = &
\Gamma^a{}_{bc}(sX^d)X^b X^c\;.\eeaa
Setting $s=0$ and differentiating this
equation twice with respect to $X^d$ and $X^e$ one obtains
$$\Gamma^a{}_{de}(0)=0\;.$$
The equation
$$0=\nabla_a g_{bc} = \partial_ag_{bc} - \Gamma^d{}_{ba}g_{dc} -
\Gamma^d{}_{ca}g_{bd}$$ evaluated at $X=0$ gives \eq{eq:PC1.1b}.

Let us pass now to the proof of the main point here, namely
\eq{eq:PC1.3}. From now on we will denote by $x^\mu$ the normal
coordinates obtained so far, and which were denoted by $X^a$ in
the arguments just done. For $x\in \mcO_p$ define \bel{PC1.3.1}
f(x) := \eta_{\mu\nu}x^\mu x^\nu\;,\ee and let
$\mcH_\tau\subset \mcO_p \setminus\{p\}$ be the level sets of
$f$: \bel{PC1.3.2} \mcH_\tau:=\{x: f(x)=\tau\;,\ x\ne 0\}\;.\ee
We will show that \bel{PC1.3.4} \mbox{the vector field
$x^\mu\partial_\mu$ is normal to the $\mcH_\tau$'s.}\ee Now,
$x^\mu\partial_\mu$ is tangent to the geodesic rays $s\to
\gamma^\mu(s):=sx^\mu$. As the causal character of the field of
tangents to a geodesic\footnote{Without loss of generality an
affine parameterization of a geodesic $\gamma$ can be chosen,
the result follows then from the calculation
$d(g(\dot\gamma,\dot\gamma))/ds=2g(\nabla_{\dot\gamma}\dot\gamma,\dot\gamma)=0$.}
is point-independent along the geodesic, we have \bel{PC1.3.3}
\begin{array}{rcl} x^\mu\partial_\mu \ \mbox{is timelike at} \ \gamma(s)&
\Longleftrightarrow & f(x)<0\;,
\\ x^\mu\partial_\mu \ \mbox{is null at} \ \gamma(s) &
\Longleftrightarrow & f(x)=0\;,\ x\ne 0\;,
\\
x^\mu\partial_\mu \ \mbox{is spacelike at} \ \gamma(s)&
\Longleftrightarrow & f(x)>0\;.
\end{array}
\ee This follows from the fact that the right-hand-side is
precisely the condition that the geodesic be timelike,
spacelike, or null, at $\gamma(0)$. Since $\nabla f$ is always
normal to the level sets of $f$, when \eq{PC1.3.4} holds we
will have \bel{PC1.3.4a} \mbox{ $x^\mu\partial_\mu$ is
proportional to $\nabla^\mu f$.}\ee This shows that
\eq{eq:PC1.3} will follow from \eq{PC1.3.3} when \eq{PC1.3.4}
holds.

It remains to establish \eq{PC1.3.4}. In order to do that,
consider any differentiable curve $\lambda\mapsto x^\mu(\lambda)$ lying on $\mcH_\tau$:
\bel{PC1.3.6} \eta_{\mu\nu}x^\mu(\lambda) x^\mu(\lambda)= \tau
 \quad \Longrightarrow \quad \eta_{\mu\nu}x^\mu(\lambda)
 \partial_\lambda x^\mu(\lambda)=0\;.
\ee
Let $\gamma^\mu(\lambda,s)$ be the following one-parameter
family of geodesic rays:
$$\gamma^\mu(\lambda,s):= s x^\mu(\lambda)\;.$$
For any function $f$ set $$ T(f)=\partial_s \left(f\circ
\gamma(s,\lambda)\right)\;,\quad X(f)=\partial_\lambda
\left(f\circ \gamma(s,\lambda)\right)\;,$$ so that
$$
 T(\lambda,s):= \Big(\partial_s \gamma^\mu(\lambda,s)
 \Big)\partial_\mu = x^\mu(\lambda,s) \partial_\mu\;, \quad X(\lambda,s):=
 \Big(\partial_\lambda \gamma^\mu(\lambda,s) \Big)\partial_\mu\;.
$$
For any fixed value of $\lambda$ the curves $s\to \gamma^\mu(\lambda,s)$ are geodesics,
which shows that $$\nabla_T T=0\;.$$ This gives $$\frac
{d(g(T,T))}{ds} = 2 g(\nabla_TT,T)=0\;,$$ hence
$$g(T,T)(s)=g(T,T)(0) = \eta_{\mu\nu} x^\mu(\lambda) x^\nu(\lambda)=\tau$$
by \eq{PC1.3.6}, in particular $g(T,T)$ is
$\lambda$-independent.

Next, for any twice-differentiable function $\psi$ we have
$$ [T,X](\psi):= T(X(\psi))-X(T(\psi)) = \partial _s \partial_\lambda
(\psi(\gamma^\mu(s,\lambda))) - \partial _\lambda \partial_s
(\psi(\gamma^\mu(s,\lambda))) =0 \;,
$$
because of the symmetry of the
matrix of second partial derivatives. It follows that
$$[T,X]=\nabla_T X-\nabla_X T = 0\;.$$
Finally, \beaa
\frac{d(g(T,X))}{ds}&=&g(\underbrace{\nabla_TT}_0,X)
+g(T,\underbrace{\nabla_TX}_{=\nabla_X T})\\
& = & g(T, \nabla_X T) = \frac 12 X(g(T,T)) = \frac 12
\partial_\lambda (g(T,T))=0\;.\eeaa This yields
$$g(T,X)(s,\lambda)= g(T,X)(0,\lambda)= \eta_{\mu\nu}x^\mu(\lambda) \partial_\lambda
x^\mu(\lambda)=0$$ again by \eq{PC1.3.6}. Thus $T$ is normal to
the level sets of $f$, which had to be established.
  \qed

As already pointed out, some regularity of the metric is
lost when going to normal coordinates; this can be avoided
using coordinates which are only approximately normal up to a
required order\restrict{ (compare Proposition~\ref{Pael1})}, which is
often sufficient for several purposes.

\begin{coco}
  It is sometimes useful to have a \emph{geodesic convexity
  property} at our disposal. This is made precise by the following
  proposition:

  \begin{Proposition}\label{Pgc} Let $\mcO$ be the domain of
  definition of a coordinate system $\{x^\mu\}$. Let $p\in \mcO$ and let
  $B_p(r)\subset \mcO$ denote an open coordinate ball of radius
  $r$
  centred at $p$. There exists $r_0>0$ such that every geodesic
  segment $\gamma:[a,b]\to \overline{B_p(r_0)}\subset \mcO$ satisfying
  $$\gamma(a),\gamma(b)\in B_p(r)\;,\quad r< r_0$$
  is entirely contained in $B_p(r)$.
  \end{Proposition}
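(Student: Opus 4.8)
The plan is to show that the squared coordinate distance to $p$ is a strictly convex function of the affine parameter along any geodesic that remains in a sufficiently small coordinate ball, and then to invoke the elementary fact that a convex function on an interval attains its maximum at an endpoint. The argument is insensitive to the causal character of the geodesic, since it works entirely at the level of the coordinate ODE \eq{geodeq}.

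First I would introduce the squared Euclidean coordinate distance to $p$,
\[
f(x) := \sum_\mu (x^\mu - x^\mu_0)^2\;, \qquad B_p(r) = \{f < r^2\}\;,
\]
where $x^\mu_0$ are the coordinates of $p$, and compute its second derivative along an affinely parameterized geodesic $\gamma(s)=(x^\mu(s))$. Writing $r^\mu(s):=x^\mu(s)-x^\mu_0$ and using \eq{geodeq} to substitute $\ddot x^\mu = -\Gamma^\mu_{\alpha\beta}\dot x^\alpha\dot x^\beta$, one obtains
\[
\frac{d^2}{ds^2}\big(f\circ\gamma\big) = 2\Big(\sum_\mu (\dot x^\mu)^2 - \sum_\mu r^\mu\, \Gamma^\mu_{\alpha\beta}\dot x^\alpha\dot x^\beta\Big)\;.
\]
The first term is the squared Euclidean norm $|\dot x|^2$, which is strictly positive for any nonconstant geodesic: if $\dot x$ vanished at one point, uniqueness of solutions of \eq{geodeq} (valid since the $C^2$ metric has locally Lipschitz Christoffel symbols) would force $\gamma$ to be constant, and the constant case is trivial. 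The second term is the only obstruction to convexity, and the key observation is that it is small when $\gamma$ stays close to $p$.

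Next I would fix $r_0$. As the metric is $C^2$, the $\Gamma^\mu_{\alpha\beta}$ are continuous, hence bounded by some $M$ on a fixed closed ball $\overline{B_p(\rho)}\subset\mcO$; then for every $r_0\le\rho$ one has $|r^\mu|\le r_0$ and $|\dot x^\alpha\dot x^\beta|\le|\dot x|^2$ on $\overline{B_p(r_0)}$, so that $\big|\sum_\mu r^\mu\Gamma^\mu_{\alpha\beta}\dot x^\alpha\dot x^\beta\big|\le C\,r_0\,|\dot x|^2$ for a constant $C$ depending only on $M$ and $\dim\mcM$. Choosing $r_0\le\rho$ with $C r_0<1$ gives
\[
\frac{d^2}{ds^2}\big(f\circ\gamma\big) \ge 2(1-Cr_0)\,|\dot x|^2 > 0
\]
at every $s$ with $\gamma(s)\in\overline{B_p(r_0)}$.

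To conclude, take a geodesic segment $\gamma:[a,b]\to\overline{B_p(r_0)}$ with $\gamma(a),\gamma(b)\in B_p(r)$, $r<r_0$. Since $\gamma$ maps $[a,b]$ into $\overline{B_p(r_0)}$ by hypothesis, the estimate above holds for all $s\in[a,b]$, so $f\circ\gamma$ is strictly convex on $[a,b]$. The elementary endpoint bound for convex functions then yields $f(\gamma(s))\le\max\{f(\gamma(a)),f(\gamma(b))\}<r^2$ for all $s\in[a,b]$, i.e.\ $\gamma(s)\in B_p(r)$, as required. I expect the only delicate point to be the bookkeeping in the choice of $r_0$ — in particular, fixing the bound $M$ on the Christoffel symbols over a ball independent of $r_0$ \emph{before} shrinking — while the remainder is the routine computation of $(f\circ\gamma)''$ together with the maximum principle for convex functions.
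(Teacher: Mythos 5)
Your proof is correct and follows essentially the same route as the paper's: both compute the second derivative of the squared coordinate distance $f$ along the geodesic, use the geodesic equation to replace $\ddot x^\mu$, bound the Christoffel term by $C\,r_0\,|\dot x|^2$ via compactness, and conclude from convexity that $f\circ\gamma$ has no interior maximum. Your additional care about fixing the Christoffel bound on a ball chosen before shrinking $r_0$, and about the degenerate case $\dot\gamma=0$, are minor refinements of the same argument.
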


  \proof Let $x^\mu(s)$ be the coordinate representation of
  $\gamma$, set
  $$f(s):= \sum_\mu (x^\mu-x^\mu_0)^2\;,$$
  where $x^\mu_0$ is the coordinate representation of $p$.
  We have
  \beaa \frac {df}{ds} & = &  2\sum_\mu (x^\mu-x^\mu_0)\frac
  {dx^\mu}{ds}\;,
  \\\frac {d^2f}{ds^2} & = &  2\sum_\mu \left(\frac
  {dx^\mu}{ds}\right)^2+ 2 \sum_\mu (x^\mu-x^\mu_0)\frac
  {d^2x^\mu}{ds^2}
  \\ & = &  2\sum_\mu \left(\frac
  {dx^\mu}{ds}\right)^2- 2 \sum_\mu (x^\mu-x^\mu_0)\Gamma^\mu_{\alpha\beta}\frac
  {dx^\alpha}{ds}\frac
  {dx^\beta}{ds}\;.
\eeaa Compactness of $\overline{B_p(r_0)}$ implies that there
exists a constant $C$ such that we have
$$\left|\sum_\mu (x^\mu-x^\mu_0)\Gamma^\mu_{\alpha\beta}\frac
  {dx^\alpha}{ds}\frac
  {dx^\beta}{ds}\right| \le C r_0\sum_\mu \left(\frac
  {dx^\mu}{ds}\right)^2\;.$$
It follows that $d^2f/ds^2\ge 0$ for $r_0$ small enough. This
shows that $f$ has no interior maximum if $r_0$ is
small enough, whence the result.
  \qed
\end{coco}
\ptcx{look up the uniform normal coordinates in Lee's Riemannian
  geometry}

It is convenient to introduce the following notion:

\begin{Definition}\label{Delem}
  An \emph{elementary region} is an open coordinate ball $\mcO$ within the domain of a normal
  coordinate neighborhood $\mcU_p$
  such that
  \begin{enumerate}\item
  $\mcO$ has compact closure $\overline\mcO$ in $\mcU_p$,
  and
  \item $\nabla t$ and $\partial_t$ are timelike on
      $\overline\mcU$. \ptcx{add geodesic convexity if
      needed?}
\end{enumerate}
\end{Definition}

Note that $\partial_t$ is timelike if and only if
$$g_{tt}=g(\partial_t,\partial_t)<0\;,$$
while $\nabla t$ is timelike if and only if
$$g^{tt}=g^\#(dt,dt)<0\;.$$
 Existence of elementary regions containing some point $p\in M$
follows immediately from Proposition \ref{PC1}: In normal
coordinates centred at $p$ one chooses $\mcO$ be an open
coordinate ball of sufficiently small radius.%
\restrict{ It follows from
Proposition~\ref{Pgc}
 \ptc{does not! but this is perhaps irrelevant?}
 that, by choosing the radius even smaller
if necessary, we can also require that every two points $p,q\in
\mcO$ be joined by a geodesic $\Gamma_{pq}$ contained in
$\mcO$. Further, $\Gamma_{pq}$ may be required to be unique
within the class of geodesics entirely contained in $\mcO$.
\ptc{might need elaborating upon}
}

\section{Causal paths}
 \label{SCp}

Let $(\mcM,\Lg)$ be a space-time. The basic objects in
causality theory are paths. We shall always use
\emph{parameterized paths}: by definition, these are
\emph{continuous} maps from some interval to $\mcM $. We will
use interchangedly the terms ``path", ``parameterized path", or
``curve", but we note that some authors make the distinction.
(For example, in~\cite{YCB:GRbook} a path is a map and a curve
 is the image of a path, oriented by parameterisation.)
 \ptcx{Yvonne has a nice terminology; this could be
 synchronized, but if so should be done globally through all folders as
both terms are used many times }

Let $\gamma:I\to\mcM$ and let $\mcU\subset \mcM$, we will write
$$\gamma \subset \mcU$$ whenever the image $\gamma(I)$ of $I$ by $\gamma$ is a
subset of $\mcU$.\ptcx{added, cross-check for repetitiveness
later} We will sometimes write $$\gamma\cap \mcU$$ for the path
obtained by removing from $I$ those parameters $s$ for which
$\gamma(s)\not\in\mcU$. Strictly speaking, our definition of a
path requires $\gamma$ to be defined on a connected interval,
so if the last construction gives several intervals $I_i$, then
$\gamma\cap \mcU$ will actually describe the collection of
paths $\gamma|_{I_i}$.

\begin{coco}
Some authors define a path in $\mcM $ as what would be in our terminology the
\emph{image of a
parameterized path}. In this approach one forgets about the
parameterization of $\gamma$, and identifies two paths which
have the same image and differ only by a reparameterization.
This leads to various difficulties when considering end points
of causal paths --- \emph{cf.}\/ Section~\ref{Sepeip}, or
limits of sequences of paths --- \emph{cf.}\/
Section~\ref{SAc}, and therefore we do \emph{not\/} adopt this
approach.
\end{coco}

 If $\gamma:I\to\mcM$ where $I=[a,b)$ or $I=
[a,b]$, then $\gamma(a)$ is called the \emph{starting point\/} of
the path $\gamma$, or of its image $\gamma(I)$. If $I=(a,b]$
or $I=[a,b]$, then $\gamma(b)$ is called the \emph{end point}.
(This definition will be extended in Section~\ref{Sepeip}, but
it is sufficient for the purposes here.) We shall say that
$\gamma:[a,b]\to \mcM$ is a path from $p$ to $q$ if
$\gamma(a)=p$ and $\gamma(b)=q$.

\begin{coco} In previous treatments of
causality
theory~\cite{PenroseDiffTopo,GerochDoD,HE,Wald:book,BONeill,Beem-Ehrlich:Lorentz2}
 one defines \emph{future directed timelike\/}
paths as those paths $\gamma$ which are \emph{piecewise
differentiable}, with $\dot \gamma$ timelike and future
directed wherever defined; at break points one further assumes
that both the left-sided and right-sided derivatives are
timelike. This definition turns out to be quite inconvenient
for several purposes. For instance, when studying the global
causal structure of space-times one needs to take limits of
timelike curves, obtaining thus --- by definition ---
\emph{causal future directed\/} paths. Such limits will not be
piecewise differentiable most of the time, which leads one to
the necessity of considering paths with poorer
differentiability properties. One then faces the unhandy
situation in which timelike and causal paths have completely
different properties. In several theorems separate proofs have
then to be given. The approach we present avoids this, leading
--- we believe --- to a considerable simplification of the
conceptual structure of the theory.\end{coco}

It is convenient to choose once and for all some auxiliary
Riemannian metric $\backmg$ on $\mcM$, such that
$(\mcM,\backmg)$ is complete --- such a metric always
exists~\cite{NomizuOzeki}; let $\distb$ denote the associated
distance function. A parameterized path $\gamma:I \to \mcM$
from an interval $I\subset \R$ to $M$ is called {\em locally
Lipschitzian\/} if for every compact subset $K$ of $I$ there
exists a constant $C(K)$ such that
$$\forall \ s_1,s_2\in K \quad \distb(\gamma(s_1),\gamma(s_2))\le
C(K) |s_1-s_2|\;.$$

\begin{coco}
It is natural to enquire whether the class of paths so defined
depends upon the background metric $\backmg$:

\begin{Proposition}
Let $h_1$ and $h_2$ be two complete Riemannian metrics on
$\mcM$. Then a path $\gamma:I\to \mcM$ is locally Lipschitzian
with respect to $h_1$ if and only if it is locally Lipschitzian
with respect to $h_2$.
\end{Proposition}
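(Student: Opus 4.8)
The plan is to exploit the \emph{local} comparability of Riemannian metrics together with the \emph{global} completeness hypothesis, which is exactly what is needed to pass from comparability of the metric tensors to comparability of the induced distance functions for nearby points. By symmetry it suffices to prove one implication, say that local Lipschitzness with respect to $h_1$ implies local Lipschitzness with respect to $h_2$.

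First I would reduce to a compact interval: given a compact $K\subset I$, it is contained in $[\min K,\max K]\subset I$, so it is enough to establish the Lipschitz estimate on a compact subinterval $[a,b]\subset I$. The image $\gamma([a,b])$ is then compact in $\mcM$, and using completeness of $h_1$ (Hopf--Rinow) the set $\Omega:=\{x\in\mcM:\dist_{h_1}(x,\gamma([a,b]))\le 1\}$ is closed and $h_1$-bounded, hence compact, and contains $\gamma([a,b])$ in its interior. By compactness of $\Omega$ there is a constant $C\ge 1$ with $C^{-1}h_1(v,v)\le h_2(v,v)\le C\,h_1(v,v)$ for every $x\in\Omega$ and $v\in T_x\mcM$; consequently any path contained in $\Omega$ has $h_2$-length at most $\sqrt{C}$ times its $h_1$-length.

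The heart of the argument is a \emph{local} comparison of the two distance functions. For $p,q\in\gamma([a,b])$ with $\dist_{h_1}(p,q)<1$, completeness gives a minimizing $h_1$-geodesic $\sigma$ from $p$ to $q$; parametrized by $h_1$-arclength it has total length $\dist_{h_1}(p,q)<1$, so every point of $\sigma$ is at $h_1$-distance less than $1$ from $p\in\gamma([a,b])$ and therefore lies inside $\Omega$. Comparing lengths on $\Omega$ yields $\dist_{h_2}(p,q)\le L_{h_2}(\sigma)\le\sqrt{C}\,L_{h_1}(\sigma)=\sqrt{C}\,\dist_{h_1}(p,q)$. Now, assuming $\dist_{h_1}(\gamma(s_1),\gamma(s_2))\le C_1|s_1-s_2|$ on $[a,b]$, I would set $\delta:=1/C_1$, so that $|s_1-s_2|<\delta$ forces $\dist_{h_1}(\gamma(s_1),\gamma(s_2))<1$ and hence $\dist_{h_2}(\gamma(s_1),\gamma(s_2))\le\sqrt{C}\,C_1|s_1-s_2|$. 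For arbitrary $s_1<s_2$ in $[a,b]$ I would subdivide $[s_1,s_2]$ into finitely many subintervals of length less than $\delta$ and sum, using the triangle inequality for $\dist_{h_2}$, obtaining $\dist_{h_2}(\gamma(s_1),\gamma(s_2))\le\sqrt{C}\,C_1|s_1-s_2|$. This is the desired Lipschitz bound with constant $\sqrt{C}\,C_1$, and the reverse implication follows by interchanging $h_1$ and $h_2$.

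The step I expect to be the genuine obstacle --- and the only place completeness is really used --- is the local distance comparison: pointwise comparability of the two metrics on $\Omega$ does \emph{not} by itself bound $\dist_{h_2}$ by $\dist_{h_1}$, because each $\dist_{h_i}$ is an infimum over \emph{all} connecting paths, which a priori may wander far outside $\Omega$. The resolution is precisely to confine the competitor path to $\Omega$, which is guaranteed for sufficiently close points by the arclength bound on the minimizing $h_1$-geodesic; one should double-check that such a minimizer exists and stays in $\Omega$, for which the Hopf--Rinow theorem and the bound $\dist_{h_1}(p,q)<1$ are exactly what is required.
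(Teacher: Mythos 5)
Your proof is correct and follows essentially the same route as the paper's: both arguments rely on completeness via Hopf--Rinow (to get compactness of closed bounded sets and existence of minimizing geodesics), on uniform comparability of $h_1$ and $h_2$ over a compact set, and on confining the minimizing geodesics to that compact set in order to compare the induced distance functions along $\gamma$ --- which is exactly the subtlety you flag at the end. The only difference is technical rather than conceptual: the paper confines the minimizers inside balls whose radius is the $h_a$-length of $\gamma$ over the compact parameter set, so the two-sided distance comparison holds for all parameter pairs at once, whereas you use a fixed unit $h_1$-neighbourhood and recover the global Lipschitz bound by subdivision and the triangle inequality; both devices are equally valid.
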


\proof Let $K\subset I$ be a compact set, then $\gamma(K)$ is
compact. Let $L_a$, $a=1,2$ denote the $h_a$-length of
$\gamma$, set
$$\mcK_a:=\cup_{s\in K}B_{h_a}(\gamma(s),L_a)\;,$$ where
$B_{h_a}(p,r)$ denotes a geodesic ball, with respect to the
metric $h_a$, centred at $p$, of radius $r$. Then the
$\mcK_a$'s are compact. Likewise the sets $$\hmcK_a\subset
T\mcM\;,$$ defined as the sets of $h_a$-unit vectors over
$\mcK_a$, are compact. This implies that there exists a
constant $C_K$ such that for all $X\in T_pM$, $p\in \mcK_a$, we
have
$$C_K^{-1}h_1(X,X)\le h_2(X,X) \le C_K h_1(X,X)\;.$$

Let $\gamma_{a,s_1,s_2}$ denote any minimising $h_a$-geodesic
between $\gamma(s_1)$ and $\gamma(s_2)$, then
$$\forall \ s_1,s_2\in K \qquad \gamma_{a,s_1,s_2}\subset \mcK_a\;.$$
This implies \beaa
\dist_{h_2}(\gamma(s_1),\gamma(s_2))&=&\int_{\sigma=\gamma_{2,s_1,s_2}}{\sqrt{h_2(
\dot \sigma,\dot \sigma)}}\\
&\ge &C_K^{-1}\int_{\sigma=\gamma_{2,s_1,s_2}}{\sqrt{h_1( \dot
\sigma,\dot \sigma)}}\\ &\ge
&C_K^{-1}\inf_\sigma\int_{\sigma}{\sqrt{h_1( \dot \sigma,\dot
\sigma)}}\\ &=
&C_K^{-1}\int_{\sigma=\gamma_{1,s_1,s_2}}{\sqrt{h_1( \dot
\sigma,\dot \sigma)}}\\
&=&C_K^{-1}\dist_{h_1}(\gamma(s_1),\gamma(s_2))\;. \eeaa From
symmetry with respect to the interchange of $h_1$ with $h_2$ we
conclude that for all $s_1,s_2\in K$
$$
C_K^{-1} \dist_{h_1}(\gamma(s_1),\gamma(s_2))\le
\dist_{h_2}(\gamma(s_1),\gamma(s_2))\le C_K
\dist_{h_1}(\gamma(s_1),\gamma(s_2))\;,
$$
and the result easily follows. \qed
\end{coco}

More generally, if $(N,k)$ and $(M,h)$ are Riemannian
manifolds, then a map $\phi$ is called \emph{locally
Lipschitzian}, or \emph{locally Lipschitz}, if for every
compact subset $K$ of $N$ there exists a constant $C(K)$ such
that
$$\forall \ p,q\in K \quad \dist_h(\phi(p),\phi(q))\le
C(K) \dist_k(p,q)\;.$$ A map is called \emph{Lipschitzian\/} if the
constant $C(K)$ above can be chosen independently of $K$.

The following important theorem of Rademacher will play a key
role in our considerations:

\begin{Theorem}[Rademacher]\label{TRademacher}
Let $\phi:M\to N$ be a locally Lipschitz map from a manifold
$M$ to a manifold $N$. Then:
\begin{enumerate}
\item $\phi$ is classically differentiable almost
    everywhere, with ``almost everywhere" understood in the
    sense of the Lebesgue measure in local coordinates on
    $M$.
\item The distributional derivatives of $\phi$ are in
    $L^\infty_\loc$ and are equal to the classical ones
    almost everywhere.
\item Suppose that $M$ is an open subset of $\R$ and $N$ is
    an open subset of\/ $\R^n$. Then $\phi$ is the integral
    of its distributional derivative,
    \bel{eTRad}\phi(x)-\phi(y) = \int^x_y \frac
    {d\phi}{dt}dt\;.\ee
\end{enumerate}
\end{Theorem}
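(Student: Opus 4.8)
The plan is to reduce the statement to a scalar Lipschitz function on an open subset of $\R^m$ and then prove differentiability almost everywhere by a directional-derivative argument. First I would note that all three assertions are local and invariant under passing to coordinate charts: composition with smooth charts preserves the local Lipschitz property on compact sets, and a vector-valued map is differentiable exactly when each of its components is. Hence it suffices to treat a single Lipschitz function $f\colon U\to\R$ with $U\subset\R^m$ open and Lipschitz constant $L$, and to recover the general case component-by-component in charts.

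Next I would establish that directional derivatives exist almost everywhere. For a fixed unit vector $v$, the function $t\mapsto f(x+tv)$ is Lipschitz along each line, hence absolutely continuous, hence differentiable for a.e.\ $t$; this is precisely the one-variable content of part (3). Fubini's theorem then shows that $D_vf(x):=\lim_{t\to0}\bigl(f(x+tv)-f(x)\bigr)/t$ exists for a.e.\ $x$. In particular the partials $\partial_if$ exist a.e.\ and are bounded by $L$, so $\nabla f:=(\partial_1f,\dots,\partial_mf)\in L^\infty_\loc$. Integrating a difference quotient against $\zeta\in C_c^\infty$ and passing to the limit by dominated convergence (legitimate since the quotients are bounded by $L$) gives $\int(\partial_if)\,\zeta=-\int f\,\partial_i\zeta$, identifying the classical a.e.\ partials with the distributional ones; this is part (2). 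The same computation for a general $v$ yields $\int(D_vf)\zeta=\int(\nabla f\cdot v)\zeta$ for every test function, whence $D_vf=\nabla f\cdot v$ almost everywhere; and part (3) in full is just the fundamental theorem of calculus applied to the absolutely continuous function $t\mapsto f(x+tv)$.

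The decisive step, which I expect to be the main obstacle, is upgrading ``all directional derivatives exist and depend linearly on $v$ almost everywhere'' to genuine (Fréchet) differentiability almost everywhere, since pointwise existence of directional derivatives is strictly weaker than differentiability. I would fix a countable dense set $\{v_k\}$ of unit vectors and discard the null set outside of which $\nabla f$ exists and all the identities $D_{v_k}f=\nabla f\cdot v_k$ hold. At a retained point $x$, set
$$Q(v,t):=\frac{f(x+tv)-f(x)}{t}-\nabla f(x)\cdot v\;.$$
The Lipschitz bound gives $|Q(v,t)-Q(v',t)|\le\bigl(L+|\nabla f(x)|\bigr)\,|v-v'|$, i.e.\ equicontinuity in $v$ uniformly in $t$, while $Q(v_k,t)\to0$ as $t\to0$ for each $k$ by construction.

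Covering the unit sphere by finitely many $\epsilon$-balls about suitable $v_k$ and combining these two facts shows $\sup_{|v|=1}|Q(v,t)|\to0$ as $t\to0$, which is exactly Fréchet differentiability of $f$ at $x$; this proves part (1). The subtle point is precisely this bridging of the gap between directional and full differentiability, and it is the compactness of the sphere together with the equicontinuity forced by the Lipschitz constant that makes it work. \qed
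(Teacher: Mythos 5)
You should know that the paper does not actually prove this theorem: its ``proof'' consists entirely of citations to Evans--Gariepy for all three points (with point 3 sketched there via approximation of $\phi$ by $C^1$ functions and a passage to the limit, rather than your one-variable absolute-continuity argument). What you have written out is, in substance, the standard proof contained in that cited reference: directional derivatives almost everywhere via the one-dimensional Lebesgue theorem and Fubini; identification of $D_v f$ with $\nabla f\cdot v$ almost everywhere by integrating difference quotients against test functions and using dominated convergence; and the upgrade from directional to Fr\'echet differentiability via a countable dense set of directions, the equicontinuity estimate $|Q(v,t)-Q(v',t)|\le (L+|\nabla f(x)|)\,|v-v'|$, and compactness of the unit sphere. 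So your route differs from the paper only in being self-contained where the paper outsources; what it buys is that the reader sees exactly where the Lipschitz hypothesis does its work (the equicontinuity of $Q(\cdot,t)$ uniformly in $t$), and your identification of the directional-to-Fr\'echet step as ``the main obstacle'' is exactly right -- that is the crux of the classical argument.

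One technical point needs patching before your Fubini step: you must check that the set of points $x$ at which $D_vf(x)$ exists is Lebesgue measurable, since Fubini is being applied to its complement. This is standard: because $f$ is continuous, $\limsup_{t\to 0}\bigl(f(x+tv)-f(x)\bigr)/t$ and $\liminf_{t\to 0}\bigl(f(x+tv)-f(x)\bigr)/t$ are unchanged if $t$ is restricted to rational values, hence both are Borel functions of $x$, and the set where they coincide and are finite is Borel. With that remark inserted (together with the routine observation that locally Lipschitz charts carry null sets to null sets, so ``almost everywhere'' is chart-independent, which your localization tacitly uses), your argument is complete and correct.
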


\proof Point 1. is the classical statement of Rademacher, the
proof can be found in~\cite[Theorem~2, p.~235]{EvansGariepy}.
Point 2. is Theorem 5 in~\cite[p.~131]{EvansGariepy} and
Theorem~1 of~\cite[p.~235]{EvansGariepy}. (In that last theorem
the classical differentiability a.e. is actually established
for all $W^{1,p}_\loc$ functions with $p>n$). Point 3. can be
established by approximating $\phi$ by $C^1$ functions as
in~\cite[Theorem~1, p.~251]{EvansGariepy}, and passing to the
limit.\qed

 Point 2. shows that the usual properties of the
derivatives of continuously differentiable functions --- such
as the Leibniz rule, or the chain rule --- hold almost
everywhere for the derivatives of locally Lipschitzian
functions. By point 3. those properties can be used freely
whenever integration is involved.


We will use the symbol $$\dot \gamma$$ to denote the
\emph{classical}\ptcx{classical added, but maybe this is a bad
idea} derivative of a path $\gamma$, wherever defined. A
parameterized path $\gamma$ will be called {\em causal future
directed\/}
if $\gamma$ is locally Lipschitzian, with $\dot \gamma$
--- causal
 and future directed almost everywhere\footnote{Some authors
 allow constant paths to be causal, in which case the sets $J^\pm(\mcU;\mcO)$
defined below automatically contain $\mcU$. This leads to
unnecessary discussions when concatenating causal paths, so
that we find it convenient not to allow such paths in our
definition.}. Thus, $\dot \gamma$ is defined almost
everywhere; and it is causal future directed almost everywhere
on the set on which it is defined. A parameterized path
$\gamma$ will be called {\em timelike future directed} if
$\gamma$ is locally Lipschitzian, with $\dot \gamma$
--- timelike future directed almost everywhere. \emph{Past directed} parameterized paths are defined
by changing ``future" to ``past" in the definitions above.

\ptcx{make a formal statement; show that this leads to
equivalence of the notion of image and of the notion of
parameterized path, so that this is the same as what geometers
have been doing, except for a choice of origin on the path} A
useful property of locally Lischitzian paths is that they can
be parameterized by $h$-distance. Let $\gamma:[a,b)\to\mcM$ be
a path, and suppose that $\dot \gamma$ is non-zero almost
everywhere
--- this is certainly the case for causal paths. By
Rademacher's theorem the integral
$$s(t)=\int_a^t|\dot \gamma|_h(u)du
$$ is well defined. Clearly $s(t)$ is a continuous strictly
increasing function of $t$, so that the map $t\to s(t)$ is a
bijection from $[a,b)$ to its image. The new path $\hat \gamma
:=\gamma\circ s^{-1}$ differs from $\gamma$ only by a
reparametrization, so it has the same image in $\mcM$. The
reader will easily check that $|\dot{\hat \gamma}|_h=1$ almost
everywhere. Further, $\hat \gamma$ is Lipschitz continuous with
Lipschitz constant smaller than or equal to $1$: denoting by
$\distb$ be the associated distance function, we claim that
\bel{distbe} \distb(\hat\gamma(s),\hat\gamma(s'))\le |s-s'|\;.
 \ee
In order to prove \eq{distbe}, we calculate, for $s>s'$:
\bean s-s' & = & \int_{s'}^s  dt
 \\
 & = &
\int_{s'}^s \underbrace{\sqrt{h(\dot {\hat \gamma},\dot{
\gamma)}}(t)}_{=1\ \mbox{\scriptsize a.e.}} dt
 \nonumber
  \\
 & \ge  &
  \inf_{\tilde \gamma} \int_{\tilde \gamma} \sqrt{h(\dot
{\tilde \gamma},\dot{\tilde \gamma})}(t) dt =
\distb(\hat\gamma(s),\hat\gamma(s'))\;,
 \eeal{distba}
where the infimum is taken over $\tilde \gamma$'s which start
at $\gamma(s')$ and finish at $\gamma(s)$.

\section{Futures, pasts}
 \label{SCnpfp}

Let $\mcU\subset\mcO\subset \mcM$. One sets
\begin{eqnarray*}
 I^+(\mcU;\mcO)&:= &\{q\in\mcO  : 
 \mbox{\rm there exists a
timelike future directed path} \\ &  & \mbox{ from $\mcU$ to $q$
contained in $\mcO$}\} \;,
\\
 J^+(\mcU;\mcO)&:= &\{q\in\mcO : 
 \mbox{\rm there exists a
causal future directed path} \\ &  & \mbox{ from $\mcU$ to $q$
contained in $\mcO$}\} \cup \mcU\;.
\end{eqnarray*}
$I^-(\mcU;\mcO)$ and $J^-(\mcU;\mcO)$ are defined by replacing
``future" by ``past" in the definitions above. The set
$I^+(\mcU;\mcO)$ is called the \emph{timelike future of $\mcU$
in $\mcO$}, while $J^+(\mcU;\mcO)$ is called the \emph{causal
future of $\mcU$ in $\mcO$}, with similar terminology for the
timelike past and the causal past.  We will write $I^\pm(\mcU)$
for $I^\pm(\mcU;\mcM)$, similarly for $J^\pm(\mcU)$, and one
then omits the qualification ``in $\mcM$" when talking about
the causal or timelike futures and pasts of $\mcU$. We will
write $I^\pm(p;\mcO)$ for $I^\pm(\{p\};\mcO)$, $I^\pm(p)$ for
$I^\pm(\{p\};\mcM)$, \emph{etc.}

\begin{coco}
Although our definition of causal curves does not coincide with
the usual
ones~\cite{HE,PenroseDiffTopo,Beem-Ehrlich:Lorentz2,Wald:book},
it is equivalent to those. Indeed, it is easily seen that our
definition of $J^\pm$ is identical to the standard one. On the
other hand, the class of timelike curves as defined here is
quite wider then the standard one; nevertheless, the resulting
sets $I^\pm$ are again identical to the usual ones (compare
Proposition~\ref{CP3.1a}).
\end{coco}

It is legitimate to raise the question, why is it interesting
to consider sets such as $J^+(\mcO)$. The answer is two-fold:
From a mathematical point of view, those sets appear naturally
when describing the \emph{finite speed of propagation} property
of wave-type equations, such as Einstein's equations\restrict{, see
Section~\ref{}\ptc{where?} for details}. From a physical point
of view, such constructs are related to the fundamental
postulate of general relativity, that \emph{no signal can
travel faster than the speed of light}. This is equivalent to
the statement that the only events of space-times that are
influenced by an event $p\in \mcM$ are those which belong to
$J^+(\mcM)$.

\begin{Example}
\label{Ex1} Let $\mcM=S^1\times S^1$ with the flat metric
$g=-dt^2+d\varphi^2$. Geodesics of $g$ through $(0,0)$  are of
the form \be\label{Ex1a} \gamma(s)=(\alpha s \mod 2 \pi, \beta
s  \mod 2 \pi)\;, \ee
 where $\alpha$ and
$\beta$ are constants; the remaining geodesics are obtained by
a rigid translation of \eq{Ex1a}. Clearly any two points of
$\mcM$ can be joined by a timelike geodesic, which shows that
for all $p\in\mcM$ we have
$$I^+(p)=J^+(p)=\mcM\;.$$
It is of some interest to point out that for irrational
$\beta/\alpha$ in \eq{Ex1a} the corresponding geodesic is dense
in $\mcM$.
\end{Example}

There is an obvious \emph{meta-rule} in the theory of causality
that whenever a property involving $I^+$ or $J^+$ holds, then
an identical property will be true with $I^+$ replaced by
$I^-$, and with $J^+$ replaced by $J^-$, or both. This is
proved by changing the time-orientation of the manifold. Thus
we will only make formal statements for the futures.


 Example~\ref{Ex1} shows that in causally pathological
space-times the notions of futures and pasts need not to carry
interesting information. On the other hand those objects are
useful tools to study the global structure of those space-times
which possess reasonable causal properties.

We start with some elementary properties of futures and pasts:
\begin{Proposition}\label{PM1.a}
\Mcont
We have:
\begin{enumerate}\item $I^+(\mcU)\subset J^+(\mcU)$.
\item $p\in I^+(q)\quad\Longleftrightarrow\quad q\in
    I^-(p)$.
\item $\mcV\subset I^+(\mcU)\quad\Longrightarrow\quad
    I^+(\mcV)\subset I^+(\mcU)$.
\end{enumerate}
Similar properties hold with $I^+$ replaced by $J^+$.
\end{Proposition}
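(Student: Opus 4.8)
The plan is to read all three assertions straight off the definitions of $I^\pm$ and $J^\pm$, using only three elementary operations on parameterized paths: passing from a timelike path to the underlying causal path, reversing a path, and concatenating two paths. None of these operations uses differentiability of the metric beyond what is already encoded in the definition of a locally Lipschitzian causal/timelike path, so the continuous-metric hypothesis of the statement is enough throughout. Item 1 is immediate: if $q\in I^+(\mcU)$ then a timelike future directed path $\gamma$ joins a point of $\mcU$ to $q$, and since every future pointing timelike vector is in particular future pointing causal, the same $\gamma$ is a causal future directed path; hence $q\in J^+(\mcU)$, giving $I^+(\mcU)\subset J^+(\mcU)$ with nothing further to check.

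For item 2 I would argue by reversal. If $p\in I^+(q)$, pick a timelike future directed path $\gamma:[a,b]\to\mcM$ with $\gamma(a)=q$ and $\gamma(b)=p$, and set $\tilde\gamma(s):=\gamma(a+b-s)$. Composition with the affine bijection $s\mapsto a+b-s$ preserves the locally Lipschitz property, and almost everywhere $\dot{\tilde\gamma}(s)=-\dot\gamma(a+b-s)$; since negation interchanges the two components of the timelike cone, $\dot{\tilde\gamma}$ is timelike past directed almost everywhere. Thus $\tilde\gamma$ is a timelike past directed path from $p$ to $q$, i.e.\ $q\in I^-(p)$, and the reverse implication is identical after exchanging the roles of $p$ and $q$ and of future/past.

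For item 3 I would use concatenation. Given $q\in I^+(\mcV)$ there is a timelike future directed path $\gamma_2$ from some $v\in\mcV$ to $q$, and since $\mcV\subset I^+(\mcU)$ there is a timelike future directed path $\gamma_1$ from some $u\in\mcU$ to $v$. After reparametrising so the two abut, I would form the path $\gamma$ equal to $\gamma_1$ on the first subinterval and to $\gamma_2$ on the second. The points to verify are that $\gamma$ is continuous at the junction (automatic, since both pieces take the value $v$ there), that it is locally Lipschitzian (each piece is, and a single junction cannot spoil the Lipschitz bound on a compact parameter set), and that $\dot\gamma$ is timelike future directed almost everywhere (true on each piece, the junction being a single point of measure zero). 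Then $\gamma$ witnesses $q\in I^+(\mcU)$.

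The $J^+$ analogues of items 2 and 3 follow from the same reversal and concatenation arguments with \emph{timelike} replaced by \emph{causal}, using that the causal cone is likewise stable under the relevant operations. The one place demanding extra care, and the main (though mild) obstacle, is the clause $\cup\,\mcU$ in the definition of $J^+$: in item 2 with $p=q$ both sides hold trivially, and in item 3 one must treat separately the cases $q\in\mcV$ and $v\in\mcU$, where the relevant path is absent and the conclusion follows by inclusion rather than by gluing. Apart from this degenerate casework and the routine verification that concatenation preserves both the local Lipschitz property and the almost-everywhere causal/timelike condition, the proof is a direct unwinding of the definitions.
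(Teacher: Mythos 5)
Your proposal is correct and follows essentially the same route as the paper's proof: item 1 by inclusion of the timelike cone in the causal cone, item 2 by reversing the parameterization, and item 3 by concatenation of paths. The extra details you supply (preservation of the Lipschitz property under reversal and concatenation, the measure-zero junction point, and the degenerate cases forced by the $\cup\,\mcU$ clause in the definition of $J^+$) are sound elaborations of steps the paper leaves implicit.
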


\proof 1. A timelike curve is a causal curve.

2. If $[0,1]\ni s \to \gamma(s)$  is a future directed causal
curve from $q$ to $p$, then $[0,1]\ni s\to \gamma(1-s)$ is a
past directed causal curve from $p$ to $q$.

3. Let us start by introducing some notation: consider
$\gamma_a:[0,1]\to \mcM$, $a=1,2$, two causal curves such that
$\gamma_1(1)=\gamma_2(0)$. We define the \emph{concatenation
operation} $\gamma_1\cup \gamma_2$ as follows: \bel{concat}
(\gamma_1\cup\gamma_2)(s)=\cases{ \gamma_1(s)\;, & $s\in
[0,1]\;,$ \cr \gamma_2(s-1)\;, & $s\in [1,2]\;.$ } \ee There is
an obvious extension of this definition when the ranges of
parameters of the $\gamma_a$'s are not $[0,1]$, or when a
finite number $i\ge 3$ of paths is considered, we leave the
formal definition to the reader.

Let, now, $r\in I^+(\mcV)$, then there exists $q\in \mcV$ and a
future directed timelike curve $\gamma_2 $ from $q$ to $r$.
Since $\mcV\subset I^+(\mcU)$ there exists a future directed
timelike curve $\gamma_1$ from some point $p\in \mcU$ to $q$.
Then the curve $\gamma_1\cup\gamma_2$ is a future directed
timelike curve from $\mcU$ to $r$. \qed

We have the following, intuitively obvious, description of
futures and pasts of points in Minkowski space-time (see Figure~\ref{Flightcone});
 in
Proposition~\ref{P3} below we will shortly prove a similar
local result in general space-times, with a considerably more
complicated proof.
\begin{figure}[ht]
\hspace{-2cm}{
 \psfrag{p}{\Large $p$}
 \psfrag{futuret}{\Large future pointing timelike}
 \psfrag{pastt}{\Large past pointing timelike}
 \psfrag{futuren}{\Large future pointing null}
 \psfrag{pastn}{\Large past pointing null}
 \resizebox{4in}{!}{\includegraphics{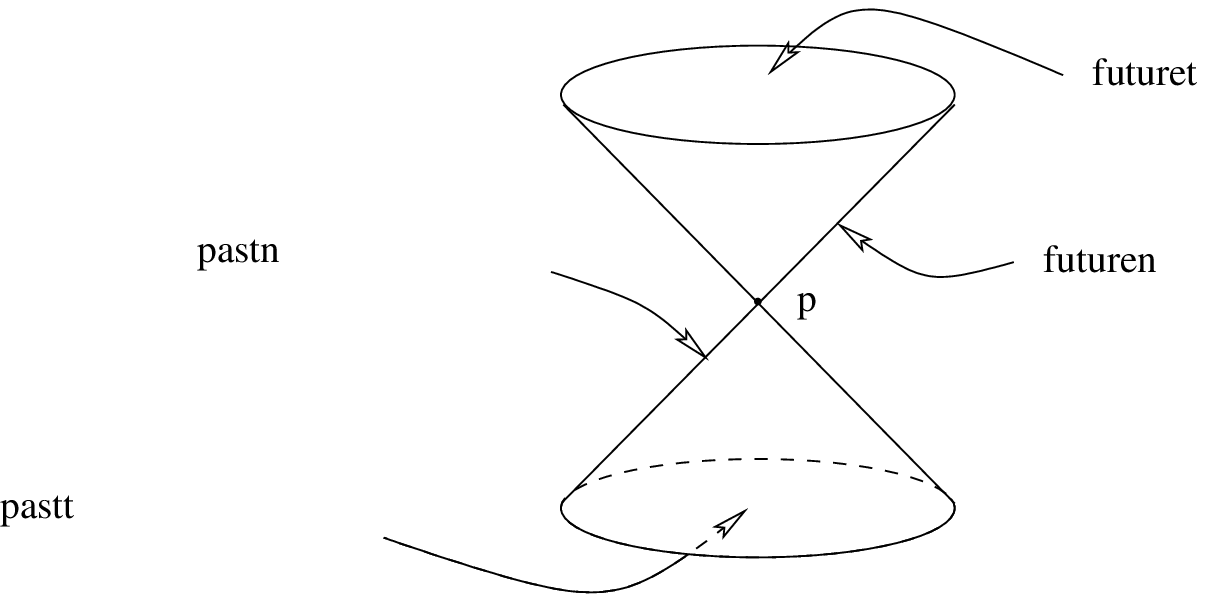}}
}
\caption{The light cone at $p$.
\label{Flightcone}
}
\end{figure}

\begin{Proposition}\label{PM1}
Let $(\mcM,g)$ be the $(n+1)$-dimensional Minkowski space-time
$\R^{1,n}:=(\R^{1+n},\eta)$, with Minkowskian coordinates
$(x^\mu)=(x^0,\vec x)$ so that
$$\eta(\partial_\mu,\partial_\nu)=\diag(-1,+1,\ldots,+1)\;.$$ Then
\begin{enumerate}
\item $I^+(0)=\{x^\mu:\eta_{\mu\nu}x^\mu x^\nu<0, \ x^0>0\}$,
\item $J^+(0)=\{x^\mu:\eta_{\mu\nu}x^\mu x^\nu\le 0,\ x^0\ge
    0\}$,
\item in particular the boundary $\dotJ^+(0)$ of $J^+(0)$
    is the union of $\{0\}$ together with all null future
    directed geodesics with initial point at the origin.
\end{enumerate}
\end{Proposition}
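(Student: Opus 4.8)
The plan is to prove the two set-equalities by establishing each inclusion separately, and then to read off the description of the boundary from the resulting explicit formula for $J^+(0)$. The inclusions ``$\supseteq$'' are immediate: if $q\neq 0$ satisfies $\eta_{\mu\nu}q^\mu q^\nu<0$ and $q^0>0$ (resp.\ $\eta_{\mu\nu}q^\mu q^\nu\le 0$ and $q^0\ge 0$), then the straight ray $\gamma(s)=sq$, $s\in[0,1]$, has constant derivative $\dot\gamma(s)=q$, which is timelike (resp.\ causal) and future directed throughout, so $q\in I^+(0)$ (resp.\ $q\in J^+(0)$); here I use that $q^0=0$ together with $\eta_{\mu\nu}q^\mu q^\nu\le 0$ forces $q=0$, so every nonzero $q$ in the causal cone has $q^0>0$. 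The point $0$ lies in $J^+(0)$ by definition. This disposes of the easy half.

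The substance is the inclusion ``$\subseteq$''. Let $\gamma:[0,1]\to\mcM$ be a causal future directed path from $0$ to $q$. Since $\gamma$ is locally Lipschitz, point~3 of Rademacher's Theorem~\ref{TRademacher} lets me write $q=\int_0^1\dot\gamma(s)\,ds$ componentwise. Almost everywhere $\dot\gamma$ is causal and future directed, which in Minkowskian coordinates, with the orientation in which $\partial_0$ is future pointing, means $|\dot{\vec\gamma}(s)|\le\dot\gamma^0(s)$ and $\dot\gamma^0(s)>0$. Integrating the time component gives $q^0>0$, while the triangle inequality in $\R^n$ gives $|\vec q|=|\int_0^1\dot{\vec\gamma}\,ds|\le\int_0^1|\dot{\vec\gamma}|\,ds\le\int_0^1\dot\gamma^0\,ds=q^0$, so $\eta_{\mu\nu}q^\mu q^\nu\le 0$; this proves $J^+(0)\subseteq\{\eta_{\mu\nu}x^\mu x^\nu\le 0,\ x^0\ge 0\}$. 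If moreover $\gamma$ is timelike then $\dot\gamma^0-|\dot{\vec\gamma}|>0$ a.e., and since this integrand is nonnegative and positive a.e.\ its integral is strictly positive, whence $|\vec q|\le\int_0^1|\dot{\vec\gamma}|\,ds<\int_0^1\dot\gamma^0\,ds=q^0$ and $\eta_{\mu\nu}q^\mu q^\nu<0$; this proves $I^+(0)\subseteq\{\eta_{\mu\nu}x^\mu x^\nu<0,\ x^0>0\}$.

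For the boundary I would observe that $J^+(0)=\{\eta_{\mu\nu}x^\mu x^\nu\le 0,\ x^0\ge 0\}$ is closed, so $\partial J^+(0)=J^+(0)\setminus\mathrm{int}\,J^+(0)$. The set $I^+(0)=\{\eta_{\mu\nu}x^\mu x^\nu<0,\ x^0>0\}$ is open and contained in $J^+(0)$, hence contained in its interior; conversely any point of $J^+(0)$ with $\eta_{\mu\nu}x^\mu x^\nu=0$, as well as the origin, has spacelike points (with $\eta_{\mu\nu}x^\mu x^\nu>0$) in every neighborhood and so is not interior, giving $\mathrm{int}\,J^+(0)=I^+(0)$. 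Therefore $\partial J^+(0)=\{\eta_{\mu\nu}x^\mu x^\nu=0,\ x^0\ge 0\}$; the origin is one such point, and every other $q$ in this set has $q^0>0$ and is the endpoint $\gamma(1)$ of the future directed null geodesic $\gamma(s)=sq$. Hence $\partial J^+(0)$ is exactly $\{0\}$ together with all future directed null geodesics issuing from the origin.

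The only place demanding care is the strictness in the timelike case: the pointwise strict inequality $|\dot{\vec\gamma}|<\dot\gamma^0$ must be promoted to a strict inequality between the integrals, which works precisely because the a.e.-positive integrand $\dot\gamma^0-|\dot{\vec\gamma}|$ has strictly positive integral. Everything else is the integral representation supplied by Rademacher's theorem together with elementary facts about the Minkowski cone.
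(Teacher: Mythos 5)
Your proof is correct and follows essentially the same route as the paper: both hinge on Rademacher's theorem (point 3 of Theorem~\ref{TRademacher}) to write the curve as the integral of its a.e.\ causal derivative, and then integrate the cone inequality $|\dot{\vec\gamma}|_\delta\le\dot\gamma^0$ — your appeal to the integral triangle inequality is just a repackaging of the paper's ``length of the spatial path $\ge$ Euclidean distance between endpoints''. The only differences are that you spell out the steps the paper leaves as immediate (the reverse inclusions via straight rays, the promotion of the strict pointwise inequality to a strict integral inequality in the timelike case, and the identification of $\mathrm{int}\,J^+(0)$ with $I^+(0)$ for point 3), which is fine.
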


\proof Let $\gamma(s)=(x^\mu(s))$ be a parameterized causal
path in $\R^{1,n}$ with $\gamma(0)=0$. At points at which
$\gamma$ is differentiable we have
$$\eta(\dot\gamma,\dot\gamma)=-\left(\frac{dx^0}{ds}\right)^2 +\left|\frac{d\vec
x}{ds}\right|_\delta^2\le 0\;, \quad \frac{dx^0}{ds}\ge
\left|\frac{d\vec x}{ds}\right|_\delta^2\ge 0 \;.$$ Now, similarly
to a differentiable function, a locally Lipschitzian function
is the integral of its distributional derivative (see
Theorem~\ref{TRademacher}) hence
\begin{deqarr}
x^0(s) &=& \int_0^s \frac{dx^0}{ds} (u) du \label{PM1a}
\\
& \ge & \int_0^s  \left|\frac{d\vec x}{ds}\right|_\delta (u) du
=:\ell(\gamma_s)\;.\label{PM1b}\arrlabel{eqPM1}
\end{deqarr}
Here $\ell(\gamma_s)$ is the length, with respect  to the flat
Riemannian metric $\delta$, of the path $\gamma_s$, defined as
$$[0,s]\ni u \to \vec x(u) \in \R^n\;.$$ Let $\dist_\delta$
denote the distance function of the metric $\delta$, thus
$$\dist_\delta(\vec x,\vec y) = |\vec x - \vec y|_\delta\;,$$
it is well known that $$\ell_s\ge \dist_\delta(\vec x(s),\vec
x(0)) = |\vec x(s) - \vec x(0)|_\delta= |\vec x(s)
|_\delta\;.$$ Therefore
$$x^0(s)\ge|\vec
x(s) |_\delta\;, $$  and point 2. follows. For timelike curves the
same proof applies, with all inequalities becoming strict,
establishing point 1. Point 3. is a straightforward consequence
of point 2. \qed

\begin{coco}
There is a natural generalisation of Proposition~\ref{PM1} to
the following class of metrics on $\R\times \hyp$:
\bel{warped}g =-\varphi dt^2 + h\;,\qquad \partial_t \varphi =
\partial_t h = 0\;,\ee
where $h$ is a Riemannian metric on $\hyp$, and $\varphi$ is a
strictly positive function. Such metrics are sometimes called
\emph{warped-products},\index{warped product} with
\emph{warping function}\index{warping function} $\varphi$.

\begin{Proposition}
\label{PM1warp} Let $\mcM=\R\times \hyp$ with the metric
\eq{warped}, and let $p\in \hyp$. Then $J^+((0,p))$ is the
graph over $\hyp$ of the distance function $\dist _{\hat
h}(p,\cdot)$ of the \emph{optical metric} $$\hat
h:=\varphi^{-1}h\;,$$ while $I^+((0,p))$ is the epigraph of
$\dist_{\hat h}(p,\cdot)$,
$$I^+((0,p))=\{(t,q):t>\dist_{\hat h}(p,q)\}\;.$$
\end{Proposition}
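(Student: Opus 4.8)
The plan is to reduce the statement to a one-dimensional inequality relating the $t$-increment of a future directed causal curve to the $\hat h$-length of its spatial projection, exactly as in the Minkowski computation of Proposition~\ref{PM1}, and then to build explicit curves realizing the bound in both directions.

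First I would write a future directed causal path as $\gamma(s)=(t(s),x(s))$ with $x(s)\in\hyp$. Since $g(\partial_t,\partial_t)=-\varphi<0$, the field $\partial_t$ is timelike and may be taken future pointing, so future orientation forces $\dot t>0$ wherever $\dot\gamma$ exists; moreover $g(\dot\gamma,\dot\gamma)=-\varphi(x)\,\dot t^2+h(\dot x,\dot x)\le 0$ gives $\dot t\ge |\dot x|_{\hat h}$ almost everywhere, with strict inequality almost everywhere in the timelike case. Invoking Theorem~\ref{TRademacher} to integrate, $t(s)=\int_0^s\dot t\,du\ge\int_0^s|\dot x|_{\hat h}\,du=\mathrm{length}_{\hat h}(x|_{[0,s]})\ge\dist_{\hat h}(p,x(s))$, with strict inequality in the timelike case. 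Since $\gamma(0)=(0,p)$, this yields the inclusions $J^+((0,p))\subset\{(t,q):t\ge\dist_{\hat h}(p,q)\}$ and $I^+((0,p))\subset\{(t,q):t>\dist_{\hat h}(p,q)\}$.

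For the reverse inclusions I would construct curves by hand. Given $(t,q)$ with $t>\dist_{\hat h}(p,q)=:L$, choose a spatial path $x$ from $p$ to $q$ of $\hat h$-length $\ell$ with $L\le\ell<t$ (possible since $L$ is the infimum of such lengths), parameterize it by $\hat h$-arclength on $[0,\ell]$, and set $\gamma(s)=((t/\ell)s,\,x(s))$. Then $\dot t=t/\ell>1=|\dot x|_{\hat h}$, so $g(\dot\gamma,\dot\gamma)=\varphi\bigl(-(t/\ell)^2+1\bigr)<0$ and $\gamma$ is a timelike future directed path from $(0,p)$ to $(t,q)$; hence $\{t>\dist_{\hat h}(p,\cdot)\}\subset I^+((0,p))$, completing the identification of $I^+$. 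For a boundary point $(L,q)$ the same recipe with $\ell=L$, using an actual minimizing $\hat h$-geodesic parameterized by arclength and $t(s)=s$, produces a null future directed curve reaching $(L,q)$; together with $I^+\subset J^+$ this gives $J^+((0,p))=\{(t,q):t\ge\dist_{\hat h}(p,q)\}$, whose boundary is the graph of $\dist_{\hat h}(p,\cdot)$.

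The main obstacle is the boundary (graph) case: reaching $(L,q)$ exactly requires a genuinely length-minimizing geodesic from $p$ to $q$ in $(\hyp,\hat h)$, whose existence is guaranteed by Hopf--Rinow when $\hat h$ is complete but may fail otherwise; absent completeness one only obtains $\overline{I^+((0,p))}=\{t\ge\dist_{\hat h}(p,\cdot)\}$ at those $q$ for which the infimum is attained. The remaining ingredients are routine: the strict-versus-nonstrict bookkeeping in the integral inequality, and the verification that the constructed paths are locally Lipschitzian with the asserted causal character, both of which follow directly from Theorem~\ref{TRademacher} and the $t$-independence of $\varphi$ and $h$.
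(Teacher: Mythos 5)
Your proposal is correct, and its core is the paper's own argument: the paper passes to the conformally rescaled metric $\varphi^{-1}g=-dt^2+\hat h$ (causal character being conformally invariant) and then repeats the integral inequality of Proposition~\ref{PM1}; your pointwise inequality $\dot t\ge|\dot x|_{\hat h}$, obtained by dividing the causality condition $-\varphi\,\dot t^2+h(\dot x,\dot x)\le 0$ by $\varphi$, is exactly that step, and your Rademacher-based integration is identical to \eq{PM1an}-\eq{PM1bn}.

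Where you go beyond the paper is worth recording. The paper's proof ends with ``one concludes as before'', which tacitly includes the reverse inclusions; in Minkowski space these are trivial (straight segments are geodesics), but in the warped product, as you correctly observe, placing the boundary point $(\dist_{\hat h}(p,q),q)$ inside $J^+((0,p))$ requires an $\hat h$-minimizing curve from $p$ to $q$, hence completeness of $(\hyp,\hat h)$ by Hopf--Rinow, or at least attainment of the infimum. Your caveat is genuine and not cosmetic: take $\hyp=\R^2\setminus\{0\}$ with $\varphi\equiv 1$, $h$ flat, and $q=-p$; then $\dist_{\hat h}(p,q)=2$ is not attained, no causal curve from $(0,p)$ reaches $(2,q)$, and $J^+((0,p))$ is strictly smaller than the closed epigraph $\{t\ge\dist_{\hat h}(p,\cdot)\}$, so the proposition as literally stated needs a completeness (or attainment) hypothesis. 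Your explicit construction of the timelike curves $s\mapsto((t/\ell)s,x(s))$ for the strict epigraph is unconditional and fills in what the paper leaves implicit; on the $J^+$ statement your proposal is, if anything, more careful than the paper's own proof.
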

\proof Since the causal character of a curve is invariant under
conformal transformations, the causal and timelike futures with
respect to the metric $g$ coincide with those with respect to
the metric
$$\varphi^{-1}g= -dt^2+\hat h\;.$$ Arguing as in the
the proof of Proposition~\ref{PM1}, \eq{PM1} becomes
\begin{deqarr}
x^0(s) &=& \int_0^s \frac{dx^0}{ds} (u) du \label{PM1an}
\\
& \ge & \int_0^s  \left|\frac{d\vec x}{ds}\right|_{\hat h} (u)
du =:\ell_{\hat h}(\gamma_s)\;,\label{PM1bn}
\end{deqarr}
where $\ell_{\hat h}(\gamma_s)$ denotes the length of
$\gamma_s$ with respect to $\hat h$, and one concludes as
before. \qed
\end{coco}

The next result shows that, locally, causal behaviour is
identical to that of Minkowski space-time. The proof of this
``obvious" fact turns out to be surprisingly involved:

\begin{Proposition}\label{P3}
\MCtwok
Let $\mcO_p$ be a domain of normal coordinates $x^\mu$ centered
at $p\in\mcM$ as in Proposition~\ref{PC1}. Let
$$\mcO\subset\mcO_p\;$$
be any normal--coordinate ball such that $\nabla x^0$ is
timelike on $\mcO$. Recall (compare \eq{simpq}) that the
function $\sigma_p:\mcO_p\to\R$ has been defined by the formula
\bel{lorp} \sigma_p(\exp_p(x^\mu)):=\eta_{\mu\nu}x^\mu
x^\nu\;.\ee Then
\be\label{P3a}
 \mcO\ni q=\exp_p(x^\mu) \in \cases{
I^+(p;\mcO) & $ \Longleftrightarrow \ \ \sigma_p(q) < 0\;, \
x^0> 0\;$, \cr J^+(p;\mcO) & $ \Longleftrightarrow \ \
\sigma_p(q) \le 0\;, \ x^0\ge 0\;$,
 \cr \dotJ^+(p;\mcO) & $
\Longleftrightarrow \ \ \sigma_p(q) = 0\;, \ x^0\ge 0\;$,} \ee
with the obvious analogues for pasts. In particular, a point
$q=\exp_p(x^\mu)\in \dotJ^+(p;\mcO_p)$ if and only if $q$  lies
on the null geodesic segment $[0,1]\ni s\to \gamma(s)=
\exp_p(sx^\mu)\in\dotJ^+(p;\mcO_p)$.
\end{Proposition}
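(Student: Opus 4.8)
The plan is to read off all three equivalences from the radial–geodesic structure encoded in $\sigma_p$, using the causal character of $\nabla\sigma_p$ recorded in \eq{eq:PC1.3}. The implications ``$\Leftarrow$'' are the straightforward ones. Suppose $q=\exp_p(x^\mu)$ with $\sigma_p(q)<0$ and $x^0(q)>0$. Then the radial curve $s\mapsto\gamma(s)=\exp_p(sx^\mu)$ is an affinely parameterized geodesic whose tangent $x^\mu\partial_\mu$ has point-independent causal type; by \eq{PC1.3.3} it is timelike, and since $\nabla\sigma_p$ is proportional to $x^\mu\partial_\mu$ (that is \eq{PC1.3.4a}), the time-orientation part of \eq{eq:PC1.3} shows it is future directed because $x^0>0$. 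As $\mcO$ is star-shaped about $p$ we have $\gamma\subset\mcO$, so $q\in I^+(p;\mcO)$. The same argument with $\sigma_p(q)=0$, $x^0>0$ produces a future null radial geodesic, placing $q$ in $J^+(p;\mcO)$; this simultaneously proves the ``in particular'' clause, since it exhibits every $q$ on a radial null geodesic from $p$ as a point of $J^+$ lying on $\{\sigma_p=0\}$.

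For the converse ``$\Rightarrow$'' inclusions I would fix a future directed causal curve $\gamma\colon[0,1]\to\mcO$ with $\gamma(0)=p$, $\gamma(1)=q$, and extract the two cone conditions separately. First the time condition: since $\nabla x^0$ is timelike on $\mcO$ by hypothesis and equals $-\partial_0$ at $p$ (because $g_{\mu\nu}(0)=\eta_{\mu\nu}$ by \eq{eq:PC1.1}), it is past directed throughout the connected set $\mcO$. Hence $-\nabla x^0$ is future timelike, and by \eq{eDC1} (in its extension to causal vectors) we get $\frac{d}{ds}x^0(\gamma(s))=g(\nabla x^0,\dot\gamma)>0$ wherever $\dot\gamma\neq 0$, which by Rademacher's theorem \ref{TRademacher} is almost everywhere. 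Integrating gives $x^0(q)>0$ for $q\neq p$, so $x^0\ge 0$ on $J^+(p;\mcO)$ in all cases, and $\gamma$ lives in $\{x^0>0\}$ for $s>0$.

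The remaining and genuinely delicate point is $\sigma_p(q)\le 0$. Here the key is again \eq{PC1.3.4a}: the function $\phi(s):=\sigma_p(\gamma(s))$ is locally Lipschitz with $\phi'(s)=g(\nabla\sigma_p,\dot\gamma)$ almost everywhere, and on the region $\{x^0>0\}$ the vector $\nabla\sigma_p$ is future directed and causal whenever $\sigma_p\le 0$ (timelike for $\sigma_p<0$, null on the cone), by \eq{eq:PC1.3}. Thus $\phi'(s)\le 0$ at every such $s$, a \emph{barrier} that forbids $\phi$ from increasing through the value $0$. The obstacle — and the reason this ``obvious'' fact is surprisingly involved — is that this argument gives no control in the spacelike exterior $\{\sigma_p>0\}$, where $\nabla\sigma_p$ is spacelike and $\phi'$ has no definite sign; indeed a causal curve can move freely within a level set $\{\sigma_p=\mathrm{const}>0\}$. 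One must therefore rule out any excursion of $\gamma$ into the exterior by a first-exit argument across the null cone $\{\sigma_p=0,\ x^0>0\}$, supplemented by an analysis at the vertex: near $s=0$ the normalization $g_{\mu\nu}(0)=\eta_{\mu\nu}$ forces the initial causal direction into the closed Minkowski cone $\{\sigma_p\le 0,\ x^0\ge 0\}$, so $\gamma$ enters the closed cone rather than the exterior. Reconciling the pointwise barrier with the merely almost-everywhere differentiability of $\gamma$, and handling the degenerate geometry at $p$, is the step I expect to absorb most of the effort.

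Once $\sigma_p(q)\le 0$ and $x^0(q)\ge 0$ are established, combining them with the first paragraph yields the stated equivalences for $I^+(p;\mcO)$ and $J^+(p;\mcO)$, and taking the boundary gives $\partial J^+(p;\mcO)=\{\sigma_p=0,\ x^0\ge 0\}$. The final clause then follows immediately, since by the first paragraph each such boundary point is joined to $p$ by the radial null geodesic $s\mapsto\exp_p(sx^\mu)$, and conversely any point of that geodesic satisfies $\sigma_p=0$, $x^0\ge 0$ and so lies on the boundary.
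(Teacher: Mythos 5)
Your ``$\Leftarrow$'' direction, the $x^0$ monotonicity via the time function $x^0$, and the reduction of everything to the inequality $\sigma_p\le 0$ are all in line with the paper. But the core step --- showing that a general causal curve from $p$ cannot reach $\{\sigma_p>0\}$ --- is precisely where your sketch does not close, and the two devices you propose (a ``vertex analysis'' plus a ``first-exit argument across the null cone'') both fail for the same underlying reason. First, a locally Lipschitzian causal curve need not be differentiable at $s=0$, so there is no ``initial causal direction'' to force into the closed Minkowski cone; the same applies at a putative first-exit parameter $s_*$, which may well be a point of non-differentiability. Second, even where $\gamma$ is differentiable, the barrier inequality $\frac{d}{ds}\bigl(\sigma_p\circ\gamma\bigr)=g(\nabla\sigma_p,\dot\gamma)\le 0$ is available \emph{only at parameters where $\gamma(s)$ lies in the closed cone}; it is a one-sided constraint that cannot exclude escape through $\{\sigma_p=0\}$, because immediately after an exit the curve sits in the spacelike exterior where the sign of the derivative is unconstrained, and at the exit point itself the bound $\le 0$ (possibly $=0$, as for the null generators of the cone) gives no contradiction. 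This is why the paper's Lemma~\ref{LP3} is formulated only for curves that have already entered the \emph{open} region $\{x^0>0,\ \sigma_p<0\}$: there the sup/first-exit argument works, since $\sigma_p\circ\gamma$ is non-increasing up to the exit time and hence cannot climb back to $0$.

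What is missing from your proposal are the two ideas the paper uses to reduce the general case to that lemma. For a \emph{timelike} curve, $\dot\gamma(s_i)$ exists and is timelike for a sequence $s_i\to 0$; applying the already-established case with base point $\gamma(s_i)$ gives $\sigma_{\gamma(s_i)}(\gamma(s))<0$, and the continuity of $(r,q)\mapsto\sigma_r(q)$ (standard ODE dependence on initial data, cf.\ Remark~\ref{R25X11.1}) yields $\sigma_p(\gamma(s))\le 0$ in the limit, with strictness recovered from the integral formula \eq{P3b}. For a general \emph{causal} curve one perturbs the metric, $g_\epsilon(X,Y)=g(X,Y)-\epsilon\, g(e_0,X)g(e_0,Y)$, so that every $g$-causal vector becomes $g_\epsilon$-timelike; the timelike case applied to $g_\epsilon$ gives $\sigma(g_\epsilon)_p(\gamma(s))<0$, and continuity of $\sigma(g_\epsilon)_p$ in $\epsilon$ gives $\sigma_p(\gamma(s))\le 0$. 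Without these (or substitutes for them) the step you yourself flag as absorbing ``most of the effort'' is not merely laborious --- it is absent, and the proposed route through a pointwise barrier cannot supply it.
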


\begin{Remark}\label{RP3a}
Example~\ref{Ex1} shows that $I^\pm{(p;\mcO)}$, {\em etc.},\/
cannot be replaced by $I^\pm{(p)}$, because causal paths
through $p$ can exit $\mcO_p$ and reenter it; this can actually
happen again and again.
\end{Remark}

Before proving Proposition~\ref{P3}, we note the following
straightforward implication thereof:

\begin{Proposition}\label{P3.3.7}
%
Let $\mcO$ be as in  Proposition~\ref{P3}, then
$I^+(p;\mcO) $ is open.
%
\qed
\end{Proposition}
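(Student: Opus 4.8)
The plan is to read off the openness directly from the characterization already established in Proposition~\ref{P3}. That proposition identifies
\[
I^+(p;\mcO) = \{q\in\mcO : \sigma_p(q) < 0,\ x^0(q) > 0\},
\]
so the entire content of the statement is packaged into the equivalence \eq{P3a}; nothing further about causal or timelike curves needs to be invoked, and in particular no new limiting or deformation argument is required.

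First I would note that by Remark~\ref{R25X11.1} the function $\sigma_p$ is differentiable, hence in particular continuous on $\mcO_p$, while $x^0$ is one of the normal coordinate functions and is therefore continuous on $\mcO$ as well. Consequently the two sets $\{q\in\mcO : \sigma_p(q) < 0\}$ and $\{q\in\mcO : x^0(q) > 0\}$ are the preimages of the open half-lines $(-\infty,0)$ and $(0,\infty)$ under continuous maps, and hence are open subsets of $\mcO$. Since $\mcO$ is itself open in $\mcM$, these sets are open in $\mcM$.

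Then I would simply conclude: by \eq{P3a}, $I^+(p;\mcO)$ is the intersection of these two open sets, and a finite intersection of open sets is open, which gives the claim. No estimates or case distinctions are needed beyond this.

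The hard part is already behind us, as it resides entirely in Proposition~\ref{P3}, whose proof handles the genuinely delicate direction of showing that every point with $\sigma_p(q)<0$ and $x^0(q)>0$ is actually reachable by a timelike future directed curve staying inside $\mcO$. Once that equivalence is available, openness is immediate. The only subtlety worth flagging is that the characterization is for the \emph{local} future $I^+(p;\mcO)$ rather than $I^+(p)$ (\emph{cf.}\/ Remark~\ref{RP3a}): it is precisely the localization to $\mcO$ that makes the coordinate description $\{\sigma_p<0,\ x^0>0\}$ valid, and hence the continuity argument applicable.
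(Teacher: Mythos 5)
Your proof is correct and is essentially the paper's own argument: the paper states this proposition with no written proof at all, treating it as a "straightforward implication" of Proposition~\ref{P3}, and your write-up simply makes explicit why it is straightforward — $I^+(p;\mcO)$ is, by \eq{P3a}, the intersection of the preimages of open half-lines under the continuous functions $\sigma_p$ and $x^0$ inside the open set $\mcO$.
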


\noindent {\sc Proof of Proposition~\ref{P3}:} As the
coordinate rays are geodesics, the implications ``$\Leftarrow$"
in \eq{P3a} are obvious. It remains to prove ``$\Rightarrow$".
We start with a lemma:

\begin{Lemma}
 \label{LP3.0}
Let  $\tau$ be a \emph{time function, i.e.},\/ a
differentiable\ptcx{do not impose past pointing gradient ?}
function with \underline{timelike} \underline{past-pointing}
gradient. For any $\tau_0$, a future directed causal path
$\gamma$ cannot leave the set $\{q:\tau(q)>\tau_0\}$; the same
holds for sets of the form $\{q:\tau(q)\ge\tau_0\}$. In fact,
$\tau$ is non-decreasing along $\gamma$, strictly increasing if
$\gamma$ is timelike.
\end{Lemma}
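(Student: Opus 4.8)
The plan is to reduce the whole statement to the monotonicity of the scalar function $s\mapsto(\tau\circ\gamma)(s)$, since the two set-exit assertions follow at once from it: if $\tau(\gamma(s_0))>\tau_0$ (resp. $\ge\tau_0$) and $\tau\circ\gamma$ is non-decreasing, then $\tau(\gamma(s))\ge\tau(\gamma(s_0))>\tau_0$ (resp. $\ge\tau_0$) for all $s\ge s_0$, so $\gamma$ can never exit the set. Thus it suffices to show that $\tau\circ\gamma$ is non-decreasing, and strictly increasing when $\gamma$ is timelike.

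First I would record the regularity of the composition. Since $\gamma$ is locally Lipschitzian and $\tau$ has a continuous (timelike) gradient — so that $d\tau$ is bounded on the compact image of any compact parameter interval — the map $\tau\circ\gamma$ is locally Lipschitzian on $I$. This is the decisive structural point, because it lets Theorem~\ref{TRademacher} apply: $\tau\circ\gamma$ is differentiable almost everywhere and, by point~3 of that theorem, is the integral of its derivative,
\[
\tau(\gamma(s_2))-\tau(\gamma(s_1))=\int_{s_1}^{s_2}\frac{d}{ds}(\tau\circ\gamma)(s)\,ds .
\]
At almost every $s$ the path $\gamma$ is differentiable with $\dot\gamma(s)$ causal future directed, and the chain rule for the $C^1$ function $\tau$ gives there $\frac{d}{ds}(\tau\circ\gamma)(s)=g(\nabla\tau,\dot\gamma)(s)$, using that $\nabla\tau$ is the metric dual of $d\tau$.

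To fix the sign I would invoke the elementary fact, proved in an ON frame exactly as for \eq{eDC1}, that the scalar product of two future directed causal vectors is $\le 0$: placing a future timelike $V$ along $e_0$ gives $g(V,W)=-V^0W^0\le 0$, and more generally $g(V,W)\le -V^0W^0+|\vec V|\,|\vec W|\le 0$ for $V,W$ future causal, with equality only if both are null. Since $\nabla\tau$ is \emph{past} directed timelike, the vector $-\nabla\tau$ is future directed timelike, so applying this to $-\nabla\tau$ and $\dot\gamma$ yields $g(\nabla\tau,\dot\gamma)\ge 0$ almost everywhere; hence the displayed integral is $\ge 0$ for $s_2>s_1$ and $\tau\circ\gamma$ is non-decreasing. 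When $\gamma$ is timelike, $\dot\gamma$ is timelike a.e., so the equality case (both vectors null) is excluded and $g(\nabla\tau,\dot\gamma)>0$ a.e.; the integral is then strictly positive on every nondegenerate subinterval, giving strict monotonicity. (In fact the bound is strict already because $-\nabla\tau$ is timelike, so one even gets strict increase in the causal case, but the stated dichotomy is all that is needed.)

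The only genuine obstacle is the one implicitly handled in the second paragraph: almost-everywhere positivity of the derivative does \emph{not} by itself force monotonicity — the Cantor function is the standard warning for functions that are not absolutely continuous. The argument therefore rests essentially on $\tau\circ\gamma$ being locally Lipschitz, equivalently absolutely continuous, so that the fundamental theorem of calculus embodied in point~3 of Theorem~\ref{TRademacher} is legitimate. This is precisely where the Lipschitz regularity built into the definition of causal paths, together with the continuity of $d\tau$, is indispensable; the remaining pointwise sign computation is routine.
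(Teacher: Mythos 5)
Your proof is correct and follows essentially the same route as the paper's: observe that $\tau\circ\gamma$ is locally Lipschitzian, invoke point~3 of Theorem~\ref{TRademacher} to write $\tau(\gamma(s_2))-\tau(\gamma(s_1))=\int_{s_1}^{s_2} g(\nabla\tau,\dot\gamma)\,du$, and conclude by the sign of the scalar product of oppositely directed causal vectors, with strictness in the timelike case. Your added remarks (the ON-frame sign computation, and the observation that absolute continuity, not just a.e.\ positivity of the derivative, is what makes the argument work) only make explicit what the paper's proof uses implicitly.
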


\proof Let $\gamma:I\to \mcM$ be a future directed
parameterized causal path, then $\tau\circ\gamma$ is a locally
Lipschitzian function, hence equals the integral of its
derivative on any compact subset of its domain of definition,
so that
\begin{eqnarray} \tau(\gamma(s_2))-\tau(\gamma(s_1)) &=& \int_{s_1}^{s_2} \frac{d(\tau\circ\gamma)}{du} (u) du
\nn 
\\
& =& \int_{s_1}^{s_2} \langle d\tau,\dot \gamma\rangle (u)du
\nn 
\\
& =& \int_{s_1}^{s_2} g(\nabla \tau,\dot \gamma)(u) du \ge
0\;,\label{P3a2}
\end{eqnarray}
since $\nabla \tau$ is timelike past directed, while
$\dot\gamma$ is causal future directed or zero wherever
defined. The function $s\to \tau(\gamma(s))$ is strictly
increasing when $\gamma$ is timelike, since then the integrand
in \eq{P3a2} is strictly positive almost everywhere.
\qed

Applying Lemma~\ref{LP3.0} to the time function $x^0$ we obtain
the claim about $x^0$ in \eq{P3a}. To justify the remaining
claims of Proposition~\ref{P3}, we recall
Equation~\eq{eq:PC1.3}
\begin{eqnarray}
& \nabla\lorp\ \mbox{ is\ \ } \cases{ \mbox{ timelike future
directed } & on $ \{q:\lorp(q)<0\;,\; x^0(q)>0\}$ , \cr \mbox{
null future directed} & on $ \{q:\lorp(q)=0\;,\;x^0(q)>0\}$ .}
 \phantom{xxxx}
\label{eq:PC1.4}
\end{eqnarray}
Let $\gamma=(\gamma^\mu):I\to\mcO$ be a parameterized future
directed causal path with $\gamma(0)=p$, then $\lorp\circ
\gamma$ is a locally Lipschitzian function, hence
\begin{eqnarray}
\nn \lorp\circ\gamma(t)&=&\int_0^t
\frac{d(\lorp\circ\gamma)(s)}{ds}ds
\\ &=&\int_0^t g(\nabla\lorp,\dot \gamma)(s)ds
\;.\label{P3b}
\end{eqnarray}We note the following:

\begin{Lemma}\label{LP3}
A future directed causal path $\gamma\subset\mcO_p$ cannot
leave the set $\{q:x^0(q)>0,\lorp(q)<0\}$.
\end{Lemma}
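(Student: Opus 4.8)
The plan is to track the value of $\lorp$ along $\gamma$ by means of the integral representation \eq{P3b} and to exploit the sign information on $\nabla\lorp$ provided by \eq{eq:PC1.4}. First I would record the pointwise form of \eq{P3b}: since $\gamma$ is locally Lipschitz and $\lorp$ is $C^1$, Rademacher's theorem (Theorem~\ref{TRademacher}) gives, for almost every $s$,
\[
\frac{d(\lorp\circ\gamma)}{ds}(s)=g(\nabla\lorp,\dot\gamma)(s)\;.
\]
Wherever $\gamma(s)$ lies in the open set $\{x^0>0,\ \lorp<0\}$, \eq{eq:PC1.4} says that $\nabla\lorp$ is timelike future directed, while $\dot\gamma$ is causal future directed (or zero) almost everywhere. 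The scalar product of two future directed causal vectors is non-positive (compare \eq{eDC1}), so $g(\nabla\lorp,\dot\gamma)\le 0$ throughout that set. Hence $\lorp\circ\gamma$ is non-increasing for as long as $\gamma$ remains inside $\{x^0>0,\ \lorp<0\}$.

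Next I would run a first-exit argument. Suppose $\gamma(s_0)\in\{x^0>0,\ \lorp<0\}$, let $s_1>s_0$ be the first parameter at which $\gamma$ leaves this open set, and assume $\gamma\subset\mcO_p$ throughout (so that \eq{eq:PC1.4} applies). On $[s_0,s_1)$ the path stays in the set, so integrating the displayed identity from $s_0$ yields $\lorp(\gamma(s))\le\lorp(\gamma(s_0))<0$ for $s\in[s_0,s_1)$, and by continuity $\lorp(\gamma(s_1))\le\lorp(\gamma(s_0))<0$. Thus already $\lorp(\gamma(s_1))<0$. To pin down $x^0$ at the exit I would use the defining formula \eq{lorp}: if $x^0=0$ then $\lorp=\eta_{\mu\nu}x^\mu x^\nu=|\vec x|^2\ge 0$, so the slice $\{x^0=0,\ \lorp<0\}$ is \emph{empty}. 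Since $x^0\circ\gamma>0$ on $[s_0,s_1)$, continuity gives $x^0(\gamma(s_1))\ge 0$, while emptiness of the slice together with $\lorp(\gamma(s_1))<0$ forbids $x^0(\gamma(s_1))=0$; hence $x^0(\gamma(s_1))>0$. Therefore $\gamma(s_1)$ lies again in the open set $\{x^0>0,\ \lorp<0\}$, contradicting the choice of $s_1$. This forces $\gamma$ to stay in the set.

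The only genuinely substantive point is the inequality $g(\nabla\lorp,\dot\gamma)\le 0$, which rests on the elementary Lorentzian fact that two future directed causal vectors have non-positive scalar product, applied with $\nabla\lorp$ \emph{timelike}, as guaranteed by \eq{eq:PC1.4} exactly on the region under consideration; everything else is continuity bookkeeping. The neatest twist, which I would highlight, is that the confinement of $x^0$ does not require treating $x^0$ as a time function on $\mcO_p$ at all: it follows for free from the emptiness of $\{x^0=0,\ \lorp<0\}$, i.e.\ from the fact that within $\{\lorp<0\}$ the sign of $x^0$ cannot change along a continuous curve. (Weak monotonicity of $\lorp\circ\gamma$ already suffices, since we start from $\lorp(\gamma(s_0))<0$ strictly; one need not argue the strict decrease that would anyway hold because $\dot\gamma$ is a nonzero causal vector almost everywhere.)
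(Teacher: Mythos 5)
Your proof is correct, and its engine is the same as the paper's: the sign $g(\nabla\lorp,\dot\gamma)\le 0$ coming from the causal characters of $\nabla\lorp$ and $\dot\gamma$, integrated via the Lipschitz/Rademacher machinery to give monotonicity of $\lorp\circ\gamma$, followed by a first-exit continuity argument. The one genuine divergence is the treatment of $x^0$: the paper first invokes Lemma~\ref{LP3.0} (the time-function lemma applied to $x^0$, which uses that $\nabla x^0$ is timelike) to keep $x^0$ positive, and then runs the monotonicity of $\lorp$ on the closed set $\{x^0>0,\ \lorp\le 0\}$, where \eq{eq:PC1.4} only provides that $\nabla\lorp$ is causal. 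You instead run the monotonicity purely on the open set $\{x^0>0,\ \lorp<0\}$, where $\nabla\lorp$ is timelike, and recover the confinement of $x^0$ from the algebraic fact that $\{x^0=0,\ \lorp<0\}$ is empty, so a continuous curve cannot change the sign of $x^0$ while $\lorp$ stays negative. This buys a small economy: your argument never uses that $x^0$ is a time function, only the causal character of $\nabla\lorp$; the paper's version instead reuses Lemma~\ref{LP3.0}, which it needs anyway for the $x^0$ claim in \eq{P3a}. The two first-exit arguments are otherwise logically equivalent, since both use monotonicity only strictly before the exit parameter and continuity at it.
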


\proof The time function $x^0$ remains positive along $\gamma$
by Lemma~\ref{LP3.0}. If $-\sigma_p$ were also a time function
we would be done by the same argument. The problem is  that
$-\sigma_p$ is a time function only on the set where $\sigma_p$
is negative, so some care is needed; we proceed as follows: The
vector field $\nabla\lorp$ is causal future directed on
$\{x^0>0,\eta_{\mu\nu}x^\mu x^\nu\le 0\}$, while $\dot \gamma$
is causal future directed or zero wherever defined, hence
$g(\nabla\lorp,\dot \gamma)\le 0$ as long as $\gamma$ stays in
$\{x^0>0,\eta_{\mu\nu}x^\mu x^\nu\le 0\}$. By \Eq{eq:PC1.4} the
function $\lorp$ is non-increasing along $\gamma$ as long as
$\gamma$ stays in $\{x^0>0,\eta_{\mu\nu}x^\mu x^\nu\le0\}$.
Suppose that $\lorp(\gamma(s_1))<0$ and let
$$s_*=\sup\{u\in I:\lorp(\gamma(s))<0 \mbox{ \ on } \
[s_1,u]\}\;.$$ If $s_*\in I$, then $\lorp\circ\gamma(s_*)=0$ and
$\lorp\circ\gamma$ is \emph{not} non-increasing on $[s_1,s_*)$,
which is not possible since $\gamma(s)\in
\{x^0>0,\eta_{\mu\nu}x^\mu x^\nu\le0\}$ for $s\in [s_1,s_*)$.
It follows that $\lorp\circ\gamma< 0$, as desired. \qed

Proposition~\ref{P3} immediately follows for those future
direct causal paths through $p$ which \emph{do enter} the set
$\{\eta_{\mu\nu}x^\mu x^\nu<0\}$. This is the case for
$\gamma$'s such that $\dot
\gamma(0)=\left(\dot\gamma^\mu(0)\right)$ exists and is
timelike: We then have $$\gamma^\mu(s)= s\dot \gamma^\mu(0)
+o(s)\;,$$ hence
$$\eta_{\mu\nu}\gamma^\mu(s)\gamma^\nu(s)=s^2\eta_{\mu\nu}\dot\gamma^\mu(0)\dot\gamma^\nu(0)
+o(s^2) <0$$ for $s$ small enough. It follows that $\gamma$ enters
the set $\{\eta_{\mu\nu}x^\mu
x^\nu<0\}\equiv\{q:\sigma_p(q)<0\}$, and  remains there for
$|s|$ small enough. We conclude using Lemma~\ref{LP3}.

We continue with arbitrary parameterized future directed
\emph{timelike} paths $\gamma:[0,b)\to\mcM$, with
$\gamma(0)=p$, thus
$\dot \gamma$ exists and is timelike future directed for almost
all $s\in [0,b)$. In particular there exists a sequence
$s_i\to_{i\to\infty}0$ such that $\dot\gamma(s_i)$ exists and
is timelike.

Standard properties of solutions of ODE's show that for each
$q\in\mcO_p$ there exists a neighborhood $\mcW_{p,q}$ of $p$
such that the function $$\mcW_{p,q}\ni r\to \lorentz_r(q)$$  is
defined, 
continuous in $r$.
For $i$ large enough we will have  $\gamma(s_i)\in
\mcW_{p,\gamma(s)}$; for such $i$'s  we have just shown that
$$\lorentz_{\gamma(s_i)}(\gamma(s))<0
\;. $$
Passing to the limit $i\to\infty$, by continuity one obtains
\be\label{P3c} \lorentz_{p}(\gamma(s))\le 0 \;, \ee thus
\be\label{P3d}\gamma\subset\{x^0\ge0,\eta_{\mu\nu}x^\mu
x^\nu\le 0\}\;.\ee Since $\dot\gamma$ is timelike future
directed wherever defined, and $\nabla\lorp$ is causal future
directed on $\{x^0> 0, \eta_{\mu\nu}x^\mu x^\nu\le 0\}$,
\Eqsone{P3b} and \eq{P3d} show that the inequality in \eq{P3c}
must be strict.

To finish the proof, we reduce the general case to the last one
by considering perturbed metrics, as follows: let $e_0$ be any
unit timelike vector field on $\mcO$ ($e_0$ can, {\em e.g.},\/
be chosen as $\nabla x^0/\sqrt{-g(\nabla x^0,\nabla x^0)}$),
for $\epsilon
> 0$ define a family of Lorentzian metrics $g_\epsilon$ on $\mcO$
by the formula
$$g_\epsilon(X,Y)=g(X,Y)-\epsilon g(e_0,X)g(e_0,Y)\;.$$
Consider any vector $X$ 
which is causal for $g$, then
\begin{eqnarray*}
g_\epsilon(X,X)&=& g(X,X) - \epsilon (g(e_0,X))^2
\\
&\le &- \epsilon (g(e_0,X))^2<0\;,
\end{eqnarray*}
so that $X$ is timelike for $g_\epsilon$. Let
$\lorentz(g_\epsilon)_p$ be the associated functions defined as
in \Eq{lorp}, where the exponential map there is the one
associated to the metric $g_\epsilon$. Standard properties of
solutions of ODE's (see, e.g., \cite{Teschl})
imply
 that  for any compact subset $K$ of
$\mcO_p$ there exists an $\epsilon_K>0$ and a neighborhood
$\mcO_{p,K}$ of $K$ such that for all $\epsilon\in
[0,\epsilon_K]$ the functions
$$\mcO_{p,K}\ni q\to\lorentz(g_\epsilon)_p(q)$$ are defined, and
depend continuously upon $\epsilon$. We take $K$ to be
$\gamma([0,s])$, where $s$ is such that $[0,s]\subset I$, and
consider any $\epsilon $ in $(0,\epsilon_{\gamma([0,s])})$.
Since $\gamma$ is timelike for $g_\epsilon$, the results
already established show that we have
$$\lorentz(g_\epsilon)_p(\gamma(s))<0\;.$$
Continuity in $\epsilon$ implies
$$\lorp(\gamma(s))\le 0\;.$$
Since $s$ is arbitrary in $I$, Proposition~\ref{P3} is
established.
 \qed

\begin{coco}
For certain considerations it is useful to have the following:

\begin{Corollary}\label{CP3.1}Let $\mcO_p$ be a domain of normal coordinates
$x^\mu$ centered at $p\in\mcM$ as in Proposition~\ref{PC1}, and
let $\mcO\subset\mcO_p\;$ be any normal--coordinates ball such
that $\nabla x^0$ is timelike on $\mcO$. If $\gamma \subset
\mcO$ is a causal curve from $p$ to
$$q=\exp_p(x^\mu)\in \dotJ^+(p;\mcO)\;,$$ then $\gamma$ lies
entirely in $\dotJ^+(p;\mcO)$, and there exists a
reparameterization $s\to r(s)$ of $\gamma$ so that
$\gamma$ is a null-geodesic segment through $p$:
$$
 [0,1]\ni s\to \gamma(r(s))=\exp_p(sx^\mu)
 \;.
$$
\end{Corollary}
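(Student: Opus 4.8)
The plan is to follow the radial gradient field $\nabla\sigma_p$ and to exploit its null character on $\dotJ^+(p;\mcO)$, so that the constraint imposed by causality becomes rigid enough to pin $\dot\gamma$ to a single null direction. First I would record, applying Proposition~\ref{P3} to every initial segment $\gamma|_{[0,s]}$, that the whole curve satisfies $\sigma_p\circ\gamma\le 0$ and $x^0\circ\gamma\ge 0$. Next I note that within $J^+(p;\mcO)$ the only point with $x^0=0$ is $p$ itself, since $\sigma_p=\eta_{\mu\nu}x^\mu x^\nu\le 0$ together with $x^0=0$ forces $\vec x=0$; because a causal curve is nowhere locally constant (its tangent is causal, hence nonzero, almost everywhere), it follows that $x^0(\gamma(s))>0$ for every $s>0$.

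The first main step is to prove $\sigma_p\circ\gamma\equiv 0$. I would use the integral representation \eq{P3b} (legitimate by Rademacher's theorem~\ref{TRademacher}) together with \eq{eq:PC1.4}, which asserts that $\nabla\sigma_p$ is causal future directed on $\{\sigma_p\le 0,\ x^0>0\}$; since $\dot\gamma$ is causal future directed almost everywhere and, by the previous paragraph, $x^0(\gamma(s))>0$ for a.e.\ $s$, the integrand $g(\nabla\sigma_p,\dot\gamma)$ is $\le 0$ almost everywhere. Hence $\sigma_p\circ\gamma$ is non-increasing; as it starts at $\sigma_p(p)=0$ and ends at $\sigma_p(q)=0$ (because $q\in\dotJ^+(p;\mcO)$ gives $\sigma_p(q)=0$ by Proposition~\ref{P3}), it must vanish identically. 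This already yields the first assertion, namely $\gamma\subset\{\sigma_p=0,\ x^0\ge 0\}=\dotJ^+(p;\mcO)$, and moreover forces $g(\nabla\sigma_p,\dot\gamma)=0$ almost everywhere.

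The second step converts this into the geodesic statement. On $\dotJ^+(p;\mcO)$ the field $\nabla\sigma_p$ is null future directed by \eq{eq:PC1.4}, and by \eq{PC1.3.4a} it is proportional to the radial field $T:=x^\mu\partial_\mu$. I would then invoke the pointwise Lorentzian fact that if $N$ is null future directed, $V$ is causal future directed, and $g(N,V)=0$, then $V$ is a nonnegative multiple of $N$ --- immediate in an ON frame from the equality case of the Cauchy--Schwarz inequality. Applied with $N=\nabla\sigma_p$ and $V=\dot\gamma$, this shows $\dot\gamma$ is a nonnegative multiple of $T$ almost everywhere, i.e.\ $\dot\gamma^\mu(s)=f(s)\,\gamma^\mu(s)$ with $f\ge 0$ in normal coordinates. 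Integrating this ODE for the Lipschitz curve (once more via Theorem~\ref{TRademacher}) on $s>0$ shows that the angular direction $\gamma^\mu/|\gamma|$ is constant while the radial coordinate $\rho(s)=|\gamma(s)|$ is strictly increasing (strict since $\dot\gamma\neq 0$ a.e.). As $\gamma$ runs from $p$ to $q=\exp_p(x^\mu)$, this common direction is $x^\mu/|x|$, so $\gamma(s)=\exp_p(\lambda(s)x^\mu)$ with $\lambda(s)=\rho(s)/|x|$ a continuous strictly increasing bijection onto $[0,1]$; taking $r:=\lambda^{-1}$ gives $\gamma(r(s))=\exp_p(sx^\mu)$, the null-geodesic reparameterization claimed.

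I expect the second step to be the main obstacle: the relation $g(\nabla\sigma_p,\dot\gamma)=0$ only controls $\dot\gamma$ up to the null direction of $\nabla\sigma_p$, so one must fuse the pointwise rigidity (null orthogonal to causal implies parallel) with a careful integration of the radial ODE, and in particular handle the coordinate degeneracy of $T=x^\mu\partial_\mu$ at $p$ (where $\gamma^\mu=0$) by restricting to $s>0$ and using that $\gamma$ leaves $p$ immediately. Throughout, the fact that $\dot\gamma$ exists only almost everywhere is absorbed by Rademacher's theorem, which justifies the integral formulas and the differentiation of $\sigma_p\circ\gamma$ used above.
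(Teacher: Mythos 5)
Your proof is correct and follows essentially the same route as the paper's: Proposition~\ref{P3} gives $\sigma_p(q)=0$, the integral representation of $\sigma_p\circ\gamma$ together with the sign of $g(\nabla\sigma_p,\dot\gamma)$ forces $g(\nabla\sigma_p,\dot\gamma)=0$ almost everywhere, the rigidity fact that a causal vector $g$-orthogonal to a null vector must be proportional to it pins $\dot\gamma$ to the radial direction $x^\mu\partial_\mu$, and one then integrates the resulting ODE $\dot\gamma^\mu=f\gamma^\mu$. The only substantive difference is the final integration: the paper changes parameter via $r(s)=\int_0^s f$ and solves $dx^\mu/dr=x^\mu$, whereas you show the angular part $\gamma^\mu/|\gamma|$ is constant (Lipschitz with a.e.\ vanishing derivative on compact subintervals of $s>0$) and take $\lambda(s)=|\gamma(s)|/|x|$ itself as the reparameterization --- a cleaner finish, since it sidesteps the behaviour of $f$ near $s=0$ (where $f=|\dot\gamma|/|\gamma|$ need not be integrable, a point the paper's own construction glosses over).
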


\proof Proposition~\ref{P3} shows that $\sigma_p(q)=0$. It
follows that
\begin{eqnarray}  0=\lorp\circ\gamma(t) =\int_0^t
g(\nabla\lorp,\dot \gamma)(s)ds \;.\label{P3b1a} \end{eqnarray}
Since $\nabla \lorp$ and $\dot \gamma$ are causal oppositely
directed we have $g(\nabla\lorp,\dot \gamma)\ge 0$ almost
everywhere. It thus follows from \eq{P3b1a} that
$$
g(\nabla\lorp,\dot \gamma)= 0
$$
 almost everywhere. This is only
possible if
\bel{nullParl} \nabla \lorp \sim \dot \gamma
 \ee
almost everywhere\restrict{ (see Proposition~\ref{PinvCS})}.
Thus $\nabla \lorp$ is null a.e. along $\gamma$. But $\nabla
\lorp$ is null only on  $\dotJ^+(p;\mcO)$, thus $\gamma(s)\in
\dotJ^+(p;\mcO)$ a.e.; by continuity this is true for all $s$,
and we have shown that $\gamma$ lies entirely on
$\dotJ^+(p;\mcO)$.

To continue, in normal coordinates \eq{nullParl} reads
$$
 \frac{d x^\mu(s)}{ds} = f(s)\,x^\mu(s)
$$
a.e., for some strictly positive function $f\in L^\infty$.
Define $r(s)$ by
$$
r(s)=\int _ 0^s f(t)dt \;,
$$
then $r$ is strictly increasing, hence a bijection from the
interval of definition of $\gamma$ to some interval $[0,r_0]$.
Further $r$ is Lipschitz, differentiable on a set of full
measure, and on that set it holds
$$
\frac{dr}{ds}= f
 \;.
$$
Now, this equation shows that the set where the map $r\mapsto s(r)$ might
fail to be differentiable is the image by $r$ of the set
$\Omega_1$ where $r$ fails to be differentiable, together with
the image by $r$ of the set of points $\Omega_2$ where $f$
vanishes. Both $\Omega_1$ and $\Omega_2$ have zero measure, and
the image of a negligible set by a Lipschitz map is a
negligible set. We thus obtain, almost everywhere,
\bel{nullParl2}
 \frac{d x^\mu(s(r))}{dr} = \frac{d x^\mu(s(r))}{ds}\frac{ds}{dr} = x^\mu(s(r))\;,
 \ee
so $dx^\mu/dr$ can be extended by continuity to a continuous
function.
It easily follows that $x^\mu(s(\lambda))$ is $C^1$, and the
result is obtained by integration of \eq{nullParl2}.
 \qed
\end{coco}

\begin{coco}

Penrose's approach~\cite{PenroseDiffTopo} to the theory of
causality is based on the notion of \emph{timelike or causal
trips}: by definition, a causal trip is a piecewise broken
causal geodesic. The following result can be used to show
equivalence of the definitions of $I^+$, \emph{etc.}, given
here, to those of Penrose:
 \begin{Corollary}\label{CP3.1a}
 \MCtwok
 If $q\in I^+(p)$, then there exists a future directed
 piecewise broken future directed timelike geodesic from $p$ to
 $q$. Similarly, if $q\in J^+(p)$, then there exists a future directed
 piecewise broken future directed causal geodesic from $p$ to
 $q$.
 \end{Corollary}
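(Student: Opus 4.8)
The plan is to promote the given timelike (resp.\ causal) path to a piecewise broken geodesic by a continuation argument along the path, replacing it segment by segment inside small normal balls, where Proposition~\ref{P3} converts the local relation $I^+(\,\cdot\,;\mcO)$ (resp.\ $J^+(\,\cdot\,;\mcO)$) into honest \emph{radial} geodesics.

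First I would fix a future directed timelike path $\gamma:[0,1]\to\mcM$ with $\gamma(0)=p$ and $\gamma(1)=q$, and set
\[
A:=\{t\in[0,1]:\ \text{there is a future directed, piecewise broken timelike geodesic from } p \text{ to } \gamma(t)\}\;.
\]
For each $t$ I would choose, by Proposition~\ref{PC1}, a normal--coordinate ball $\mcO_{\gamma(t)}$ centred at $\gamma(t)$ on which $\nabla x^0$ is timelike (possible because $\nabla x^0$ is timelike at the centre, hence on a neighborhood). By continuity of $\gamma$ there is, for each $t$, a one-sided interval mapped into $\mcO_{\gamma(t)}$. To see $A\neq\emptyset$, note that for small $t>0$ the segment $\gamma|_{[0,t]}$ is a timelike path from $p$ to $\gamma(t)$ lying in $\mcO_{\gamma(0)}$, so $\gamma(t)\in I^+(p;\mcO_{\gamma(0)})$, and the ``$\Leftarrow$'' content of \eq{P3a} (the radial ray $s\mapsto\exp_p(sx^\mu)$ is timelike future directed when $\sigma_p<0$, $x^0>0$) gives a single timelike geodesic from $p$ to $\gamma(t)$.

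Next I would prove that $A$ is closed and open to the right, and then conclude with $t_*:=\sup A$. For closedness, writing $\mcO_*:=\mcO_{\gamma(t_*)}$, I pick $t'\in A$ with $t'\le t_*$ close enough that $\gamma([t',t_*])\subset\mcO_*$; then $\gamma|_{[t',t_*]}$ is a timelike path from $\gamma(t')$ to $\gamma(t_*)$ inside $\mcO_*$, so $\gamma(t')\in I^-(\gamma(t_*);\mcO_*)$, and the past version of Proposition~\ref{P3} yields a radial timelike geodesic from $\gamma(t_*)$ to $\gamma(t')$. Reversing it (as in point~2 of the proof of Proposition~\ref{PM1.a}) gives a future directed timelike geodesic from $\gamma(t')$ to $\gamma(t_*)$, which I concatenate with the broken geodesic reaching $\gamma(t')$; hence $t_*\in A$. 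For openness to the right, if $t_*<1$ then for small $\epsilon'>0$ one has $\gamma(t_*+\epsilon')\in I^+(\gamma(t_*);\mcO_*)$, so the radial timelike geodesic from $\gamma(t_*)$ to $\gamma(t_*+\epsilon')$ appended to the broken geodesic reaching $\gamma(t_*)$ shows $t_*+\epsilon'\in A$, contradicting $t_*=\sup A$. Thus $t_*=1$, and by closedness $1\in A$, which is exactly the claim.

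The causal statement follows verbatim, replacing ``timelike'' by ``causal'' and $I^\pm$ by $J^\pm$ and invoking the corresponding line of \eq{P3a}. The one genuine subtlety is the continuation step: one must \emph{re-centre} the normal coordinates at $\gamma(t_*)$ at each stage rather than reuse a single chart, so that the point to be reached lies on a radial geodesic of the chart---precisely the configuration controlled by Proposition~\ref{P3}. Compactness of $[0,1]$ is then absorbed into the $\sup$-argument, so no uniform lower bound on the sizes of the normal balls is needed.
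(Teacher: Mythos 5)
Your proof is correct, and it rests on exactly the same local engine as the paper's: Proposition~\ref{P3} in a normal--coordinate ball, which converts the relation $\gamma(t')\in I^{\pm}(\gamma(t);\mcO)$ (resp.\ $J^{\pm}$) into a radial timelike (resp.\ causal) geodesic, followed by concatenation. Where you genuinely differ is in the globalization. The paper covers $[0,1]$ by the parameter intervals on which $\gamma$ stays inside the balls $\mcO_{\gamma(s)}$, extracts a \emph{finite} subcover by compactness, orders the centres $s_0=0<s_1<\dots<s_N=1$, and replaces every arc $\gamma|_{[s_i,s_{i+1}]}$ by the radial geodesic of $\mcO_{\gamma(s_i)}$ in one stroke. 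You instead run a least-upper-bound (continuous induction) argument on the set $A$ of parameters reachable by broken geodesics: non-emptiness and ``extension to the right'' use the future version of \eq{P3a} centred at the point already reached, while closedness re-centres at $\gamma(t_*)$ and uses the \emph{past} version, reversing the resulting past-directed radial geodesic before concatenating. Both are legitimate ways of exploiting compactness/connectedness of $[0,1]$: your route avoids the cover-ordering bookkeeping that the paper glosses over (one must check that consecutive arcs really lie in a single ball, which is why the paper introduces the factor $2\epsilon_s$), at the price of invoking both time orientations of Proposition~\ref{P3} and of establishing $t_*\in A$ \emph{before} extending past $t_*$ --- which you do in the correct order. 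One small labelling point: what you call the ``$\Leftarrow$'' content of \eq{P3a} supplies the radial geodesic only once you know that the endpoint's normal coordinates satisfy $\sigma_p<0$, $x^0>0$; deducing this from $\gamma(t)\in I^+(p;\mcO)$ is the ``$\Rightarrow$'' implication, so you are really using the full equivalence --- exactly as the paper itself does, so this is a wording slip, not a gap.
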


 \proof
 Let $\gamma:[0,1]\to\mcM$ be a parameterized future directed
 causal path with $\gamma(0)=p$ and  $\gamma(1)=q$. Continuity of $\gamma$ implies that for every
 $s\in[0,1]$ there exists $\epsilon_s>0$ such that $$\gamma(u)\in
 \mcO_{\gamma(s)}$$ for all $$u\in (s-2\epsilon_s,s+2\epsilon_s)\cap [0,1]=\left[\max(0,s-2\epsilon_s),
 \min(1,s+2\epsilon_s)\right]\;,$$
 where $\mcO_r$ is a  normal-coordinates ball centred at $r$, and satisfying the requirements
 of Proposition~\ref{P3}. Compactness of
$[0,1]$ implies that from the covering
 $\{\left(s-\epsilon_s,
 s+\epsilon_s\right)\}_{s\in[0,1]}$ a
 finite covering $\{\left(s_i-\epsilon_{s_i},
 s_i+\epsilon_{s_i}\right)\}_{i=0,\ldots,N}$ can be
 extracted, with $s_0=0$, $s_N=1$. Reordering the $s_i$'s if necessary we may assume that $s_i<s_i+1$.
 By definition we have
 $$\gamma|_{[s_i,s_{i+1}]}\subset\mcO_{\gamma(s_{i})}\;,$$
 and by Proposition~\ref{P3} there exists a causal future directed
 geodesic segment from $\gamma(s_{i})$ to $\gamma(s_{i+1})$: if
 $\gamma(s_{i+1})= \exp_{\gamma(s_{i})}(x^\mu)$, then the required
 geodesic segment is given by $$[0,1]\ni s
 \to\exp_{\gamma(s_{i})}(sx^\mu)\;.$$ If $\gamma$ is timelike, then
 all the segments are timelike.
 Concatenating the segments together provides the claimed
 piecewise broken geodesic.
 \qed
 \end{coco}

Proposition~\ref{PM1} shows that the sets $I^\pm(p)$ are open
in Minkowski space-time. Similarly it follows from
Proposition~\ref{P3} that the sets $I^\pm(p;\mcO_p)$ are open.
This turns out to be true in general:

\begin{Proposition}\label{P2}
\MCtwok
For all $\mcU\subset \mcM$ the sets
$I^{\pm}(\mcU)$ are open.
\end{Proposition}

\proof Let $q\in I^{+}(\mcU)$, and let, as in the proof of
Corollary~\ref{CP3.1a}, $s_{N-1}$ be such that $q\in
\mcO_{\gamma(s_{N-1})}$. Then
$$ \mcO_{\gamma(s_{N-1})}\cap I^+(\gamma(s_{N-1});\mcO)$$
is an open neighborhood of $q$ by Corollary~\ref{P3.3.7}.
Clearly
$$I^+(\gamma(s_{N-1});\mcO)\subset I^+(\gamma(s_{N-1}))\;.$$
Since $\gamma(s_{N-1}) \in I^{+}(\mcU)$ we have
$$ I^+(\gamma(s_{N-1}))\subset I^+(\mcU)$$
(see point 3. of Proposition~\ref{PM1.a}). It follows that
$$ \mcO_{\gamma(s_{N-1})}\cap I^+(\gamma(s_{N-1});\mcO)\subset I^+(\mcU)\;,$$
 which implies our claim. \qed

 \ptcx{material on causality for continuous metrics in
 continuous.tex, what's a good place for it to go?}

In Minkowski space-time the sets $J^{\pm}(p)$ are closed, with
\be\label{P3e} \overline{I^{\pm}(p)}= J^{\pm}(p)\;. \ee We will
show below (see Corollary~\ref{Cpushup}) that we always have
\be
 \label{P3e2} \overline{I^{\pm}(p)}\supset J^{\pm}(p)\;,
 \ee
but this requires some work. Before proving~\eq{P3e2}, let us
point out that \eq{P3e} does not need to be true in general:

\begin{Example}\label{Ex2}
Let $(\mcM,g)$ be the two-dimensional Minkowski space-time
$\R^{1,1}$ from which the set $\{x^0=1,x^1\le -1\}$ has been
removed. Then
$$
J^{+}(0;\mcM)= J^{+}(0,\R^{1,1})\setminus
\{x^0=-x^1\;,x^1\in(-\infty,1]\}\;,
$$
\emph{cf.} Figure~\ref{FC3a}.
\begin{figure}[htbp]
\begin{center}
\input{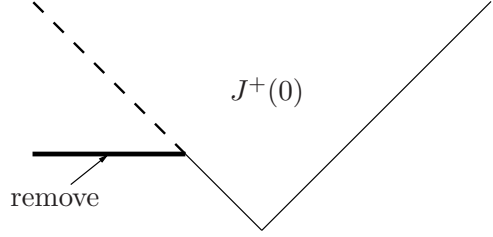}
\end{center}
 \caption{\label{FC3a}$J^+(p)$ is not closed unless
further causal regularity conditions are imposed on
$(\mcM,g)$.}
\end{figure} hence $J^+(0;\mcM)$ is neither open nor closed, and
\Eq{P3e} does not hold.
\end{Example}

We have the following:

\begin{Lemma}[``Push-up Lemma 1"]
\label{Lpushup0}
\MCtwok
For any $\Omega\subset\mcM$ we have
\bel{triinc}I^+(J^+(\Omega))= I^+(\Omega)\;.\ee
\end{Lemma}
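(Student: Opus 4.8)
The plan is to prove the two inclusions separately, with all the content in $I^+(J^+(\Omega))\subseteq I^+(\Omega)$. The reverse inclusion $\supseteq$ is immediate: since $\Omega\subseteq J^+(\Omega)$ by the very definition of $J^+$, and since $I^+$ is monotone under inclusion (any future directed timelike path issuing from $\Omega$ also issues from $J^+(\Omega)$), we get $I^+(\Omega)\subseteq I^+(J^+(\Omega))$.

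For the hard inclusion, take $r\in I^+(J^+(\Omega))$. By definition there is $q\in J^+(\Omega)$ and a future directed timelike path from $q$ to $r$, so $r\in I^+(q)$. If $q\in\Omega$ we are done, since then $r\in I^+(q)\subseteq I^+(\Omega)$. Otherwise there is $p\in\Omega$, $p\neq q$, and a future directed causal path $\gamma$ from $p$ to $q$, and it suffices to prove $r\in I^+(p)$. I would reduce this to a purely local statement exactly as in the proof of Corollary~\ref{CP3.1a}: cover $\gamma$ by finitely many normal-coordinate balls and choose points $p=p_0,p_1,\dots,p_N=q$ along $\gamma$ with $p_{i+1}\in J^+(p_i;\mcO_{p_i})$, where each $\mcO_{p_i}$ is a normal neighborhood of $p_i$ satisfying the hypotheses of Proposition~\ref{P3}. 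Then prove by backward induction on $i$ that $r\in I^+(p_i)$; the base case $i=N$ is the hypothesis $r\in I^+(q)$, and the case $i=0$ is the desired conclusion.

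The inductive step is the genuine push-up and is where the work lies. Assume $r\in I^+(p_{i+1})$ and fix a future directed timelike path from $p_{i+1}$ to $r$. Truncating it to a short initial piece $\eta\colon[0,\delta]\to\mcO_{p_i}$ with $\eta(0)=p_{i+1}$ (possible since $\mcO_{p_i}$ is open), set $c:=\eta(\delta)$, so that $r\in I^+(c)$ via the remaining tail. The claim to establish is the local push-up $c\in I^+(p_i;\mcO_{p_i})$: once this holds, part 3 of Proposition~\ref{PM1.a} gives $I^+(c)\subseteq I^+(p_i)$, whence $r\in I^+(p_i)$. To prove the claim, write $a:=p_i$ and study $\phi(s):=\sigma_a(\eta(s))$. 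Since $p_{i+1}\in J^+(a;\mcO_{p_i})$ with $p_{i+1}\neq a$, Proposition~\ref{P3} gives $\sigma_a(p_{i+1})\le 0$ and $x^0(p_{i+1})>0$, and since $x^0$ is a time function, Lemma~\ref{LP3.0} keeps $x^0>0$ along the timelike $\eta$. By Rademacher's theorem (Theorem~\ref{TRademacher}), $\phi(s)=\phi(0)+\int_0^s g(\nabla\sigma_a,\dot\eta)\,du$, just as in Equation~\eq{P3b}. On the region $\{x^0>0,\ \sigma_a\le 0\}$ the gradient $\nabla\sigma_a$ is future directed causal by Equation~\eq{eq:PC1.4}, while $\dot\eta$ is future directed timelike almost everywhere; as the pairing of a future causal vector with a future timelike vector is strictly negative (including the null-boundary case $\sigma_a=0$), the integrand is strictly negative wherever $\eta$ lies in this region. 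A continuation argument identical in structure to Lemma~\ref{LP3}, now with \emph{strict} monotonicity, then prevents $\phi$ from returning to $0$: with $s_*=\sup\{s:\phi\le 0\text{ on }[0,s]\}$, strict decrease on $[0,s_*)$ forces $\phi(s_*)<0$, which excludes $s_*<\delta$, so $\phi<0$ on all of $(0,\delta]$. Hence $\sigma_a(c)<0$ and $x^0(c)>0$, i.e. $c\in I^+(p_i;\mcO_{p_i})$ by Proposition~\ref{P3}, closing the induction.

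The main obstacle is exactly this local push-up. Because $\sigma_a$ is only defined inside a single normal neighborhood, one cannot run the monotonicity argument in one shot on the concatenated path, which is what forces the covering-and-induction scheme. The delicate point within the local step is that the decrease of $\sigma_a$ along $\eta$ must be \emph{strict} even where $\eta$ crosses the null cone $\{\sigma_a=0\}$, on which $\nabla\sigma_a$ degenerates to a null vector; this is precisely the strengthening of the non-strict Lemma~\ref{LP3} that the timelike (rather than merely causal) character of $\eta$ provides. I would be careful to choose the truncation length $\delta$ so that $\eta$ stays inside $\mcO_{p_i}$, since that is the only place where the local structure of Proposition~\ref{P3} and Equation~\eq{eq:PC1.4} is available.
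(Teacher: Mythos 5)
Your overall architecture --- the trivial inclusion by monotonicity, then covering the causal curve $\gamma$ by finitely many normal neighborhoods and pushing up step by step --- is sound, and is essentially the scheme of the paper's own proof: the paper covers $\gamma$ by elementary regions and concatenates past-directed timelike radial geodesic segments, whereas you track the membership $r\in I^+(p_i)$ by backward induction; that difference is cosmetic. The reduction of everything to the local claim $c\in I^+(p_i;\mcO_{p_i})$ is correct. The problem is your proof of that local claim.

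Your continuation argument fails exactly in the case that matters. You set $\phi(s)=\sigma_a(\eta(s))$, note $\phi(0)=\sigma_a(p_{i+1})\le 0$, and argue that, with $s_*=\sup\{s:\phi\le 0\ \mbox{on}\ [0,s]\}$, ``strict decrease on $[0,s_*)$ forces $\phi(s_*)<0$''. This presupposes $s_*>0$. If $\phi(0)<0$ that is automatic by continuity; but if $\phi(0)=0$ --- i.e.\ $p_{i+1}$ lies on the cone $\{\sigma_a=0\}$, which by Corollary~\ref{CP3.1} is precisely what happens whenever the piece of $\gamma$ from $p_i$ to $p_{i+1}$ is a null geodesic, the principal case for a push-up lemma --- nothing you have written excludes $s_*=0$. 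Your integrand $g(\nabla\sigma_a,\dot\eta)$ has a definite sign only where $\sigma_a\le 0$; where $\sigma_a>0$ the gradient $\nabla\sigma_a$ is spacelike and the integrand is uncontrolled, so the integral identity alone cannot rule out that the Lipschitz curve $\eta$ has points with $\sigma_a>0$ arbitrarily close to $s=0$ (for a merely a.e.\ differentiable curve one cannot invoke $\phi'(0)<0$). This is the very trap the paper flags in the proof of Lemma~\ref{LP3} (``$-\sigma_p$ is a time function only on the set where $\sigma_p$ is negative''), and it is the reason the proof of Proposition~\ref{P3} for general causal curves resorts to the perturbed metrics $g_\epsilon$ rather than a monotonicity argument. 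The repair is short given the paper's results: concatenate the segment of $\gamma$ from $p_i$ to $p_{i+1}$ with $\eta$; this is a future directed causal curve from $p_i$ to $c$ contained in $\mcO_{p_i}$, so Proposition~\ref{P3} applied to it gives $\sigma_a\le 0$ and $x^0\ge 0$ at \emph{every} point, in particular $\phi\le 0$ on all of $[0,\delta]$ with no continuation argument needed; then your integral formula, whose integrand is now strictly negative a.e.\ on the whole interval, yields $\phi(\delta)<\phi(0)\le 0$. (Alternatively: if one had $\sigma_a(c)=0$, Corollary~\ref{CP3.1} would force the entire concatenation to be a reparameterized null geodesic, contradicting the timelike piece $\eta$.) With that substitution your induction closes and the proof becomes correct.
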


\begin{Remark}
 \label{RLpushup0} In~\cite{ChGrant} an example is presented which shows that
the result is wrong for metrics which are merely continuous.
 \ptcx{synchronize}
\end{Remark}

\proof The obvious property $$\mcU\subset \mcV  \quad
\Longrightarrow \quad I^+(\mcU)\subset I^+( \mcV)\;$$ provides
inclusion of the right-hand-side of \eq{triinc} into the
left-hand-side. It remains to prove that
$$I^+(J^+(\Omega))\subset I^+(\Omega)\;.$$
Let $r\in I^+(J^+(\Omega))$, thus there exists a past-directed
timelike  curve $\gamma_0$ from $r$ to a point $q\in
J^+(\Omega)$. Since $q\in J^+(\Omega)$, then either $q\in
\Omega$, and there is nothing to prove, or there exists a
past-directed causal curve $\gamma:I\to\mcM$ from $q$ to some
point $p\in\Omega$. We want to show that there exists a
past-directed timelike curve $\hat \gamma$ starting at $r$ and
ending at $p$. The curve $\hat \gamma$ can be obtained by
``pushing-up" $\gamma$ slightly, to make it timelike, the
construction proceeds as follows: Using compactness, we cover
$\gamma$ by a  finite collection $\mcU_i$, $i=0,\cdots,N$, of
elementary regions $\mcU_i$ centered at $p_i\in\gamma(I)$, with
$$p_0= q\;, \quad p_i\in \mcU_{i}\cap\mcU_{i+1}\;, \quad p_{i+1}\subset J^-{(p_i)}\;, \quad p_N=p\;.$$
Let $\gamma_0:[0,s_0]\to\mcM$ be the already mentioned causal
curve from $r$ to $q\in\mcU_1$; let $s_1\ne s_0$ be close enough to
$s_0$ so that $\gamma_0(s_1)\in\mcU_1$. By
Proposition~\ref{PC1} together with the definition of
elementary regions
there exists a past directed timelike curve $\gamma_1:[0,1]\to\mcU_1$ from $\gamma_0(s_1)$ to
$p_1\in\mcU_1\cap\mcU_2$. For $s$ close enough to $1$ the curve $\gamma_1$ enters
$\mcU_2$, choose an $s_2\ne 1$ such that $\gamma_1(s_2)\in \mcU_2$,
again by Proposition~\ref{PC1} there exists a a past directed
timelike curve $\gamma_{2}:[0,1]\to\mcU_2$
from $\gamma_1(s_2)$ to
$p_2$. One
repeats that construction  iteratively obtaining a (finite) sequence of
past-directed
timelike curves $\gamma_i\subset I^+(\gamma)\cap \mcO$ such that the end point $\gamma_i(s_{i+1})$
of $\gamma_i|_{[0,s_{i+1}]}$
coincides with the starting point of $\gamma_{i+1}$.
Concatenating those curves together gives the desired path $\hat \gamma$.
\qed

We have the following, slightly stronger, version of
Lemma~\ref{Lpushup0}, which gives a sufficient condition to be
able to deform a \emph{causal} curve to a \emph{timelike} one,
keeping the deformation as small as desired:

\begin{Corollary}\label{CPushup0}
Let the metric be twice differentiable.
Consider a  \underline{causal} future directed curve
$\gamma:[0,1]\to\mcM$ from $p$ to $q$. If there exist
$s_1<s_2\in[0,1]$ such that $\gamma|_{[s_1,s_2]}$ is
\underline{timelike}, then in any neighborhood $\mcO$ of
$\gamma$ there exists a \underline{timelike} future directed
curve $\hat\gamma$ from $p$ to $q$.
\end{Corollary}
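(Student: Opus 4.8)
The plan is to split $\gamma$ at an interior point of its timelike portion and to push up the two causal end-pieces separately, using the two halves of the timelike segment as the ``room'' that makes push-up possible, and then to concatenate. Fix $s_0:=(s_1+s_2)/2\in(s_1,s_2)$ and write $a:=\gamma(s_1)$, $c:=\gamma(s_0)$, $b:=\gamma(s_2)$. Then $\gamma|_{[0,s_1]}$ is causal from $p$ to $a$ and $\gamma|_{[s_2,1]}$ is causal from $b$ to $q$, while $\gamma|_{[s_1,s_0]}$ and $\gamma|_{[s_0,s_2]}$ are timelike, from $a$ to $c$ and from $c$ to $b$ respectively.

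First I would push up the initial piece. Since $a\in J^+(p)$ and $c\in I^+(a)$, Lemma~\ref{Lpushup0} gives $c\in I^+(J^+(p))=I^+(p)$; what I actually need is the \emph{construction} in its proof, which produces a timelike future directed curve $\hat\gamma_1$ from $p$ to $c$ by covering the causal arc $\gamma|_{[0,s_1]}$ by finitely many elementary regions (Definition~\ref{Delem}) and pushing it up step by step, with $\gamma|_{[s_1,s_0]}$ supplying the timelike room at the junction $a$. By the time-reversal meta-rule the same construction, applied in the oppositely oriented space-time to the final causal arc $\gamma|_{[s_2,1]}$ together with $\gamma|_{[s_0,s_2]}$ (which now supplies room at the junction $b$), yields a timelike future directed curve $\hat\gamma_2$ from $c$ to $q$. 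Concatenating $\hat\gamma_1$ and $\hat\gamma_2$ at the common endpoint $c$ produces a locally Lipschitz curve $\hat\gamma$ from $p$ to $q$ whose derivative is timelike future directed almost everywhere, i.e.\ the desired curve. If $s_1=0$ or $s_2=1$ the corresponding causal arc is empty and that half of $\hat\gamma$ is simply a sub-arc of $\gamma$, so the argument degenerates harmlessly.

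The main work, and the only genuinely new point relative to Lemma~\ref{Lpushup0}, is the locality requirement $\hat\gamma\subset\mcO$. The set equality $I^+(J^+)=I^+$ says nothing about proximity, so I would reread its proof: the deforming curve there lives in a union of elementary regions centred on points of the causal arc, and since elementary regions of arbitrarily small radius exist around every point (Proposition~\ref{PC1}), this union can be arranged to lie inside $\mcO$. As the timelike room is taken along $\gamma$ itself and every push-up step is performed inside those small regions, both $\hat\gamma_1$ and $\hat\gamma_2$, and hence $\hat\gamma$, remain in $\mcO$. I would also check that the iteration terminates exactly at the prescribed points $p$, $c$, $q$ (it does, these being the centres of the extreme regions), so that the concatenation at $c$ is legitimate and $\hat\gamma$ runs precisely from $p$ to $q$.
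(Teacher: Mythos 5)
Your proof is correct and follows essentially the same route as the paper's: both rest on the push-up construction of Lemma~\ref{Lpushup0}, localized inside $\mcO$ by choosing the elementary regions $\mcU_i\subset\mcO$, together with a time-reversal to handle the portion where the timelike segment precedes the causal one. The only (cosmetic) difference is organizational: you split symmetrically at an interior point $c$ of the timelike segment and concatenate two independent push-ups there, whereas the paper reduces sequentially, first deforming $\gamma|_{[0,s_2]}$ to a timelike curve $\tilde\gamma$ and then applying the time-reversed case to $\tilde\gamma\cup\gamma|_{[s_2,1]}$.
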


\begin{Remark}\label{Rpushup} The so-called \emph{maximising}
null geodesics can \emph{not} be deformed as above to
timelike curves, whether locally or globally. We note that all
null geodesics in Minkowski space-time are maximising.
\end{Remark}

\proof
If $s_2=1$, then Corollary~\ref{CPushup0} is
essentially a special case of Lemma~\ref{Lpushup0}: the only
difference is the statement about the neighborhood $\mcO$. This
last requirement can be satisfied  by choosing the sets
$\mcU_i$ in the proof of Lemma~\ref{Lpushup0} so that
$\mcU_i\subset \mcO$. If $s_1=0$ (and regardless of the value
of $s_2$) the result is obtained by changing time-orientation,
applying the result already established to the path
$\gamma'(s)=\gamma(1-s)$, and changing-time orientation again.
The general case is reduced to the ones already covered by
first deforming the curve $\gamma|_{[0,s_2]}$ to a new timelike
curve $\tilde \gamma$ from $p$ to $\gamma(s_2)$, and then
applying the result again to the curve $\tilde
\gamma\cup\gamma|_{[s_2,1]}$. \qed

Another result in the same spirit is provided by the following:

 \ptcx{added 14X11, check for duplication when the existence of geodesics is done, and then also in the final write-up if ever}
\begin{Proposition}
 \label{P14X11.1}
Let $\gamma$ be
causal curve from $p$ to $q$ in $\Mgtwo$
which is not a null geodesic. Then there exists a timelike curve from $p$ to $q$.
 \ptcx{this argument needs normal coordinates, problems with C one one? it explicitly addresses geodesics, so it is not allowed for metrics which are only plain, but it should still apply to Cone one metrics?}
\end{Proposition}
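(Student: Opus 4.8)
The plan is to reduce everything to Corollary~\ref{CPushup0}: it suffices to modify $\gamma$ into a causal curve from $p$ to $q$ that agrees with $\gamma$ outside a small subinterval and contains a genuinely timelike arc, for then Corollary~\ref{CPushup0} produces a timelike curve from $p$ to $q$. First I would cover the compact image $\gamma([0,1])$ by finitely many normal--coordinate balls $\mcO_i$ centred at $\gamma(t_i)$, with $0=t_0<t_1<\cdots<t_N=1$ and $\gamma([t_i,t_{i+1}])\subset\mcO_i$, each $\mcO_i$ satisfying the hypotheses of Proposition~\ref{P3}.

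Next, for each $i$ I would locate $\gamma(t_{i+1})$ relative to $\gamma(t_i)$ inside $\mcO_i$. By Proposition~\ref{P3} we have $\gamma(t_{i+1})\in J^+(\gamma(t_i);\mcO_i)$, and since locally $J^+=I^+\cup\dot J^+$ disjointly, either $\gamma(t_{i+1})\in I^+(\gamma(t_i);\mcO_i)$ or $\gamma(t_{i+1})\in\dot J^+(\gamma(t_i);\mcO_i)$. If the first alternative holds for some $i$, then by Proposition~\ref{P3} (the characterisation $\sigma_p<0$, $x^0>0$) there is a timelike geodesic ray from $\gamma(t_i)$ to $\gamma(t_{i+1})$ inside $\mcO_i$; replacing $\gamma|_{[t_i,t_{i+1}]}$ by it yields a causal curve from $p$ to $q$ with a timelike arc, and Corollary~\ref{CPushup0} finishes. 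So I may assume the second alternative holds for every $i$, whence by Corollary~\ref{CP3.1} each $\gamma|_{[t_i,t_{i+1}]}$ is, after reparameterisation, a null geodesic segment; thus $\gamma$ is a broken null geodesic.

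Now comes the decisive step. If at every junction $\gamma(t_i)$ the incoming and outgoing (future directed) null tangent directions were positively proportional, then uniqueness of geodesics --- valid because the $C^2$ metric has Lipschitz Christoffel symbols, so the ODE \eq{geodeq} has unique solutions --- would merge the consecutive segments into a single null geodesic, contradicting the hypothesis that $\gamma$ is not a null geodesic. Hence there is a genuine corner at some $\gamma(t_i)$. Choosing $a=\gamma(t_i-\epsilon)$ and $b=\gamma(t_i+\epsilon)$ with $\epsilon$ small enough that $a$, $\gamma(t_i)$, $b$ and the arc joining them all lie in one normal ball $\mcO$ about $a$ as in Proposition~\ref{P3}, the broken null geodesic from $a$ through $\gamma(t_i)$ to $b$ is a causal curve in $\mcO$ which is \emph{not} a null geodesic. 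Were $b\in\dot J^+(a;\mcO)$, Corollary~\ref{CP3.1} would force this curve to be a reparameterisation of a null geodesic --- impossible because of the corner. Since $b\in J^+(a;\mcO)$, we conclude $b\in I^+(a;\mcO)$, so there is a timelike curve from $a$ to $b$; splicing it in produces a causal curve from $p$ to $q$ with a timelike arc, and Corollary~\ref{CPushup0} again yields a timelike curve from $p$ to $q$.

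The main obstacle is this last step, namely making rigorous the passage from ``each local piece is a null geodesic, with no genuine corner'' to ``$\gamma$ is a null geodesic''. This rests squarely on uniqueness of geodesics, hence on the $C^2$ hypothesis, to glue the locally geodesic pieces along matching tangents, together with the contrapositive use of Corollary~\ref{CP3.1} at a corner to certify that the broken curve actually penetrates the timelike interior. Some care is also needed to read ``not a null geodesic'' as ``not a reparameterisation of a null geodesic'', which is precisely the notion furnished by Corollary~\ref{CP3.1}, and to verify that the spliced curve is again a bona fide (locally Lipschitz, future directed) causal curve so that Corollary~\ref{CPushup0} applies.
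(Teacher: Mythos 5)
Your proof is correct and follows essentially the same route as the paper's: reduce to a piecewise broken geodesic (you inline the covering argument that the paper delegates to Corollary~\ref{CP3.1a}), dispose of any timelike segment via Corollary~\ref{CPushup0}, and at a genuine null corner apply the contrapositive of Corollary~\ref{CP3.1} to conclude that the point beyond the corner lies in $I^+$ of a point before it, finishing again with Corollary~\ref{CPushup0}. If anything, you are more explicit than the paper on one point: the existence of a genuine corner, which you justify by uniqueness of solutions of the geodesic equation for $C^{1,1}$ metrics, whereas the paper's proof takes this for granted when it passes to ``piecewise broken null geodesics with at least one break point.''
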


\proof
By Corollary~\ref{CP3.1a} we can without loss of generality assume that $\gamma$ is a piecewise broken geodesic.
If one of the geodesics forming $\gamma$ is   timelike, the result follows from Corollary~\ref{CPushup0}. It remains to consider curves which are piecewise broken null geodesics with at least one break point, say $p$. Let $q\in J^-(p)$ be close enough to $p$ so that $p$ belongs to a domain of normal coordinates $\mcO_q$ centred at $q$. Corollary~\ref{CP3.1} shows that points on $\gamma$ lying to the causal future of $p$ are not in $\dot J^+(q,\mcO_q)$, hence they are in $I^+(q,\mcO_q)$, and so $\gamma$ can be deformed within $\mcO_q$ to a timelike curve. The result follows now again from Corollary~\ref{CPushup0}.
\qed

As another straightforward corollary of  Lemma~\ref{Lpushup0}
one obtains a property of $J$, which is wrong in general for metrics which are not
$C^2$:

\begin{Corollary}\label{Cpushup}
 \MCtwok
 For any  $p\in\mcM$ we have
 $$ J^+(p)\subset\overline{I^+(p)}\;.$$
\end{Corollary}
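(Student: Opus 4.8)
The plan is to deduce the result directly from the Push-up Lemma~\ref{Lpushup0}, with essentially no additional work. Fix $q\in J^+(p)$; the goal is to produce points of $I^+(p)$ arbitrarily close to $q$, which is exactly the statement that $q\in\overline{I^+(p)}$.

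First I would combine monotonicity of $I^+$ with the Push-up Lemma. Since $\{q\}\subset J^+(p)$, the obvious monotonicity property $\mcU\subset\mcV\Rightarrow I^+(\mcU)\subset I^+(\mcV)$ (already used in the proof of Lemma~\ref{Lpushup0}) gives $I^+(q)\subset I^+(J^+(p))$, while Lemma~\ref{Lpushup0} applied with $\Omega=\{p\}$ yields $I^+(J^+(p))=I^+(p)$. Hence $I^+(q)\subset I^+(p)$, and consequently $\overline{I^+(q)}\subset\overline{I^+(p)}$.

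Next I would verify that $q\in\overline{I^+(q)}$. For this I pass to a normal-coordinate ball $\mcO$ centred at $q$ as in Proposition~\ref{PC1} (this is where the $C^2$ hypothesis enters, guaranteeing a well-defined exponential map and normal coordinates). Choosing any future directed timelike vector $v\in T_q\mcM$, the curve $s\mapsto\exp_q(sv)$ is a future directed timelike geodesic issuing from $q$; by Proposition~\ref{P3}, for all sufficiently small $s>0$ one has $\exp_q(sv)\in I^+(q;\mcO)\subset I^+(q)$. Letting $s\to 0^+$, these points converge to $q$, so every neighbourhood of $q$ meets $I^+(q)$, i.e. $q\in\overline{I^+(q)}$.

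Combining the two steps gives $q\in\overline{I^+(q)}\subset\overline{I^+(p)}$, and since $q\in J^+(p)$ was arbitrary the inclusion $J^+(p)\subset\overline{I^+(p)}$ follows. I do not expect any genuine obstacle: all the substantive content has been front-loaded into the Push-up Lemma, and the only fact to check by hand is the elementary observation that a point lies in the closure of its own timelike future, which is immediate from the existence of short timelike geodesics emanating from $q$. (I note that this is precisely the point that fails below the $C^2$ threshold, consistent with Remark~\ref{RLpushup0}, since it relies on the Push-up Lemma.)
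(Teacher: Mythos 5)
Your proof is correct and is essentially the paper's own argument: the paper also deduces the inclusion from the Push-up Lemma~\ref{Lpushup0} by taking a sequence $r_i\in I^+(q)$ accumulating at $q$ and noting $r_i\in I^+(p)$. The only difference is that you make explicit (via normal coordinates and Proposition~\ref{P3}) the step $q\in\overline{I^+(q)}$, which the paper leaves implicit when it asserts such a sequence exists.
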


\proof Let $q\in J^+(p)$, and let $r_i\in I^+(q)$ be any
sequence of points accumulating at $q$, then $r_i \in I^+(p)$
by Lemma~\ref{Lpushup0}, hence $q\in\overline{I^+(p)}$.\qed

\section{Extendible and inextendible
paths}\label{Sepeip}

To avoid ambiguities, recall that we only assume continuity of the metric unless explicitly indicated otherwise.

A useful concept, when studying causality, is that of a causal path
with cannot be extended any further. Recall that, from a physical
point of view, the image in space-time of a timelike path is
supposed to represent the history of some observer, and it is
sometimes useful to have at hand idealised observers which do never
cease to exist. Here it is important to have the geometrical picture
in mind, where all that matters is the image in space-time of the
path, independently of any paramaterisation: if that image ``stops",
then one can sometimes continue the path by concatenating with a
further one; continuing in this way one hopes to be able to obtain
paths which are inextendible.

In order to make things precise, let
 $\gamma:[a,b)\to
\mcM$,  be a parameterized, causal, future directed path.  A point
$p$ is called a {\em future end point} of $\gamma$ if $\lim_{s\to
b}\gamma(s)=p$. Past end points are defined in the obvious
analogous way. An {\em end point} is a point which is either a
past end point or a future end point.

Given $\gamma$ as above, together with an end point $p$, one is
tempted to extend $\gamma$ to a new path $\hat \gamma:[a,b]\to
\mcM$ defined as \bel{distb0} \hat \gamma(s)=\cases{\gamma(s), &
$s\in[a,b)\;,$ \cr p, &$s=b\;.$} \ee The first problem with this
procedure is that the resulting curve might fail to be locally
Lipschitz in general. An example is given by the timelike
future-directed path
$$[0,1)\ni \to \gamma_1(s)=(-(1-s)^{1/2},0)\in\R^{1,1}\;,$$
 which is locally Lipschitzian {on} $[0,1)$, but is \emph{not}
on $[0,1]$. (This follows from the fact that the difference
quotient $(f(s)-f(s'))/(s-s')$ blows up as $s$ and $s'$ tend to
one when $f(s)=(1-s)^{1/2}$). Recall that in our definition of a
causal curve $\gamma$, a prerequisite condition is the locally
Lipschitz character, so that the extension $\hat \gamma_1$ fails
to be causal even though $ \gamma_1$ is.

The problem is even worse if $b=\infty$: consider the timelike
future-directed curve $$[1,\infty)\ni s\to\gamma_3(s) =
(-1/s,0)\in \R^{1,1}\;.$$ Here there is no way to extend the curve
to the future, as an application from a subset of $\R$  to $\mcM$,
because the range of parameters already covers all $s\ge 1$. Now,
the image of both $\gamma_1$ and $\gamma_2$ is simply the interval
$[-1,0)\times \R$,  which can be extended to a longer causal curve
in $\R^{1,1}$ in many ways if  one thinks in terms of images
rather than of maps.

Both problems above can be taken care of by requiring that the
parameter $s$ be the proper distance parameter of some auxiliary
Riemannian metric $h$. (At this stage $h$ is not required to be
complete). This might require reparameterizing the path. From the
point of view of our definition this means that we are passing to a
different path, but the image in space-time of the new path
coincides with the previous one. If one thinks of timelike paths as
describing observers, the new observer will thus have experienced
identical events, even though he will be experiencing those events
at different times on his time-measuring device. We note, moreover,
that (locally Lipschitz) reparameterizations do not change the
timelike or causal character of paths.

\begin{cocoExa}
 \label{Exa24X11.1}
 Consider a sequence of null geodesics in $\R^{1,1}=
(\R^2,g=-dt^2+dx^2)$, with $h=d t^2+dx^2$ as the Riemannian
background metric, threading back and forth up to a space-distance
$1/n$ around the $\{x=0\}$ axis. The limit curve is
$\gamma(s)=(s/\sqrt 2,0)$ which is not $\distb$-parameterized.
\end{cocoExa}

We have already shown in Section~\ref{SCp} that a locally
Lipschitzian path can always be reparameterized by $h$-distance,
leading to a uniformly Lipschitzian path, with Lipschitz constant
one. It should be clear from the examples given above, as well as
from the examples discussed at the beginning of Section~\ref{SAc},
that it is sensible to use such a parameterization, and it is
tempting to build this requirement into the definition of a causal
path. One reason for not doing that is the existence of affine
parameterization for geodesics, which is geometrically significant,
and which is convenient for several purposes. Another reason is the
arbitrariness related to the choice of $h$. Last but not least, a
limit curve for a sequence of $\distb$-parameterized curves does not
have to be $\distb$-parameterized, as seen in Example~\ref{Exa24X11.1}.
Therefore we will not assume \emph{a priori} an $h$-distance
parameterization, but such a reparameterization will often be used
in the proofs.

Returning to \eq{distb0}, we want to show that $\hat \gamma$ will
be uniformly Lipschitz if $\distb$-parameterization is used for
$\gamma$. More generally, suppose that $\gamma$ is uniformly
Lipschitz with Lipschitz with constant $L$,
\bel{distbe1}\distb(\gamma(s),\gamma(s'))\le L|s-s'|\;.\ee
Passing with $s'$ to $b$ in that equation we obtain
$$ \distb(\gamma(s),p)\le L|s-b|\;,$$
and the Lipschitzian character of $\hat \gamma$ easily follows. We
have therefore proved:

\begin{Lemma}\label{P5.0} Let $\gamma:[a,b)\to\mcM$, $b<\infty$, be a uniformly Lipschitzian path
with an end point $p$. Then $\gamma$ can be extended to a
uniformly Lipschitzian path $\hat \gamma:[a,b]\to\mcM$, with $\hat
\gamma(b)=p$.
\end{Lemma}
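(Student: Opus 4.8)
The plan is to take $\hat\gamma$ to be the extension already written down in \eq{distb0}, namely $\hat\gamma(s)=\gamma(s)$ for $s\in[a,b)$ and $\hat\gamma(b)=p$, and then to verify directly that the uniform Lipschitz estimate survives the addition of the endpoint. By hypothesis $\gamma$ is uniformly Lipschitz with some constant $L$, so the inequality \eq{distbe1}, $\distb(\gamma(s),\gamma(s'))\le L|s-s'|$, holds for all $s,s'\in[a,b)$. It therefore only remains to control those pairs in which one of the two parameters equals $b$.

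The key step is a limiting argument based on the defining property of the end point. Since $p$ is the future end point of $\gamma$, we have $\lim_{s'\to b}\gamma(s')=p$. Fixing $s\in[a,b)$ and letting $s'\to b^{-}$ in \eq{distbe1}, the continuity of the distance function $\distb$ in its arguments yields $\distb(\gamma(s),p)\le L|s-b|$, which is precisely $\distb(\hat\gamma(s),\hat\gamma(b))\le L|s-b|$. Combining this with the estimate already available for parameters in $[a,b)$ gives $\distb(\hat\gamma(s),\hat\gamma(s'))\le L|s-s'|$ for every $s,s'\in[a,b]$, so that $\hat\gamma$ is uniformly Lipschitz on the closed interval, as claimed.

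There is no serious obstacle here; the single point requiring care is that the Lipschitz estimate must be checked \emph{at} the boundary parameter $b$, and not merely on the half-open interval $[a,b)$. This is exactly where the finiteness of $b$, together with the assumed existence of the end point $p$, enter: finiteness keeps $|s-b|$ bounded, and the existence of the limit defining $p$ is what makes the extension a well-defined map on $[a,b]$. I would also stress that no completeness of the auxiliary Riemannian metric $h$ is invoked at this stage, the existence of the limit $\lim_{s\to b}\gamma(s)=p$ being taken as part of the hypothesis rather than deduced from a Cauchy-type argument.
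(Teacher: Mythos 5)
Your proof is correct and follows essentially the same route as the paper: the paper also defines $\hat\gamma$ by \eq{distb0} and passes to the limit $s'\to b$ in \eq{distbe1} to obtain $\distb(\gamma(s),p)\le L|s-b|$, from which the uniform Lipschitz bound on $[a,b]$ follows. Your additional remarks (continuity of $\distb$, no need for completeness of $h$ here) are accurate clarifications of the same argument.
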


Let $\gamma:[a,b)\to \mcM$, $b\in \R\cup\{\infty\}$ be a path,
then $p$ is said to be an \emph{$\omega$-limit point} of $\gamma$
if there exists a sequence $s_k \to b$ such that $\gamma(s_k)\to
p$. An end point is always an $\omega$-limit point, but the
inverse does not need to be true in general (consider
$\gamma(s)=\exp(is)\in\C$, then every point $\exp{(ix)}\in
S^1\subset \C^1$ is seen to be a $\omega$-limit point of $\gamma$ by setting
$s_k = x+2\pi k$). For $b<\infty$ and for uniformly Lipschitz
paths the notions of end point and of $\omega$-limit point
coincide:
\begin{Lemma}\label{P5.0a} Let $\gamma:[a,b)\to\mcM$, $b<\infty$, be a uniformly Lipschitzian
path. Then every $\omega$-limit point of $\gamma$ is an end point
of $\gamma$. In particular, $\gamma$ has at most one
$\omega$-limit point.
\end{Lemma}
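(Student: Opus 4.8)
The plan is to prove the sharper statement that any $\omega$-limit point $p$ is in fact the genuine limit $\lim_{s\to b}\gamma(s)$; by definition this makes $p$ a future end point, and since limits in a metric space are unique, there can be at most one such $p$, giving the ``in particular'' clause for free. So the whole lemma reduces to upgrading convergence along a sequence $s_k\to b$ to convergence of the full tail of $\gamma$.

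First I would fix the uniform Lipschitz constant $L$ from \eqref{distbe1}, so that $\distb(\gamma(s),\gamma(s'))\le L|s-s'|$ for all $s,s'\in[a,b)$. Let $p$ be an $\omega$-limit point, witnessed by a sequence $s_k\to b$ with $\gamma(s_k)\to p$. The key estimate is the triangle inequality combined with the Lipschitz bound: for any $s\in[a,b)$ and any $k$,
\[
\distb(\gamma(s),p)\;\le\;\distb(\gamma(s),\gamma(s_k))+\distb(\gamma(s_k),p)\;\le\;L\,|s-s_k|+\distb(\gamma(s_k),p).
\]
Given $\varepsilon>0$, I would first choose $k$ large enough that $\distb(\gamma(s_k),p)<\varepsilon/2$ and $|b-s_k|<\varepsilon/(4L)$ (possible since $s_k\to b$ and $\gamma(s_k)\to p$). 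Then for every $s$ with $|b-s|<\varepsilon/(4L)$ one has $|s-s_k|\le|s-b|+|b-s_k|<\varepsilon/(2L)$, whence $L|s-s_k|<\varepsilon/2$ and the displayed bound gives $\distb(\gamma(s),p)<\varepsilon$. This is exactly the statement that $\lim_{s\to b}\gamma(s)=p$, so $p$ is an end point.

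The argument is a routine $\varepsilon$-estimate, and I do not expect a serious obstacle; the only point requiring attention is to make explicit \emph{where} the two hypotheses enter, since dropping either of them breaks the conclusion. Finiteness of $b$ is used implicitly in that $|b-s_k|$ can be made small, and the \emph{uniform} (rather than merely local) Lipschitz bound is what lets me control $\distb(\gamma(s),\gamma(s_k))$ by $|s-s_k|$ with a single constant $L$ valid on the whole tail approaching $b$; for a locally Lipschitz curve the constant could degenerate as $s\to b$. The example $\gamma(s)=\exp(\ii s)$ with $b=\infty$ given just before the lemma statement shows that without $b<\infty$ a curve may have an entire circle of $\omega$-limit points, so it is worth noting that finiteness of $b$ is essential to the uniqueness conclusion. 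Finally, the uniqueness of the $\omega$-limit point follows immediately: any two $\omega$-limit points are both equal to the common limit $\lim_{s\to b}\gamma(s)$, hence coincide.
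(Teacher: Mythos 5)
Your proof is correct and follows essentially the same route as the paper: both use the uniform Lipschitz bound together with $s_k\to b$, $\gamma(s_k)\to p$ to show the entire tail of $\gamma$ converges to $p$, and then conclude by uniqueness of limits/end points. The only difference is presentational — the paper passes to the limit $i\to\infty$ directly in the inequality $\distb(\gamma(s_i),\gamma(s))\le L|s_i-s|$ (using continuity of $\distb$) to get the clean bound $\distb(p,\gamma(s))\le L|b-s|$, whereas you run the equivalent explicit $\varepsilon$-estimate via the triangle inequality.
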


\proof By \eq{distbe1} we have
$$\distb(\gamma(s_i),\gamma(s))\le L |s_i-s|\;,$$
and since $\distb$ is a continuous function of its arguments we
obtain, passing to the limit $i\to \infty$
$$\distb(p,\gamma(s))\le L |b-s|\;.$$
Thus $p$ is an end point of $\gamma$. Since there can be at most
one end point, the result follows. \qed

 A future directed causal curve $\gamma:[a,b)\to\mcM$ will
be said to be \emph{future extendible} if there exists $b<c\in
\R\cup\{\infty\}$ and a causal curve $\tilde \gamma:[a,c)\to\mcM$
such that \bel{443}\tilde \gamma|_{[a,b)}=\gamma\;.\ee The path
$\tilde \gamma$ is then said to be an \emph{extension} of
$\gamma$. The curve $\gamma$ will be said \emph{future
inextendible} if it is not future extendible. The notions of
\emph{past extendibility}, or of \emph{extendibility}, are defined
in the obvious way.

Extendibility in the class of causal paths  forces a causal
$\gamma:[a,b)\to \mcM$ to be uniformly Lipschitzian: This follows
from the fact that $[a,b]$ is a compact subset of the domain of
definition of any extension $\tilde \gamma$, so that
$\tilde\gamma|_{[a,b]}$ is uniformly Lipschitzian there. But then
$\tilde\gamma|_{[a,b)}$ is also uniformly Lipschitzian, and the
result follows from \eq{443}.

Whenever a uniformly Lipschitzian path can be extended by adding
an end point, it can also be extended as a strictly longer path:

\begin{Lemma}\label{P5.1} A uniformly Lipschitzian causal path
$\gamma:[a,b)\to\mcM$, $b<\infty$ is extendible if and only if it
has an end point.
\end{Lemma}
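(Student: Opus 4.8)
The plan is to prove the two implications separately, the forward one being immediate and the reverse one requiring a short extension to be built by hand. For the forward direction, I would assume $\gamma$ is future extendible, pick an extension $\tilde\gamma:[a,c)\to\mcM$ with $c>b$ as in \eq{443}, and simply observe that $b$ is now an interior point of the domain $[a,c)$ on which $\tilde\gamma$ is continuous; hence $\lim_{s\to b}\gamma(s)=\lim_{s\to b}\tilde\gamma(s)=\tilde\gamma(b)$ exists, so $\tilde\gamma(b)$ is a future end point of $\gamma$. No regularity beyond continuity of causal paths is used here.

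For the reverse direction I would assume $\gamma$ has a future end point $p$ and produce a strictly longer causal curve. First I would invoke Lemma~\ref{P5.0} to extend $\gamma$ to a uniformly Lipschitzian $\hat\gamma:[a,b]\to\mcM$ with $\hat\gamma(b)=p$ and some Lipschitz constant $L$. The idea is then to prolong $\hat\gamma$ past $p$ by concatenating it with a short future directed causal curve $\sigma$ issued from $p$, and to check that the concatenation is again a (uniformly Lipschitzian) causal curve on an interval $[a,c)$ with $c>b$.

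To manufacture $\sigma$ without invoking geodesics (which need not exist for a merely continuous metric), I would work in a coordinate chart around $p$, choose a future directed timelike vector $v\in T_p\mcM$, and take $\sigma$ to be the coordinate straight segment $x^\mu(s)=x^\mu(p)+(s-b)v^\mu$. Since $g_p(v,v)<0$ and $v$ is future directed at $p$, continuity of $g$ and of the time-orientation guarantee that $\sigma$ is timelike future directed on a short interval $[b,c)$, with $\sigma(b)=p$; being smooth there, it is uniformly Lipschitzian with some constant $L'$. Defining $\tilde\gamma$ to equal $\hat\gamma$ on $[a,b]$ and $\sigma$ on $(b,c)$ gives a well-defined continuous map with $\tilde\gamma|_{[a,b)}=\gamma$, and $\dot{\tilde\gamma}$ is causal future directed almost everywhere, being $\dot\gamma$ on $[a,b)$ and $\dot\sigma$ on $(b,c)$.

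The one point needing care --- and the main obstacle --- is verifying that the concatenation stays uniformly Lipschitzian across the junction $b$, since a priori the Lipschitz bounds on the two pieces need not combine. For parameters straddling $b$, say $s_1\le b\le s_2$, I would use the triangle inequality through the common value $p=\hat\gamma(b)=\sigma(b)$, bounding $\distb(\tilde\gamma(s_1),\tilde\gamma(s_2))$ by $L(b-s_1)+L'(s_2-b)\le\max(L,L')\,|s_1-s_2|$; together with the separate estimates on $[a,b]$ and on $[b,c)$ this yields a single uniform Lipschitz constant. This shows $\tilde\gamma$ is a causal future directed extension of $\gamma$, establishing future extendibility; the statement for past extendibility follows by reversing the time-orientation. \qed
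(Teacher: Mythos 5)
Your proof is correct, but the key construction differs from the paper's. Both arguments handle the nontrivial direction by first invoking Lemma~\ref{P5.0} to attach the end point $p$, obtaining $\hat\gamma:[a,b]\to\mcM$; the difference is in how the curve is then prolonged past $p$. The paper concatenates $\hat\gamma$ with a maximally extended future directed causal \emph{geodesic} starting at $p$, which is a one-line finish but tacitly requires enough differentiability of the metric for geodesics through $p$ to exist (the same reason the paper's proof of Theorem~\ref{TP4} splits into a $C^{1,1}$ case and a merely-continuous case). Your construction instead builds the prolongation by hand, as a coordinate straight segment in a timelike future direction, which is timelike future directed near $p$ by continuity of $g$ and of the time orientation; this keeps the whole argument within the section's standing hypothesis of a continuous metric. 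You also supply two items the paper leaves implicit: the (trivial) forward implication, and the triangle-inequality verification that the concatenation is uniformly Lipschitzian across the junction --- the latter is genuinely needed, since extendibility is defined within the class of causal, hence locally Lipschitzian, paths. What the paper's geodesic route buys in exchange is reusability: concatenating with a \emph{maximally extended} geodesic immediately yields an \emph{inextendible} extension via Proposition~\ref{TP4ab}, which is exactly how Theorem~\ref{TP4} is then proved in the $C^{1,1}$ case, whereas your short segment gives extendibility but nothing more.
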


\proof Let $\hat \gamma$ be given by Proposition~\ref{P5.0}, and
let $\tilde \gamma:[0,d)$ be any maximally extended to the future,
future directed causal geodesic starting at $p$, for an
appropriate $d\in (0,\infty)$. Then $\hat \gamma \cup \tilde
\gamma$ is an extension of $\gamma$. \qed

It turns out that the paths considered in Lemma~\ref{P5.1} are
always extendible:

\begin{Theorem}\label{TP4a}
\Mcont
Let $\gamma:[a,b)\to\mcM$, $b\in \R\cup\{\infty\}$, be a future
directed causal path parameterized by $\backmg$-distance, where
$\backmg$ is any complete auxiliary Riemannian metric. Then
$\gamma$ is future inextendible if and only if $b=\infty$.
\end{Theorem}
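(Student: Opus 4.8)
The plan is to prove the two implications separately, since one is forced immediately by the shape of the parameter interval and the other carries all the substance.

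For the direction $b=\infty\Rightarrow\gamma$ future inextendible: this is built into the definition. If $\gamma$ were future extendible there would exist, by definition, some $c\in\R\cup\{\infty\}$ with $b<c$ together with a causal curve $\tilde\gamma:[a,c)\to\mcM$ restricting to $\gamma$ on $[a,b)$. When $b=\infty$ no such $c$ exists, since no element of $\R\cup\{\infty\}$ exceeds $\infty$. Hence no extension is possible and $\gamma$ is future inextendible. This is precisely the point of insisting on a $\backmg$-distance parameterization: escape-to-infinity of the path is tied to divergence of the parameter, so there is simply no parameter room left to extend.

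For the converse I would argue the contrapositive: assuming $b<\infty$, I would produce a future end point of $\gamma$ and then invoke Lemma~\ref{P5.1} to conclude extendibility. Since $\gamma$ is parameterized by $\backmg$-distance it is uniformly Lipschitz with constant one, so that $\distb(\gamma(s),\gamma(s'))\le|s-s'|$ for all $s,s'\in[a,b)$, which is exactly \eq{distbe}. Choosing any sequence $s_k\to b$, the estimate $\distb(\gamma(s_k),\gamma(s_l))\le|s_k-s_l|$ shows that $(\gamma(s_k))$ is $\distb$-Cauchy; because $\backmg$ was chosen complete, the distance $\distb$ is complete as well (Hopf--Rinow), so $\gamma(s_k)$ converges to some $p\in\mcM$. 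Thus $p$ is an $\omega$-limit point of $\gamma$, and since $b<\infty$ and $\gamma$ is uniformly Lipschitz, Lemma~\ref{P5.0a} upgrades it to the unique future end point of $\gamma$. Lemma~\ref{P5.1} then supplies an extension, so $\gamma$ is future extendible; equivalently, future inextendibility forces $b=\infty$.

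The only real obstacle lies in this converse step, and it amounts to careful bookkeeping: the uniform Lipschitz bound transports the Cauchy property from the parameters $s_k$ to the image points $\gamma(s_k)$, and completeness of $\backmg$ then supplies the limit. I would emphasize that this is exactly the step that would fail for a non-complete auxiliary metric, so that it is the completeness hypothesis on $\backmg$ — rather than any feature of $g$ — that makes the clean dichotomy $b=\infty\Leftrightarrow$ inextendible hold. Note also that no regularity of $g$ beyond continuity enters anywhere in the argument, consistent with the hypotheses of the theorem.
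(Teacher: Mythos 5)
Your proof is correct and takes essentially the same route as the paper: both establish the nontrivial direction by showing that $b<\infty$ forces $\gamma$ to have a future end point, and then conclude via Lemmata~\ref{P5.0a} and~\ref{P5.1} (with the trivial direction being definitional). The only cosmetic difference is that you extract the limit point from metric completeness of $(\mcM,\backmg)$ via a Cauchy-sequence argument, whereas the paper confines the curve to the ball $B_h(\gamma(a),b-a)$ and uses the Hopf--Rinow compactness of its closure --- two equivalent faces of the same theorem.
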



 \proof Suppose that $b<\infty$. Let $B_h(p,r)$ denote the open
$h$-distance ball, with respect to the metric $h$, of radius $r$,
centred at $p$. Since $\gamma$ is parameterized by $h$-distance we
have, by \eq{distbe},
$$\gamma([a,b))\subset B_h(\gamma(a),b-a)
\;.$$ The Hopf-Rinow theorem~\cite{MilnorMorse,HopfRinow}  asserts that $\overline{B_h(\gamma(a),b-a)}$
is compact, therefore there exists
$p\in\overline{B_h(\gamma(a),b-a)}$ and a sequence $s_i$ such that
$$[a,b)\ni s_i\to_{i\to\infty} b \ \mbox{ \rm and } \
\gamma(s_i)\to p\;. $$ Thus $p$ is an $\omega$-limit point of
$\gamma$. Clearly $\gamma$ is uniformly Lipschitzian (with
Lipschitz modulus one), and  Lemma~\ref{P5.0a} shows  that $p$ is
an end point of $\gamma$. The result follows now from
Lemma~\ref{P5.1}. \qed

\subsection{Maximally extended geodesics}\label{sub:Inexten-geodesi}

Consider the Cauchy problem for an affinely-parameterized geodesic
$\gamma$: \bel{444}\nabla_{\dot \gamma} \dot \gamma = 0\;, \quad
\gamma(0)=p\;, \quad \dot \gamma(0)=X \;.\ee This is a
second-order ODE which, by the standard theory~\cite{Hartman},
for $C^{1,1}$ metrics, 
has unique solutions defined on a maximal interval $I=I(p,X)\ni
0$. $I$ is maximal in the sense that if $I'$ is another interval
containing $0$ on which a solution of \eq{444} is defined, then
$I'\subset I$.  When $I$ is maximal the geodesic will be called
\emph{maximally extended.} Now, it is not immediately obvious that
a \emph{maximally extended} geodesic is \emph{inextendible} in the
sense just defined: To start with, the notion of inextendibility
involves only the pointwise properties of a path, while the notion
of maximally extended geodesic involves the ODE \eq{444}, which
involves both the first and second derivatives of $\gamma$. Next,
the inextendibility criteria given above have been formulated in
terms of uniformly Lipschitzian parameterizations. While an
affinely parameterized geodesic is certainly locally Lipschitzian,
there is no \emph{a priori}  reason why it should be uniformly so,
when  maximally extended. All these issues turn out to be
irrelevant, and we have the following:

\begin{Proposition}\label{TP4ab}
\MConeone
A geodesic $\gamma:I\to\mcM$ is maximally extended as a geodesic
if and only if $\gamma$ is inextendible as a causal path.
\end{Proposition}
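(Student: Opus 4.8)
The plan is to treat only the future direction, the past being identical after reversing the time-orientation, and to combine the two at the end: a geodesic is maximally extended precisely when it is maximal at \emph{both} ends of its affine interval, and a causal path is inextendible precisely when it is both future and past inextendible, so the biconditional factors. Fix an interior affine parameter value and write the future half of the affinely parameterized geodesic as $\gamma\colon[0,\beta)\to\mcM$, with $\beta\in\R\cup\{\infty\}$. I would prove the two implications separately, the forward one being routine and the converse carrying all the difficulty.

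For the implication \emph{inextendible as a causal path $\Rightarrow$ maximally extended as a geodesic} I would argue by contraposition. If $\gamma$ is not maximally extended, then by definition the affine solution of $\nabla_{\dot\gamma}\dot\gamma=0$ prolongs to a strictly larger interval $[0,\beta')$ with $\beta'>\beta$. Such a prolongation is $C^1$, its tangent has constant causal type and constant time-orientation (both are preserved along a geodesic, the first since $g(\dot\gamma,\dot\gamma)$ is constant, the second by continuity of the non-vanishing $\dot\gamma$), hence it is itself a future directed causal path restricting to $\gamma$. This exhibits $\gamma$ as future extendible as a causal path, so inextendibility as a causal path forces maximality.

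The converse, \emph{maximally extended $\Rightarrow$ inextendible}, is the crux, and I would again argue by contraposition: assume $\gamma$ is future extendible as a causal path and deduce that the geodesic can be prolonged. Extendibility forces $\beta<\infty$ (there is no admissible $c>\infty$ in the definition) and, as recorded in the discussion preceding Lemma~\ref{P5.1}, it forces $\gamma$ to be uniformly Lipschitzian in its affine parameter; since $\gamma$ is $C^1$ this means $|\dot\gamma|_{\backmg}\le L$ on $[0,\beta)$ for some constant $L$. By Lemma~\ref{P5.1} the extendible path $\gamma$ has a future end point $q$, so $\gamma(t)\to q$ as $t\to\beta^-$. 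Consequently the phase-space curve $t\mapsto(\gamma(t),\dot\gamma(t))$ eventually remains in the set $\{(x,V)\in T\mcM:\distb(x,q)\le 1,\ |V|_{\backmg}\le L\}$, which is compact because $\backmg$ is complete (Hopf--Rinow).

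Finally I would invoke the standard theory of ordinary differential equations (see~\cite{Hartman}): since $g$ is $C^{1,1}$, the Christoffel symbols are locally Lipschitz, so the geodesic spray $(x,V)\mapsto\bigl(V,-\Gamma(x)(V,V)\bigr)$ is a locally Lipschitz vector field on $T\mcM$, and a \emph{maximal} solution defined on a finite interval $[0,\beta)$ must leave every compact subset of $T\mcM$ as $t\to\beta^-$. The previous paragraph shows our solution does not, so it is not maximal: it prolongs as a geodesic past $\beta$, i.e.\ $\gamma$ is not maximally extended. This is precisely where the $C^{1,1}$ hypothesis enters, and where the real content lies, since one must exclude the \emph{a priori} possibility that $\gamma(t)$ converges while $|\dot\gamma(t)|_{\backmg}$ blows up --- a possibility ruled out exactly by the uniform Lipschitz bound supplied by causal extendibility.
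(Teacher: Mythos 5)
Your proof is correct, and for the substantive direction it takes a genuinely different route from the paper's. The paper argues by contradiction and works \emph{geometrically at the end point}: causal extendibility yields an end point $p$, and in normal coordinates centred at $p$ the final piece of the curve is a geodesic segment ending at the centre, hence a radial half-ray, which visibly continues through the origin, contradicting maximality as a geodesic; the easy converse is disposed of in one line, just as in your first implication. You instead work \emph{in phase space}: causal extendibility gives both the end point and (via the discussion preceding Lemma~\ref{P5.1}) a uniform bound $|\dot\gamma|_{\backmg}\le L$, so the lift $t\mapsto(\gamma(t),\dot\gamma(t))$ is eventually trapped in a compact subset of $T\mcM$, and the standard escape/continuation lemma for the locally Lipschitz geodesic spray prolongs the solution past $\beta$. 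The trade-off is instructive. The paper's argument is shorter and more geometric, but it leans on normal coordinates centred at $p$ --- a tool the paper itself flags as delicate at $C^{1,1}$ regularity (its footnote notes that for $C^{1,1}$ metrics normal coordinates are a priori only continuous, and Proposition~\ref{PC1} is stated for $C^2$ metrics) --- and the claim that a geodesic segment limiting to the centre \emph{is} a radial ray tacitly requires the velocity to converge, which is exactly the failure mode you isolate; filling that in needs the same Lipschitz bound you use, followed by an ODE-uniqueness argument. Your version makes this mechanism explicit, uses only the local Lipschitz continuity of the Christoffel symbols (precisely what $C^{1,1}$ provides) together with a standard continuation theorem, and never touches normal coordinates, so it sits more comfortably under the stated hypothesis; the price is invoking phase-space machinery in place of a two-line geometric picture. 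Your preliminary remark that the biconditional factors into independent future and past equivalences is also sound and matches the implicit structure of the paper's argument.
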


\proof 
Suppose, for contradiction, that $\gamma$ is a maximally extended
geodesic which is extendible as a path, thus $\gamma$ can be
extended to a path $\hat \gamma$  by adding its end point $p$ as
in \eq{distb0}. Working in a normal coordinate neighborhood
$\mcO_p$ around $p$, $\hat\gamma\cap\mcO_p$ has a last component
which is a geodesic segment which ends at  $p$. By construction of
normal coordinates the component of $\hat \gamma$ in question is
simply a half-ray through the origin, which  can be clearly be
continued through $p$ as a geodesic. This  contradicts maximality
of $\gamma$ as a geodesic. It follows that a maximally extended
geodesic is inextendible. Now, if $\gamma$ is inextendible as a
path, then $\gamma$ can clearly not be extended as a geodesic,
which establishes the reverse implication. \qed

 A result often used in
causality theory is the following:

\begin{Theorem}\label{TP4}
\Mcont
Let $\gamma$, be a future directed
causal, respectively timelike, path. Then there exists an
inextendible causal, respectively timelike, extension of
$\gamma$.
\end{Theorem}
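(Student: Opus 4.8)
The plan is to produce the inextendible extension by a Zorn's lemma argument applied to the poset of causal (respectively timelike) extensions of $\gamma$, after first normalising the parameterisation. As recalled in Section~\ref{SCp}, a locally Lipschitz reparameterisation changes neither the image nor the causal/timelike character of a path, so without loss of generality I would assume from the outset that $\gamma\colon[a,b)\to\mcM$ is already parameterised by $\backmg$-distance, hence is uniformly Lipschitz with Lipschitz constant one. I then introduce $\mathcal P$, the collection of all future directed causal (respectively timelike) paths $\sigma\colon[a,b_\sigma)\to\mcM$ that are parameterised by $\backmg$-distance and satisfy $\sigma|_{[a,b)}=\gamma$, partially ordered by extension: $\sigma_1\preceq\sigma_2$ iff $b_{\sigma_1}\le b_{\sigma_2}$ and $\sigma_2|_{[a,b_{\sigma_1})}=\sigma_1$. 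The set $\mathcal P$ is nonempty since it contains $\gamma$ itself.

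Next I would check the hypothesis of Zorn's lemma, that every chain $\mathcal C\subset\mathcal P$ has an upper bound. Given such a chain, set $B:=\sup_{\sigma\in\mathcal C}b_\sigma\in\R\cup\{\infty\}$ and define $\sigma_\infty\colon[a,B)\to\mcM$ by $\sigma_\infty(s):=\sigma(s)$ for any $\sigma\in\mathcal C$ with $s<b_\sigma$; the chain condition makes this well defined. Each $\sigma\in\mathcal C$ satisfies $|\dot\sigma|_{\backmg}=1$ almost everywhere and is causal (respectively timelike) future directed almost everywhere, so the same holds for $\sigma_\infty$ on all of $[a,B)$. In particular $\sigma_\infty$ is Lipschitz with constant one, hence a genuine causal (respectively timelike) path parameterised by $\backmg$-distance and extending $\gamma$; thus $\sigma_\infty\in\mathcal P$ is an upper bound for $\mathcal C$. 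Zorn's lemma then yields a maximal element $\hat\gamma\colon[a,B)\to\mcM$ of $\mathcal P$.

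It remains to verify that a maximal element is inextendible, which is the only delicate point and the step I expect to be the main obstacle, since \emph{a priori} maximality only forbids strictly larger elements \emph{of the given poset}, whereas inextendibility forbids any causal extension as a map. Suppose $\hat\gamma$ were extendible: there is a future directed causal (respectively timelike) path $\tilde\gamma\colon[a,c)\to\mcM$ with $c>B$ and $\tilde\gamma|_{[a,B)}=\hat\gamma$. The resolution is to reparameterise $\tilde\gamma$ by $\backmg$-distance measured from $\tilde\gamma(a)$, producing a path $\sigma$. Since $\hat\gamma$ already has unit $\backmg$-speed, the arc length along $\tilde\gamma|_{[a,B)}$ equals $s-a$, so this reparameterisation acts as the identity on that portion and $\sigma|_{[a,B)}=\hat\gamma$; and since the continuation $\tilde\gamma|_{[B,c)}$ is causal its tangent is nonzero almost everywhere, contributing strictly positive $\backmg$-length, so $\sigma$ is defined on an interval strictly longer than $[a,B)$. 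Thus $\sigma$ is a strict extension of $\hat\gamma$ lying in $\mathcal P$, contradicting maximality. Hence $\hat\gamma$ is inextendible, and it is the required inextendible extension of $\gamma$. The timelike case is handled verbatim by letting $\mathcal P$ consist of timelike paths, using that reparameterisation preserves timelikeness. I emphasise that this argument avoids any construction of geodesic continuations, so it remains valid for merely continuous metrics, as required by the hypotheses of the theorem.
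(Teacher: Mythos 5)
Your proof is correct and follows essentially the same route as the paper's own proof for continuous metrics: a Kuratowski--Zorn argument on $\backmg$-distance--parameterized paths ordered by extension, with the unit-speed normalisation doing exactly the work you identify as the delicate point, namely converting maximality in the poset into inextendibility as a map. The paper packages this slightly differently --- it first adjoins the end point of $\gamma$ (Lemma~\ref{P5.0}), applies the Kuratowski--Zorn lemma to the set $\Omega_p$ of unit-speed \emph{timelike} paths starting at that end point, and then concatenates; it also records a separate, simpler proof via maximally extended geodesics (Proposition~\ref{TP4ab}) valid for $C^{1,1}$ metrics --- but the mechanism and the level of generality (merely continuous metrics) are the same.

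Two small points deserve tightening, both of which the paper itself also glosses over. First, in the timelike case ``inextendible'' still means admitting no \emph{causal} extension (the paper defines extendibility of any causal curve, timelike or not, via causal extensions), whereas your maximality argument as written only excludes \emph{timelike} extensions of $\hat\gamma$. The repair is one line: a causal extension would give the uniformly Lipschitz path $\hat\gamma$ an end point $p$, and concatenating $\hat\gamma$ with $p$ adjoined (Lemma~\ref{P5.0}) with any locally constructed unit-speed timelike curve starting at $p$ produces a strictly larger element of your timelike poset $\mathcal P$, again contradicting maximality; this is the step the paper hides in asserting that maximal elements of $\Omega_p$ are inextendible. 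Second, since the paper's notion of extension requires literal equality $\tilde\gamma|_{[a,b)}=\gamma$, your construction yields an inextendible extension of the \emph{reparameterized} $\gamma$ rather than of $\gamma$ itself; either $\gamma$ is already inextendible and there is nothing to prove, or it is extendible, hence uniformly Lipschitz with an end point, and one splices the original $\gamma$ onto the tail of your maximal curve --- which is exactly what the paper's concatenation $\hat\gamma\cup\gamma_1$ achieves without ever reparameterizing $\gamma$.
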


\proof
We start with a proof assuming a $C^{1,1}$ metric, as it is simpler: If $\gamma:[a,b)\to\mcM$ is inextendible there is nothing
to prove; otherwise the path $\hat \gamma \cup \tilde \gamma$,
where $\hat \gamma$ is given by Proposition~\ref{P5.0}, and $
\tilde\gamma$ is any maximally extended future directed causal
geodesic as in the proof of Proposition~\ref{P5.1}, provides an
extension. This extension is inextendible by Proposition
\ref{TP4ab}.

When the metric is merely assumed to be continuous, one can proceed as follows:
Suppose that $\gamma$ is extendible, in particular
$\gamma$ has an end point $p$. Let $\Omega_p$ denote the
collection of all future directed, parameterized by $h$-proper
distance, timelike paths starting at $p$.  Obviously  $\Omega_p$ is
non-empty. $\Omega_p$
can be directed using the property  of ``being an extension": we
write $\gamma_1 < \gamma_2$ if $\gamma_2$ is an extension of
$\gamma_1$. The existence of inextendible paths in $\Omega_p$
easily follows from the Kuratowski-Zorn lemma.

If $\gamma_1$ is any maximal element of $\Omega_p$, then $\hat
\gamma\cup \gamma_1$, with $\hat \gamma$ given by
Lemma~\ref{P5.0}, is an inextendible future directed extension of
$\gamma$.
\qed

\section{Accumulation curves}
\label{SAc}

A key tool in the analysis of global properties of space-times is
the analysis of sequences of curves. One typically wants to obtain
a limiting curve, and study its properties. The object of this
section is to establish the existence of such limiting curves.

We wish, first, to find the ingredients needed for a useful notion
of a limit of curves. It is enlightening to start with several
examples. The first question that arises is whether to consider a
sequence of curves $\gamma_n$ defined on a common interval $I$, or
whether one should allow different domains $I_n$ for each
$\gamma_n$. To illustrate that this last option is very
unpractical, consider the family of timelike curves
\bel{gex1}(-1/n, 1/n)\ni s\to \gamma_n(s)=(s,0)\in \R^{1,1}
 \;.
 \ee
The only
sensible geometric object to which the $\gamma_n(s)$ converge is
the constant map \bel{gex2}\{0\} \ni s\to \gamma_\infty(s)=0\in
\R^{1,1}\;,\ee which is quite reasonable, except that it takes us
away from the class of causal curves. To avoid such behavior  we
will therefore assume that all the curves $\gamma_n$ have a common
domain of definition $I$.

 Next, there are various reasons why a sequence of curves might fail
to have an ``accumulation curve". First, the whole sequence might
simply run to infinity. (Consider, for example, the sequence
$$\R\ni s \to \gamma_n(s)=(s,n)\in \R^{1,1}\;.)$$ This is avoided when
one considers curves such that $\gamma_n(0)$ converges to some
point $p\in\mcM$.

Further, there might be a problem with the way the curves are
parameterized. As an example, let $\gamma_n$ be defined as
$$(-1, 1)\ni s\to \gamma_n(s)=(s/n,0)\in \R^{1,1}\;.$$
As  in \eq{gex1}, the $\gamma_n(s)$ converge to the constant map
\bel{gex2a}(-1, 1) \ni s\to \gamma_\infty(s)=0\in
\R^{1,1}\;,\ee  again not a causal curve. Another example of
pathological parameterizations is given by the family of curves
$$\R\ni s \to \gamma_n(s)=(ns,0)\in \R^{1,1}\;.$$
In this case one is tempted to say that the $\gamma_n$'s
accumulate at the path, say $\gamma_1$, if parameterization is not
taken into account. However, such a convergence  is extremely
awkward to deal with when attempting to actually prove something.
This last behavior can be avoided by assuming that all the curves
are uniformly Lipschitz continuous, with the same Lipschitz
constant. One way of ensuring this is to parameterize all
the curves by a length parameter with respect to our auxiliary
complete Riemannian metric $h$.

Yet another problem arises when considering the family of
Euclidean-distance-parameterized causal curves
$$\R\ni s \to \gamma_n(s)=(s+n,0)\in \R^{1,1}\;.$$
This can be gotten rid of by shifting the distance parameter so
that the sequence $\gamma_n(s_0)$ stays in a compact set, or
converges, for some $s_0$ in the domain $I$.

The above discussion motivates the hypotheses of the following
result:
%
\begin{Proposition}\label{Paccum} Let $(\mcM,g)$ be a $C^3$ Lorentzian manifold
with a $C^2$ metric.
Let $\gamma_n:I\to\mcM$ be a sequence of uniformly Lipschitz
future directed causal curves, and suppose that there exist
$p\in \mcM$ such that \bel{450}\gamma_n(0) \to p\;.\ee Then
there exists a  future directed causal curve $\gamma:I\to\mcM$
and a subsequence $\gamma_{n_i}$ converging to $\gamma$ in the
topology of uniform convergence on compact subsets of $I$.
\end{Proposition}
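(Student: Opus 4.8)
The plan is to proceed in two stages: first extract a uniformly convergent subsequence by an Arzel\`a--Ascoli argument, and then verify that the limit is a causal curve. The second stage is where the $C^2$ hypothesis on the metric is genuinely used, and I expect it to be the crux.

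First I would establish compactness. Since the $\gamma_n$ are uniformly Lipschitz with respect to the complete auxiliary metric $\backmg$, there is a single constant $L$ with $\distb(\gamma_n(s),\gamma_n(s'))\le L|s-s'|$ for all $n$ and all $s,s'\in I$. Combined with $\gamma_n(0)\to p$, this shows that for every compact $[c,d]\subset I$ the images $\gamma_n([c,d])$ lie in a fixed closed $\backmg$-ball about $p$, which is compact by the Hopf--Rinow theorem~\cite{MilnorMorse,HopfRinow}. Uniform equicontinuity is immediate from the Lipschitz bound, so Arzel\`a--Ascoli yields, on each $[c,d]$, a uniformly convergent subsequence; a diagonal argument over an exhaustion of $I$ by compact intervals produces a single subsequence $\gamma_{n_i}$ converging uniformly on compact subsets of $I$ to a map $\gamma:I\to\mcM$. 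Passing to the limit in the Lipschitz inequality gives $\distb(\gamma(s),\gamma(s'))\le L|s-s'|$, so $\gamma$ is locally Lipschitz and hence, by Rademacher's theorem~\ref{TRademacher}, differentiable almost everywhere.

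The heart of the matter is to show that $\dot\gamma$ is causal and future directed almost everywhere. Fix $s_0\in I$ and work in a normal-coordinate (elementary) region about $\gamma(s_0)$ as in Proposition~\ref{PC1}. For $\epsilon>0$ small enough, the uniform convergence together with the Lipschitz bound guarantees that for all large $i$ the segment $\gamma_{n_i}|_{[s_0,s_0+\epsilon]}$ (of $\backmg$-length at most $L\epsilon$) stays inside a normal-coordinate ball centred at the moving point $a_i:=\gamma_{n_i}(s_0)$, to which Proposition~\ref{P3} applies. Since each $\gamma_{n_i}$ is causal future directed, that proposition gives $\gamma_{n_i}(s_0+\epsilon)\in J^+(a_i;\cdot)$, equivalently $\sigma_{a_i}(\gamma_{n_i}(s_0+\epsilon))\le 0$ together with the $x^0$--ordering. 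Now $a_i\to\gamma(s_0)$ and $\gamma_{n_i}(s_0+\epsilon)\to\gamma(s_0+\epsilon)$, so the joint continuity of the two-point world function $(a,b)\mapsto\sigma_a(b)$ recorded in Remark~\ref{R25X11.1} lets me pass to the limit and conclude
\[
\sigma_{\gamma(s_0)}(\gamma(s_0+\epsilon))\le 0 ,
\]
with $\gamma(s_0+\epsilon)$ lying in the closed causal future of $\gamma(s_0)$ inside the chosen region.

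Finally I would differentiate this local future-cone containment. At a point $s_0$ where $\dot\gamma$ exists, writing $x^\mu$ for the normal coordinates centred at $\gamma(s_0)$ one has $x^\mu(\gamma(s_0+\epsilon))=\epsilon\,\dot\gamma^\mu(s_0)+o(\epsilon)$, whence $\sigma_{\gamma(s_0)}(\gamma(s_0+\epsilon))=\epsilon^2\,\eta_{\mu\nu}\dot\gamma^\mu(s_0)\dot\gamma^\nu(s_0)+o(\epsilon^2)$. Dividing by $\epsilon^2$ and letting $\epsilon\to 0^+$ gives $\eta_{\mu\nu}\dot\gamma^\mu\dot\gamma^\nu\le 0$; since $g_{\mu\nu}(\gamma(s_0))=\eta_{\mu\nu}$ in normal coordinates this reads $g(\dot\gamma,\dot\gamma)\le 0$, and the time ordering forces $\dot\gamma$ to be future directed wherever it is non-zero. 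Thus $\gamma$ is a future directed causal curve. I expect the genuine obstacle to be exactly the limit step of the previous paragraph: it rests on the continuity in \emph{both} arguments of $\sigma_a(b)$ and on the closedness of the local causal future furnished by Proposition~\ref{P3}, both of which break down for metrics of differentiability below $C^2$. This is precisely the threshold flagged in the introduction and traced back to this proposition.
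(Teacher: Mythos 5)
Your proof is correct and follows essentially the same route as the paper's: Arzel\`a--Ascoli plus a diagonal argument, then causality of the limit via Proposition~\ref{P3} applied at the moving base points $\gamma_{n_i}(s_0)$, the joint continuity of $(p,q)\mapsto\sigma_p(q)$ (the genuine $C^2$ input, as you correctly identify), and the second-order Taylor expansion of $\sigma$ in normal coordinates. The only differences are presentational --- you justify the uniform size of the normal neighbourhoods through the Lipschitz bound where the paper asserts it directly over compact sets --- and, like the paper's own argument, you leave the endgame slightly informal (the inequality $\eta_{\mu\nu}\dot\gamma^\mu\dot\gamma^\nu\le 0$ by itself does not exclude $\dot\gamma=0$, and future-directedness must be read off from the $x^0$-ordering in \eq{P3a}).
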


\begin{coco}
\begin{Remark}
 \label{R25X11.1xx}
The hypothesis that $g$ is $C^2$ is made to guarantee that the function $(p,q)\mapsto \sigma_q(p)$ is continuous, and depends continuously upon the metric. It is shown in~\cite{ChGrant} that the result remains true for metrics which are merely continuous.
The analysis there relies on the result here and suitable smooth approximations of the metric.
 \ptcx{needs rewording when merged}
\end{Remark}
\end{coco}

Proposition~\ref{Paccum} provides the justification for the
following definition:

\begin{Definition}\label{DSAc}
Let $\gamma_n:I\to\mcM$ be a sequence of paths in $\Mgz$. We shall say that
$\gamma:I\to\mcM$ is an \underline{accumulation curve} of the
$\gamma_n$'s, or that the $\gamma_n$'s accumulate at $\gamma$, if
there exists a subsequence $\gamma_{n_i}$ that converges to $\gamma$
uniformly on compact subsets of $I$.
\ptcx{definition changed, should
be checked for consistency}
\end{Definition}

\begin{coco}
In their treatment of causal theory, Hawking and Ellis~\cite{HE}
introduce a notion of \emph{limit curve} for paths, regardless of
parameterization, which we find very awkward to work with. A
related but slightly more convenient notion of \emph{cluster
curve} is considered in~\cite{Kriele}, where the name of ``limit
curve" is used for yet another notion of convergence. As discussed
in~\cite{Beem-Ehrlich:Lorentz2,Kriele}, those definitions lead to
pathological behavior in some situations. We have found the above
notion of ``accumulation curve" the most convenient to work with
from several points of view.

A sensible terminology, in the context of Definition~\ref{DSAc},
could be  ``$C^0_\loc$-limits of curves", but we prefer not to use
the term ``limit" in this context, as limits are usually unique,
while Definition~\ref{DSAc} allows sequences that have more than one accumulation
curve.
\end{coco}

\medskip

\noindent{\sc Proof of Proposition~\ref{Paccum}:} The hypothesis
that all the $\gamma_n$'s are uniformly Lipschitz reads
\bel{452} \distb(\gamma_n(s),\gamma_n(s'))\le L |s-s'|\;,\ee
for some constant $L$.
 This shows that the family
$\{\gamma_n\}$ is equicontinuous, and \eq{450} together with the
Arzela-Ascoli theorem implies that for every compact set $K\subset
I$  there exists a curve $\gamma_K:K\to\mcM$ and a subsequence
$\gamma_{n_i}$ which converges uniformly to $\gamma_K$ on $K$. One
can obtain a $K$-independent curve $\gamma$ by the so-called
\emph{diagonalisation} procedure.

\begin{coco} The diagonalisation procedure goes as follows: For ease of notation
we consider $I=\R$, the same argument applies on any interval with
obvious modifications. Let $\gamma_{n(i,1)}$ be the sequence which
converges to $\gamma_{[-1,1]}$; applying Arzela-Ascoli to this
sequence one can extract a subsequence $\gamma_{n(i,2)}$ of
$\gamma_{n(i,1)}$ which converges uniformly to some curve
$\gamma_{[-2,2]}$ on $[-2,2]$. Since $\gamma_{n(i,2)}$ is a
subsequence of $\gamma_{n(i,1)}$, and since $\gamma_{n(i,1)}$
converges to $\gamma_{[-1,1]}$ on $[-1,1]$, one finds that
$\gamma_{[-2,2]}$ restricted to $[-1,1]$ equals $\gamma_{[-1,1]}$.
One continues iteratively: suppose that
$\{\gamma_{n(i,k)}\}_{i\in\N}$ has been defined for some $k$, and
converges to a curve  $\gamma_{[-k,k]}$ on $[-k,k]$, then the
sequence $\{\gamma_{n(i,k+1)}\}_{i\in\N}$ is defined as a
subsequence of $\{\gamma_{n(i,k)}\}_{i\in\N}$ which converges to
some curve $\gamma_{[-(k+1),k+1]}$ on $[-(k+1),k+1]$. The curve
$\gamma$ is finally defined as
$$\gamma(s)=\gamma_{[-k,k]}(s)\;,$$ where $k$ is any number such
that $s\le k$. The construction guarantees that
$\gamma_{[-k,k]}(s)$ does not depend upon $k$ as long as $s\le k$.
\end{coco}

It remains to show that $\gamma$ is causal. Passing to the limit
$n\to \infty$ in \eq{452} one finds \bel{453}
\distb(\gamma(s),\gamma(s'))\le L|s-s'|\;.\ee For $q\in \mcM$ let
$\mcO_{q}$ be an elementary neighborhood of $q$
 as in Proposition~\ref{P3}, and let $\sigma_{q}$ be the
 associated function defined by \eq{lorp}.
 Let $s\in\R$ and consider any point $\gamma(s)\in\mcM$. Now,
  the size of the sets $\mcO_q$ can be controlled uniformly when
  $q$ varies over compact subsets of $\mcM$. It follows that for
  all $s'$ close enough to $s$ and for all $n$ large enough
  we have $\gamma_n(s')\in \mcO_{\gamma_n(s)}$. Since the
  $\gamma_n$'s are causal, Proposition~\ref{P3} shows that we have
\bel{454}\sigma_{\gamma_n(s)}(\gamma_n(s'))\le 0\;.
\ee
Passing to the limit in \eq{454} gives
  \bel{455}\sigma_{\gamma(s)}(\gamma(s'))\le 0\;.\ee
This is only possible if $\gamma$ is causal, which can be seen
  as follows: Suppose that
  $\gamma$ is differentiable at $s$. In normal coordinates on $\mcO_{\gamma(s)}$ we
  have, by definition of a derivative,
  $$\gamma^\mu(s')= \underbrace{\gamma^\mu(s)}_{=0}+\dot \gamma^\mu(s)(s'-s)+o(s'-s)\;,$$
  hence
  $$0\ge\sigma_{\gamma(s)}\left(\gamma(s')\right) \equiv \eta_{\mu\nu}\gamma^\mu(s')\gamma^\nu(s')= \eta_{\mu\nu}\dot \gamma^\mu(s)\dot
  \gamma^\nu(s)(s'-s)^2+o((s'-s)^2)\;.$$
  For $s'-s$ small enough this is only possible if
  $$\eta_{\mu\nu}\dot \gamma^\mu(s)\dot
  \gamma^\nu(s)\le 0\;,$$
  and $\dot \gamma $ is causal, as we desired to show.
\qed

Let us address now the question of inextendibility of accumulation
curves. We note the following lemma:

\begin{Lemma}
\label{Lem:inex} Let $\gamma_n$ be a sequence of
$\distb$-parameterized inextendible causal curves converging to
$\gamma$ uniformly on compact subsets of \,$\R$, then $\gamma$ is
inextendible.
\end{Lemma}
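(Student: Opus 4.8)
The plan is to reduce everything to the single observation that the accumulation curve $\gamma$ is defined on all of $\R$, from which inextendibility follows immediately from the definition of extendibility.

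First I would pin down the parameter domains. Each $\gamma_n$ is $\distb$-parameterized and inextendible, so by Theorem~\ref{TP4a}, applied in the future direction and in its time-reversed form for the past direction, the parameter interval of $\gamma_n$ has right endpoint $+\infty$ and left endpoint $-\infty$; that is, every $\gamma_n$ is defined on all of $\R$. Since the $\gamma_n$ converge to $\gamma$ uniformly on compact subsets of $\R$, the limit $\gamma$ is continuous and defined on all of $\R$, and it is a future directed causal curve by Proposition~\ref{Paccum}.

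The conclusion then follows directly from the definition of extendibility. A causal curve $\gamma\colon\R\to\mcM$ cannot be future extendible: a future extension would have to be a causal curve defined on an interval with right endpoint $c>+\infty$ restricting to $\gamma$, which is impossible, since the parameter domain of $\gamma$ already exhausts $(-\infty,\infty)$. The same argument applied to the past direction shows that $\gamma$ is past inextendible as well. Hence $\gamma$ is inextendible, which is what had to be proved.

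The one point that deserves care --- and the reason the statement is not entirely vacuous --- is that one must resist the temptation to deduce inextendibility of $\gamma$ by invoking Theorem~\ref{TP4a} for $\gamma$ itself. Indeed, the limit $\gamma$ need \emph{not} be $\distb$-parameterized: Example~\ref{Exa24X11.1} exhibits a sequence of $\distb$-parameterized curves whose accumulation curve travels strictly more slowly than unit $\distb$-speed. Thus Theorem~\ref{TP4a} does not apply to $\gamma$. It is also not needed, however, since inextendibility is read off directly from the fact that the parameter domain of $\gamma$ is all of $\R$, regardless of the parameterization.
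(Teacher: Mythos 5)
There is a genuine gap: you have proved only a vacuous, parameterization-dependent reading of the statement, and that reading is not the one the lemma is about. Your closing paragraph, far from resolving the difficulty, codifies it. Because $\gamma$ need not be $\distb$-parameterized, the fact that its parameter domain is all of $\R$ carries no geometric information: a causal curve can have unbounded parameter range and nevertheless ``stop'', i.e.\ converge to a point, exactly as the curve $[1,\infty)\ni s\mapsto(-1/s,0)\in\R^{1,1}$ discussed in Section~\ref{Sepeip}. Such a curve must count as extendible for the theory to make sense: after reparameterization by $\distb$-arc length it is defined on a bounded interval, has an end point, and can be genuinely continued (Lemma~\ref{P5.1}). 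This reparameterization-invariant notion --- equivalently, that $\gamma|_{[0,\infty)}$ and $\gamma|_{(-\infty,0]}$ both have infinite $\distb$-length --- is what the lemma asserts, and it is what every application needs: in Theorem~\ref{Tdd1} the accumulation curve is fed into Lemma~\ref{Ldd1}, whose conclusion is simply false for curves that merely have domain $\R$ but stop short of $\hyp$. Indeed the paper's own proof opens by saying that the parameter range being $\R$ would suffice \emph{only if} $\gamma$ were $\distb$-parameterized, which by Example~\ref{Exa24X11.1} it need not be; under your reading that sentence, Theorem~\ref{TP4a}, and the lemma itself would all be pointless. Nothing in your argument excludes that $\gamma$ slows down to sub-unit $\distb$-speed and has finite $\distb$-length on $[0,\infty)$, hence a future end point.

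Excluding that possibility is the actual content, and it is the only place where the hypotheses on the $\gamma_n$ do any work (in your proof they do essentially nothing, which is a reliable warning sign). The paper argues as follows: if $\gamma|_{[0,\infty)}$ had finite $\distb$-length, then by Theorem~\ref{TP4a} (applied to its $\distb$-reparameterization, using completeness of $\distb$) it would have a future end point; let $\mcU$ be an elementary neighborhood centred at that end point, so that the whole tail of $\gamma$ beyond some parameter $b$ lies in $\mcU$. Uniform convergence on compact sets then forces the curves $\gamma_{n_i}$ to remain in $\mcU$ on parameter intervals of length exceeding $\ell$, where $\ell$ is the bound of Lemma~\ref{Lgh1.0} on the $\distb$-length of any causal curve contained in an elementary neighborhood. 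Since each $\gamma_{n_i}$ is $\distb$-parameterized, i.e.\ has unit $\distb$-speed, its $\distb$-length on such an interval exceeds $\ell$, a contradiction. The same argument applies to $\gamma|_{(-\infty,0]}$. Your proposal contains no counterpart of this step, so the intended statement remains unproved.
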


\proof
Note that the parameter range of $\gamma$ is $\R$, and the result
would follow from Theorem~\ref{TP4a} if $\gamma$ were
$\distb$-parameterized, but this might fail to be the case, as seen
in Example~\ref{Exa24X11.1}.

So we need to show that both $\gamma|_{[0,\infty)}$ and
$\gamma|_{(-\infty,0]}$ are of infinite length. As
usual it suffices to consider $\gamma|_{[0,\infty)}$, we retain the
name $\gamma$ for this last path. Suppose that this is not the case,
then there exists $a<\infty$ so that $\gamma$ is defined on $[0,a)$,
when reparameterised by $\distb$--distance. By Theorem~\ref{TP4a}
the curve $\gamma$ can be extended to a causal curve defined on
$[0,a]$, still denoted by $\gamma$.

Let $\mcU$ be an elementary neighborhood centred at $\gamma(a)$, and
let $0<b<a$ be such that $\gamma(b)\in \mcU$. By definition of
accumulation curve there exists a sequence $n_i\in \N$, a compact
interval $[-k,k]\subset \R$ and a sequence $s_i\in [-k,k]$ such that
$\gamma_{n_i}(s_i)$ converges to $\gamma(b)$, in particular we will
have $\gamma_{n_i}(s_i)\in \mcU$ for $i$ large enough. We note the
following:

 \begin{Lemma}\label{Lgh1.0} Let $\mcU$ be an elementary
 neighborhood, as defined in Definition~\ref{Delem}. There exists a constant $\ell$ such that for any
 causal curve $\gamma:I\to \mcU$
 the $h$--length $|\gamma|_h$ of $\gamma$ is bounded by $\ell$.
 \ptcx{consider adding both a lower and upper bound}
 \end{Lemma}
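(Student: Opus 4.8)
The plan is to reduce the bound on the $h$--length to a bound on the total variation of the time coordinate $t=x^0$ that appears in Definition~\ref{Delem}, by exploiting the two key properties of an elementary region: that $\nabla t$ is timelike on the compact set $\overline{\mcU}$, and that $\overline{\mcU}$ is compact. First I would record that any causal $\gamma:I\to\mcU$ is locally Lipschitz, so $\dot\gamma$ exists almost everywhere (Theorem~\ref{TRademacher}) and $|\gamma|_h=\int_I|\dot\gamma|_h\,ds$ makes sense. The heart of the argument is a \emph{uniform} comparison between the $h$--norm of a causal vector and the modulus of its pairing with $dt$: I claim there is a constant $c>0$, depending only on $\mcU$ and $h$, such that
$$
|X|_h\le c^{-1}\,|dt(X)|=c^{-1}\,|g(\nabla t,X)|
$$
for every causal vector $X$ based at a point of $\overline{\mcU}$.

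To prove this comparison I would argue by compactness. The bundle of $h$--unit vectors over $\overline{\mcU}$ is compact (a sphere bundle over a compact base), and the causal condition $g(X,X)\le 0$ cuts out a closed, hence compact, subset $\mathcal{C}$ of it. On $\mathcal{C}$ the function $X\mapsto|g(\nabla t,X)|$ is continuous, and it is \emph{strictly positive}: since $\nabla t$ is timelike on $\overline{\mcU}$ and $X\in\mathcal{C}$ is causal and nonzero, a timelike vector cannot be $g$--orthogonal to a nonzero causal vector, so $g(\nabla t,X)\neq 0$. Therefore this function attains a positive minimum $c>0$ on $\mathcal{C}$, and homogeneity gives the displayed estimate for all causal $X$ (nonzero $X$ are rescaled to $\mathcal C$; the bound is trivial for $X=0$).

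Finally I would use monotonicity of $t$ along $\gamma$ to turn the right-hand side into something bounded independently of $\gamma$ and $I$. Since $\nabla t$ is timelike on the connected set $\overline{\mcU}$ it is, after possibly replacing $t$ by $-t$, timelike past-pointing there, so by Lemma~\ref{LP3.0} the function $t\circ\gamma$ is monotone along the causal curve $\gamma$. Hence its total variation equals the oscillation of its values, which is at most $\Delta t:=\sup_{\overline{\mcU}}t-\inf_{\overline{\mcU}}t<\infty$ by compactness of $\overline{\mcU}$. Combining,
$$
|\gamma|_h=\int_I|\dot\gamma|_h\,ds\le c^{-1}\int_I|dt(\dot\gamma)|\,ds=c^{-1}\int_I\Big|\frac{d(t\circ\gamma)}{ds}\Big|\,ds\le c^{-1}\,\Delta t=:\ell ,
$$
which is the asserted bound, with $\ell$ depending only on $\mcU$ and $h$ and not on $\gamma$ or on the interval $I$. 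The main obstacle is the uniform positive lower bound $c$ in the second paragraph; once that compactness estimate is in hand, the monotonicity of $t\circ\gamma$ and the conclusion are routine.
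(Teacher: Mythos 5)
Your proof is correct and takes essentially the same route as the paper: your uniform comparison $|X|_h\le c^{-1}|g(\nabla t,X)|$ for causal $X$ over $\overline{\mcU}$ is precisely the paper's inverse Cauchy--Schwarz estimate (Lemma~\ref{Lgh1}), established by the same compactness argument on the bundle of causal $h$-unit vectors, and the conclusion then follows, as in the paper, from monotonicity of the time coordinate along causal curves together with its boundedness on the compact set $\overline{\mcU}$. The only cosmetic difference is that you integrate the pointwise bound directly against $|\dot\gamma|_h$, whereas the paper first passes to the $h$-distance parameterization and bounds the length of the parameter interval.
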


 To prove Lemma~\ref{Lgh1.0} we need the following variation
 of the inverse
 Cauchy-Schwarz inequality\restrict{ (compare Proposition~\ref{PinvCS})}:

 \begin{Lemma}\label{Lgh1}
 Let $K$ be a compact set and let $X$ be a continuous
 timelike vector field defined there, then there exists a \underline{strictly
 positive}
 constant $C$ such that
 for all $q\in K$ and for all causal vectors $Y\in T_q\mcM$ we have
 \bel{egh1} |g(X,Y)| \ge C |Y|_h\;.\ee
 \end{Lemma}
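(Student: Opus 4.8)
The plan is to reduce \eq{egh1} to a compactness statement on the bundle of $h$-unit causal vectors over $K$, exploiting the scaling homogeneity of both sides. First I would note that $|g(X,Y)|$ and $|Y|_h$ are both homogeneous of degree one in $Y$, and that \eq{egh1} is trivial for $Y=0$; hence it suffices to prove it for $Y$ normalised by $h(Y,Y)=1$. I would therefore introduce the set
\[
 S := \{(q,Y)\ :\ q\in K,\ Y\in T_q\mcM \mbox{ causal},\ h(Y,Y)=1\}\,,
\]
which is compact, being the closed subset cut out by the closed condition $g(Y,Y)\le 0$ of the $h$-unit sphere bundle over the compact set $K$. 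On $S$ the function $f(q,Y):=|g(X(q),Y)|$ is continuous, using only continuity of $g$ and of $X$.

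The crucial point is to show that $f$ is strictly positive on $S$; equivalently, that a timelike vector is never $g$-orthogonal to a nonzero causal vector. I would check this fibrewise: fixing $q$ and choosing a $g$-orthonormal frame $(e_\mu)$ with $e_0$ proportional to the timelike vector $X(q)$, one writes $Y=Y^0e_0+\vec Y$ with $(Y^0)^2\ge|\vec Y|^2$ by causality of $Y$. Since $h(Y,Y)=1$ forces $Y\neq 0$, one cannot have $Y^0=0$, and therefore $g(X,Y)=-\sqrt{-g(X,X)}\,Y^0\neq 0$. This positivity is the heart of the matter; the subtlety it addresses is that the ordinary inverse Cauchy--Schwarz bound $|g(X,Y)|\ge\sqrt{-g(X,X)}\,\sqrt{-g(Y,Y)}$ degenerates when $Y$ is null, which is precisely why the right-hand side of \eq{egh1} must be the $h$-norm of $Y$ rather than its (possibly vanishing) $g$-length.

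Finally I would invoke compactness: a continuous, everywhere positive function on the compact set $S$ attains a strictly positive minimum $C:=\min_S f>0$. For an arbitrary causal $Y\neq 0$ one sets $\hat Y:=Y/|Y|_h$, so that $(q,\hat Y)\in S$ and $|g(X,Y)|=|Y|_h\,f(q,\hat Y)\ge C|Y|_h$, the case $Y=0$ being trivial. I do not expect any single hard estimate here; the only thing to be careful about is the packaging of positivity together with uniformity in $(q,Y)$, after which the uniform constant $C$ in \eq{egh1} is immediate.
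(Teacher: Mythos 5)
Your proposal is correct and follows essentially the same route as the paper: reduce by homogeneity to the compact set of $h$-unit causal vectors over $K$, then take the minimum of the continuous, strictly positive function $Y\mapsto|g(X,Y)|$ there. The only difference is that you spell out the fibrewise positivity (a timelike vector is never $g$-orthogonal to a nonzero causal vector, via an ON frame), which the paper's proof leaves implicit in the phrase ``strictly positive function''.
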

\proof By homogeneity it is sufficient to establish \eq{egh1} for
causal $Y\in T_q\mcM$ such that $|Y|_h=1$; let us denote by $U(h)_q$
this last set. The result follows then by continuity of the strictly
positive function
$$\cup_{q\in K} U(h)_q \ni Y \to |g(X,Y)|$$
on the compact set $\cup_{q\in K} U(h)_q$. \qed

Returning to the proof of Lemma~\ref{Lgh1.0}, let $x^0$ be the local
time coordinate on $\mcU$, since $X:=\nabla x^0$ is timelike we can use
Lemma~\ref{Lgh1} with $K=\overline{\mcU}$ to conclude that there
exists a constant $C$ such that for any causal curve $\gamma\subset
\mcU$  we have
$$|g(X,\dot \gamma)|\ge C>0$$
at all points at which $\gamma$ is differentiable. This implies, for
$s_2\ge s_1$, \beaa |x^0(s_2)-x^0(s_1)| & \ge & \int_{s_1}^{s_2}
|g(\nabla x^0, \dot \gamma)| ds
\\ &\ge & C \int_{s_1}^{s_2}  ds = C |s_2-s_1|\;. \eeaa
It follows that \bel{egh2.0} |\gamma|_h \le \ell:= \frac 2 {C}
\sup_{\mcU}|x^0|<\infty\;,\ee as desired. \qed

Returning to the proof of Lemma~\ref{Lem:inex}, it follows from
Lemma~\ref{Lgh1.0} applied to $\gamma_{n_i}$ that
$\gamma_{n_i}|_{[s_i,s_i+\ell]}$ must exit $\mcU$. This implies that
$\gamma_{n_i}|_{[-k,k+\ell]}$ cannot accumulate at a curve which has
an end point  $\gamma(b)\in \mcU$, and the result follows.
 \qed

In summary, it follows from  Lemmata~\ref{P5.1} and \ref{Lem:inex}
together with Proposition~\ref{Paccum} that:

\begin{Theorem}
\label{The:accum}
Let $(\mcM,g)$ be a $C^3$ Lorentzian manifold with a $C^2$ metric.
Every sequence of future directed, inextendible,
causal curves which accumulates at a point $p\in\mcM$ accumulates at  some
future directed, inextendible, causal curve through $p$. \qed
\end{Theorem}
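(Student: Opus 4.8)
The plan is to arrange the hypotheses of Proposition~\ref{Paccum}, extract the accumulation curve it provides, and then promote that curve to an inextendible one using Lemma~\ref{Lem:inex}; the theorem is essentially the concatenation of these two results. First I would fix the auxiliary complete Riemannian metric $\backmg$ and reparameterize every curve of the sequence by $\backmg$-distance. This changes neither the causal character nor the inextendibility of the curves (reparameterization is built to be harmless here), and it produces curves that are uniformly Lipschitz with constant one. Because each curve is inextendible in \emph{both} time directions, Theorem~\ref{TP4a} — applied to the future half, and then to the past half after reversing the time-orientation, and itself resting on Lemma~\ref{P5.1} — forces each reparameterized curve to be defined on all of $\R$. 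Writing these as $\gamma_n\colon\R\to\mcM$, the hypothesis that the sequence accumulates at $p$ means there are parameters $t_n$ with $\gamma_n(t_n)\to p$; replacing $\gamma_n$ by $s\mapsto\gamma_n(s+t_n)$ keeps the curves $\backmg$-parameterized on $\R$, inextendible, and future directed causal, while now arranging $\gamma_n(0)\to p$.

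With these normalisations the family $\{\gamma_n\}$ matches the hypotheses of Proposition~\ref{Paccum} with $I=\R$: uniformly Lipschitz future directed causal curves with $\gamma_n(0)\to p$. That proposition then delivers a future directed causal curve $\gamma\colon\R\to\mcM$ and a subsequence $\gamma_{n_i}$ converging to $\gamma$ uniformly on compact subsets of $\R$. Evaluating this convergence at the single parameter $0$ gives $\gamma(0)=\lim_i\gamma_{n_i}(0)=p$, so $\gamma$ passes through $p$, and by Definition~\ref{DSAc} the original sequence accumulates at $\gamma$.

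It remains to check that $\gamma$ is inextendible, which is exactly the conclusion of Lemma~\ref{Lem:inex}: the chosen subsequence $\gamma_{n_i}$ consists of $\backmg$-parameterized inextendible causal curves converging to $\gamma$ uniformly on compact subsets of $\R$, whence $\gamma$ is inextendible. It is important here that one does \emph{not} demand that $\gamma$ itself be $\backmg$-parameterized, since a uniform limit of distance-parameterized curves need not be distance-parameterized (cf.\ Example~\ref{Exa24X11.1}); Lemma~\ref{Lem:inex} is precisely engineered to cover this. Collecting the three properties — future directed causal, through $p$, and inextendible — establishes the theorem.

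I expect the only delicate point in assembling the argument to be the bookkeeping of the reparameterizations: one must ensure that passing to $\backmg$-distance parameterization together with the parameter shift simultaneously places every $\gamma_n$ on the common domain $\R$ and sends $\gamma_n(0)$ to $p$, so that Proposition~\ref{Paccum} applies verbatim. All the substantive difficulties are already discharged upstream — producing a \emph{causal} uniform limit (which is where the $C^2$ hypothesis enters, through continuity of $(p,q)\mapsto\sigma_q(p)$) is handled by Proposition~\ref{Paccum}, and preserving inextendibility in the limit is handled by Lemma~\ref{Lem:inex} — so the theorem itself requires no new estimates.
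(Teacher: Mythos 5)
Your proposal is correct and follows essentially the same route as the paper: the paper's entire proof is the remark that the theorem follows from Lemmata~\ref{P5.1} and \ref{Lem:inex} together with Proposition~\ref{Paccum}, and your argument is precisely the expansion of that chain --- $\backmg$-reparameterization plus Theorem~\ref{TP4a} (itself resting on Lemma~\ref{P5.1}) to put the curves on the common domain $\R$ with $\gamma_n(0)\to p$, Proposition~\ref{Paccum} to produce the causal accumulation curve through $p$, and Lemma~\ref{Lem:inex} to conclude inextendibility. Your observation that the limit curve need not itself be $\backmg$-parameterized (Example~\ref{Exa24X11.1}) is exactly the subtlety Lemma~\ref{Lem:inex} was designed to handle, so nothing further is needed.
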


One  is sometimes interested in sequences of maximally
extended geodesics:

\begin{Proposition}
\label{Pro:inexgeo} Let $\gamma_n$ be a sequence of maximally
extended geodesics accumulating at $\gamma$ in $(\mcM,g)_{C^{1,1}}$. Then $\gamma$ is a
maximally extended geodesic.
%
%
\end{Proposition}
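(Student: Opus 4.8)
The plan is to reduce the statement to two facts and then combine them through the equivalence of Proposition~\ref{TP4ab}: that the accumulation curve $\gamma$ is inextendible as a causal path, and that its image is a geodesic. Throughout I read ``accumulation'' in the sense of Definition~\ref{DSAc} with all curves parameterized by $\backmg$-arc length, which is the natural normalization in this section. Since each $\gamma_n$ is maximally extended as a geodesic, it is inextendible as a path by Proposition~\ref{TP4ab}, and Theorem~\ref{TP4a} then forces every $\gamma_n$ to be defined on all of $\R$; passing to the convergent subsequence supplied by Definition~\ref{DSAc} one obtains a limit $\gamma\colon\R\to\mcM$, to which both of the desired properties will be attached.

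First I would dispose of inextendibility. The curves $\gamma_n$ are inextendible, $\backmg$-parameterized and converge to $\gamma$ uniformly on compact subsets of $\R$, so Lemma~\ref{Lem:inex} applies and shows directly that $\gamma$ is inextendible as a path. (Note that, as in Example~\ref{Exa24X11.1}, the limit $\gamma$ need not itself be $\backmg$-parameterized, which is exactly why one must argue inextendibility through Lemma~\ref{Lem:inex} rather than through Theorem~\ref{TP4a} applied to $\gamma$.)

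The substantial point, and the main obstacle, is to show that the image of $\gamma$ is a geodesic: uniform $C^0$ convergence of the $\gamma_n$ carries no a priori information about derivatives, whereas the geodesic property is a statement about $\ddot\gamma$. The bridge is continuous dependence of solutions of the geodesic Cauchy problem \eq{444} on their initial data, which is available for $C^{1,1}$ metrics because the Christoffel symbols are then Lipschitz. Fix $s_0\in\R$. As the $\gamma_n$ carry the $\backmg$-arc-length parameterization, their tangents $\dot\gamma_n(s_0)$ are $\backmg$-unit vectors based at $\gamma_n(s_0)\to\gamma(s_0)$, hence lie in a fixed compact subset of the unit tangent bundle; after passing to a further subsequence we may assume $\dot\gamma_n(s_0)\to X$ for a unit vector $X\in T_{\gamma(s_0)}\mcM$. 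Let $\sigma$ be the affinely parameterized geodesic with $\sigma(0)=\gamma(s_0)$ and initial velocity proportional to $X$. Continuous dependence on initial data, combined with continuity of the affine-to-$\backmg$-arc-length reparameterization (which does not degenerate, since each $\gamma_n$ has unit $\backmg$-speed and so $X\neq 0$), shows that the $\gamma_n$ converge uniformly near $s_0$ to the $\backmg$-parameterization of $\sigma$. By uniqueness of the uniform limit this coincides with $\gamma$ on a neighborhood of $s_0$, so the image of $\gamma$ is a geodesic there; as $s_0$ is arbitrary and the geodesic property is local, the image of $\gamma$ is a geodesic.

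Finally I would assemble the two ingredients: $\gamma$ is inextendible as a path and its image is a geodesic, so Proposition~\ref{TP4ab} yields that $\gamma$ is maximally extended as a geodesic, which is the assertion. In a full write-up the only genuinely delicate steps are the extraction of the limiting direction $X$ (which is what motivates insisting on the $\backmg$-unit-speed normalization of the $\gamma_n$) and the verification that the reparameterization step respects the convergence, both of which are standard once the Lipschitz regularity of the Christoffel symbols, and hence continuous dependence for \eq{444}, is invoked.
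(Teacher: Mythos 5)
Your proof is correct, and it runs on the same analytic engine as the paper's: extract a convergent subsequence of initial data $(\gamma_n(s_0),\dot\gamma_n(s_0))$ by compactness of the set of $\backmg$-unit vectors over a compact set, invoke continuous dependence of solutions of \eq{444} upon initial data (available for $C^{1,1}$ metrics, since the Christoffel symbols are then Lipschitz) to obtain $C^1$ convergence to the geodesic through the limiting data, and then match the $\distb$-parameterizations. The difference lies in the assembly. The paper applies this argument once, at $s_0=0$, taking $\sigma$ to be the \emph{maximally extended} geodesic with initial data $(\gamma(0),X)$ from the outset; the $C^1$ convergence on every compact subinterval of $\sigma$'s maximal domain then identifies $\gamma$ with the $\distb$-reparameterization $\mu$ of $\sigma$ on all of $\R$, so maximality of $\gamma$ is inherited directly from that of $\sigma$, without citing Lemma~\ref{Lem:inex} or Proposition~\ref{TP4ab}. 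You instead localize the ODE step (the image of $\gamma$ is a geodesic near each $s_0$), establish inextendibility of $\gamma$ separately via Lemma~\ref{Lem:inex}, and convert ``inextendible $+$ geodesic'' into ``maximally extended'' through the equivalence of Proposition~\ref{TP4ab}. Your decomposition is more modular, and it makes explicit the domain bookkeeping that the paper leaves implicit (namely, that the $\distb$-parameterized domains of the $\gamma_n$'s and of $\mu$ are all of $\R$, which itself rests on Proposition~\ref{TP4ab} together with Theorem~\ref{TP4a}); the price is an extra patching step (the local geodesic pieces must be glued into a single geodesic, using uniqueness of solutions of \eq{444}) and a more explicit reliance on results stated for \emph{causal} paths --- Lemma~\ref{Lem:inex}, Theorem~\ref{TP4a} and Proposition~\ref{TP4ab} --- so that, strictly speaking, your argument covers causal geodesics; this is the case of relevance in the paper (cf.\ the remark following Theorem~\ref{The:acaulseq}), and the paper's own proof needs the same facts through its implicit domain statements, so this is a shared, not a new, restriction.
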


\proof If we use a $\distb$-parameterization of the $\gamma_n$'s and of $\gamma$
such that $\gamma_n(0)\to \gamma(0)$, then by the
Arzela-Ascoli Theorem (passing to a subsequence if necessary) the
$\gamma_n$'s converge to $\gamma$, uniformly on compact subsets of
$\R$.  Let $\mcK$ be a compact neighborhood of $\gamma(0)$,
compactness of $\cup_{p\in\mcK}U_p\mcM$, where $U_p\mcM\subset
T_p\mcM$ is the set of $h$-unit vectors tangent to $\mcM$, implies
that there exists a subsequence such that $\dot \gamma_n(0)$
converges to some vector $X\in U_{\gamma(0)}\mcM\subset
T_{\gamma(0)}\mcM$. Let $\sigma:(a,b)\to\mcM$, $a\in
\R\cup\{-\infty\}$, $b\in \R\cup\{\infty\}$, be an affinely
parameterised maximally extended geodesic through $\gamma(0)$ with
initial tangent vector $X$. By continuous dependence of ODE's upon
initial values  it follows that 1) for any $a<\alpha<\beta<b$ all
the $\gamma_n$'s,  except perhaps for a finite number, are defined
on $[\alpha,\beta]$ when affinely parameterized, and 2) they
converge to $\sigma|_{[\alpha,\beta]}$ in the (uniform)
$C^1([\alpha,\beta],\mcM)$ topology. Thus $\dot\gamma_n(s)\to
\dot\sigma(s)$ uniformly on compact subsets of $(a,b)$, which
implies that a $\distb$-parameterization is preserved under taking
limits. Hence the $\gamma_n$'s, when $\distb$-parameterized,
converge uniformly to a $\distb$-reparameterization of $\sigma$ on
compact subsets of $\R$, call it $\mu$. It follows that
$\gamma=\mu$, and $\gamma$ is a maximally extended geodesic. \qed

\subsection{Achronal causal curves}
 \label{subsec:Achrona-causal-curves}

A curve  $\gamma:I\to \mcM$ is called \emph{achronal} if
$$\forall \ s,s'\in I \qquad \gamma(s)\not \in I^+(\gamma(s'))\;.$$ Any
spacelike geodesic in Minkowski space-time is achronal. More
interestingly, it follows from Proposition~\ref{PM1} that this is
also true for null geodesics. However, null geodesics do not have
to be achronal in general: consider, \emph{e.g.}, the
two-dimensional space-time $\R\times S^1$ with the flat metric
$-dt^2+dx^2$, where $x$ is an angle-type coordinate along $S^1$
with periodicity, say, $2\pi$. Then the points $(0,0)$ and
$(2\pi,0)$ both lie on the null geodesic
$$s\to (s,s \mod 2\pi)\;,$$ and are clearly timelike related to each
other.

In this section we will be interested in causal curves that are
achronal. We start with the following:

\begin{Proposition}
\label{Pro:acausalgeo}
\MCtwok
If $\gamma$ is an \emph{achronal causal}
curve, then $\gamma$ is a null geodesic.
\end{Proposition}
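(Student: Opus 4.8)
The plan is to deduce the statement from the deformation result just proved, Proposition~\ref{P14X11.1}, whose contrapositive reads: a causal curve from $p$ to $q$ along which no timelike curve from $p$ to $q$ exists must be a null geodesic. So first I would translate achronality into exactly this non-existence statement. By definition $\gamma$ being achronal means $\gamma(s_2)\notin I^+(\gamma(s_1))$ for all parameters, i.e.\ there is no timelike curve joining $\gamma(s_1)$ to $\gamma(s_2)$; in particular this holds for every pair with $s_1<s_2$.

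Next I would apply the contrapositive of Proposition~\ref{P14X11.1} to each compact subsegment. Fix $s_1<s_2$ in the domain $I$; then $\gamma|_{[s_1,s_2]}$ is a causal future directed curve from $\gamma(s_1)$ to $\gamma(s_2)$, and since $\gamma(s_2)\notin I^+(\gamma(s_1))$ there is no timelike curve between its endpoints. Proposition~\ref{P14X11.1} then forces $\gamma|_{[s_1,s_2]}$ to be a null geodesic. The mechanism behind this is the local rigidity of Corollary~\ref{CP3.1}: working in a normal--coordinate ball $\mcO_p$ about $p=\gamma(s_1)$, continuity gives $\gamma(s)\in J^+(p;\mcO_p)$ for $s$ slightly larger than $s_1$, while achronality excludes $I^+(p;\mcO_p)\subset I^+(p)$, so by the description \eq{P3a} of Proposition~\ref{P3} one has $\gamma(s)\in\dotJ^+(p;\mcO_p)=\{\sigma_p=0,\ x^0\ge0\}$, and Corollary~\ref{CP3.1} identifies the arc as a reparameterized null geodesic.

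Finally I would patch the subsegments. Taking an increasing exhaustion $[s_1^{(k)},s_2^{(k)}]\uparrow I$, each restriction is a null geodesic and any two agree on their common subinterval, so by uniqueness of geodesics for the $C^2$ metric they are all restrictions of one and the same null geodesic; hence $\gamma$ itself, up to reparameterization, is a null geodesic. The main obstacle is the bookkeeping at this last step: Proposition~\ref{P14X11.1} (and Corollary~\ref{CP3.1}) deliver the geodesic property only up to reparameterization and only on compact pieces, so one must check that the separate reparameterizations are mutually compatible and that the pieces assemble into a single geodesic rather than a broken one. This compatibility is exactly what $C^2$--regularity buys, through uniqueness of the geodesic through a point in a prescribed null direction; achronality also rules out break points directly, since a corner could be pushed up to a timelike connection by Corollary~\ref{CPushup0}, again contradicting achronality.
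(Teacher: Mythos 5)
Your proof is correct, but it takes a different route from the paper's, which is a one-liner: cover $\gamma$ by elementary neighborhoods $\mcO$ and note that every connected component of $\gamma\cap\mcO$ is a null geodesic by Corollary~\ref{CP3.1} --- this is precisely the ``mechanism'' you sketch in your second paragraph (achronality forces the far endpoint of each component out of $I^+(p;\mcO)$, hence into $\dotJ^+(p;\mcO)$ by Proposition~\ref{P3}, and Corollary~\ref{CP3.1} gives the rigidity). Your main line instead applies the contrapositive of the global deformation result, Proposition~\ref{P14X11.1}, to every compact subsegment $\gamma|_{[s_1,s_2]}$, and then assembles the resulting null-geodesic segments by an exhaustion and ODE uniqueness. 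This is legitimate --- Proposition~\ref{P14X11.1} precedes the present statement in the text and does not rely on it, so there is no circularity --- and it buys you a global conclusion on each subsegment in one stroke, together with an explicit treatment of the reparameterization bookkeeping that the paper leaves implicit; the cost is invoking a heavier result (itself built on Corollaries~\ref{CP3.1a}, \ref{CP3.1} and \ref{CPushup0}) where the local rigidity lemma alone suffices. One small imprecision in your closing aside: a corner of a broken null geodesic cannot be pushed up ``by Corollary~\ref{CPushup0}'' directly, since that corollary needs a timelike piece to start from; eliminating corners requires the intermediate step in the proof of Proposition~\ref{P14X11.1} (Corollary~\ref{CP3.1} applied from a slightly earlier point $q$, which places the post-corner part of the curve in $I^+(q;\mcO_q)$, and only then Corollary~\ref{CPushup0}). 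Since that remark is not load-bearing --- Proposition~\ref{P14X11.1} already covers broken null geodesics --- your proof stands as written.
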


\proof Let $\mcO$ be any elementary neighborhood, then any
connected component of $\gamma \cap \mcO$ is a null geodesic by
Corollary~\ref{CP3.1}. \qed

\begin{Theorem}
\label{The:acaulseq}
\MCthreek
 Let $\gamma_n:I\to\mcM$ be a sequence of
 achronal causal
curves accumulating at $\gamma$, then $\gamma$ is achronal.
\end{Theorem}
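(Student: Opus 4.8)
The plan is to argue by contradiction, reducing the statement to the fact that the chronological relation is open as a subset of $\mcM\times\mcM$. By Definition~\ref{DSAc} there is a subsequence, which I keep denoting $\gamma_n$, converging to $\gamma$ uniformly on compact subsets of $I$; in particular $\gamma_n(s)\to\gamma(s)$ for every fixed $s\in I$, and $\gamma$ is a causal curve by Proposition~\ref{Paccum}. Suppose $\gamma$ were \emph{not} achronal. Then there exist $s,s'\in I$ with $\gamma(s)\in I^+(\gamma(s'))$. Writing $p:=\gamma(s')$ and $q:=\gamma(s)$, my goal is to produce open neighbourhoods $W_p\ni p$ and $W_q\ni q$ such that $q'\in I^+(p')$ whenever $p'\in W_p$ and $q'\in W_q$. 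Granting this, for all large $n$ one has $\gamma_n(s')\in W_p$ and $\gamma_n(s)\in W_q$, hence $\gamma_n(s)\in I^+(\gamma_n(s'))$, contradicting achronality of $\gamma_n$.

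The one genuine point is this \emph{two-sided} openness: openness of $I^+(p')$ in its second argument $q'$ is immediate from Proposition~\ref{P2}, but here the base point $p'=\gamma_n(s')$ also moves with $n$, so I cannot simply invoke $\gamma_n(s)\to q\in I^+(p)$ with $p$ held fixed. To obtain openness in both entries I would use an interior point of the witnessing timelike curve. Since $q\in I^+(p)$, choose a future directed timelike path from $p$ to $q$ and pick a point $r$ strictly in its interior; then $p\in I^-(r)$ and $q\in I^+(r)$. By Proposition~\ref{P2} the sets $W_p:=I^-(r)$ and $W_q:=I^+(r)$ are open and contain $p$ and $q$ respectively. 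If $p'\in W_p$ and $q'\in W_q$, then $r\in I^+(p')$ by Proposition~\ref{PM1.a}(2) and $q'\in I^+(r)$; applying transitivity (Proposition~\ref{PM1.a}(3) with $\{r\}\subset I^+(p')$) gives $q'\in I^+(p')$. This is exactly the required two-sided openness.

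Finally I would assemble the pieces: with $W_p,W_q$ fixed as above, the convergences $\gamma_n(s')\to p$ and $\gamma_n(s)\to q$ place $\gamma_n(s')\in W_p$ and $\gamma_n(s)\in W_q$ for all sufficiently large $n$, whence $\gamma_n(s)\in I^+(\gamma_n(s'))$, contradicting that each $\gamma_n$ is achronal. The same argument covers the degenerate case $s=s'$ (a closed timelike loop through $p=q$): then $\gamma_n(s)$ eventually lands in the open set $W_p\cap W_q$, again yielding $\gamma_n(s)\in I^+(\gamma_n(s))$. I expect the main obstacle to be precisely isolating the openness of the chronological relation in \emph{both} arguments; once that is extracted via the interior point $r$, the rest is a routine limiting argument. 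Note that this approach uses only openness and transitivity of $I^\pm$, and does not need the null-geodesic structure of achronal causal curves from Proposition~\ref{Pro:acausalgeo}.
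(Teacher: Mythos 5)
Your proof is correct and follows essentially the same route as the paper: the paper also argues by contradiction, picks an interior point $\hat\gamma(\hat s)$ of the witnessing timelike curve, uses openness of $I^\pm(\hat\gamma(\hat s))$ to get neighbourhoods of $\gamma(s_1)$ and $\gamma(s_2)$ whose points are pairwise chronologically related, and then places $\gamma_n$ in those neighbourhoods for large $n$ to contradict achronality. Your point $r$ plays exactly the role of $\hat\gamma(\hat s)$, so there is nothing substantive to add.
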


\begin{Remarks}
1. Propositions~\ref{Pro:acausalgeo} and \ref{Pro:inexgeo} show that
$\gamma$ is inextendible if the $\gamma_n$'s are.

2. The theorem uses the fact that timelike futures and pasts are open, which is
not clear if the metric is not $C^2$.
\end{Remarks}

\proof 
Suppose $\gamma$ is not achronal, then there exist $s_1,s_2\in
I$ such that $\gamma(s_2)\in I^+(\gamma(s_1))$, thus there exists a
\emph{timelike} curve $\hat \gamma:[s_1,s_2]\to \mcM$ from
$\gamma(s_1)$ to $\gamma(s_2)$. Choose some $\hat s \in (s_1,s_2)$.
We have $\gamma(s_2)\in I^+(\hat \gamma(\hat s))$, and since
$I^+(\hat \gamma(\hat s))$ is open there exists an open neighborhood
$\mcO_2$ of $\gamma(s_2)$ such that $\mcO_2 \subset I^+(\hat
\gamma(\hat s))$. Similarly there exists an open neighborhood
$\mcO_1$ of $\gamma(s_1)$ such that $\mcO_1 \subset I^-(\hat
\gamma(\hat s))$. This shows that any point $p_2\in \mcO_2$ lies in
the timelike future of any point $p_1\in\mcO_1$: indeed, one can go
from $p_1$ along some timelike path to $\hat \gamma(\hat s)$, and
continue along another timelike path from $\hat \gamma(\hat s)$ to
$p_2$.

Passing to a subsequence if necessary, there exist sequences
$s_{1,n} $ and $s_{2,n}$ such that $\gamma_n(s_{1,n})$ converges to
$\gamma(s_1)$ and $\gamma_n(s_{2,n})$ converges to $\gamma(s_2)$.
Then $\gamma_n(s_{1,n})\in \mcO_1$ and $\gamma_n(s_{2,n})\in \mcO_2$
for $n$ large enough, leading to $\gamma_n(s_{2,n})\in
I^+(\gamma_n(s_{1,n}))$, contradicting achronality of $\gamma_n$.
\qed

\ptcx{prove that $\gamma$ is achronal if the $\gamma_n$'s were
maximising timelike; show that arc length is upper semicontinuous
and that lorentzian distance is lower semicontinuous (but this is probably already at least in part in
the section on lorentzian distance }

\ptcx{there is a file greg.tex, about generators of boundaries of lut futures, should be referred to
somewhere in this part and done in the continuous boundaries framework}

\section{Causality conditions}
\label{SCc}

Space-times can exhibit various causal pathologies, most of which
are undesirable from a physical point of view. The simplest
example of unwanted causal behaviour is the existence of closed
timelike curves. A space-time is said to be \emph{chronological}
if no such curves exist. An example of a space-time which is not
chronological is provided by $S^1\times \R$ with the flat metric
$-dt^2+dx^2$, where $t$ is a local coordinate defined modulo
$2\pi$ on $S^1$. Then every circle $x=\const$ is a closed timelike
curve.

\begin{coco}
The class of compact manifolds is a very convenient one from the
point of view of Riemannian geometry. The following result of
Geroch shows that such manifolds are always pathological from a
Lorentzian perspective:

\begin{Proposition}[Geroch~\cite{Geroch:topology}]\label{LGer} Every compact
space-time $\Mgtwo$ contains a closed timelike curve.
\end{Proposition}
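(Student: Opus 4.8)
The plan is to exploit compactness together with the fact that the timelike futures $I^+(p)$ form an open cover, and then to extract a closed timelike curve by a pigeonhole argument. First I would observe that $\{I^+(p)\}_{p\in\mcM}$ is an open cover of $\mcM$: openness of each $I^+(p)$ is Proposition~\ref{P2}, and every point $q\in\mcM$ lies in some $I^+(p)$ because one can take the globally defined future-directed timelike vector field constructed earlier (in the discussion preceding Proposition~\ref{PDC1}) and flow backward from $q$ for a short distance to a point $p$; the integral curve from $p$ to $q$ is then a timelike future-directed path, so $q\in I^+(p)$. By compactness of $\mcM$ there is a finite subcover $\mcM = I^+(p_1)\cup\cdots\cup I^+(p_N)$.

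Next I would run the combinatorics on the finite index set $\{1,\dots,N\}$. Since the $I^+(p_i)$ cover $\mcM$, each $p_j$ lies in some $I^+(p_{f(j)})$, which defines a map $f\colon\{1,\dots,N\}\to\{1,\dots,N\}$. If $f(j)=j$ for some $j$, then $p_j\in I^+(p_j)$ already yields a timelike curve from $p_j$ to itself and we are done. Otherwise I would iterate $f$ starting from $1$; since the index set is finite, the sequence $1, f(1), f^{2}(1),\dots$ must revisit a value, so there exist $a<b$ with $f^{a}(1)=f^{b}(1)=:k$.

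Finally I would chain the timelike curves around this cycle. Along the loop $k=f^{a}(1), f^{a+1}(1),\dots, f^{b}(1)=k$, each step gives $p_{f^{m}(1)}\in I^+(p_{f^{m+1}(1)})$, that is, a timelike future-directed path from $p_{f^{m+1}(1)}$ to $p_{f^{m}(1)}$. Concatenating these around the loop and invoking transitivity of $I^+$ (Proposition~\ref{PM1.a}, which itself rests on concatenation of timelike curves) gives $p_k\in I^+(p_k)$, i.e.\ a timelike future-directed path from $p_k$ to itself. This closed timelike curve is the desired conclusion.

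The argument is short, so the main obstacle is conceptual rather than computational: one must be sure that the cover is by \emph{timelike} futures, so that the curve produced is timelike and not merely causal, and that transitivity is genuinely available in the timelike class, which is exactly what Proposition~\ref{PM1.a} provides. The only place where the $C^2$ hypothesis is used is through the openness of $I^+$ in Proposition~\ref{P2}, which is what allows $\{I^+(p)\}$ to be treated as an open cover in the first step.
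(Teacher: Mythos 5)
Your proof is correct and is essentially the paper's own argument: the paper covers $\mcM$ by the open sets $I^-(p)$, extracts a finite subcover by compactness, and uses the same pigeonhole-on-a-cycle argument with concatenation of timelike curves, your version differing only by time reversal (using $I^+$ instead of $I^-$) and by formalizing the iteration as a map $f$ on indices. Your explicit justification that the timelike futures actually cover $\mcM$ (flowing back along the global timelike vector field) is a small point the paper leaves implicit, but otherwise the two proofs coincide.
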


\proof Consider the covering of $\mcM$ by the collection of open
sets $\{I^-{(p)}\}_{p\in\mcM}$, by compactness a finite covering
$\{I^-{(p_i)}\}_{i=1,\ldots,I}$ can be chosen. The possibility
$p_1\in I^-{(p_1)}$ yields immediately a closed timelike curve
through $p_1$, otherwise there exists $p_{i(1)}$ such that $p_1\in
I^-(p_{i(1)})$. Again if $p_{i(1)}\in I^-(p_{i(1)})$ we are done,
otherwise there exists $p_{i(2)}$ such that $p_{i(1)}\in
I^-(p_{i(2)})$. Continuing in this way we obtain a --- finite or
infinite --- sequence of points $p_{i(j)}$ such that
\bel{eqGP}p_{i(j)}\in I^-(p_{i(j+1)}) \;.\ee If the sequence is
finite we are done. Now, we have only a finite number of $p_i$'s
at our disposal, therefore if the sequence is finite it has to
contain repetitions:
$$p_{i(j+\ell)}= p_{i(j)}$$
for some $j$, and some $\ell>0$. It should be clear from \eq{eqGP}
that there exists a closed timelike curve through $p_{i(j)}$. \qed

\begin{Remark}\label{RLGer} Galloway~\cite{Galloway:globasp} has
shown that in compact space-times $(\mcM,g)$ there exist closed
timelike curves through any two points $p$ and $q$, under the
supplementary condition that the Ricci tensor $\Ric$ satisfies the
following energy condition: \bel{Gec} \Ric(X,X)>0 \ \mbox{ for all
causal vectors $X$.}\ee
\end{Remark}
\end{coco}

The chronology condition excludes closed \underline{timelike}
curves, but it just fails to exclude the possibility of occurrence
of \underline{closed} causal curves. A space-time is said to be
\emph{causal} if no such curves can be found. The existence of
space-times which are \emph{chronological} but not \emph{causal}
requires a little work:

\begin{Example}\label{Escc1}
Let $\mcM=\R\times S^1$ with the metric
$$g=2dt \, dx + f(t) dx^2\;,$$
where $f$ is any function satisfying
$$ f\ge 0\;,\quad \mathrm{with } \ f(t)=0 \ \mathrm{ iff }\  t=0\;.$$
(The function $f(t)=t^2$ will do.) Here $t$ runs over the $\R$
factor of $\mcM$, while $x$ is a coordinate defined modulo $2\pi$
on $S^1$. In matrix notation we have
$$[g_{\mu\nu}]= \left[\begin{array}{cc} 0 & 1 \\ 1 & f
\end{array}\right]\;,$$
which leads to the following inverse metric
$$[g^{\mu\nu}]= \left[\begin{array}{cc} -f & 1 \\ 1 & 0
\end{array}\right]\;.$$
It follows that \bel{escc1} g(\nabla t,\nabla t) = -f \le
0\;,\mathrm{with } \ g(\nabla t,\nabla t) =0 \ \mathrm{ iff }\
t=0\;.\ee Recall, now, that a function $\tau$ is called a
\emph{time function} if $\nabla \tau$ is timelike, past-pointing.
\Eq{escc1} shows that $t$ is a time function on the set $\{t\ne
0\}$. Since a time-function is strictly increasing on any causal
curve (see Lemma~\ref{LP3.0}), one easily concludes that no closed
causal curve in $\mcM$ can intersect the set $\{t\ne 0\}$. In
other words, closed causal curves --- if they do exist --- must be
entirely contained in the set $\{t=0\}$. Now, any  curve $\gamma$
contained in this last set is of the form
$$\gamma(s)=(0,x(s))\;,$$
with tangent vector
$$\dot \gamma = \dot x \partial_x \quad \Longrightarrow \quad
g(\dot \gamma, \dot \gamma) = (\dot x)^2 g(\partial_x,\partial_x)
=(\dot x)^2 g_{xx}=0\;.$$ This shows in particular that
\begin{itemize}
\item $\mcM$ does contain closed causal curves: an example is
given by $x(s)=s \mod 2 \pi$.
\item All closed causal curves are null.
\end{itemize}
It follows that $(\mcM,g)$ is indeed chronological, but not
causal, as claimed.
\end{Example}

It is desirable to have a condition of causality which is stable
under small changes of the metric. By way of example, consider a
space-time which contains a family of causal curves $\gamma_n$
with both $\gamma_n(0)$ and $\gamma_n(1)$ converging to $p$. Such
curves can be thought of as being ``almost closed". Further, it is
clear that one can produce an arbitrarily small deformation of the
metric which will allow one to obtain a closed causal curve in the
deformed space-time. The object of our next causality condition is
to exclude this behaviour. A space-time will be said to be
\emph{strongly causal} if every neighborhood $\mcO$ of a point
$p\in\mcM$ contains a neighborhood $\mcU$ such that for every
causal curve $\gamma:I\to\mcM$ the set $$\{s\in I:\gamma(s)\in
\mcU\}\subset I$$ is a connected subset of $I$. In other words,
$\gamma$ does not re-enter $\mcU$ once it has left it.

Clearly, a strongly causal space-time is necessarily causal.
However, the inverse does not always hold. An example is given in
Figure~\ref{Fscc}.
\begin{figure}[tbh]
  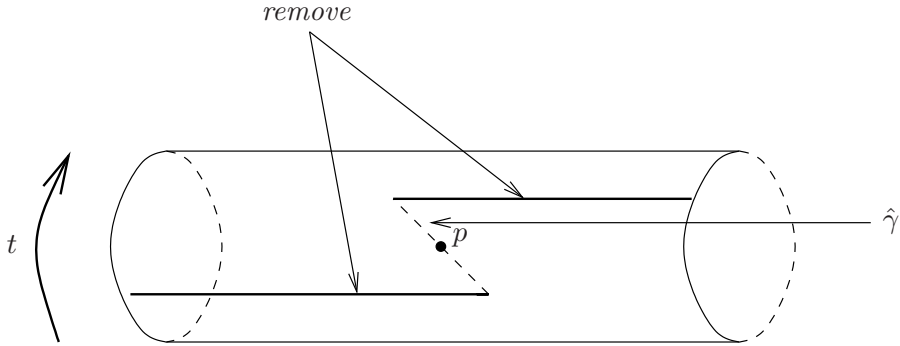
 \caption{\label{Fscc}A causal
    space-time which is not strongly causal. Here the metric is the
    flat one $-dt^2+dx^2$, with $t$ a parameter along $S^1$, so that
    the light cones are at $45^o$, in particular $\hat\gamma$ is a
    null geodesic. It should be clear that no matter how small the
    neighborhood $\mcU$ of $p$ is, there will exist a causal curve as
    drawn in the figure which will intersect this neighborhood twice.
    In order to show that $(\mcM,g)$ is causal one can proceed as
    follows: suppose that $\gamma$ is a closed causal curve in
    $\mcM$, then $\gamma$ has to intersect the hypersurfaces $\{t=\pm
    1\}$ at some points $x_\pm$, with $x_-> 1$ and $x_+<-1$.  If we
    parameterize $\gamma$ so that $\gamma(s)=(s,x(s))$ we obtain $-2>
    x_+-x_-=\int_{-1}^1 \frac {dx}{ds} ds\;,$ hence there must exist
    $s_* \in [-1,1]$ such that $dx/ds<-1$, contradicting causality of
    $\gamma$.}
\end{figure}

The definition of strong causality appears, at first sight,
somewhat unwieldy to verify, so simpler conditions are desirable.
The following provides a useful criterion: A space-time $(\mcM,g)$
is said to be \emph{stably causal} if there exists a time function
$t$ globally defined on $\mcM$. Recall
--- see Lemma~\ref{LP3.0} --- that time functions are strictly increasing on causal curves. It
then easily follows  that \emph{stable causality} implies
\emph{strong causality}:

\begin{Proposition}\label{Pscisc}
If $\Mgz$ is stably causal, then it is strongly causal.
\end{Proposition}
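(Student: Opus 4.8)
The plan is to produce, inside any prescribed neighbourhood $\mcO$ of $p$, a \emph{causally convex} neighbourhood $\mcU$ of $p$, i.e.\ one with the property that any causal curve having both endpoints in $\mcU$ stays in $\mcU$; causal convexity immediately forces $\{s\in I:\gamma(s)\in\mcU\}$ to be an interval, hence connected, which is exactly strong causality. The global time function $t$ will be used in two distinct ways: through Lemma~\ref{LP3.0}, which guarantees that $t$ is non-decreasing along every future directed causal curve, and through the inverse Cauchy--Schwarz estimate of Lemma~\ref{Lgh1}, which bounds how far a causal curve can travel relative to the increase of $t$.

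First I would fix a closed $\backmg$--distance ball $K=\overline{B_\backmg(p,r_0)}\subset\mcO$, compact by completeness of $\backmg$ and the Hopf--Rinow theorem, on which $\nabla t$ is a continuous timelike field. Applying Lemma~\ref{Lgh1} with $X=\nabla t$ produces a constant $C>0$ with $g(\nabla t,\dot\gamma)\ge C|\dot\gamma|_\backmg$ along any causal $\gamma\subset K$; since $g(\nabla t,\dot\gamma)=d(t\circ\gamma)/ds$, integrating yields the cone estimate
\bel{plan1} \distb(\gamma(s),\gamma(s'))\le \kappa\,\big(t(\gamma(s))-t(\gamma(s'))\big)\;,\qquad s\ge s'\;,\quad \kappa:=C^{-1}\;, \ee
valid whenever $\gamma|_{[s',s]}\subset K$. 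Writing $t_0:=t(p)$, I would then take the diamond-shaped set
\bel{plan2} \mcU_\rho:=\{q:\ |t(q)-t_0|<\rho\ \text{ and }\ \distb(p,q)<\kappa(\rho-|t(q)-t_0|)\}\;, \ee
which is open, contains $p$, is contained in $B_\backmg(p,\kappa\rho)\subset K\subset\mcO$, and shrinks to $\{p\}$ as $\rho\to0$.

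The core step is to show $\mcU_\rho$ is causally convex once $\rho$ is small. Let $\gamma(s_1),\gamma(s_2)\in\mcU_\rho$ with $s_1<s_2$. Monotonicity of $t$ (Lemma~\ref{LP3.0}) first pins the time coordinate: $t_0-\rho<t(\gamma(s_1))\le t(\gamma(s))\le t(\gamma(s_2))<t_0+\rho$ for all $s\in[s_1,s_2]$, so $|t(\gamma(s))-t_0|<\rho$ throughout. Feeding this into the forward form of \eq{plan1} gives $\distb(p,\gamma(s))<3\kappa\rho$ as long as $\gamma$ remains in $K$; choosing $\rho<r_0/(3\kappa)$ and running the standard continuity argument on $s_*:=\sup\{s:\gamma|_{[s_1,s]}\subset K\}$ (the bound keeps $\gamma(s_*)$ interior to $K$, so $s_*=s_2$) shows the whole segment $\gamma|_{[s_1,s_2]}$ stays in the interior of $K$. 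With the segment confined to $K$, estimate \eq{plan1} is now available from both ends, and combining the forward bound based at $s_1$ (used where $t(\gamma(s))\le t_0$) with the backward bound based at $s_2$ (used where $t(\gamma(s))\ge t_0$) gives precisely $\distb(p,\gamma(s))<\kappa(\rho-|t(\gamma(s))-t_0|)$, i.e.\ $\gamma(s)\in\mcU_\rho$.

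The hard part is exactly this two-sided estimate, which also explains why the neighbourhood must be diamond-shaped rather than a coordinate box. Monotonicity of the global time function controls only the time direction; a box-shaped $\mcU$ would permit a causal curve to clip a corner, leaving through a lateral face and re-entering at a slightly larger value of $t$, much as the null geodesic of Figure~\ref{Fscc} wraps back on itself. Tapering the spatial radius linearly to zero at $t=t_0\pm\rho$, at the rate $\kappa$ dictated by the local cone bound \eq{plan1}, is precisely what removes the corners and enforces causal convexity. I note that only continuity of the metric is used, via Lemmata~\ref{LP3.0} and~\ref{Lgh1}, consistently with the hypothesis $\Mgz$.
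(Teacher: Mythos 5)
Your proof is correct, and it takes a genuinely different route from the one in the paper. The paper's proof never leaves the world of time functions: it perturbs $\tau$ by a compactly supported bump, $\tau_{\pm\epsilon}=\tau\pm\epsilon\varphi$, takes $\mcU=\{q:\tau_{-\epsilon}(q)<\tau(p)<\tau_{+\epsilon}(q)\}$, and uses monotonicity of all three functions along causal curves to show that a causal curve can enter $\mcU$ only through $\{\tau_{+\epsilon}=\tau(p)\}$ and leave only through $\{\tau_{-\epsilon}=\tau(p)\}$, each at most once; connectedness follows with no metric estimates at all. You instead globalize the mechanism behind Lemma~\ref{Lgh1.0}: the inverse Cauchy--Schwarz bound of Lemma~\ref{Lgh1} applied to $X=\nabla t$ on a compact $\backmg$-ball converts monotonicity of $t$ into the quantitative cone estimate \eq{plan1}, and the diamond $\mcU_\rho$ is then shown to be causally convex by the two triangle-inequality computations (which do close up exactly: in the region $t\le t_0$ the forward bound from $s_1$ gives $\distb(p,\gamma(s))<\kappa(\rho-|t_1-t_0|)+\kappa(t(\gamma(s))-t_1)=\kappa(\rho-|t(\gamma(s))-t_0|)$, and symmetrically from $s_2$ above $t_0$). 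What the paper's argument buys is economy of hypotheses and machinery -- no auxiliary Riemannian metric, no compact exhaustion, no uniform constants, just the existence of bump functions and the fact that a small perturbation of a time function on a compactum is still a time function. What yours buys is a stronger, reusable output: an explicit neighbourhood basis of causally convex diamonds shrinking to $p$, with a quantitative relation between their spatial radius and temporal height, which is the form of strong causality one typically wants in later compactness arguments (e.g.\ in the proof of Theorem~\ref{Tdd1}). One caveat worth noting, though it does not separate the two proofs: both implicitly require $\nabla t$ to be continuous (you, to apply Lemma~\ref{Lgh1} on $K$; the paper, to have $g(\nabla\tau,\nabla\tau)$ bounded away from zero on $\mathrm{supp}\,\varphi$), which is slightly more than the literal ``differentiable'' in the paper's definition of a time function.
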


\proof Let $\mcO$ be a connected open neighborhood of $p\in \mcM$,
and let $\varphi$ be a nonegative smooth function such that
$\varphi(p)\ne 0 $ and such that the support
$\mathrm{supp}\varphi$ of $\varphi$ is a compact set contained in
$\mcO$. Let $\tau$ be a time function on $\mcM$, for $a\in \R$ set
$$\tau_a:=\tau + a \varphi\;.$$
As $\nabla \tau$ is timelike, the function $g(\nabla \tau,\nabla
\tau)$ is bounded away from zero on the compact set
$\mathrm{supp}\varphi$, which implies that there exists
$\epsilon>0$ small enough so that $\tau_{\pm\epsilon}$ are time
functions on $\mathrm{supp}\varphi$. Now, $\tau_{\pm\epsilon}$
coincides with $\tau$ away from $\mathrm{supp}\varphi$, so that
the $\tau_{\pm \epsilon}$'s are actually time functions on $\mcM$
as well. We set
$$\mcU:= \{q:\tau_{-\epsilon}(q)<\tau(p)<\tau_{+\epsilon}(q)\}
\;.$$ We have: \begin{itemize} \item $p\in \mcU$, therefore $\mcU$
is not empty; \item $\mcU$ is open because the $\tau_a$'s are
continuous; \item $\mcU\subset \mcO$ because $\varphi$ vanishes
outside of $\mcO$. \end{itemize}  Consider any causal curve
$\gamma$ the image of which intersects $\mcU$, $\gamma$ can enter
or leave $\mcU$ only through
\bel{eqsscc}\partial\mcU\subset\{q:\tau_{-\epsilon}(q)=\tau(p)\}
\cup \{q:\tau(p)=\tau_{+\epsilon}(q)\}\;.\ee  At a point $s_-$ at
which $\gamma(s_-)\in \{q:\tau_{-\epsilon}(q)=\tau(p)\}$ we have
$$\tau(p)=\tau_{-\epsilon}(\gamma(s_-))=\tau(\gamma(s_-))-\epsilon\varphi(\gamma(s_-))\quad \Longrightarrow
\quad \tau(\gamma(s_-)) > \tau(p)\;.$$ Similarly at a point $s_+$
at which $\gamma(s_+)\in \{q:\tau_{+\epsilon}(q)=\tau(p)\}$ we
have
$$\tau(p)=\tau_{+\epsilon}(\gamma(s_+))=\tau(\gamma(s_+))+\epsilon\varphi(\gamma(s_+))\quad \Longrightarrow
\quad \tau(\gamma(s_+)) < \tau(p)\;.$$
  As $\tau$ is
increasing along $\gamma$, we conclude that $\gamma$ can enter
$\mcU$ only through $\{q:\tau_{+\epsilon}(q)=\tau(p)\}$,  and
leave  $\mcU$ only through $ \{q:\tau_{-\epsilon}(q)=\tau(p)\}$.
Lemma~\ref{LP3.0} shows that $\gamma$ can intersect each of the
two sets at the right-hand-side of \eq{eqsscc} at most once. Those
facts obviously imply connectedness of the intersection of (the
image of) $\gamma$ with $\mcU$.\qed

\begin{coco}
There exist various alternative definitions of
stable causality which are equivalent for $C^2$ metrics, but note that equivalence is not obvious and
its proof requires work. For example,
Yvonne~Choquet-Bruhat~\cite{YCB:GRbook} defines stable causality as the
requirement of existence of a
timelike vector field $v$ such that $g-v\otimes v$ is
chronological. Hawking and Ellis~\cite{HE}
define stable causality by requiring that
$C^0$-small perturbations of the metric
preserve causality. Compare~\cite{MinguzziK,MinguzziLL}.
  \ptcx{prove equivalence; mention $K$-causality}
\end{coco}

The strongest causality condition is that of \emph{global
hyperbolicity}, considered in the next section.

\section{Global hyperbolicity}
\label{SGh}

A space-time $(\mcM,g)$ said to be \emph{globally hyperbolic} if
it is \emph{strongly causal}, and if for every $p,q\in \mcM$ the sets
$J^+(p)\cap J^-(q)$ are compact.

\begin{coco}
{It is often convenient to use the equivalent requirement of stable causality together with compactness of the  sets
$J^+(p)\cap J^-(q)$; compare Theorem~\ref{TCS}, page~\pageref{TCS}.
 \ptcx{there is a paper which says that causal is enough, by Bernal and Sanchez gr-qc/0611138, but I should not quote it without believing it}
The current
definition is the one that appears to be the most widely used.
From a Cauchy-problem point of view, a natural definition is by requiring the
existence of a Cauchy surface, compare Section~\ref{SDd} and
Theorem~\ref{TCS}, page~\pageref{TCS}. The last definition is again equivalent
for $C^2$ metrics, but the equivalence for $C^0$ metrics is not clear.
 \ptcx{is this fixable }
}
\end{coco}


It is not too difficult to show that Minkowski space-time
$\R^{1,n}$ is globally hyperbolic: first, the Minkowski time $x^0$
provides a time-function on $\R^{n,1}$; this implies strong
causality. Compactness of $J^+(p)\cap J^-(q)$ for all $p$'s and
$q$'s is easily checked by drawing pictures; it is also easy to
write a formal proof using Proposition~\ref{PM1}, this is left as
an exercise to the reader.

The notion of globally hyperbolicity provides excellent control
over causal properties of $(\mcM,g)$. This will be made clear at
several other places in this work. Anticipating, let us list a few
of those:
\begin{enumerate}
\item Let $(\mcM,g)$ be globally hyperbolic. If $J^+(p)\cap
J^-(q)\ne \emptyset$, then there exists a causal geodesic from $p$
to $q$. Similarly if $I^+(p)\cap I^-(q)\ne \emptyset$, then there
exists a timelike geodesic from $p$ to $q$. \item The Cauchy
problem for linear wave equations is globally solvable on globally
hyperbolic space-times. \item A key theorem of Choquet-Bruhat and
Geroch asserts that \emph{maximal globally hyperbolic} solutions
of the Cauchy problem for Einstein's equations are unique up to
diffeomorphism.
\end{enumerate}

We start our study of globally hyperbolic space-times with the following property:

\begin{Proposition}\label{Pgh1}
 \ptcx{to show that causal and compactness of diamonds implies globally hyperbolic
 one would need to repeat this without assuming stable causality}
 Let $\Mgthreek$ be globally
 hyperbolic, and let $\gamma_n$ be a family of causal curves
 accumulating both at $p$ and $q$. Then  there exists a
  causal curve $\gamma$, accumulation curve of the (perhaps reparameterized) $\gamma_n$'s
  which passes both through $p$ and $q$.
   \ptcx{ok for continuous metrics (but not the current proof)}
\end{Proposition}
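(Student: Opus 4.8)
The plan is to reduce everything to the accumulation machinery of Proposition~\ref{Paccum}, the only genuine extra input being a uniform bound on the $\backmg$-length of causal curves trapped in a compact set. Reading the hypothesis as asserting, for each $n$, parameter values at which $\gamma_n$ passes arbitrarily close to $p$ and to $q$, I would first reparameterize (as the statement's ``perhaps reparameterized'' clause permits) so that $\gamma_n(0)\to p$ and $\gamma_n(1)\to q$, and replace each $\gamma_n$ by the segment $\gamma_n|_{[0,1]}$ joining these two near-points. Picking $p^-\in I^-(p)$ and $q^+\in I^+(q)$ (both nonempty in any space-time), openness of $I^+(p^-)$ and $I^-(q^+)$ (Proposition~\ref{P2}) gives, for $n$ large, $\gamma_n(0)\in I^+(p^-)$ and $\gamma_n(1)\in I^-(q^+)$; transitivity of $J^+$ (point~3 of Proposition~\ref{PM1.a}) then forces the whole segment into $K:=J^+(p^-)\cap J^-(q^+)$, which is compact by global hyperbolicity.

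The heart of the argument is a uniform length bound: in a strongly causal space-time, every causal curve contained in a compact set $K$ has $\backmg$-length bounded by a constant $\ell_K$ depending only on $K$. To prove this I would cover $K$ by finitely many neighborhoods $\mcU_1,\dots,\mcU_N$, each chosen inside an elementary region (Definition~\ref{Delem}) and small enough that strong causality makes $\{s:\gamma(s)\in\mcU_j\}$ a connected parameter-interval for every causal $\gamma$. On each such interval $\gamma$ lies in the ambient elementary region, so Lemma~\ref{Lgh1.0} bounds its $\backmg$-length by some $\ell_j$; since the parameter domain of any causal curve contained in $K$ is covered by these finitely many connected visits, its total length is at most $\ell_K:=\sum_j\ell_j$.

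With the bound in hand the rest is routine. I would reparameterize the segment of each $\gamma_n$ by $\backmg$-arclength onto $[0,L_n]$ with $L_n\le\ell_K$, then rescale affinely to the common domain $[0,1]$ via $\tilde\gamma_n(u)=\gamma_n(L_n u)$. These remain causal future directed, satisfy $|\dot{\tilde\gamma}_n|_\backmg=L_n\le\ell_K$ almost everywhere, hence are uniformly Lipschitz with constant $\ell_K$, and $\tilde\gamma_n(0)\to p$. Proposition~\ref{Paccum} then yields a subsequence converging uniformly on $[0,1]$ to a future directed causal curve $\gamma:[0,1]\to\mcM$. Uniform convergence at the endpoints gives $\gamma(0)=\lim\tilde\gamma_{n_i}(0)=p$ and $\gamma(1)=\lim\tilde\gamma_{n_i}(1)=q$; as $p\neq q$ (the case $p=q$ being trivial) the curve $\gamma$ is non-constant, so it is the desired accumulation curve of the $\gamma_n$'s passing through both points.

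The main obstacle is the length bound of the second paragraph: this is precisely the step where global hyperbolicity (compactness of $K$) and strong causality are both needed, funneled through the elementary-region estimate of Lemma~\ref{Lgh1.0}. The two points requiring care are that an elementary region can be shrunk to a strong-causality neighborhood, and that the finitely many connected ``visits'' of a causal curve to the $\mcU_j$ really do cover its parameter interval so that the per-region bounds sum to $\ell_K$. Everything upstream and downstream of that lemma is a routine application of the already-established accumulation theory.
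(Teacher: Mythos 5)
Your proposal is correct and takes essentially the same route as the paper's proof: a compact causal diamond covered by finitely many elementary regions, shrunk so that strong causality makes each visit of a causal curve connected, yields via Lemma~\ref{Lgh1.0} a uniform $\backmg$-length bound, after which the Arzela--Ascoli machinery (packaged for you in Proposition~\ref{Paccum}) produces the accumulation curve through $p$ and $q$. The only notable difference is your enlarged diamond $J^+(p^-)\cap J^-(q^+)$, which cleanly guarantees that the curve segments themselves lie in a compact set, whereas the paper works with $\left(J^+(p)\cap J^-(q)\right)\cup\left(J^+(q)\cap J^-(p)\right)$ and handles the fact that the $\gamma_n$'s only pass \emph{near} $p$ and $q$ implicitly through the covering neighborhoods.
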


\begin{Remark}\label{Rgh1}
The result is wrong if stable causality is assumed only. Indeed,
let $(\mcM,g)$ be the two-dimensional Minkowski space-time with the
origin removed. Let $\gamma_n$ be obtained by following a timelike
geodesic from $p=(-1,0)$ to $(0,1/n)$ and then another timelike
geodesic to $q=(1,0)$. Then $\gamma_n$ has exactly two accumulation
curves $s\to (s,0)$, with $s\in [-1,0)$ for the first one and $s\in
(0,1]$ for the second, none of which passes through both $p$ and
$q$.
\end{Remark}

 \proof Extending the $\gamma_n$'s to inextendible curves, and
 reparameterizing if necessary, we can assume that the $\gamma_n$'s are $\distb$-parameterized,
 with common domain of definition $I=\R$, and
 with $\gamma_n(0)$ converging to $p$. If $p=q$ the result has already been established
 in Proposition~\ref{Paccum}, so we assume that
  $p\ne q$. Consider the compact set
 \bel{dd0}\mcK:=\left(J^+(p)\cap J^-(q)\right)\cup \left(J^+(q)\cap
 J^-(p)\right)\ee (since a globally hyperbolic space-time is causal, one of those sets is,
 of course, necessarily
 empty). $
 \mcK$ can be covered by a finite number of elementary
 domains $\mcU_i$, $i=1,\cdots,N$. Strong causality allows us to choose the $\mcU_i$'s small enough so
 that for every $n$ the image of $\gamma_n$ is a connected subset in $\mcU_i$.
 We can choose a
 parameterization of the $\gamma_n$'s by $h$--length  so that,
 passing to a subsequence of the $\gamma_n$'s if necessary, we
 have
$\gamma_n(0)\to p$. Extending the $\gamma_n$'s if necessary we can
assume that all the $\gamma_n$'s are defined on $\R$. Now,
Lemma~\ref{Lgh1.0}, p.~\pageref{Lgh1.0}, shows that there exists a constant $L_i$ ---
independent of $n$
--- such that
 the $h$--length $|\gamma_n\cap \mcU_i|_h$ is bounded by $L_i$.
     Consequently the $h$--length $|\gamma_n\cap \mcK|_h$, with $\mcK$
as in \eq{dd0}, is bounded by \bel{egh2} |\gamma_n\cap \mcK|_h\le
L:= L_1+L_2+\ldots+L_I\;.\ee By hypothesis the $\gamma_n$'s
accumulate at $q$, therefore there exists a sequence $s_n$
(passing again to a subsequence if necessary) such that
$$\gamma_n(s_n)\to q\;.$$ \Eq{egh2} shows that the sequence $s_n$
is bounded, hence --- perhaps passing to a subsequence --- we have
$s_n\to s_*$ for some $s_*\in \R$.

At this stage we could use Proposition~\ref{Paccum}, but one might
as well argue directly: by our choice of parametrization we have
\bel{egh6} \distb(\gamma_n(s),\gamma_n(s'))\le |s-s'|\ee (see
\eq{distbe}-\eq{distba}). This shows that the family
$\{\gamma_n\}$ is equicontinuous, and \eq{egh6} together with the
Arzela-Ascoli theorem (on the compact set $[-L,L]$) implies
existence of a curve $\gamma:[-L,L]\to \mcM$ and a subsequence
$\gamma_{n_i}$ which converges uniformly to $\gamma$ on $[-L,L]$.
As $\gamma_{n_i}({s_{n_i}})$ converges both to $\gamma({s_*})$ and
to $q$ we have $$\gamma({s_*})=q\;.$$ This shows that $\gamma$ is
the desired causal curve joining  $p$ with $q$. \qed

\begin{Remark}
It should be clear from the proof above that, as emphasised in~\cite{YCB:GRbook},  a space-time is globally hyperbolic
 if and only if the length of  causal paths between two points, as measured with respect to a smooth complete Riemannian metric, is
 bounded by a number independent of the path.
\end{Remark}

As a straightforward corollary of Proposition~\ref{Pgh1} we
obtain:

\begin{Corollary}\label{Cgh1}
Let $\Mgthreek$ be globally hyperbolic, then
$$\overline{I^\pm(p)}=J^\pm(p)\;.$$
\end{Corollary}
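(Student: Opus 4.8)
The plan is to prove the two inclusions separately, noting that the statement for $J^-$ then follows from the meta-rule by reversing the time orientation. One inclusion, $J^+(p) \subset \overline{I^+(p)}$, is already available: it is exactly Corollary \ref{Cpushup}, which holds for any $C^2$ metric. So the real work is the reverse inclusion $\overline{I^+(p)} \subset J^+(p)$, i.e. the assertion that $J^+(p)$ is closed; this is where global hyperbolicity must be used, since Example \ref{Ex2} shows it can fail without further causal hypotheses.

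For the reverse inclusion I would take $q \in \overline{I^+(p)}$ together with a sequence $q_n \in I^+(p)$ satisfying $q_n \to q$. If $q = p$ there is nothing to do, since $p \in J^+(p)$ by definition, so assume $q \ne p$. For each $n$ choose a future directed timelike (hence causal) curve $\gamma_n$ from $p$ to $q_n$, parameterized so that $\gamma_n(0) = p$. These curves all pass through $p$ and their endpoints converge to $q$, so they accumulate both at $p$ and at $q$ in the sense required by Proposition \ref{Pgh1}. Applying that proposition produces, after passing to a subsequence and reparametrizing, an accumulation curve $\gamma$ which is future directed causal and passes through both $p$ and $q$; reading off the segment of $\gamma$ between these two points should give a causal curve from $p$ to $q$, so that $q \in J^+(p)$.

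The one point that needs care, and which I expect to be the main obstacle, is the orientation: Proposition \ref{Pgh1} only guarantees a causal curve through $p$ and $q$, not that $p$ precedes $q$ along it, so a priori the accumulation curve could traverse $q$ before $p$. To pin this down I would keep $\gamma_n(0) = p$ fixed throughout and use that the $\backmg$-lengths of the $\gamma_n$ are uniformly bounded. This boundedness follows by fixing any $r \in I^+(q)$ (nonempty), noting $q_n \in I^-(r)$ for large $n$ since $I^-(r)$ is open by Proposition \ref{P2} and contains $q$, so that $\gamma_n \subset J^+(p) \cap J^-(r)$ --- a compact set by global hyperbolicity --- and then covering this compact set by finitely many elementary regions and invoking Lemma \ref{Lgh1.0} together with strong causality to bound the length, exactly as in the proof of Proposition \ref{Pgh1}. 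With $\gamma_n(0) = p$ and the endpoints occurring at parameters $a_n \to a_* \ge 0$, the limit satisfies $\gamma(0) = p$ and $\gamma(a_*) = q$, so $\gamma|_{[0,a_*]}$ is genuinely future directed from $p$ to $q$ and $q \in J^+(p)$. Combining this with Corollary \ref{Cpushup} yields $\overline{I^+(p)} = J^+(p)$, and the $J^-$ statement follows by time reversal.
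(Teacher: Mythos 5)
Your proof is correct and follows essentially the same route as the paper: one inclusion is exactly Corollary~\ref{Cpushup}, and the other is obtained by applying Proposition~\ref{Pgh1} to timelike curves from $p$ to $q_n\to q$. The orientation issue you flag is real given the bare statement of Proposition~\ref{Pgh1}, but your resolution (fixing $\gamma_n(0)=p$, bounding the $h$-lengths via elementary regions and Lemma~\ref{Lgh1.0}, and locating $q$ at a limiting parameter $s_*\ge 0$) is precisely what the paper's proof of that proposition already does internally, so you are filling in, rather than departing from, the paper's argument.
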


\proof Let $q_n\in I^+(p)$ be a sequence of points accumulating at
$q$, thus there exists a sequence $\gamma_n$ of causal curves from
$p$ to $q$, then $q\in J^+(p)$ by Proposition~\ref{Pgh1}.
Hence
$$
 \overline{I^\pm(p)}\subset J^\pm(p)\;.
$$
The reverse inclusion is provided by
Corollary~\ref{Cpushup}, page~\pageref{Cpushup}.
\qed

As already mentioned, global hyperbolicity gives us control over
causal geodesics\restrict{ (see Theorem~\ref{Tgh1n}, page \pageref{Tgh1n} for
a proof)}:
 \ptcx{this is only partially written so far so might need rewording; and
irrelevant for continuous metrics unless
 one invokes the notion of limiting geodesics, and even then the timelike ones
are not clear; not sure where the proof is, though}

\begin{Theorem}\label{Tgh1}
Let $\Mgthreek$ be globally hyperbolic, if $q\in I^+(p)$,
respectively $q\in J^+(p)$, then there exists a timelike,
respectively causal, future directed geodesic from $p$ to
$q$.
\end{Theorem}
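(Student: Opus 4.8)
The plan is to treat the causal statement by splitting according to whether $q$ lies in the chronological future of $p$, and to treat the timelike statement (the substantial case) by maximising Lorentzian length over causal curves from $p$ to $q$. First I dispose of the boundary case of the causal statement. Suppose $q\in J^+(p)$ but $q\notin I^+(p)$ (the case $q=p$ being trivial, and the case $q\in I^+(p)$ being covered by the timelike statement below). There is then a causal curve $\gamma$ from $p$ to $q$, and by Proposition~\ref{P14X11.1} a causal curve which is \emph{not} a null geodesic can be deformed to a timelike curve with the same endpoints; since $q\notin I^+(p)$ this is impossible, so $\gamma$ is already a null geodesic and we are done. Thus the whole theorem reduces to producing, when $q\in I^+(p)$, a timelike geodesic from $p$ to $q$.

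For the timelike case, define the Lorentzian length $L(\gamma)=\int\sqrt{-g(\dot\gamma,\dot\gamma)}\,ds$ of a causal curve (well defined almost everywhere by Rademacher's Theorem~\ref{TRademacher}), and set $d(p,q)=\sup L(\gamma)$ over causal $\gamma$ from $p$ to $q$. Since $q\in I^+(p)$ there is a timelike curve, so $d(p,q)>0$. Every causal curve from $p$ to $q$ lies in the compact set $\mcK=J^+(p)\cap J^-(q)$; over $\mcK$ one has $\sqrt{-g(Y,Y)}\le C|Y|_h$ for all causal $Y$, by continuity of this ratio on the compact set of $h$-unit causal vectors and homogeneity, so $L(\gamma)\le C\,|\gamma|_h$. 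Global hyperbolicity bounds the $h$-length of such curves uniformly (the content of the Remark following Proposition~\ref{Pgh1}, provable by covering $\mcK$ by finitely many elementary regions and invoking Lemma~\ref{Lgh1.0} exactly as in the proof of Proposition~\ref{Pgh1}), hence $d(p,q)<\infty$. Now take a maximising sequence $\gamma_n$ with $L(\gamma_n)\to d(p,q)$; it accumulates at both $p$ and $q$, so by Proposition~\ref{Pgh1} a subsequence accumulates, after $\backmg$-reparameterisation, at a causal curve passing through both $p$ and $q$, and restricting to the arc between them gives a causal $\gamma$ from $p$ to $q$.

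It remains to show $\gamma$ realises the supremum and is a geodesic. Since $\gamma$ is a causal curve from $p$ to $q$ we have $L(\gamma)\le d(p,q)$; the reverse inequality $L(\gamma)\ge\limsup_n L(\gamma_n)=d(p,q)$ is the \textbf{main obstacle}, as it is precisely the upper semicontinuity of the Lorentzian length under accumulation convergence (in contrast to Riemannian length, which is only lower semicontinuous). I would establish it locally, covering $\gamma$ by finitely many of the elementary neighbourhoods of Proposition~\ref{P3}: in each such neighbourhood the length of a causal segment joining two points is bounded above by a continuous function of its endpoints built from the function $\sigma_p$ of Proposition~\ref{PC1}, and summing these bounds over a fixed finite subdivision of the parameter interval while letting $n\to\infty$ yields $\limsup_n L(\gamma_n)\le L(\gamma)$. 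Granting this, $\gamma$ maximises $L$.

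Finally, a length maximiser is a geodesic. Working again in an elementary neighbourhood, where by Proposition~\ref{P3} the causal structure coincides with the Minkowskian one, the reverse triangle inequality shows that among causal curves joining two causally related points the unique longest one is the corresponding geodesic segment (and the decomposition of $\gamma$ into local segments is furnished by Corollary~\ref{CP3.1a}). Hence a globally maximising curve must coincide with a geodesic on every small piece and is therefore an unbroken geodesic up to reparameterisation. Since the causal character of the tangent to a geodesic is constant along it and $L(\gamma)=d(p,q)>0$, this geodesic is timelike, which completes the proof of the timelike case and, together with the first paragraph, of the theorem.
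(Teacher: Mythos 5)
Your strategy coincides with the paper's own: in the source, Theorem~\ref{Tgh1} is proved in the stronger form that there exists a \emph{maximising} geodesic, in the (suppressed) section on the Lorentzian distance function, by exactly your scheme --- take a sequence of causal curves from $p$ to $q$ whose Lorentzian lengths tend to $d(p,q):=\sup_\gamma L(\gamma)$, extract an accumulation curve through $p$ and $q$ by Proposition~\ref{Pgh1}, and argue that a curve realising $d(p,q)$ must be a geodesic, timelike when $d(p,q)>0$. Your preliminary steps are sound and match the paper's tools: finiteness of $d(p,q)$ via compactness of $J^+(p)\cap J^-(q)$, Lemma~\ref{Lgh1.0} and the estimate \eq{egh2}; and your disposal of the case $q\in J^+(p)\setminus I^+(p)$ through Proposition~\ref{P14X11.1} is cleaner than anything in the paper. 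Be aware, though, that the paper's own write-up stops, with marginal notes saying ``to be finished'', at precisely the two steps which you sketch rather than prove; so the real question is whether your sketch would close those holes. As written, it would not.

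The genuine gap is in the upper-semicontinuity step, and it is a directional error rather than a mere omission of detail. Fix a subdivision $s_0<s_1<\dots<s_k$ so fine that consecutive points of $\gamma$, and of $\gamma_n$ for large $n$, lie in common elementary neighbourhoods. The local bound you appeal to says that a causal curve from $x$ to $y$ inside such a neighbourhood has length at most $\sqrt{-\sigma_x(y)}$, the length of the radial geodesic chord from $x$ to $y$. Applying this to the $\gamma_n$'s and letting $n\to\infty$ (using continuity of $(x,y)\mapsto\sigma_x(y)$, Remark~\ref{R25X11.1}) yields $\limsup_n L(\gamma_n)\le\sum_i\sqrt{-\sigma_{\gamma(s_i)}(\gamma(s_{i+1}))}$. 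But the very same local bound, applied now to $\gamma$ itself, shows that this right-hand side is $\ge L(\gamma)$ --- it is a sum of lengths of \emph{maximising} chords --- so for any fixed subdivision you obtain an upper bound by a quantity at least as large as $L(\gamma)$, and the inequality $\limsup_n L(\gamma_n)\le L(\gamma)$ does not follow. What is missing is the lemma that the chordal sums decrease to $L(\gamma)$ as the subdivision is refined, i.e.\ that for a locally Lipschitz causal curve one has $L(\gamma)=\inf_{\{s_i\}}\sum_i\sqrt{-\sigma_{\gamma(s_i)}(\gamma(s_{i+1}))}$. This is true, and it is the actual content of upper semicontinuity of Lorentzian arc-length, but it requires a real argument (Rademacher's Theorem~\ref{TRademacher}, the expansion $\sqrt{-\sigma_{\gamma(s)}(\gamma(s'))}=\sqrt{-g(\dot\gamma,\dot\gamma)}(s)\,(s'-s)+o(s'-s)$ at points of differentiability, and dominated convergence); nothing in the paper supplies it.

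Two smaller repairs are also needed. Your local-maximality step cannot be justified by Proposition~\ref{P3} together with ``the reverse triangle inequality'': Proposition~\ref{P3} identifies the causal \emph{relations} of an elementary neighbourhood with the Minkowskian ones, but the metric there is not flat, so no statement about lengths follows from it. The correct ingredient is the Gauss-lemma identity $\nabla\sigma_p=2x^\mu\partial_\mu$, hence $g(\nabla\sigma_p,\nabla\sigma_p)=4\sigma_p$, implicit in the proof of Proposition~\ref{PC1}: then $u:=\sqrt{-\sigma_p}$ has unit timelike past directed gradient on $I^+(p;\mcO)$, the reversed Cauchy--Schwarz inequality gives $L(\alpha)\le u(y)$ for every future directed causal $\alpha$ from $p$ to $y$ in $\mcO$ (this also proves the local bound invoked above), and the equality case forces $\dot\alpha\parallel\nabla u$ almost everywhere, i.e.\ $\alpha$ is a reparameterised radial geodesic. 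Finally, to conclude that the maximiser is an \emph{unbroken} geodesic you must re-centre the elementary neighbourhoods at points of $\gamma$ straddling each subdivision point. With these repairs your outline becomes a complete proof, and would at the same time fill the holes left open in the paper's own treatment.
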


 \ptcx{file domainsofdependence}

\section{Domains of dependence}
 \label{SDd}

A set $\mcU\subset\mcM$ is said to be \emph{achronal} if
$$
 I^+{(\mcU)}\cap I^-{(\mcU)}=\emptyset
  \;.
$$
There is an obvious analogous definition of an \emph{acausal} set
$$J^+{(\mcU)}\cap J^-{(\mcU)}=\emptyset\;.$$

Let $\hyp$ be an achronal topological hypersurface in  a
space-time $(\mcM,g)$. (By a hypersurface we mean an embedded
submanifold of codimension one.) Unless explicitly indicated otherwise we will
assume that $\hyp$ has no boundary. The \emph{future domain of
dependence} $\mcDSpI$ of $\hyp$ is defined as the set of points
$p\in \mcM$ with the property that \emph{every past-directed
past-inextendible timelike curve starting at $p$ meets $\hyp$
precisely once}. The \emph{past domain of dependence} $\mcDSmI$ is
defined by changing \emph{past-directed past-inextendible} to
\emph{future-directed future-inextendible} above. Finally one sets
\bel{dd1}
 \mcDSI:= \mcDSpI\cup\mcDSmI
 \;.
\ee
The ``precisely" in
``precisely once" above follows  of course already from
achronality of $\hyp$; the repetitiveness in our definition is deliberate, to
emphasize
the property. We always have
$$\hyp\subset\mcDpmSI\;.$$

\begin{coco}
We have found it useful to build in the fact that $\hyp$ is a
topological hypersurface in the definition of $\mcDSpI$. Some
authors do not impose this restriction~\cite{GerochDoD}, which can
lead to  various pathologies. From the point of view of
differential equations the only interesting case is that of a
hypersurface anyway.

The domain of dependence is usually denoted by $\mcD(\hyp)$ in the literature,
and we will sometimes write so. We have added the subscript $I$ to emphasise
that the definition is based on timelike curves.
Hawking and Ellis~\cite{HE} define the domain of dependence using
causal curves instead of timelike ones, we will denote the resulting domains of
dependence by $\mcDpmSJ$, etc., if need arises. On the other hand timelike
curves are used by Geroch~\cite{GerochDoD} and by
Penrose~\cite{PenroseDiffTopo}. For $C^3$ metrics and spacelike acausal
hypersurfaces $\hyp$, the resulting sets
differ by a boundary.
The definition with causal curves
has the advantage that the resulting set $\mcDSJ$ is  open when
$\hyp$ is an acausal topological hypersurface. However, this
excludes piecewise null hypersurfaces as Cauchy surfaces, and this
is the reason why we use the definition based on timelike curves in the current
treatment.
It appears that the definition using causal curves is easier to handle when
continuous metrics are considered~\cite{ChGrant}.
\end{coco}

The following examples are instructive, and are left as exercices
to the reader; note that some of the results proved later in this
section might be helpful in verifying our claims:
\begin{figure}[t]
\begin{center} {\psfrag{hyp}{\Huge $\!\hyp$}
\psfrag{remove}{\Huge remove}
\psfrag{dpluss}{\Huge $\!\!\!\!\!\!\!\!\mcDSpI$}
\psfrag{dminus}{\Huge $\mcDSmI$}
\resizebox{5in}{!}{\includegraphics{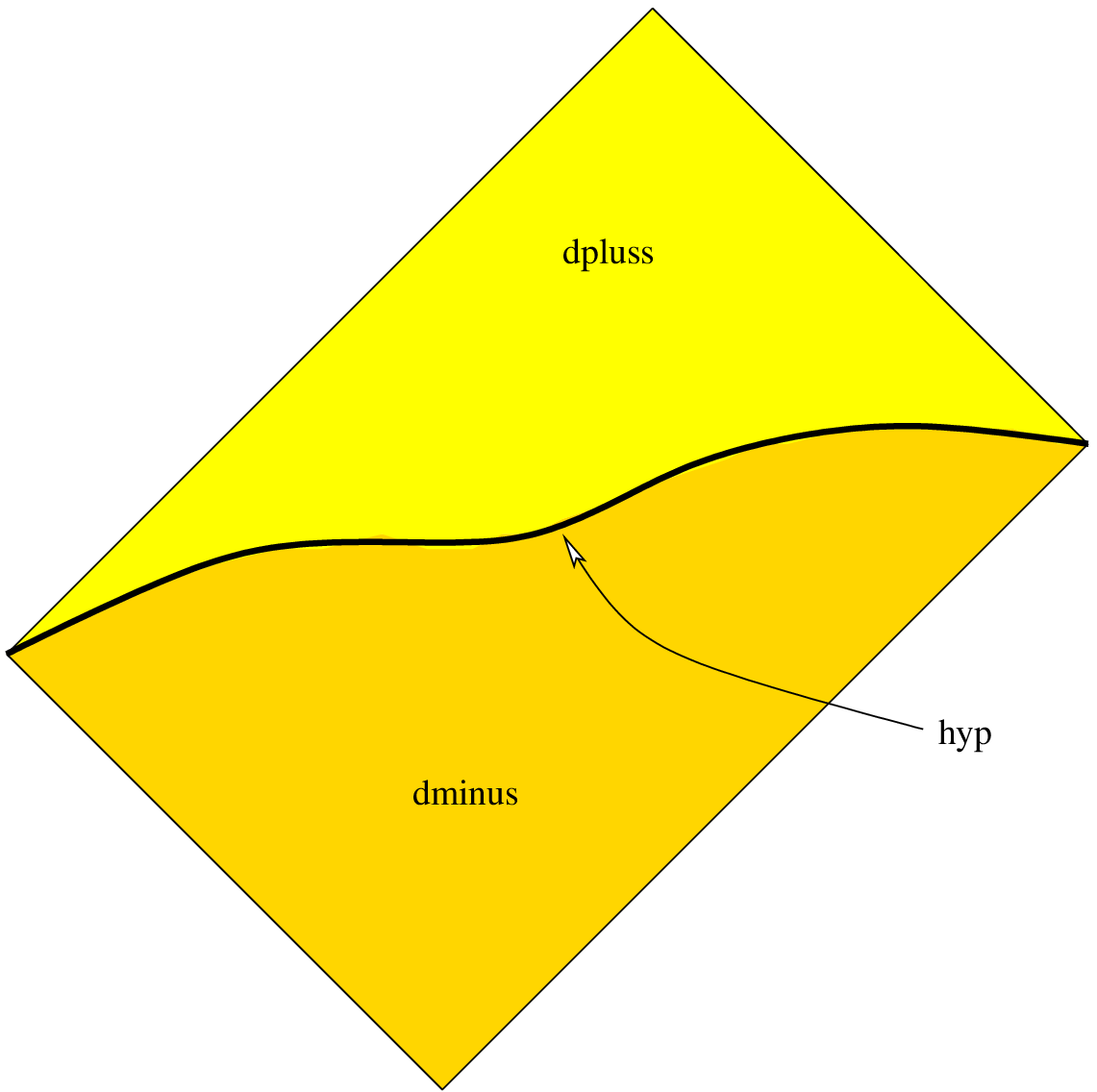}%
\includegraphics{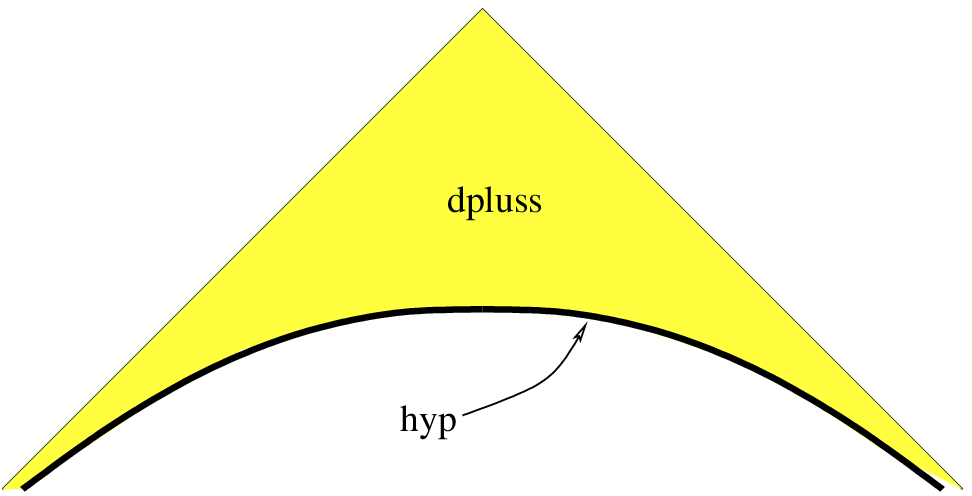}%
\includegraphics{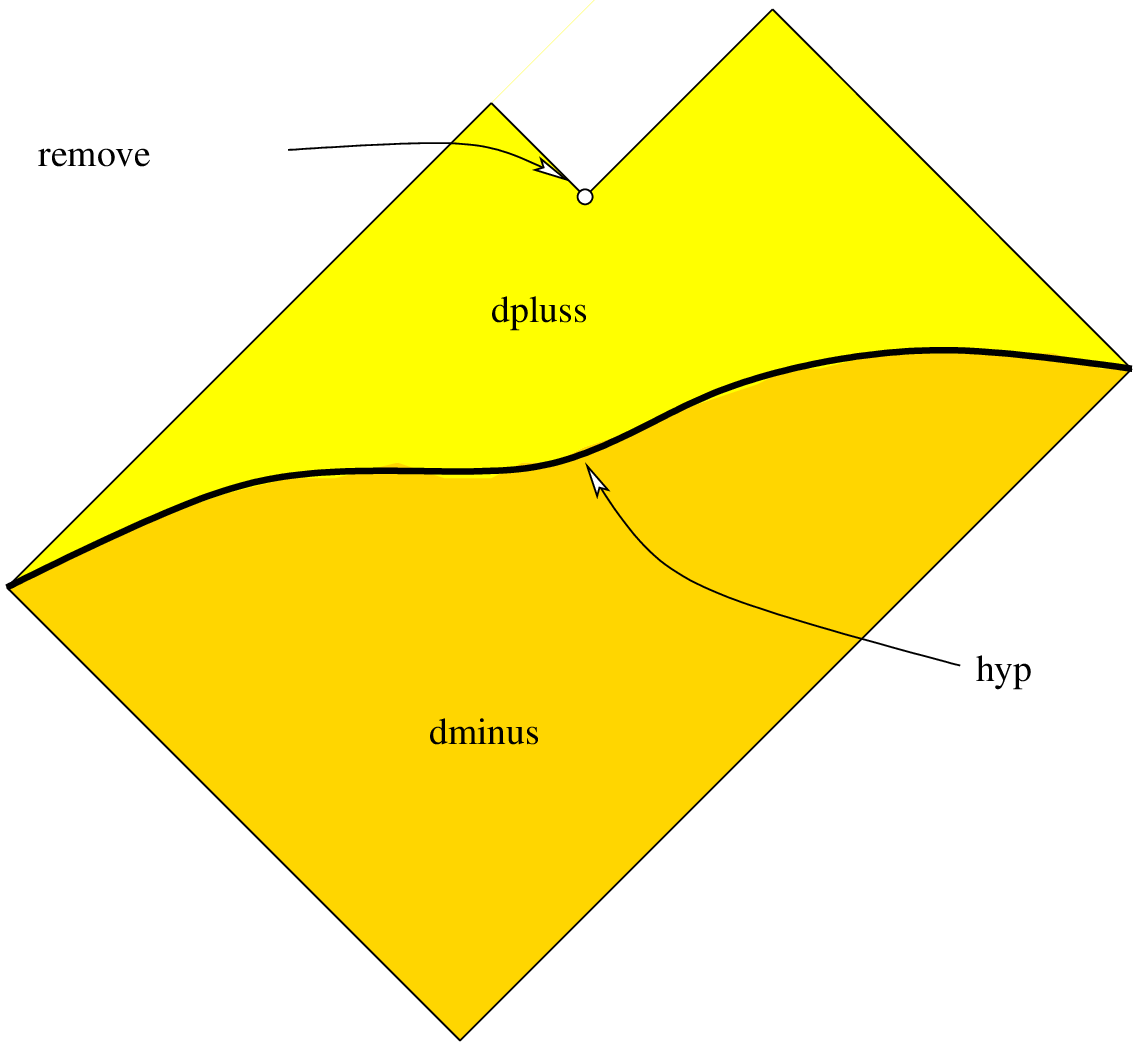}}
}
\caption{Examples of domains of dependence. \label{Fdod}}
\end{center}
\end{figure}%

\begin{Example}\label{Exdd1}
Let $\hyp=\{x^0=0\}$ in Minkowski space-time $\R^{1,n}$, where $x^0$
is the usual time coordinate on $\R^{1,n}$. Then $\mcDSI=\R^{1,n}$.
Thus both $\mcDSpI$ and $\mcDSmI$ are non-trivial, and their union
covers the whole space-time.\end{Example}

\begin{Example}\label{Exdd1a}
Let $\hyp=\{$the set of points in $\R^n$ with rational
coordinates$\}\subset \{x^0=0\}$ in Minkowski space-time $\R^{1,n}$,
where $x^0$ is the usual time coordinate on $\R^{1,n}$. Then
``$\mcDSpI=\hyp$'', in the sense that ``the set of points
$p\in \mcM$ with the property that {every past-directed
past-inextendible timelike curve starting at $p$ meets $\hyp$
precisely once}" coincides with $\hyp$. Such examples are the reason why we
assumed that $\hyp$ is a hypersurface in the definition of $\mcDSI$.
\end{Example}

\begin{Example}\label{Exdd2}
Let $\hyp=\{x^0-x^1=0\}$ in Minkowski space-time, where the
$x^\mu$'s are the usual Minkowskian coordinates on $\R^{1,n}$. Then
$\mcDSpI=\mcDSmI=\mcDSI=\hyp$ (this can be proved  using, \eg,
Lemma~\ref{Lpushup} below).
\end{Example}

\begin{Example}\label{Exdd3}
Let $\hyp=\{x^0=|x^1|\}$ in Minkowski space-time $\R^{1,n}$. Then
$\mcDSmI=\hyp$, $\mcDSpI=\{x^0\ge |x^1|\}$. The fact that
$\mcDSmI=\hyp$ makes $\mcDSmI$ rather uninteresting. On the other
hand $\mcDSpI$ coincides with the causal future of $\hyp$.
\end{Example}

\begin{Example}
\label{Exdd2a}
Let $\hyp=\dot J^+(0)$, the forward
light-cone of the origin in
 Minkowski space-time $\R^{1,n}$. Then $\mcDSpI=J^+(0)$ is the
 forward causal cone of the origin, while $\mcDSmI=\hyp$. On the other hand,
 if one removes the origin from $\dot J^+(0)$, so that $\hyp=\dot
 J^+(0)\setminus\{0\}$, then $\mcDSpI=\mcDSmI=\hyp$.
\end{Example}

\begin{Example}\label{Exdd4}
Let $\hyp=\{\eta_{\mu\nu}x^\mu x^\nu=-1\;, x^0>0\}$ be the upper
component of the unit spacelike hyperboloid in Minkowski
space-time. Then $\mcDSI=J^+(0)\;.$  Thus both $\mcDSmI$ and
$\mcDSpI$ are non-trivial, however $\mcDSmI$ does not cover the
whole past of $\hyp$.
\end{Example}

As a warm-up, let us prove the following elementary property of
domains of dependence:

\begin{Proposition}\label{Pdd1}
 \MCtwok
Let $p\in \mcDSpI$, then
$$I^-(p)\cap J^+(\hyp)\subset \mcDSpI\;.$$
\end{Proposition}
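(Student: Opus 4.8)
The plan is to show that every past-directed past-inextendible timelike curve issuing from a point $r\in I^-(p)\cap J^+(\hyp)$ meets $\hyp$, by reducing it to the corresponding property at $p$ via concatenation. So fix $r\in I^-(p)\cap J^+(\hyp)$ and let $\gamma$ be an arbitrary past-directed past-inextendible timelike curve starting at $r$. Since $r\in I^-(p)$, there is a past-directed timelike curve $\beta$ from $p$ to $r$ (reverse the future-directed timelike curve from $r$ to $p$ witnessing $r\in I^-(p)$). Using the concatenation operation $\cup$ introduced in the proof of Proposition~\ref{PM1.a}, I would form $\hat\gamma:=\beta\cup\gamma$. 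This $\hat\gamma$ is locally Lipschitz, past-directed, and timelike almost everywhere (each piece is, and the single break point $r$ is of measure zero); moreover it is past-inextendible, since its past portion is exactly $\gamma$, which is past-inextendible, and prepending the finite segment $\beta$ at the starting end cannot affect past-inextendibility. Thus $\hat\gamma$ is a past-directed past-inextendible timelike curve starting at $p$, and because $p\in\mcDSpI$ it meets $\hyp$ in exactly one point $z$.

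The heart of the argument is then to show that this unique intersection point $z$ lies on $\gamma$ rather than on the $\beta$-portion, so that $\gamma$ itself meets $\hyp$. Every point $x$ of $\beta$ other than the shared endpoint $r$ satisfies $x\in I^+(r)$ (both $p$ and the interior points of $\beta$ are reached from $r$ by a future-directed timelike curve). Since $r\in J^+(\hyp)$, we get $I^+(r)\subseteq I^+(J^+(\hyp))$, and the Push-up Lemma~\ref{Lpushup0} (with $\Omega=\hyp$) gives $I^+(J^+(\hyp))=I^+(\hyp)$, hence $I^+(r)\subseteq I^+(\hyp)$. By achronality of $\hyp$, namely $I^+(\hyp)\cap I^-(\hyp)=\emptyset$, no point of $\hyp$ lies in $I^+(\hyp)$, so $\hyp\cap I^+(\hyp)=\emptyset$. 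Consequently $(\beta\setminus\{r\})\cap\hyp=\emptyset$, and in particular $p\notin\hyp$. Therefore the unique point $z\in\hat\gamma\cap\hyp$ cannot lie on $\beta\setminus\{r\}$; since $\hat\gamma=\beta\cup\gamma$ and $r$ belongs to $\gamma$, we conclude $z\in\gamma$.

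It follows that $\gamma$ meets $\hyp$ (at $z$); and since $\gamma$ is timelike while $\hyp$ is achronal, it can meet $\hyp$ in at most one point, so it meets $\hyp$ precisely once. As $\gamma$ was an arbitrary past-directed past-inextendible timelike curve from $r$, this shows $r\in\mcDSpI$, which is the desired inclusion $I^-(p)\cap J^+(\hyp)\subset\mcDSpI$. The only subtlety to be careful about is the verification that $\hat\gamma$ is genuinely a past-inextendible timelike curve in the Lipschitz sense adopted here; the geometric core, that the ``bridge'' segment $\beta$ from $p$ down to $r$ stays in $I^+(\hyp)$ and hence off $\hyp$, is exactly where the Push-up Lemma and achronality are used.
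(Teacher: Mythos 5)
Your proof is correct and follows essentially the same route as the paper's: concatenate the timelike segment from $p$ down to the given point with an arbitrary past-inextendible timelike curve from that point, use $p\in\mcDSpI$ to get a unique intersection with $\hyp$, and then rule out the intersection lying on the bridge segment via the Push-up Lemma~\ref{Lpushup0} combined with achronality of $\hyp$. The only cosmetic difference is that the paper phrases this last step as a contradiction (if the intersection preceded the concatenation point it would lie in $I^+(\hyp)\cap\hyp$), whereas you show directly that the whole bridge segment minus its endpoint stays in $I^+(\hyp)$ and hence off $\hyp$.
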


\proof Let $q\in I^-(p)\cap J^+(\hyp)$, thus there exists a
past-directed timelike curve $\gamma_0$ from $p$ to $q$.
Let $\gamma_1$ be a  past-inextendible timelike curve $\gamma_1$
starting at $q$. The curve $\gamma:=\gamma_0\cup\gamma_1$ is a
past-inextendible past-directed timelike curve starting at $p$,
thus it meets $\hyp$ precisely once at some point $r\in\hyp$.
Suppose that $\gamma$ passes through $r$ before passing through
$q$, as $q\in J^+(\hyp)$  Lemma~\ref{Lpushup0}
 \ptcx{this uses push-ups, could work with lut futures}
shows
that $r\in I^+(\hyp)$, 
contradicting achronality of $\hyp$. This shows that $\gamma$ must
meet $\hyp$ after passing through $q$, hence $\gamma_1$ meets
$\hyp$ precisely once.\qed

Let $\hyp$ be achronal, we shall say that a set $\mcO$ forms a
one-sided future neighborhood of $p\in \hyp$ if there exists an
open set $\mcU\subset \mcM$ such that $\mcU$ contains $p$ and
$$\mcU\cap J^+(\hyp)\subset \mcO\;.$$
As $I^-(p)$ is open,  Proposition~\ref{Pdd1} immediately implies:

\begin{Corollary}\label{Cdd0}
 \MCtwok
Suppose that $\mcDSpI\ne \hyp$,
consider any point $p\in \mcDSpI\setminus\hyp$. For any $q\in \hyp
\cap I^-(p)$
 \ptcx{needs $\cI$?}
the set $\mcDSpI$ forms a one-sided future
neighborhood of $q$.
\qed
\end{Corollary}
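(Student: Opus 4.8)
The plan is to exhibit an explicit open set witnessing the one-sided future neighborhood property, and then simply invoke the two results that do all the real work. Recall that, by definition, to show that $\mcDSpI$ forms a one-sided future neighborhood of $q\in\hyp$ I must produce an open set $\mcU\subset\mcM$ with $q\in\mcU$ and $\mcU\cap J^+(\hyp)\subset\mcDSpI$. The natural — indeed the only sensible — candidate here is
$$\mcU:=I^-(p)\;.$$
The entire proof then reduces to verifying the three defining requirements for this choice.

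First, $\mcU$ is open: this is precisely Proposition~\ref{P2} applied to the single-point set $\{p\}$, which gives that $I^-(p)$ is open. Second, $q\in\mcU$: this is immediate from the hypothesis $q\in\hyp\cap I^-(p)$. Third, and this is the only substantive point, I need the inclusion
$$\mcU\cap J^+(\hyp)=I^-(p)\cap J^+(\hyp)\subset\mcDSpI\;,$$
which is exactly the content of Proposition~\ref{Pdd1}, valid because $p\in\mcDSpI$ by hypothesis (indeed $p\in\mcDSpI\setminus\hyp\subset\mcDSpI$). Assembling these three facts, $\mcU=I^-(p)$ satisfies the definition, and the corollary follows at once.

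The honest remark to make is that there is essentially \emph{no} obstacle: all the difficulty has been front-loaded into Proposition~\ref{Pdd1}, whose proof uses achronality of $\hyp$ together with the push-up Lemma~\ref{Lpushup0} to rule out a past-inextendible timelike curve from a point of $\mcU\cap J^+(\hyp)$ meeting $\hyp$ before reaching $\hyp$ ``from above''. Once that inclusion is in hand, the present statement is a pure unwinding of the definition of a one-sided future neighborhood, the openness of timelike pasts (Proposition~\ref{P2}) being the only additional ingredient. I would therefore keep the write-up to a single short line, as the author evidently intends with the phrase ``immediately implies''.

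\proof Take $\mcU:=I^-(p)$. By Proposition~\ref{P2} the set $\mcU$ is open, and $q\in\mcU$ by hypothesis. Since $p\in\mcDSpI$, Proposition~\ref{Pdd1} gives
$$\mcU\cap J^+(\hyp)=I^-(p)\cap J^+(\hyp)\subset\mcDSpI\;,$$
which is exactly the statement that $\mcDSpI$ forms a one-sided future neighborhood of $q$. \qed
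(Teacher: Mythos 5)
Your proposal is correct and is exactly the paper's argument: the author's proof consists of the remark ``As $I^-(p)$ is open, Proposition~\ref{Pdd1} immediately implies'' the corollary, i.e.\ taking $\mcU=I^-(p)$, openness from Proposition~\ref{P2}, and the inclusion $I^-(p)\cap J^+(\hyp)\subset\mcDSpI$ from Proposition~\ref{Pdd1}. Your write-up just makes the unwinding of the definition of a one-sided future neighborhood explicit.
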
 

Transversality considerations near $\hyp$ should make it clear that
the hypothesis of Corollary \ref{Cdd0} is satisfied for  achronal, $C^1$,
\underline{spacelike} hypersurfaces without boundary, and
therefore for such $\hyp$ the set $\mcDSI$ forms a neighborhood of
$\hyp$. Example~\ref{Exdd2} shows that this will not be the case for
general $\hyp$'s.
%
%

 The next theorem shows that achronal topological
hypersurfaces can
 be used to produce globally hyperbolic space-times:

\begin{Theorem}\label{Tdd1}
Let $\hyp$ be an achronal hypersurface
in $\Mgthreek$, and suppose that  the interior $\imcDSI$ of the domain
of dependence $\mcDSI$ of $\hyp$ is not empty. Then $\imcDSI$ equipped
with the metric obtained by restriction from $g$ is globally
hyperbolic.
\end{Theorem}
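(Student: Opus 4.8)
The plan is to verify directly the two defining properties of global hyperbolicity for the open submanifold $\cN:=\imcDSI$ equipped with the restricted metric: that $\cN$ is \emph{strongly causal}, and that $J^+(p;\cN)\cap J^-(q;\cN)$ is compact for every $p,q\in\cN$. The basic simplification is that a curve is causal for $g|_{\cN}$ precisely when it is a causal curve in $(\mcM,g)$ whose image lies in $\cN$; hence the futures and pasts computed inside $\cN$ are exactly the sets $I^\pm(\cdot;\cN)$, $J^\pm(\cdot;\cN)$ of Section~\ref{SCnpfp}, and all the local results of Sections~\ref{SCp}--\ref{SAc} apply verbatim on $\cN$.

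First I would establish strong causality of $\cN$. The central mechanism is that $\cN\subset\mcDSI$ admits no inextendible causal curve that is even partially imprisoned in a compact subset of $\cN$: if $\gamma$ were such a curve, then, extended to be past-inextendible, it would be a causal curve in $\cN$ that never reaches $\hyp$; using openness of $I^\pm$ (Proposition~\ref{P2}) together with the push-up and deformation results (Lemma~\ref{Lpushup0}, Corollary~\ref{CPushup0} and Proposition~\ref{P14X11.1}) one shadows $\gamma$ by a past-inextendible \emph{timelike} curve that likewise avoids $\hyp$, contradicting the defining property of $\mcDSpI$ or $\mcDSmI$. Granting this, suppose strong causality failed at some $p\in\cN$: then there is a fixed neighborhood $\mcO$ of $p$ and causal curves re-entering arbitrarily small neighborhoods of $p$ while leaving $\mcO$. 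Reparameterizing by $\backmg$-arclength and extending to inextendible curves, Theorem~\ref{The:accum} produces an inextendible causal accumulation curve through $p$ whose endpoints return to $p$, i.e. one imprisoned near $p$, which is exactly the situation just excluded.

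Next I would prove compactness of $J^+(p;\cN)\cap J^-(q;\cN)$. Given a sequence $r_n$ in this set, pick causal curves in $\cN$ from $p$ to $r_n$ and from $r_n$ to $q$ and concatenate them into causal curves $\gamma_n$ from $p$ to $q$ through $r_n$. The crucial step is to show that the $\gamma_n$ remain inside a single compact subset $K\subset\cN$ bounded away from the topological boundary $\partial\cN$ (the Cauchy horizon). This again follows from the absence of imprisonment: if the $\gamma_n$ escaped every compact subset of $\cN$, a subsequence would, by Theorem~\ref{The:accum}, accumulate at an inextendible causal curve lying in $\overline{\cN}\cap J^+(p)\cap J^-(q)$ with a limit point on $\partial\cN$, and this curve would be imprisoned between $p$ and $q$, contradicting the strong causality just established. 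Once confinement to a compact $K$ is known, I would cover $K$ by finitely many elementary regions and bound the $\backmg$-length of each $\gamma_n$ via Lemma~\ref{Lgh1.0}, exactly as in the proof of Proposition~\ref{Pgh1}. The family $\{\gamma_n\}$ is then uniformly Lipschitz, and the accumulation-curve construction of Proposition~\ref{Paccum} yields a causal limit curve from $p$ to $q$ passing through $r_*:=\lim_k r_{n_k}$; hence $r_*\in J^+(p;\cN)\cap J^-(q;\cN)$, and since this set is contained in $K$ and closed in $\cN$ it is compact.

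The hard part will be the confinement argument — linking the domain-of-dependence hypothesis to the exclusion of imprisoned causal curves in $\cN$, and thereby keeping causal curves between interior points away from the Cauchy horizon. The most delicate point inside it is the case of an achronal imprisoned \emph{null geodesic}, which cannot be deformed to a timelike curve (Remark~\ref{Rpushup}); here one must instead exploit that $p\in\imcDSI$ supplies a full neighborhood inside $\mcDSI$, so that a past-inextendible timelike curve shadowing the null geodesic can be built using openness of $I^-$ and push-up, this timelike curve still avoiding $\hyp$ and so contradicting $p\in\mcDSpI\cup\mcDSmI$. Everything else reduces to routine applications of the accumulation-curve and elementary-region machinery already developed.
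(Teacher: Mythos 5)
Your plan substitutes a single ``no imprisonment'' principle for the paper's actual workhorse, which is Lemma~\ref{Ldd1} (every inextendible causal curve through a point of $\imcDSI$ meets $\hyp$, $I^+(\hyp)$ and $I^-(\hyp)$, proved via the shadowing Push-up Lemma~\ref{Lpushup}), and this substitution breaks at two points. First, your proof of non-imprisonment rests on the assertion that an inextendible causal curve imprisoned in a compact subset of $\imcDSI$ ``never reaches $\hyp$''. That is a non sequitur: nothing about imprisonment prevents the curve from crossing $\hyp$, and in fact Lemma~\ref{Ldd1} --- the very statement you would need to run the shadowing/push-up contradiction --- guarantees that such a curve \emph{does} meet $\hyp$. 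The non-imprisonment statement itself is true, but the correct argument is different: once a future-directed causal curve enters $I^+(\hyp)$ it can never return to $I^-(\hyp)$ (push-up, Lemma~\ref{Lpushup0}, plus achronality of $\hyp$), whereas future imprisonment in a compact subset of $\imcDSI$, combined with Lemma~\ref{Ldd1} applied to an accumulation curve of the time-translates $\gamma(\cdot+s_n)$, forces returns to $I^-(\hyp)$ at arbitrarily late parameters.

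Second, and fatally for the architecture, non-imprisonment does not imply strong causality. The paper's own causal-but-not-strongly-causal example (Figure~\ref{Fscc}) contains no imprisoned causal curves whatsoever --- every inextendible causal curve there runs into the removed slits or off to infinity, hence eventually leaves every compact set --- yet strong causality fails. The precise point where your derivation collapses is the claim that Theorem~\ref{The:accum} turns the strong-causality-violating sequence into ``an accumulation curve whose endpoints return to $p$'': uniform convergence holds only on compact parameter intervals, so when the return parameters $s_n$ are unbounded the returns are invisible in the limit, and the accumulation curve need not come back to $p$ at all (Remark~\ref{Rgh1} exhibits exactly this loss of late-time behaviour in the limit). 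This unbounded case is where the paper's proof does real work: it passes to the shifted curves $\hat\gamma_n(s)=\gamma_n(s+s_n)$, shows that their accumulation curve through $p$ avoids $I^-(\hyp)$, and contradicts Lemma~\ref{Ldd1} --- an argument that again uses the domain-of-dependence structure, not imprisonment. The same defect infects your compactness step: an accumulation curve with a limit point on the Cauchy horizon need not be imprisoned, and strong causality of $\imcDSI$ constrains only curves lying in $\imcDSI$, so it cannot exclude curves grazing the boundary. The confinement step is in any case superfluous: as in the paper, Lemma~\ref{Ldd1} plus achronality of $\hyp$ bounds directly the parameters at which the points $r_n$ sit on the connecting curves, and the Arzela-Ascoli extraction you describe then produces the limit point --- that final step of yours is fine once the parameter bound is in hand.
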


\proof We need first to show that a causal curve can be pushed-up
by an amount as small as desired to yield a timelike curve:

\begin{Lemma}[``Push-up Lemma II"]
\label{Lpushup}
\MCtwok
 Let $\gamma:\R^+\to \mcM$ be a past-inextendible past-directed
 \underline{causal}
curve starting at $p$, and let $\mcO$ be a neighborhood of the
image $\gamma(\R^+)$ of $\gamma$. Then for every $r\in I^+({q})\cap\mcO$ there
exists a past-inextendible past-directed
 \underline{timelike}
curve $\hat\gamma$ starting at $r$ such that
\beal{27IV11.1}
 &
 \hat \gamma \subset
 I^+(\gamma)\cap \mcO\;,
 &
\\
 &
 \forall\ s\in [0,\infty) \qquad I^-(\hat \gamma (s))\cap \gamma(\R^+) \ne
\emptyset\;.
 &
\eeal{27IV11.2}
%
\end{Lemma}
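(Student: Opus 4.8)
The plan is to adapt the finite push-up construction of Lemma~\ref{Lpushup0} to the present infinite situation; since the metric is $C^2$, normal coordinates, elementary regions (Definition~\ref{Delem}) and Lemma~\ref{Lpushup0} are all at our disposal. The genuinely new point is to guarantee that the curve produced is past-\emph{inextendible}, and the design of the cover is dictated by this requirement. First I would reparameterize $\gamma$ by $\backmg$-arc length, so that by Theorem~\ref{TP4a} its past-inextendibility is the same as its being defined on all of $[0,\infty)$. I then fix a partition $0=t_0<t_1<t_2<\cdots$ with $t_i\to\infty$ but $t_{i+1}-t_i\to 0$, and for each $i$ an elementary region $\mcU_i\subset\mcO$ with $\gamma([t_i,t_{i+1}])\subset\mcU_i$, $\gamma(t_{i+1})\in\mcU_i\cap\mcU_{i+1}$, and $\mathrm{diam}_{\backmg}(\mcU_i)\to 0$. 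This last, crucial, requirement is compatible with the others because each arc $\gamma([t_i,t_{i+1}])$ is compact, has $\backmg$-diameter at most $t_{i+1}-t_i\to 0$, and lies in the open set $\mcO$, hence admits arbitrarily thin elementary neighbourhoods inside $\mcO$.

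The core is an iterative local push-up. Put $p_i:=\gamma(t_i)$; since $\gamma$ is past-directed we have $p_i\in J^+(p_{i+1};\mcU_i)$. Beginning at $r\in I^+(p)$, a timelike curve issuing from $r$ towards $p=p_0$ reaches a point $r_0\in I^+(p_0;\mcU_0)$. Inductively, given $r_i\in I^+(p_i;\mcU_i)$, Lemma~\ref{Lpushup0} applied \emph{inside} the space-time $\mcU_i$ gives $r_i\in I^+(p_i;\mcU_i)\subset I^+\!\big(J^+(p_{i+1};\mcU_i);\mcU_i\big)=I^+(p_{i+1};\mcU_i)$, and its proof moreover produces a past-directed timelike segment $\hat\gamma_i\subset I^+(\gamma)\cap\mcU_i$ running from $r_i$ down to $p_{i+1}$. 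Truncating $\hat\gamma_i$ just short of $p_{i+1}$ gives an endpoint $r_{i+1}\in I^+(p_{i+1};\mcU_{i+1})$, which starts the next step. Concatenating the initial timelike arc from $r$ with $\hat\gamma_0,\hat\gamma_1,\dots$ (and reparameterizing) yields a past-directed timelike curve $\hat\gamma$ issuing from $r$.

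The two displayed conclusions are then immediate: each point of $\hat\gamma_i$ lies in $I^+(\gamma(t);\mcU_i)$ for some $t\in[t_i,t_{i+1}]$, which gives both $\hat\gamma\subset I^+(\gamma)\cap\mcO$ (as $\mcU_i\subset\mcO$), that is \eq{27IV11.1}, and $I^-(\hat\gamma(s))\cap\gamma([0,\infty))\ne\emptyset$ for every $s$, that is \eq{27IV11.2}. A minor point to be dispatched here is that the very first arc, from $r\in I^+(p)\cap\mcO$ to $r_0\in\mcU_0$, must itself stay in $\mcO$; this holds once $r$ can be joined to $p$ by a timelike curve within $\mcO$, which is the case in the intended applications.

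The step I expect to be the real obstacle is the past-inextendibility of $\hat\gamma$, which by Theorem~\ref{TP4a} is equivalent to $\hat\gamma$ having infinite $\backmg$-length. Suppose it were finite. Then $\hat\gamma$, reparameterized by arc length, is uniformly Lipschitz on a bounded interval, so by Lemmata~\ref{P5.0a} and~\ref{P5.1} it converges to a past end point $x$; in particular the junction points satisfy $r_i\to x$. Since $r_i\in\mcU_i$ and $\mathrm{diam}_{\backmg}(\mcU_i)\to 0$, the regions $\mcU_i$ shrink to $x$, whence $\gamma([t_i,t_{i+1}])\subset\mcU_i$ forces $\gamma(s)\to x$ as $s\to\infty$ --- contradicting the past-inextendibility of $\gamma$. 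This shrinking-diameter design is exactly what disposes of the otherwise troublesome \emph{imprisoned} case, in which $\gamma$ stays in a compact set and crude length estimates are unavailable. Once inextendibility is secured, the lemma is proved.
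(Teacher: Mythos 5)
Your proof is correct and follows essentially the same route as the paper's: a countable cover of $\gamma$ by elementary regions inside $\mcO$ whose $h$-size shrinks as one moves down the curve, an iterated local push-up producing past-directed timelike segments in $I^+(\gamma)\cap\mcU_i$ that are concatenated, and inextendibility of $\hat\gamma$ obtained by showing that a past end point of $\hat\gamma$ would, through the shrinking regions, force $\gamma(s)$ to converge to that same point as $s\to\infty$, contradicting inextendibility of $\gamma$. The paper implements the shrinking quantitatively, demanding $\mcU_i\subset B_h(p_i,1/(j+1))$ whenever $r_i\in[j,j+1)$ and deducing $\distb(\gamma(r),\hat\gamma)\le 2/r$, which is just a sharper form of your $\mathrm{diam}_{\backmg}(\mcU_i)\to 0$ device; your explicit caveat that the initial arc from $r$ must stay in $\mcO$ (effectively requiring $r\in I^+(p;\mcO)$) is a point the paper's own proof glosses over, and it is indeed harmless in the paper's applications, where $\mcO=\mcM$.
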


\proof The construction is essentially identical to that of the
proof of Lemma~\ref{Lpushup0}, p.~\pageref{Lpushup0}, except that we will have
to deal
with a countable collection of curves, rather than a finite number.
One also needs to make sure that the final curve is inextendible.
As usual, we parameterize $\gamma$ by $h$--distance as measured
from $p$. Using an exhaustion of $[0,\infty)$ by compact intervals
$[m,m+1]$ we cover $\gamma$ by a countable collection
$\mcU_i\subset \mcO$, $i\in \N$ of elementary regions $\mcU_i$
centered at $$p_i=\gamma(r_i)$$  with
$$
 p_1= p\;, \quad p_i\in \mcU_{i}\cap\mcU_{i+1}\;, \quad p_{i+1}\subset
J^-{(p_i)}
  \;.
$$
We further  impose the following condition on the
$\mcU_i$'s:  if  $r_i\in[j,j+1)$,
then the corresponding $\mcU_i$
is contained in a $h$-distance ball $B_h(p_i,1/(j+1))$.

Let $\gamma_0:[0,s_0]\to\mcM$ be a past directed causal curve from
$r$ to $p\in\mcU_1\cap\mcU_2$; let $s_1$ be close enough to $s_0$
so that
$$\gamma_0(s_1)\in\mcU_2\;.$$ Proposition~\ref{PC1}, p.~\pageref{PC1}, together
with
the definition of elementary regions shows that
 there exists a past directed timelike curve
$\gamma_1:[0,1]\to\mcU_1\subset\mcO$ from $q$ to
 $p_2$. (In particular $\gamma_1\setminus \{p\} \subset I^+(p)\subset
I^+(\gamma)$).
 Similarly, for any $s\in [0,1]$ there exists a a past directed
 timelike curve $\gamma_{2,s}:[0,1]\to\mcU_2\subset\mcO$
  from $\gamma_1(s)$ to
 $p_2$. We choose $s=:s_2$ small enough so that
 $$\gamma_1(s_2)\in \mcU_3\;.$$
 One
 repeats that construction  iteratively, obtaining a sequence of
 past-directed
 timelike curves $\gamma_i\subset I^+(\gamma)\cap  \mcU_i\subset I^+(\gamma)\cap
\mcO$
 such that the end point of
 $\gamma_i$ lies in $\mcU_{i+1}$ and
 coincides with the starting point of $\gamma_{i+1}$.  Concatenating those
curves together gives the desired path $\hat
  \gamma$. Since every path $\gamma_i$ lies in $I^+(\gamma)\cap
  \mcO$,  so does their union.

 Since $\gamma_i\subset \mcU_i\subset B_h(p_i,1/(j+1))$ when
 $r_i\in[j,j+1)$
 we obtain, for ${r\in[j,j+1)}$,
$$
\distb(\gamma(r),\hat\gamma)\le
\distb(\gamma(r),\gamma(r_i))+\distb(\gamma(r_i),\gamma_i)\le
\frac 2 {j+1}\;,$$ where we have ensured that
$\distb(\gamma(r),\gamma(r_i))< 1/(j+1)$ by choosing $r_i$
appropriately. It follows that
\bel{edd0}
\distb(\gamma(r),\hat\gamma)\le \frac 2 {r}\;.
 \ee

To finish the proof, suppose that $\hat \gamma:[0,s_*)\to\mcM$ is
extendible,  call $\hat p$ the end point of $\hat \gamma$. By
\eq{edd0}
$$
\lim_{r\to \infty} \distb(\gamma(r),\hat p)=0\;.
$$
Thus $\hat p$ is an end point of $\gamma$, which together with
Theorem~\ref{TP4a}  contradicts inextendibility of $\gamma$.
 \qed

By the definition of domains of dependence, inextendible
\emph{timelike} curves through $p\in \mcDSpI$
 intersect all  the sets $\hyp$, $I^+(\hyp)$, and  $I^-(\hyp)$. This is wrong in
general for inextendible
 \emph{causal} curves through
 points in
$\mcDSpI\setminus\imcDpSI$, as shown on Figure~\ref{FNGNIC}.
\begin{figure}[th]
\begin{center}
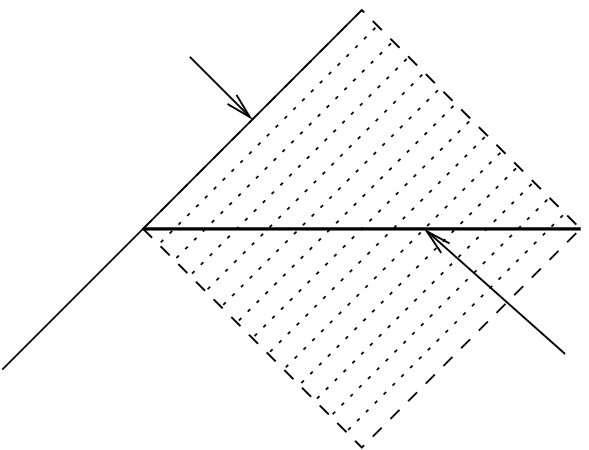
\end{center}
 \caption{\label{FNGNIC} Let $\hyp=\{t=0,x\in (-1,1)\}\subset \R^{1,1}$, then
 $\mcDSI$ is the closed dotted diamond region \emph{without} the two rightermost
and leftermost points that
 lie on the closure
 of $\hyp$. The past directed null geodesic $\gamma$ starting at $(1,1) \in
\mcDSpI$ does not intersect $\hyp$.}
\end{figure}
 Nevertheless we have:

\begin{Lemma}\label{Ldd1} If $p\in\imcDSI$, then every inextendible
\underline{causal} curve $\gamma$ through $p$ intersects $\hyp$,
$I^-(\hyp)$ and $I^+(\hyp)$.
\end{Lemma}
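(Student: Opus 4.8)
Since every point of $\mcDSI$ lies in $\mcDSpI$ or $\mcDSmI$, and the time-reversal meta-rule lets me interchange the two, I would assume \textbf{$p\in\mcDSpI$}. The plan rests on two preliminary observations. First, $\mcDSpI\subset J^+(\hyp)=\hyp\cup I^+(\hyp)$: given $p\in\mcDSpI$, extend any past-directed timelike ray from $p$ to a past-inextendible one (Theorem~\ref{TP4}); by definition it meets $\hyp$ at some $x$, and the segment from $p$ to $x$ shows $p\in\hyp$ (if $x=p$) or $p\in I^+(x)\subset I^+(\hyp)$. Second, I would prove the \emph{helper}: $\imcDSI\cap I^+(\hyp)\subset\mcDSpI$ (and its time-dual). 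Indeed, if $r\in\imcDSI\cap I^+(\hyp)$ then $r\in\mcDSpI\cup\mcDSmI$; were $r\in\mcDSmI$, any future-inextendible timelike curve from $r$ would lie in $I^+(r)\subset I^+(\hyp)$ (transitivity of $I^+$), which is disjoint from $\hyp$ by achronality, contradicting the defining property of $\mcDSmI$. Hence $r\in\mcDSpI$.

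\emph{Intersection with $I^-(\hyp)$.} Parametrize $\gamma$ with $\gamma(0)=p$, and let $\gamma^-$ be its past-inextendible past-directed half. Choose $r\in I^+(p)\cap\imcDSI$ (possible since $\imcDSI$ is open and $p\in\overline{I^+(p)}$). By Push-up Lemma~\ref{Lpushup0}, $r\in I^+(p)\subset I^+(J^+(\hyp))=I^+(\hyp)$, so the helper gives $r\in\mcDSpI$. Apply Push-up Lemma~\ref{Lpushup} to $\gamma^-$ with this $r$ and $\mcO=\mcM$: it produces a past-inextendible \emph{timelike} $\hat\gamma$ from $r$ with the tracking property~\eqref{27IV11.2}. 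As $r\in\mcDSpI$, $\hat\gamma$ meets $\hyp$ at $x=\hat\gamma(s_0)$; for $s>s_0$ one has $\hat\gamma(s)\in I^-(x)\subset I^-(\hyp)$, and~\eqref{27IV11.2} yields $t$ with $\gamma^-(t)\in I^-(\hat\gamma(s))\subset I^-(I^-(\hyp))=I^-(\hyp)$. Thus $\gamma$ meets $I^-(\hyp)$.

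\emph{Intersection with $I^+(\hyp)$.} If $p\in I^+(\hyp)$ this is immediate at $p$. If instead $p\in\hyp$, pick $q\in I^-(p)\cap\imcDSI$; then $q\in I^-(p)\subset I^-(\hyp)$, so the time-dual helper gives $q\in\mcDSmI$, and the time-dual of Push-up Lemma~\ref{Lpushup} applied to the future half $\gamma^+$ produces a future-inextendible timelike curve from $q$ that meets $\hyp$; the tracking relation then forces a point of $\gamma^+$ into $I^+(\hyp)$ exactly as above.

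\emph{Intersection with $\hyp$ --- the main obstacle.} If $p\in\hyp$ we are done, so suppose $p\in I^+(\hyp)$. By the previous step there is $t_0>0$ with $\gamma^-(t_0)\in I^-(\hyp)$, and $I^-(\hyp)\cap J^+(\hyp)=\emptyset$ (again by achronality and push-up). Set $t_*=\inf\{t\in[0,t_0]:\gamma^-(t)\notin I^+(\hyp)\}$; since $I^+(\hyp)$ is open (Proposition~\ref{P2}) one gets $0<t_*\le t_0$, $\gamma^-([0,t_*))\subset I^+(\hyp)$, and $w:=\gamma^-(t_*)\notin I^+(\hyp)\cup I^-(\hyp)$. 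The crux is to show $w\in\hyp$: I would do this by proving $w\in\mcDSpI$, whence $w\in\hyp\cup I^+(\hyp)$ by the first observation, and $w\notin I^+(\hyp)$ then forces $w\in\hyp$. To place $w$ in $\mcDSpI$ I would use Proposition~\ref{Pdd1} ($I^-(p)\cap J^+(\hyp)\subset\mcDSpI$) along the approximating points $\gamma^-(t)$, $t\uparrow t_*$: when $\gamma^-|_{[0,t]}$ is not purely a null geodesic, Corollary~\ref{CPushup0} gives $\gamma^-(t)\in I^-(p)\cap I^+(\hyp)\subset\mcDSpI$, and one passes to the limit. The genuine difficulty is exactly the borderline case where $\gamma^-$ runs along a null direction, so that $\gamma^-(t)\in\dot J^-(p)$ rather than $I^-(p)$ and membership in $\mcDSpI$ is not handed to us by Proposition~\ref{Pdd1}; here I expect to need a limiting argument, feeding a sequence of push-up timelike curves from $r_i\to p$ (each meeting $\hyp$, each lying in $I^+(\gamma^-)$) into the accumulation-curve machinery of Proposition~\ref{Paccum} and Theorem~\ref{The:accum} to produce the required crossing point of $\gamma$ on $\hyp$. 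This null-generator case is where the argument must work hardest, and it is the step I would write out with the most care.
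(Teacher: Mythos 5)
Your arguments for the intersections with $I^+(\hyp)$ and $I^-(\hyp)$ are correct and follow the same route as the paper: Lemma~\ref{Lpushup} with $\mcO=\mcM$, applied from a point $r\in I^+(p)$ lying in $\mcDSpI$, together with the tracking property \eq{27IV11.2}. Your helper inclusion $\imcDSI\cap I^+(\hyp)\subset\mcDSpI$ (and its time-dual) is a welcome sharpening of the step that the paper compresses into ``there exists $q\in I^+(p)\cap\mcDSpI$''. One small slip: the asserted identity $J^+(\hyp)=\hyp\cup I^+(\hyp)$ is false for general achronal topological hypersurfaces (it fails when $\hyp$ has null pieces, compare Figure~\ref{FPNC}); but the inclusion you actually prove and use, $\mcDSpI\subset\hyp\cup I^+(\hyp)$, is correct.

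The genuine gap is in the final step, the intersection with $\hyp$, and it is not confined to the null-segment case you flag: the mechanism you propose fails already in your ``easy'' case. You want $w=\gamma^-(t_*)\in\mcDSpI$, obtained as a limit of the points $\gamma^-(t)\in\mcDSpI$, $t\uparrow t_*$. But $\mcDSpI$ is not closed, and a limit of points of $\mcDSpI\cap I^+(\hyp)$ lying on a single causal curve need not lie in $\mcDSpI\cup\hyp$. Concretely, in $\R^{1,1}$ with $\hyp=\{t=0,\;|x|<1\}$ (the setting of Figure~\ref{FNGNIC}), the points $(s,1-s)$ with $0<s\le 1/2$ all belong to $\mcDSpI\cap I^+(\hyp)$, while their limit $(0,1)$ belongs to none of $\mcDSpI$, $\hyp$, $I^+(\hyp)$, $I^-(\hyp)$. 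What excludes this behaviour is precisely the hypothesis that $\gamma$ passes through an \emph{interior} point of $\mcDSI$ (the null line $\{t=1-x\}$ never meets $\imcDSI$), and your limiting argument nowhere uses that hypothesis; hence it cannot succeed as stated, and the accumulation-curve sketch you offer for the null case is likewise left open.

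The missing observation --- which closes your first-exit-point scheme in one stroke, and which is also what substantiates the covering assertion that the paper's own connectedness endgame uses without proof --- is that one should aim directly at $w\in I^-(\hyp)\cup\hyp\cup I^+(\hyp)$ rather than at $w\in\mcDSpI$. Pick $p'\in I^+(p)\cap\imcDSI$; by your helper, $p'\in\mcDSpI$. Since $w\in J^-(p)$ and $p\in I^-(p')$, push-up (Corollary~\ref{CPushup0}, or Proposition~\ref{P14X11.1}) gives $w\in I^-(p')$, so there is a past-directed \underline{timelike} curve from $p'$ through $w$; extend it to a past-inextendible one. Because $p'\in\mcDSpI$, this curve meets $\hyp$ exactly once, and according to whether the meeting point lies strictly between $p'$ and $w$, equals $w$, or lies strictly beyond $w$, one obtains $w\in I^-(\hyp)$, $w\in\hyp$, or $w\in I^+(\hyp)$, respectively. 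Since you have already shown $w\notin I^+(\hyp)\cup I^-(\hyp)$, this forces $w\in\hyp$, and no separate treatment of null segments or accumulation curves is needed.
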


\begin{Remark} In contradistinction with timelike curves, for causal curves the
intersection of $\gamma$ with $\hyp$ does not have to be a point. An
example is given by the hypersurface $\hyp$ of Figure~\ref{FPNC}.
 \ptcx{the figures should be numbered sequentially with the theorems
 and stuff}
\begin{figure}[hbt]
\begin{center}
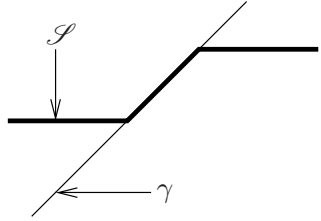
\end{center}
 \caption{\label{FPNC}A null geodesic $\gamma$ intersecting an achronal
topological hypersurface
 $\hyp$ at more than one point.}
\end{figure}
\end{Remark}

\proof Changing time-orientation if necessary we may suppose that
$p\in\mcDSpI$. Let $\gamma:I\to\mcM$ be any past-directed
inextendible causal curve  through $p$. Since $p$ is an interior
point of $\mcDSpI$ there exists $q\in I^+(p)\cap \mcDSpI$. By the
Push-up Lemma~\ref{Lpushup} with $\mcO=\mcM$  there exists a
past-inextendible past-directed timelike curve $\hat \gamma$
starting at $q$ which lies to the future of $\gamma$.
The inextendible timelike curve $\hat \gamma$ enters $I^-(\hyp)$, and so does
$\gamma$ by \eq{27IV11.2}.

If $p\in \hyp$, we can repeat the argument
above with the time-orientation changed, showing that $\gamma$
enters $I^+(\hyp)$ as well, and we are done.

Otherwise $p\not\in\hyp$, then $p$
is necessarily in $I^+(\hyp)$, hence $\gamma$ meets $I^+(\hyp)$ as well.
Now, each of the two disjoint sets
$$I_\pm:=\{s\in I: \gamma(s)\in I^\pm(\hyp)\}\subset \R$$ is
open in the connected interval $I$. They  cover $I$ if $\gamma$
does not meet $\hyp$, which implies that either $I_+$ or  $I_-$
must be empty when $\gamma\cap \hyp=\emptyset$.
But we have shown that both $I_+$
and $I_-$ are not empty, and so $\gamma$  meets
$\hyp$, as desired.
%
\hfill\qed

\medskip

Returning to the proof of Theorem~\ref{Tdd1}, suppose that $\imcDSI$
is not strongly causal. Then there exists $p\in \imcDSI$ and a
sequence $\gamma_n:\R\to\imcDSI$ of inextendible causal curves which
exit the $h$-distance geodesic ball $B_h(p,1/n)$ (centred at $p$ and
of radius $1/n$) and renter $B_h(p,1/n)$ again.  Changing
time-orientation of $\mcM$ if necessary, without loss of generality
we may assume that $p\in I^-(\hyp)\cup \hyp$. Note that the property
``leaves and reenters" is invariant under the change $\gamma_n(s)\to
\gamma_n(-s)$, so that changing the orientation of (some of) the
$\gamma_n$'s if necessary, there is no loss of generality in
assuming the $\gamma_n$'s to be future-directed. Finally, we
reparameterize the $\gamma_n$'s by $h$--distance, with
$\gamma_n(0)\in B_h(p,1/n)$. Then, there exists a sequence $s_n>0$
such that $\gamma_n(s_n)\in B_h(p,1/n)$, with $\gamma_n(0)$ and
$\gamma_n(s_n)$ lying on  different connected components of
$\gamma\cap B_h(p,1/n)$.

Let $\mcO$ be an elementary neighborhood of $p$, as in
Definition~\ref{Delem}, p.~\pageref{Delem}, and let $n_0$ be large enough so
that
$\overline {B_h(p,1/n_0)} \subset \mcO$. Note that the local
coordinate $x^0$ on $\mcO$ is monotonous along every connected
component of $\gamma_n\cap \mcO$ which implies, for $n\ge n_0$, that
any causal curve which exits and reenters $B_h(p,1/n)$ has also to
exit and reenter $\mcO$. This in turn guarantees the existence of an
$\epsilon>0$ such that $s_n>\epsilon$ for all $n\ge n_0$.

Let $\gamma$ be an accumulation curve
through $p$ of the $\gamma_n$'s, passing to a subsequence if necessary the
$\gamma_n$'s
converge uniformly to $\gamma$ on compact subsets of $\R$. The curve
$\gamma$ is causal and $p$ is, by hypothesis, in the interior of
$\mcDSI$, we can therefore invoke Lemma~\ref{Ldd1} to conclude that
there exist $s_\pm\in\R$ such that $\gamma(s_-)\in I^-(\hyp)$ and
$\gamma(s_+)\in I^+(\hyp)$.  Since $\hyp$ is achronal and $\gamma$ is
future directed we must have $s_-<s_+$.  Since the $I^\pm(\hyp)$'s are open, and
since  (passing to a subsequence if necessary)
$\gamma_n(s_\pm)\to\gamma(s_\pm)$, we have $\gamma_n(s_\pm)\in
I^\pm(\hyp)$ for $n$ large enough.

The situation which is simplest to exclude is the one where the sequence
$\{s_n\}$ is
bounded. Then there exists $ s_*\in \R$ such that, passing again to
a subsequence if necessary, we have $s_n\to s_*$. Note that
$\gamma_n(s_*)\to p$ and that $s_*\ge \epsilon$.  Since
$\gamma_n|_{[0,s_*]}$ converges uniformly to $\gamma|_{[0,s_*]}$, we
obtain an inextendible periodic causal curve $\gamma'$ through $p$
by repetitively circling from $p$ to $p$ along
$\gamma|_{[0,s_*]}$.
By Lemma~\ref{Ldd1} $\gamma'$ meets all of $\hyp$, $I^+(\hyp)$ and $I^-(\hyp
)$, which is
clearly incompatible with periodicity of $\gamma'$ and achronality
of $\hyp$.
 (In detail: there exist points $q_\pm\in \gamma'\cap
I^\pm(\hyp )$. Following backwards $\gamma'$ from $p$ to $q_+$ we
obtain $q_+\in J^-(p)$. But $I^-(q_+)\cap \hyp \ne \emptyset$, and
Lemma~\ref{Lpushup0} implies that $I^-(p)\cap \hyp \ne \emptyset$.
Since $p\in I^-(\hyp)\cup\hyp$, this contradicts achronality of
$\hyp$.)

Note that if $p\in I^-(\hyp)$ we would need to have $s_n \le s_+$ for
$n$ large enough: Otherwise, for $n$ large, we could follow
$\gamma_n$ in the future direction from $\gamma_n(s_+)\in I^+(\hyp)$
to $\gamma_n(s_n)\in I^-(\hyp)$, which is not possible if $\hyp$ is
achronal. But then the sequence $s_n$ would be bounded, which has already been excluded.
So $p\in I^-(\hyp)$ cannot occur.

There  remains the  possibility $p\in \hyp$. We
then must have $\gamma_n|_{[0,\infty)}\cap I^-(\hyp)=\emptyset$,
otherwise we would obtain a contradiction with achronality of $\hyp$
by following $\gamma_n$ to the future  from $p=\gamma_n(0)\in \hyp$
to a point where $\gamma_n$ intersects $ I^-(\hyp)$. Set
$$
\hat \gamma_n (s) = \gamma_n(s+s_n)
 \;,
$$
then $\hat \gamma_n$ accumulates at $p$ since $\hat \gamma_n(0)\to
p$, therefore there exists an inextendible accumulation curve  $\hat
\gamma:\R\to\mcM$ of the $\hat \gamma_n$'s passing through $p$. As
$$
 \hat
\gamma_n({[-s_n,\infty)})\cap I^-(p)=\gamma_n({[0,\infty)})\cap
I^-(\hyp)=\emptyset
$$
 we have $\hat \gamma(\R) \cap I^-(p)=\emptyset$ as
well. This is, however, not possible if $p\in \imcDSI$ by
Lemma~\ref{Ldd1}. We see that the possibility that $p\in\hyp$ cannot occur either. We
conclude that $\imcDSI$ is strictly causal, as desired.

To finish the proof, we need to prove compactness of the sets
of the form
$$J^+(p)\cap J^-(q)\;,\quad p,q\in \imcDSI\;.$$
If $p$ and $q$ are such that this set is empty or equals $\{p\}$
there is nothing to prove. Otherwise, consider a sequence $r_n \in
J^+(p)\cap J^-(q)$. One of the following is true:
\begin{enumerate}
\item we have $r_n\in I^-(\hyp)\cup \hyp$ for all $n\ge n_0$, or
\item there exists a subsequence, still denoted by $r_n$, such that $r_n\in
I^+(\hyp)$.
\end{enumerate}
In the second case we change time-orientation, pass to a
subsequence, rename $p$ and $q$, reducing the analysis to the first
case. Note that this leads to $p\in I^-(\hyp)\cup \hyp$.

By definition, there exists a future directed causal curve
$\hat\gamma_n$ from $p$ to $q$ which passes through $r_n$,
\bel{edd1}
 \hat \gamma_n(s_n)=r_n\;.
 \ee
Let $\gamma_n$ be any $\distb$-parameterized, inextendible future
directed causal curve extending $\hat\gamma_n$, with
$\gamma_n(0)=p$. Let $\gamma$ be an inextendible accumulation curve
of the $\gamma_n$'s, then $\gamma$ is a future inextendible causal
curve through
$$
 p\in (\mcDSmI\cup \hyp)\cap \imcDSI\;.$$
By Lemma~\ref{Ldd1} there
exists $s_+$ such that $\gamma(s_+)\in I^+(\hyp)$. Passing to a
subsequence, the $\gamma_n$'s converge uniformly to $\gamma$ on
$[0,s_+]$, which implies that for $n$ large enough the
$\gamma_n|_{[0,s_+]}$'s  enter $I^+(\hyp)$. This, together with
achronality of $\hyp$, shows that the sequence $s_n$ defined by
\eq{edd1} is bounded; in fact we must have $0\le s_n\le s_+$.
Eventually passing to another subsequence we thus have $s_n\to
s_\infty$ for some $s_\infty\in \R$. This implies
$$
 r_n\to \gamma(s_\infty)\in J^+(p)\cap  J^-(q)\;,
$$
which had to be established.
 \qed

\ptcx{a definition of topological hypersurface with boundary commented out, as
part of removing an associated inconclusive lemma}
\restrict{
Let $\hyp$ be a topological hypersurface in $\mcM$ with boundary%
\footnote{By this we mean that near every point of $\hyp$ there exists an open
$C^0$-coordinate
neighborhood $\mcO$ in $\mcM$ so that $\hyp\cap\mcO$ is represented as
$\{x^0=0\}$ or $\{x^0=0\;, \ x^1\ge 0\}$.}%
, then the
interior of $\hyp$ in the topology of $\mcM$ is of course
empty. However, $\hyp$ can be equipped with the induced topology, in
which open subsets of $\hyp$ are defined as intersections of open
subsets of $\mcM$ with $\hyp$. We will use the symbol $\ihyp$ for
the interior of $\hyp$ with respect to this topology, hoping that
this will not lead to confusions for the reader. The following
strengthening of Lemma~\ref{Ldd1} is often useful:
 \ptcx{say something somewhere about coverings of $\mcM$ which render spacelike
 hypersurfaces achronal}

\begin{Lemma}\label{Ldd1a} A point $p\in\mcM$ is in $ \imcDSI$ if and only if
every inextendible
\underline{causal} curve $\gamma$ through $p$ intersects $\ihyp$,
$I^-(\ihyp)$ and $I^+(\ihyp)$.
 \ptcr{I do not know if this is correct, if it is, then it should be
 proved and should replace Lemma~\ref{Ldd1}; at this stage it can be safely
discarded as it is not used anywhere;
 also reword the following paragraph if kept}
\end{Lemma}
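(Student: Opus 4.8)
The plan is to prove the two implications separately, after first recording that the timelike futures and pasts of $\hyp$ and of its interior $\ihyp$ coincide. Every boundary point $q\in\partial\hyp$ is a limit of points of $\ihyp$ (immediate from the local normal form $\{x^0=0\}$ or $\{x^0=0,\ x^1\ge 0\}$ of a topological hypersurface with boundary). Hence if $r\in I^-(q)$ then $q\in I^+(r)$, and since $I^+(r)$ is open by Proposition~\ref{P2} it contains a point $q'\in\ihyp$, so $r\in I^-(\ihyp)$. This gives $I^-(\partial\hyp)\subset I^-(\ihyp)$, and therefore $I^\pm(\hyp)=I^\pm(\ihyp)$. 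In particular $\ihyp$ is achronal whenever $\hyp$ is, and the three target sets $\ihyp$, $I^+(\ihyp)$, $I^-(\ihyp)$ are pairwise disjoint.

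\emph{Forward implication.} Assume $p\in\imcDSI$ and let $\gamma$ be an inextendible causal curve through $p$. By Lemma~\ref{Ldd1}, $\gamma$ meets $\hyp$, $I^+(\hyp)$ and $I^-(\hyp)$; by the previous paragraph the last two already give that $\gamma$ meets $I^+(\ihyp)$ and $I^-(\ihyp)$, say at parameters $s_-<s_+$ (the ordering follows from achronality and future orientation). It remains to upgrade ``meets $\hyp$'' to ``meets $\ihyp$''. Here I would set $a=\sup\{s\in[s_-,s_+]:\gamma(s)\in I^-(\ihyp)\}$; openness of the $I^\pm(\ihyp)$ and their disjointness force $\gamma(a)\in\partial I^-(\ihyp)\setminus I^+(\ihyp)$, and I would then show $\gamma(a)\in\ihyp$ by a local transversality argument near $\gamma(a)$ combined with the Push-up Lemma~\ref{Lpushup}, repeating the proof of Lemma~\ref{Ldd1} with $\ihyp$ in place of $\hyp$. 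This last step is the main obstacle of the forward direction: for a merely topological, possibly partly null $\hyp$ it is not clear that a causal curve through an \emph{interior} point of the domain of dependence cannot touch $\hyp$ only along its edge $\partial\hyp$. (For acausal $C^1$ spacelike $\hyp$ this does go through, since then the $\{x^0=0\}$-crossing set reachable from an interior point is bounded away from $\partial\hyp$.) This is presumably why the statement was flagged as uncertain.

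\emph{Converse implication.} I would prove the contrapositive: if $p\notin\imcDSI$ then some inextendible causal curve through $p$ misses one of $\ihyp$, $I^+(\ihyp)$, $I^-(\ihyp)$. Split according to whether $p\in\overline{\mcDSI}$. If $p\notin\overline{\mcDSI}$ then $p\notin\mcDSpI$ and $p\notin\mcDSmI$; since $\hyp\subset\mcDSpI\cap\mcDSmI$ this forces $p\notin\hyp$. As a timelike curve meets the achronal set $\hyp$ at most once (two intersection points would give a point of $I^+(\hyp)\cap I^-(\hyp)$), failure of the domain-of-dependence conditions produces a past-inextendible timelike curve $\lambda_-$ from $p$ missing $\hyp$ and a future-inextendible timelike curve $\lambda_+$ from $p$ missing $\hyp$. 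Their concatenation is an inextendible timelike, hence causal, curve through $p$ that misses $\hyp\supset\ihyp$, as required.

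The remaining, harder sub-case is $p\in\partial\mcDSI=\overline{\mcDSI}\setminus\imcDSI$. Here I would invoke the null-generator structure of the boundary of the domain of dependence: through such a $p$ passes a null geodesic lying in $\partial\mcDSI$, whose maximal extension is an inextendible causal curve meeting $\hyp$ at most at a point of the edge $\partial\hyp$ and never entering one of $I^\pm(\ihyp)$ — precisely the behaviour of the outgoing radial null generators in the model $\hyp=\{x^0=0,\ |\vec x|\ge 1\}\subset\R^{1,n}$, whose extensions touch $\hyp$ only on $\partial\hyp$. Producing this generator rigorously (its achronality, the fact that it stays on $\partial\mcDSI$, and that its endpoint lies on $\partial\hyp$) is the main obstacle of the converse; it is the analogue, for topological hypersurfaces with boundary, of the standard Cauchy-horizon generator theorem, and would require either developing that structure here or importing it together with the Cauchy-surface material underlying Theorem~\ref{Tdd1}. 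In view of the two obstacles, I would expect to establish the lemma cleanly for acausal $C^1$ spacelike $\hyp$ and to hedge the general topological case.
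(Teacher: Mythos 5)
Your proposal is honest about its two gaps, and they are genuine; but you should know that the paper itself contains no proof of this statement either. The lemma sits in a suppressed (``restricted'') block, carries the author's own marginal warning that he does not know whether it is correct and that it can be discarded since it is used nowhere, and the proof fragment attached to it is abandoned: the forward direction is a two-sentence sketch (which moreover cites the lemma itself where Lemma~\ref{Ldd1} is meant, and never addresses the possibility that the curve meets $\hyp$ only along its edge), while the backward direction produces an accumulation curve of a sequence of inextendible timelike curves missing $\hyp$ and then stops at the marginal query ``so what?''. So there is no complete paper proof to match, and your hedged conclusion is the appropriate one rather than a defect relative to the paper.

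On the substance: the parts you do complete are correct and go beyond the paper's fragment. The identity $I^\pm(\hyp)=I^\pm(\ihyp)$ (via openness of timelike futures, Proposition~\ref{P2}, hence under the $C^2$ hypothesis) and the easy half of the converse are sound: if $p\notin\mcDSI$ then both domain-of-dependence conditions fail, and concatenating the two timelike witnesses gives an inextendible timelike curve through $p$ missing $\hyp\supset\ihyp$. Note this argument needs only $p\notin\mcDSI$, so it also covers the part of $\partial\mcDSI$ lying outside $\mcDSI$; the genuinely hard case of the converse is thus $p\in\mcDSI\setminus\imcDSI$, i.e.\ Cauchy-horizon points and points of $\hyp$ adjacent to its edge. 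There your appeal to generator structure is not available as stated: Proposition~\ref{PCH1} is proved in the paper only for spacelike $C^1$ hypersurfaces, not for topological hypersurfaces with boundary, and even granting a generator one must still verify that some inextendible extension of it avoids one of the three sets. In the forward direction, your sup construction correctly isolates $\gamma(a)\in\partial I^-(\ihyp)\setminus I^+(\ihyp)$, but $\partial I^-(\ihyp)$ contains, besides $\hyp$, the null hypersurface generated from the edge $\partial\hyp$ (for $\hyp$ a closed disk in $\{x^0=0\}\subset\R^{1,2}$ this is the past light cone of the edge circle), so $\gamma(a)\in\ihyp$ is exactly what remains to be proved; one would need to show that a causal curve exiting $I^-(\ihyp)$ through that null portion, or through $\partial\hyp$ itself, can never have passed through $\imcDSI$. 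Finally, even your invocation of Lemma~\ref{Ldd1} deserves a remark: it is stated under the paper's standing assumption that $\hyp$ has no boundary, so its applicability here requires a (plausible, but unstated) check that its proof survives the presence of an edge.
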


\proof
 $\Longrightarrow$ Let $\gamma$ be as in the statement of the lemma, then
$\gamma$ intersects $\hyp$ by Lemma \ref{Ldd1a}. Since $\imcDSI$ is
open there exists a point $r\in \imcDSI$ to the future of $p$, then
$I^-(r)\cap \hyp$ is an open neighborhood of $\gamma \cap \hyp$.

$\Longleftarrow$ By definition $p\in \mcDSI$. Suppose that
$p\not\in\mcDSI$, then there exists a sequence of  inextendible
timelike curves $\gamma_n$ with $\gamma_n(0)\to p$ such that
$\gamma_n\cap \hyp=\emptyset$. Let $\gamma$ be an accumulation
curve, then there exists $s_*$ such that $\gamma(s_*) \in \hyp$.
 \ptc{ so what ? }

 \qed
}
%
%
%
%

We have the following
characterisation of \emph{interiors} of domains of dependence:

\begin{Theorem}
 \label{Tdd3}
 \MCthreek
Let $\hyp$ be a differentiable acausal spacelike hypersurface.
A point $p\in\mcM$ is in $ \imcDpSI$ if and only if
\bel{eTdd3}
 \mbox{ the set }\ {J^-(p)\cap \hyp}\ \mbox{ is
 non-empty, and compact as a subset of $\ihyp$.}
 \ee
\end{Theorem}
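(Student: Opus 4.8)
The plan is to prove the two implications separately, working inside the interior $\imcDSI$, which is globally hyperbolic by Theorem~\ref{Tdd1}. Three preliminary facts are used throughout. First, since $\hyp$ is a $C^1$ acausal spacelike hypersurface, the transversality remark following Corollary~\ref{Cdd0} gives $\hyp\subset\imcDSI$. Second, $\hyp$ is \emph{closed} in $\imcDSI$: if $x\in\imcDSI$ is a limit of points of $\hyp$, then a timelike inextendible curve through $x$ meets $\hyp$ (definition of $\mcDSI$), so $x\in I^+(\hyp)\cup\hyp\cup I^-(\hyp)$; but $x\in I^\pm(\hyp)$ is impossible, because these sets are open and disjoint from $\hyp$ by achronality, hence cannot contain a limit of $\hyp$-points, leaving $x\in\hyp=\ihyp$. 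Third, acausality means no nonconstant causal curve has both endpoints on $\hyp$; this is the mechanism that prevents approximating causal curves from overshooting $\hyp$.

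For the implication ``$\Rightarrow$'', assume $p\in\imcDpSI$. Nonemptiness of $J^-(p)\cap\hyp$ is immediate, since any past-directed past-inextendible timelike curve from $p$ meets $\hyp$ by definition of $\mcDSpI$ (or $p\in\hyp$). For compactness, let $q_n\in J^-(p)\cap\hyp$; join $q_n$ to $p$ by a past-directed causal curve, extend to an inextendible causal curve $\gamma_n$ through $p$, and parameterise by $\distb$-arclength with $\gamma_n(0)=p$ and $\gamma_n(s_n)=q_n$, $s_n\ge0$. Theorem~\ref{The:accum} yields an inextendible accumulation curve $\gamma$ through $p$, and Lemma~\ref{Ldd1} provides $s_->0$ with $\gamma(s_-)\in I^-(\hyp)$, so $\gamma_n(s_-)\in I^-(\hyp)$ for $n$ large. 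If $s_n>s_-$, then $q_n=\gamma_n(s_n)$ lies to the causal past of $\gamma_n(s_-)\in I^-(\hyp)$, producing a future-directed causal curve from $q_n\in\hyp$ to a point of $\hyp$, which contradicts acausality; hence $s_n\le s_-$. Thus $\{s_n\}$ is bounded, a subsequence converges to some $s_*$, and uniform convergence on compacta gives $q_n\to\gamma(s_*)$. Since $\gamma(s_*)\in\imcDSI$ is a limit of points of $\hyp$, closedness of $\hyp$ in $\imcDSI$ forces $\gamma(s_*)\in\hyp=\ihyp$. Every sequence in $J^-(p)\cap\hyp$ therefore subconverges in $\ihyp$, so the set is compact in $\ihyp$.

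For the converse I would argue by contraposition: assuming $p\notin\imcDpSI$, I produce a sequence witnessing that $J^-(p)\cap\hyp$ is empty or fails to be compact in $\ihyp$. If the set is empty there is nothing to prove, since then $p\notin\mcDSpI$. Otherwise, as $p$ is not interior to $\mcDSpI$, choose $p_k\to p$ with $p_k\notin\mcDSpI$ and past-directed past-inextendible timelike curves $\mu_k$ from $p_k$ avoiding $\hyp$. Passing to an inextendible accumulation curve $\mu$ through $p$, the key point is that the $\mu_k$ crowd against $\hyp$ without crossing it: near a genuine interior point of $\hyp$ the hypersurface locally separates $\mcM$ into $I^\pm(\hyp)$, so a limit curve crossing $\hyp$ transversally there would force the nearby $\mu_k$ to cross as well. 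Hence any meeting of $\mu$ with $\hyp$ occurs at an \emph{edge} point $e\in\overline{\hyp}\setminus\hyp$, and (after a Push-up, Lemma~\ref{Lpushup0}, rendering the relevant segment timelike) points of $\hyp$ arbitrarily close to $e$ lie in $I^-(p)\subset J^-(p)$. These give a sequence in $J^-(p)\cap\hyp$ accumulating at $e\notin\ihyp$, so the set is not compact in $\ihyp$; the remaining case, where $\mu$ avoids $\hyp$ entirely, contradicts $p\in\mcDSpI$ after a push-up to a timelike curve.

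The main obstacle, in both directions, is the possible non-closedness of $\hyp$ in $\mcM$, i.e. the distinction between $\hyp$ and its edge, which is exactly why the statement demands compactness \emph{in} $\ihyp$ rather than mere compactness in $\mcM$. In the forward direction this is tamed by the fact that $\hyp$ is closed \emph{within} the globally hyperbolic region $\imcDSI$ (its edge lying outside $\imcDSI$), together with the acausality argument bounding the parameter $s_n$ at which the approximating curves strike $\hyp$; in the backward direction it reappears as the need to show that escape of approximating timelike curves past $\hyp$ can only happen through the edge, thereby clustering $\hyp$-points of $J^-(p)$ at the edge and violating compactness in $\ihyp$.
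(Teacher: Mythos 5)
Your ``$\Rightarrow$'' direction is fine up to, and including, the extraction of $s_n\to s_*$ and $q_n\to\gamma(s_*)$, but the decisive step is then simply asserted: you write ``Since $\gamma(s_*)\in\imcDSI$ is a limit of points of $\hyp$\dots'', yet nothing you have proved places $\gamma(s_*)$ in $\imcDSI$, or even in $\mcDSI$. All that is known is that $\gamma(s_*)$ lies on the inextendible causal curve $\gamma$ and in $\overline{\hyp}$, and the whole difficulty of this implication is precisely to exclude $\gamma(s_*)\in\overline{\hyp}\setminus\hyp$: this is exactly what separates ``compact as a subset of $\ihyp$'' from ``compact as a subset of $\mcM$'' (Remark~\ref{RTdd3}). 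Your appeal to the closedness of $\hyp$ in $\imcDSI$ is therefore circular -- that fact can only be invoked once the limit point is known to lie in $\imcDSI$, which is what has to be proved. The gap is fillable, but it needs a real argument, for instance: by Lemma~\ref{Ldd1} and acausality, $\gamma$ meets $\hyp$ at exactly one parameter $\hat s$, and it crosses the $C^1$ spacelike $\hyp$ transversally, so $\gamma(s)\in I^+(\hyp)$ for $s$ slightly smaller than $\hat s$ and $\gamma(s)\in I^-(\hyp)$ for $s$ slightly larger; on the other hand $\gamma_n(s)\in J^+(\hyp)$ for $s<s_n$ and $\gamma_n(s)\in J^-(\hyp)$ for $s>s_n$, so by Corollary~\ref{Cpushup} and uniform convergence $\gamma(s)\in\overline{I^+(\hyp)}$ for $s<s_*$ and $\gamma(s)\in\overline{I^-(\hyp)}$ for $s>s_*$; since achronality gives $I^\mp(\hyp)\cap\overline{I^\pm(\hyp)}=\emptyset$, these two facts force $\hat s=s_*$, hence $\gamma(s_*)\in\hyp$. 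Some such step must be supplied.

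The converse is where the proposal genuinely breaks down, and for a logical reason. Arguing by contraposition you assume $p\notin\imcDpSI$, which in particular allows $p\notin\mcDSpI$ altogether; but your final case -- the accumulation curve $\mu$ misses $\hyp$ entirely -- is dismissed because it ``contradicts $p\in\mcDSpI$'', which is not among your hypotheses. That case certainly occurs: if $p\notin\mcDSpI$ one may take for $\mu$ a past-inextendible timelike curve from $p$ missing $\hyp$ directly, and then your argument produces no sequence violating compactness at all, so the contrapositive is not established. (There is a further sub-gap in your edge case: the inference that points of $\hyp$ near $e$ lie in $I^-(p)$ requires $e\in I^-(p)$, hence requires the segment of $\mu$ from $p$ to $e$ not to be an achronal null geodesic; accumulation curves of timelike curves can perfectly well be null geodesics, and Lemma~\ref{Lpushup0} then gives nothing.) The paper avoids all of this by arguing directly rather than contrapositively: it takes a causal curve $\gamma:[0,1]\to\mcM$ from $q\in\hyp$ to $p$, shows that the parameter set $I=\{t:\gamma((0,t])\subset\imcDpSI\}$ is non-empty (transversality near the spacelike $\hyp$) and open, and proves closedness at $t_*=\sup I$ by approximating an arbitrary past-inextendible causal curve $\hat\gamma$ from $\gamma(t_*)$ by causal push-downs starting at $\gamma(t)$ with $t<t_*$: each push-down meets $\hyp$ at a point of $J^-(p)\cap\hyp$ by Lemma~\ref{Ldd1}, and it is precisely the assumed compactness of $J^-(p)\cap\hyp$ in $\ihyp$ that forces these intersection points to accumulate at a point of $\hyp$ lying on $\hat\gamma$; thus every past-inextendible causal curve from $\gamma(t_*)$ meets $\hyp$, whence $\gamma(t_*)\in\imcDpSI$ and $I=(0,1]$. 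The compactness hypothesis has to do quantitative work of this kind somewhere; your contrapositive route never engages it, and that is why it cannot close.
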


\begin{Remark}\label{RTdd3} The set $\hyp=\{t=0,x\in[-1,1]\}\subset \R^{1,1}$
(compare Figure~\ref{FNGNIC}, but note that a different $\hyp$ was
meant there) shows that the condition \eq{eTdd3} cannot be
replaced by the requirement that the set $
\overline{I^-(p)\cap \hyp} $ { is non-empty, and \emph{compact as
a subset of} $\mcM$.}
\end{Remark}

 \proof
For $p\in \imcDpSI$ compactness of  $\overline{I^-(p)\cap \hyp}$
can be established by an argument very similar to that given in
the last part of the proof of Theorem~\ref{Tdd1}, the details are
left to the reader.

In order to prove the reverse implication assume that \eq{eTdd3}
holds, then there exists a  future directed causal curve
$\gamma:[0,1]\to\mcM$  from some point $q\in\hyp$ to $p$. Set
$$I:= \{t\in (0,1]: \gamma(s)\in \imcDpS\ \mbox{\ for all $0<s\le t$}\}\subset
(0,1]\;.$$
Now, elementary considerations show that for $C^1$, spacelike, acausal
hypersurfaces we have $\gamma(t)\in \imcDpSI$ for  $t>0$ small enough, hence $I$
is not empty.
Clearly $I$ is open in $(0,1]$. In order to show that it equals
$(0,1]$ set
$$
t_*:= \sup I\;.
$$
Consider any past-inextendible past-directed
causal curve $\hat \gamma$ starting at $\gamma(t_*)$. For $t<t_*$
let $\hat \gamma_t$ be a family of past-inextendible causal
push-downs of $\hat\gamma$ which start at $\gamma(t)$, and which
have the property that
$$
 \distb(\gamma_t(s),\gamma(s))\le  |t-t_*|
 \ \mbox{ for} \  0\le s \le 1/|t-t_*|\;.
$$
Then $\hat\gamma_t$ intersects
$\hyp$ at some point $q_t\in J^-(p)$. Compactness of
$J^-(p)\cap\hyp$ implies that the curve $t\to q_t \in \hyp$
accumulates at some point $q_*\in \hyp$, which clearly is the
point of intersection of $\gamma$ with $\hyp$. This shows that
every causal curve $\gamma$ through $\gamma(t_*)$ meets $\hyp$, in
particular $\gamma(t_*)\in \mcDSpI$. So $I$ is both open and closed in $(0,1]$,
hence $I=(0,1]$, and the result is proved.
\qed

\section{Cauchy horizons}
\label{SCauchyHorizons}
\newcommand{\mcHpSI}{{\mcH}^+_I(\hyp)}
\newcommand{\mcHmSI}{{\mcH}^-_I(\hyp)}
\newcommand{\mcHSI}{{\mcH}_I(\hyp)}

\ptcx{this section needs expanding,
also this is a special case of acausal boundaries, so maybe that's
the way to handle that}

\begin{Definition}
\label{DCauchyHorizon} Let $\hyp$ be an achronal topological hypersurface. The
\emph{future Cauchy horizon} $\mcHpSI$ of $\hyp$ is defined as
 \ptcx{changed to $\mcHpSI_J$ needs crosschecking}
$$
\mcHpSI= \mcDSpI\setminus I^-(\mcDSpI)
 \;,
$$
with an obvious corresponding definition for the \emph{past Cauchy
horizon} $\mcHmSI$. One defines the \emph{Cauchy horizon} as
$$
 \mcHSI= \mcHmSI\cup\mcHpSI
 \;.
$$
\end{Definition}

\begin{coco}
Our definition follows that of Penrose~\cite{PenroseDiffTopo}.
Similarly to the domains of dependence, the usual notation for Cauchy horizons
is $\mcH$ and not $\mcH_I$, and we will sometimes write so. The analogous
definition of future
Cauchy horizon  with
$\mcDSpI$   replaced by $\mcDSpJ$ leads in general to essentially
different sets for continuous metrics~\cite{ChGrant}.
\end{coco}

It is instructive to consider a few examples:

\begin{Example}\label{Exdd1c}
Let $\hyp=\{x^0=0\}$ in Minkowski space-time $\R^{1,n}$, where $x^0$
is the usual time coordinate on $\R^{1,n}$. Then $\mcHSI=\emptyset$.
\end{Example}

\begin{Example}\label{Exdd1cc}
Let $\hyp$ be the open unit ball in $\R^n$, viewed as a subset of $\{x^0=0\}$ in
Minkowski space-time $\R^{1,n}$, where $x^0$
is the usual time coordinate on $\R^{1,n}$. Then $\mcHpSI $ is the intersection
of the past-light cone of the point $(x^0=1,\vec x = 0)$ with $\{x^0\ge 0\}$.
\end{Example}

\begin{Example}\label{Exdd4c}
Let $\hyp=\{\eta_{\mu\nu}x^\mu x^\nu=-1\;, x^0>0\}$ be the upper
component of the unit spacelike hyperboloid in Minkowski
space-time. Then $\mcHpSI=\emptyset $, while $\mcHmSI $ coincides with the
future
light-cone of the origin.
\end{Example}

\begin{Example}\label{Exdd4cc}
\restrict
{Let $\hyp$ be the three-dimensional sphere $\{t=0\}$ in the Taub-NUT space-time,
described in Appendix~\ref{ATaubNut}.  Then
 \ptcx{do something with this; one can use the text which is restricted}
}
The Taub-NUT space-times~\cite{Taub,NUT} provide examples of space-times with an achronal spacelike $S^3$ and with two corresponding past and future Cauchy horizons, each diffeomorphic to $S^3$~\cite{MisnerTaub}.
\end{Example}

For any open set set $\Omega$ one has $\Omega\setminus
I^{-}(\Omega)=\emptyset$, which shows that
\bel{SCH1}
 \imcDSI\cap \mcHpSI=\emptyset
 \;.
\ee
It follows that $\mcHpSI$ is a subset of the topological boundary
$\partial\mcDSpI$ of $\mcDSpI$:
\bel{mCHboun}
 \mcHpSI \subset \partial\mcDSpI:= \overline{\mcDSpI}\setminus \imcDpS
  \;.
 \ee

 \ptcx{this might be overlapping with the joint analysis with greg in the continuous case?
 crossrefer appropriately if kept}
\medskip

The important notion of \emph{generators} of  horizons stems from the following result in which we assume, for simplicity,
 \ptcx{do a more general statement allowing more general hypersurfaces, with a different proof?; one could use  Lemma~\ref{Ldd1} in the general case of not necessarily spacelike hypersurfaces} that $\hyp$ is differentiable and spacelike:

\begin{Proposition}
\label{PCH1}
Let $\hyp$ be a spacelike $C^1$ hypersurface in $\Mgthreek$.
For any $p\in \mcHpSI$ there exists a  past directed
null geodesic  $\gamma_p\subset \mcHpSI$ starting at $p$
which either does not have an endpoint in $\mcM$, or has an endpoint
on $\ohyp\setminus \hyp$.
 \ptcx{ could use limit-geodesics
in general}
\end{Proposition}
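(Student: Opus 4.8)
The plan is to realise the generator as an accumulation curve of a family of timelike curves approaching $p$ from its future, exploiting the defining feature of the horizon that no point of $\mcDSpI$ lies to the timelike future of a horizon point. First I would record that $p\in\mcHpSI$ means $p\notin I^-(\mcDSpI)$, i.e.\ $I^+(p)\cap\mcDSpI=\emptyset$. By Corollary~\ref{Cpushup}, $p\in J^+(p)\subset\overline{I^+(p)}$, so I can choose $q_n\in I^+(p)$ with $q_n\to p$; by the previous observation each $q_n\notin\mcDSpI$. Hence for every $n$ there is a past-directed, past-inextendible timelike curve $\lambda_n$ from $q_n$ which fails to meet $\hyp$ ``precisely once''; since $\hyp$ is achronal and $\lambda_n$ timelike, this forces $\lambda_n\cap\hyp=\emptyset$. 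The key observation is that for any parameter value the past sub-arc of $\lambda_n$ from that value is \emph{again} a past-inextendible past-directed timelike curve missing $\hyp$, so in fact $\lambda_n\cap\mcDSpI=\emptyset$ along its whole length.

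Next I would reparametrise each $\lambda_n$ by $\backmg$-arclength with $\lambda_n(0)=q_n$, making them uniformly Lipschitz; by Proposition~\ref{Paccum} a subsequence accumulates (in the past-directed version, via the meta-rule) at a past-directed causal curve $\gamma:[0,\infty)\to\mcM$ with $\gamma(0)=p$, and by Lemma~\ref{Lem:inex} this $\gamma$ is past-inextendible. Because every $\lambda_n$ avoids $\mcDSpI$, the limit obeys $\gamma\subset\overline{\mcM\setminus\mcDSpI}$; together with $p\in\mcDSpI$ and \eq{mCHboun} this already locates $p$, and in fact the whole accumulation curve, on the topological boundary $\partial\mcDSpI$.

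The hard part is to upgrade ``on $\partial\mcDSpI$'' to $\gamma\subset\mcHpSI$, i.e.\ to show simultaneously that $\gamma\subset\mcDSpI$ and $\gamma\cap I^-(\mcDSpI)=\emptyset$; this is the step I expect to be the main obstacle, precisely because the timelike-curve definition of $\mcDSpI$ makes it neither open nor closed. Here I would use that $\mcDSpI\subset J^+(\hyp)$, whence $q_n\in I^+(p)\subset I^+\!\big(J^+(\hyp)\big)=I^+(\hyp)$ by Lemma~\ref{Lpushup0}; a causal curve running from $I^+(\hyp)$ into the \emph{open} set $I^-(\hyp)$ must cross $\hyp$, so $\gamma\cap I^-(\hyp)=\emptyset$ and $\gamma\subset\overline{J^+(\hyp)}$. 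Combining $\gamma\subset\overline{\mcDSpI}\cap J^+(\hyp)$ with Proposition~\ref{Pdd1} (applied at interior points of $\mcDSpI$ just to the past of $p$, which exist since $p$ lies on $\partial\mcDSpI$ of a spacelike $C^1$ hypersurface) yields $\gamma\subset\mcDSpI$, hence $\gamma\subset\mcHpSI$. Granting this, achronality of $\gamma$ is immediate and direction-free: if $\gamma(s)\in I^+(\gamma(s'))$ for some $s\neq s'$, then $\gamma(s)\in\mcDSpI$ witnesses $\gamma(s')\in I^-(\mcDSpI)$, contradicting $\gamma(s')\in\mcHpSI$. Being an achronal causal curve, $\gamma$ is a null geodesic by Proposition~\ref{Pro:acausalgeo}.

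Finally, for the endpoint alternative I would let $\gamma_p$ be the maximal past sub-arc of $\gamma$ contained in $\mcHpSI$ and a null geodesic through $p$. If $\gamma_p$ is past-inextendible in $\mcM$ there is nothing more to prove. Otherwise it has a past endpoint $\hat p\in\mcM$, so $\hat p\in\overline{\mcHpSI}\subset\overline{\mcDSpI}$. If $\hat p\notin\ohyp$, I would rerun the whole construction at $\hat p$ to produce a past null-geodesic generator through $\hat p$ in $\mcHpSI$ that prolongs $\gamma_p$, contradicting that $\hat p$ is an endpoint; and $\hat p$ cannot lie in $\hyp$ itself, because for a spacelike $C^1$ hypersurface a neighbourhood of $\hyp$ lies in $\imcDSI$, which is disjoint from $\mcHpSI$ by \eq{SCH1}, so the horizon meets $\ohyp$ only along $\partial\hyp$. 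The sole remaining possibility is therefore $\hat p\in\ohyp\setminus\hyp$, as claimed.
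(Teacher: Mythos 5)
Your construction of the candidate generator --- an accumulation curve of past-inextendible timelike curves missing $\hyp$, emanating from points $q_n\to p$ that lie outside $\mcDSpI$ --- is essentially the paper's own starting point, and that part is sound. The genuine gap is in the middle step, where you claim that the \emph{whole} past-inextendible accumulation curve $\gamma$ satisfies $\gamma\cap I^-(\hyp)=\emptyset$ and then $\gamma\subset\mcHpSI$. The lemma you invoke --- ``a causal curve running from $I^+(\hyp)$ into the open set $I^-(\hyp)$ must cross $\hyp$'' --- is false unless $\hyp$ is closed as a subset of $\mcM$, and the non-closed case is exactly the one this proposition is designed for (otherwise the alternative ``endpoint on $\ohyp\setminus\hyp$'' would be vacuous). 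Take Example~\ref{Exdd1cc}, with $\hyp$ the open unit ball in $\{x^0=0\}\subset\R^{1,n}$: the curves $\lambda_n$ can slip around the edge $\ohyp\setminus\hyp$ and enter $I^-(\hyp)$ without ever meeting $\hyp$, and any past-inextendible accumulation curve, once past the edge, reaches the region $\{x^0<0\}$, no point of which belongs to $\mcDSpI$. So ``$\gamma\subset\mcHpSI$'' is false for the full curve in this example, for \emph{every} admissible choice of the $\lambda_n$'s; only the segment of $\gamma$ from $p$ up to its first meeting with $\ohyp$ can lie on the horizon, and this truncation has to be performed \emph{before} the containment argument, which is what the paper does at the outset. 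Your final fallback --- taking the maximal past sub-arc of $\gamma$ inside $\mcHpSI$ --- does not repair this, because once the intermediate claims collapse you have no argument that this sub-arc is anything more than the single point $p$.

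Relatedly, the tool you propose for horizon containment cannot do that job. Proposition~\ref{Pdd1} places $I^-(p')\cap J^+(\hyp)$ inside $\mcDSpI$ for $p'\in\mcDSpI$; but the points of the generator are not in $I^-(p)$, nor in $I^-$ of interior points just to the past of $p$ (the curve is achronal, which is what you are trying to prove), so the hypothesis of Pdd1 is never met; and the inference ``$\gamma\cap I^-(\hyp)=\emptyset$, hence $\gamma\subset\overline{J^+(\hyp)}$'' is a non sequitur in any case, since points spacelike to $\hyp$ lie in neither set. What actually works --- and is the paper's key step --- is a push-up argument localized at each point $q$ of the \emph{truncated} curve: take an arbitrary past-inextendible past-directed timelike curve $\lambda$ through $q$, pick $r$ on $\lambda$ slightly to the past of $q$ inside a neighbourhood avoiding $\hyp$, observe that the concatenation of the generator segment from $p$ to $q$ with the timelike arc from $q$ to $r$ is causal but not a null geodesic, deform it by Corollary~\ref{CPushup0} to a timelike curve from $p$ to $r$ still avoiding $\hyp$, and then use $p\in\mcDSpI$ to force the remainder of $\lambda$ to meet $\hyp$. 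This yields $q\in\mcDSpI$ for every $q$ on the truncated curve; achronality (and the exclusion $q\notin I^-(\mcDSpI)$) then follows from Lemma~\ref{Ldd1}, since a point in the interior of $\mcDSpI$ would force the accumulation curve to intersect $\hyp$, which it does not.
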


\begin{Remark}
 There might be more than one such geodesic for some points on the horizon.
\end{Remark}

\proof
Let $p\in \mcHpSI$, then there exists a sequence of points $p_n\not \in \mcDSpI$ which converge to $p$, and past inextendible timelike curves $\gamma_n$ through $p_n$ that do not meet $\hyp$.
Let $\gamma $ be an accumulation curve of the $\gamma_n$ through $p$. Then $\gamma $ does not meet $ \hyp$: indeed, if it did, then the $\gamma_n$'s would be meeting $\hyp$ as well for all $n$ large enough.
   \ptcx{needs a comment, think about the set-up, needs an acausal? think of a piece
   of hyp which is null and ends there, with all generators sliding on it}
If $\gamma $ meets $\ohyp $, we let $\gamma_p$ be the segment of $\gamma$ from $p$ to the intersection point with $\ohyp $, otherwise we let $\gamma_p=\gamma$.

The curve $\gamma_p$ is achronal: otherwise  $\gamma_p$  would enter the interior of $\mcDSpI$, but then it would have to intersect $\hyp$ by Lemma~\ref{Ldd1}.
We can thus invoke  Proposition~\ref{Pro:acausalgeo}, p.~\pageref{Pro:acausalgeo}, to conclude  that $\gamma_p$ is a null geodesic.

It remains to show that $\gamma_p\subset \mcHpSI$. Let $\gamma$ be an inextendible past-directed timelike curve through a point $q$ on $\gamma_p$, with $q\not\in\ohyp$. Let $\mcO$ be a neighborhood of $q$ that does not meet $\hyp$,  and let $r\in \mcO $ be a point on $\gamma$ lying to the timelike past of $q$. By Corollary~\ref{CPushup0} there exists a timelike curve $\gamma_1$ from $p$ to $r$. Consider the past-inextendible timelike curve, say $\gamma_2$, obtained by following $\gamma_1$ from $p$ to $r$, and then following $\gamma$ to the past. Since $p\in \mcHpSI$ the curve $\gamma_2$ has to meet $\hyp$. As $\gamma_1$ does not meet $\hyp$, it must be the case that $\gamma$ meets $\hyp$, and so $q\in\mcHpSI$.
\qed

For any $p\in \mcHpSI$ let $\hat \gamma_p$ denote a maximal future
extension of the geodesic segment $\gamma_p$ of
Proposition~\ref{PCH1}, and set $\tilde \gamma_p= \gamma_p\cap
\mcHpSI$. (Note that $\hat \gamma_p$ might exit $\mcHpSI$ when
followed to the future, an example of this can be seen in
Figure~\ref{FNGNIC}.) Then $\tilde \gamma_p$ is called \emph{a
generator of $\mcHpSI$.} Using this terminology, Proposition
\ref{PCH1} can be reworded as the property that every $p\in \mcHpSI$
is either an interior point or a future end point of a generator of
$\mcHpSI$. If $\ohyp =\hyp$, then generators of $\mcHpSI$ do not have
past end points, remaining forever on $\mcHpSI$ to the past.
%


\subsection{Semi-convexity of future horizons}
 \label{Aconv}

 \ptcx{  UPDATE FOR
CONTINUOUS CAUSALITY? section synchronized on 14X11}
A hypersurface $\mcH\subset \mcM$ will be said to be
{\em future null geodesically ruled} if every point $p\in \mcH$
belongs to a future inextensible null geodesic $\Gamma \subset
\mcH$; those geodesics are called {\em the generators} of
$\mcH$. We emphasize that the generators are allowed to have past
 \ptcx{the generator buisiness seems to require geodesics; perhaps approximate ones? but
 then leave the smooth treatment here, and update sometimes later?}
endpoints on $\mcH$, but no future endpoints.  {\em Past null
geodesically ruled} hypersurfaces are defined by changing the time
orientation. Examples of future geodesically ruled hypersurfaces
include past Cauchy horizons ${\cal
  D}^-(\hyp)$ of achronal sets $\hyp$ of Proposition~\ref{PCH1} (compare~\cite[Theorem~5.12]{PenroseDiffTopo})
   and black hole event horizons $\dot
J^-(\scri^+)$~\cite[p.~312]{HE}.
 \ptcx{synchronize with the relevant section}

Note that our definition involves explicitly geodesics, and therefore throughout this section we assume that the metric is twice-continuously differentiable.
It should be kept in mind that the notion of the generator of a horizon in space-times with merely continuous metrics is not completely clear, so allowing metrics of lower differentiability might require a reformulation of the problem.
 \ptcx{can I do this?}

We always assume that the space-time dimension $\mathrm{dim} \mcM$ is $n+1$.

Suppose that $\cal O$ is a domain in
$\R^n$. Recall that a continuous function $f: \cal O\to \R$ is
called semi--convex if there exists a $C^2$ function $\phi: \cal
O\to \R$ such that $f+\phi$ is convex. We shall say that the graph
of $f$ is a semi--convex hypersurface if $f$ is semi--convex. A
hypersurface $\mcH$ in a manifold $\mcM$ will be said semi--convex if
$\mcH$ can be covered by coordinate patches ${\cal U}_\alpha$ such
that $\mcH\cap {\cal U}_\alpha$ is a semi--convex  graph for each
$\alpha$.

Consider an achronal hypersurface $\mcH\ne \emptyset$ in a globally
hyperbolic space--time $(\mcM,g)$.
 \ptcx{$\Mgz$?}
Let $t$ be a time function on $\mcM$
which induces a diffeomorphism of $\mcM$ with $\R \times \hyp$ in the
standard way~\cite{GerochDoD,Seifert}, with the level sets
$\hyp_\tau\equiv \{p| t(p)=\tau\}$ of $t$ being Cauchy surfaces.
As usual we identify $\hyp_0$ with $\hyp$ and, in the
identification above, the curves $\R\times \{q\}$, $q\in\hyp$, are
integral curves of $\nabla t$. Define
 \ptcx{problem with the integral curves of nabla t if the metric is only continuous, but
 this is most likely not needed}
\begin{equation}
  \label{sh}
  \sHh =\{q\in \hyp\ |\  \R\times \{q\}\ \mbox{\rm intersects}\ \mcH \}\ .
\end{equation}
For $q\in\sHh $ the set $(I\times {q} )\cap \mcH$ is a point by
achronality of $\mcH$, which will  be denoted by $(f(q),q)$.
Thus an achronal hypersurface $\mcH$ in a globally hyperbolic
space--time is a graph over $\sHh $ of a function $f$. The
invariance-of-the-domain theorem shows that $\sHh $ is an open
subset of $\hyp$. We have the following:

\begin{theorem}
\label{semiConvT1}\chindex{horizons!semi-convexity of} Let
$\mcH\ne \emptyset$ be an achronal future null geodesically
ruled hypersurface in a globally hyperbolic space--time
$(\mcM=\R\times\hyp,g)_{C^2}$. Then $\mcH$ is the graph of a
semi--convex function $f$ defined on an open subset $\sHh $ of
$\hyp$, in particular $\mcH$ is semi--convex.
\end{theorem}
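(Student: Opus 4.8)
The plan is to prove semi-convexity of $f$ locally, by producing at each base point a $C^2$ lower support function whose Hessian is bounded below by a constant that is uniform on a fixed neighborhood. Recall the elementary characterization: for a continuous $f$ on a convex subset of $\R^n$, the function $f+\frac{C}{2}|\cdot|^2$ is convex if and only if at every $q_0$ there is an affine function minus $\frac{C}{2}|\cdot-q_0|^2$ lying below $f$ and touching it at $q_0$, and such a lower paraboloid is furnished by any $C^2$ function $g\le f$ with $g(q_0)=f(q_0)$ and $D^2 g\ge -C\,\Id$ near $q_0$. Thus it suffices to cover $\sHh$ by coordinate balls on each of which $f$ admits, at every point, a $C^2$ lower support function with Hessian uniformly bounded below; semi-convexity of $\mcH$ in the sense of the definition then follows. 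I would fix $q_0\in\sHh$, work in the product coordinates on $\mcM=\R\times\hyp$ furnished by global hyperbolicity (which also gives the graph structure used below), and restrict attention to a relatively compact $W\ni q_0$ with $\overline W\subset\sHh$.

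The support functions arise from past light cones of points lying slightly to the future along the generators. Fix a single $\delta>0$ (constrained below). For $q\in W$ set $p=(f(q),q)\in\mcH$, let $\Gamma_p$ be a future generator through $p$ provided by the hypothesis that $\mcH$ is future null geodesically ruled, and let $p^+=p^+(q)$ be the point reached from $p$ along $\Gamma_p$ after affine parameter $\delta$; this exists because generators have no future endpoints. Since $p,p^+\in\mcH$ and $p\in J^-(p^+)$, achronality of $\mcH$ forces $\mcH\cap I^-(p^+)=\emptyset$. Because the metric is $C^2$, the past cone $\partial I^-(p^+)$ is, near $p$, the diffeomorphic image under $\exp_{p^+}$ of a neighborhood of the past null ray hitting $p$ (cf.\ the construction in Proposition~\ref{PC1}), hence a $C^2$ null hypersurface, provided $\delta$ is smaller than the first past conjugate distance along $\Gamma_p$. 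As $\nabla t$ is timelike, this cone is transverse to its integral curves, so near $p$ it is a graph $t=g_q(q')$ over $\hyp$; and since moving to smaller $t$ from $p$ is past-directed timelike, $I^-(p^+)$ lies locally on the side $\{t<g_q\}$. Hence $\mcH\cap I^-(p^+)=\emptyset$ reads $f\ge g_q$ near $q$ with $g_q(q)=f(q)$, so $g_q$ is a $C^2$ lower support function for $f$ at $q$.

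It remains to bound $D^2 g_q$ below uniformly in $q\in W$, and this is where the $C^2$ hypothesis and compactness enter. The base points $p=(f(q),q)$, $q\in\overline W$, lie in a compact $\mcK\subset\mcM$, and the cone data (vertex $p^+$ together with a past null direction) range over a compact set of null directions over $\mcK$; therefore the first past conjugate distance along geodesics issuing from $\mcK$ is bounded below by a positive constant, and I would take $\delta$ below this bound. Since the Hessian estimate degrades as $\delta\to 0$, it is essential that $\delta$ be one fixed positive number, the same for all $q$. The coordinate Hessian $D^2 g_q$ of the cone at $p$ is then governed, via the geodesic and Jacobi equations, by the metric and its first two derivatives along the segment from $p^+$ to $p$ and by $\delta$; in the flat model $g_q(q')=t(p^+)-|q'-q^+|$ one computes $D^2 g_q=-|q'-q^+|^{-1}(\Id-\hat n\otimes\hat n)\ge -\delta^{-1}\Id$, and the general case differs by terms bounded on $\mcK$. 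Thus $D^2 g_q\ge -C\,\Id$ on a fixed neighborhood of $q$, with $C=C(\mcK,\delta)$ independent of $q\in W$.

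Combining these, every $q\in W$ admits a lower paraboloid support for $f$ of opening at most $C$, so $f+\frac{C}{2}|\cdot|^2$ is convex on $W$; covering $\sHh$ by such $W$'s shows $f$ is semi-convex, proving the theorem. The main obstacle is precisely the uniform lower bound on $D^2 g_q$: it does not suffice that each cone be smooth at its base point, since one must simultaneously keep the base point bounded away from the vertex (where the Hessian diverges to $-\infty$) and before the first conjugate point (where the cone loses $C^2$ regularity), uniformly over $q\in W$ and over the a priori uncontrolled assignment $q\mapsto\Gamma_p$ of generator directions. The resolution is to fix one small $\delta$ and invoke compactness of the set of admissible (base point, null direction) pairs — exactly the step that forces the metric to be twice continuously differentiable, so that conjugate distances and cone Hessians depend continuously on these data.
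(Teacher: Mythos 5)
Your route is genuinely different from the paper's. The paper argues variationally: by the Fermat principle (Proposition~\ref{P1}) the graph function satisfies $f=\sigma-\mu$, where $\mu$ is an infimum of lengths of null lifts reaching $\mcH_\sigma$, and the midpoint estimate $\frac{\mu(\vx+{\vec h})+\mu(\vx-{\vec h})}{2}-\mu(\vx)\le C|{\vec h}|_\delta^2$ is obtained by comparing a minimizing generator with explicitly perturbed competitor curves. You instead construct, at each point of $\mcH$, a lower support hypersurface from the past light cone of a point $p^+$ displaced along the generator, and make the support-paraboloid openings uniform by compactness; this is essentially the approach of \cite{ChDGH}. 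Structurally this is a legitimate alternative, and several of your intermediate steps are correct or easily repaired: the support property $f\ge g_q$ survives cut points if you work with the locally defined cone $\dot J^-(p^+;\mcO)$ of Proposition~\ref{P3} inside an elementary neighborhood and invoke the push-up Lemma~\ref{Lpushup0}; and the existence of $p^+$ at uniform affine distance $\delta$ does \emph{not} follow from ``generators have no future endpoints'' (inextendibility gives no lower bound on remaining affine length in an incomplete space-time), but it does follow from uniqueness of geodesics together with the uniform ODE existence time for initial data in the compact set of $h$-unit null vectors over $\mcK=\{(f(q),q):q\in\overline W\}$ --- which in turn presupposes continuity of $f$, a point you should establish first (it follows from achronality, or as in Corollary~\ref{C1}).

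The genuine gap is the regularity bookkeeping at the stated $C^2$ differentiability. For a $C^2$ metric the geodesic spray is built from the Christoffel symbols and hence is only $C^1$, so the exponential map, and therefore your cones $\exp_{p^+}(\cdot)$, are only $C^1$; this is precisely the two-derivative loss recorded in Remark~\ref{R25X11.1}. Consequently the ``$C^2$ support functions $g_q$'' on which your whole argument rests do not exist at this regularity, and the crucial uniform bound $D^2g_q\ge -C\,\Id$ (``the general case differs by terms bounded on $\mcK$'') implicitly differentiates $\exp_{p^+}$ twice, i.e.\ costs \emph{three} derivatives of the metric. This is not a peripheral technicality but the heart of your proof: repairing it at $C^2$ requires a new ingredient, e.g.\ defining the null Weingarten map of the $C^1$ cone through Jacobi fields (which involve only the $C^0$ curvature along each generator), bounding it by Riccati comparison, and then proving a one-sided second-order Taylor inequality for the merely $C^1$ graph $g_q$; or approximating $g$ by smooth metrics with narrower cones and passing the supports to the limit. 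The paper's variational proof is engineered exactly to avoid this issue: it never differentiates the geodesic flow, only solutions of the null-lift ODE \eq{nulll}, whose right-hand side involves the metric components themselves rather than their derivatives, so that the two parameter-derivatives in the expansion \eq{star} cost precisely the two available derivatives of $g$. As written, your argument proves the theorem for $C^3$ metrics, but not at the stated $C^2$ threshold.
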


\begin{proof}
  As discussed above, $\mcH$ is the graph of a function $f$.  The idea
  of the proof is to show that $f$ satisfies a variational principle,
  the semi--concavity of $f$ follows then by a standard argument. Let
  $p\in \mcH$ and let $\cal O$ be a coordinate patch in a neighborhood
  of $p$ such that $x^0 = t$, with $\cal O$ of the form $I\times
  B(3R)$, where $B(R)$ denotes a coordinate ball centered at $0$ of
  radius $R$ in $\R^3$, with $p=(t(p), 0)$. Here $I$ is the range of
  the coordinate $x^0$, we require it to be a bounded interval the
  size of which will be determined later on. We further assume that
  the curves $I\times \{\vx\}$, $\vx\in B(3R)$, are integral curves of
  $\nabla t$. Define
  $$
  {\cal U}_0=\{\vec x \in B(3R)| \mathrm{\ the\ causal\ path }\
  I\ni t \to (t,\vec x) \ \mathrm{intersects }\ \mcH\}\ .
  $$
  We note that ${\cal U}_0$ is non--empty, since $0\in {\cal U}_0$.
  Set \be\label{Hsection} \mcH_\sigma=\mcH\cap \hyp_\sigma \ , \ee and
  choose $\sigma$ large enough so that ${\cal O}\subset I^-(
  \hyp_\sigma)$.
  \ptcx{requires making cal O smaller? and problems if I not open? take $\check I$, but then needs to make cal O smaller?}
Now $p$ lies on a future inextensible generator
 \ptcx{C two metric?}
  $\Gamma$ of $\mcH$, and global hyperbolicity of $(\mcM,g)$ implies that
  $\Gamma\cap \hyp_\sigma $ is nonempty, hence $\mcH_\sigma$ is
  nonempty.

  For $\vx\in B(3R)$ let $\cP(x)$ denote the collection of piecewise
  differentiable future directed null curves $\Gamma: [a,b]\to \mcM$ with
  $\Gamma(a)\in \R\times \{\vx\}$ and $\Gamma(b)\in \mcH_\sigma$. We
  define \be \label{tau} \tau(\vx) = \sup_{\Gamma \in \cP(x)}
  t(\Gamma(a))\ . \ee We emphasize that we allow the domain of
  definition $[a,b]$ to depend upon $\Gamma$, and that the ``$a$"
  occurring in $ t(\Gamma(a))$ in \eq{tau} is the lower bound for the
  domain of definition of the curve $\Gamma$ under consideration.

We have the following result (compare~\cite{AP,GMP,Perlick}):

\begin{Proposition}[Fermat principle]\label{P1}\chindex{Fermat's
    principle} For $\vx\in\cU_0$ we have
  $$
  \tau(\vx)=f(\vx)\ .$$
\end{Proposition}
\begin{proof}
  Let $\Gamma$ be any generator of $\mcH$ such that
  $\Gamma(0)=(f(\vx),\vx)$, clearly $\Gamma\in\cP(x)$ so that $
  \tau(\vx)\ge f(\vx)$. To show that this inequality has to be an
  equality, suppose for contradiction that $ \tau(\vx)>f(\vx)$, thus
  there exists a null future directed curve $\Gamma$ such that
  $t(\Gamma(0))> f(\vx)$ and $\Gamma(1)\in\mcH_\sigma \subset \mcH$.
  Then the curve $\tilde \Gamma$ obtained by following $\R\times
  \{\vx\}$ from $(f(\vx),\vx)$ to $(t(\Gamma(0)),\vx)$ and following
  $\Gamma$ from there on is a causal curve with endpoints on $\mcH$
  which is not a null geodesic. By Proposition~\ref{P14X11.1} the curve
  $\tilde \Gamma$ can be deformed to a timelike curve with the same
  endpoints,
  \ptcx{requires C2 or C11 cause of the deformation lemma}
  which is impossible by achronality of $\mcH$. \qed
\end{proof}

The Fermat principle, Proposition \ref{P1}, shows that $f$ is a
solution of the variational principle \eq{tau}. Now this variational
principle can be rewritten in a somewhat more convenient form as
follows: The identification of $\mcM$ with $\R\times\hyp$ by flowing
from $\hyp_0\equiv \hyp$ along the gradient of $t$ leads to a
global decomposition of the metric of the form
$$
g = \alpha(-dt^2 + h_t)\ , $$ where $h_t$ denotes a $t$--dependent
family of Riemannian metrics on $\hyp$. Any future directed
differentiable null curve $\Gamma(s)=(t(s),\vg(s))$ satisfies
$$
\frac{dt(
s)
}{ds} 
=\sqrt{h_{t(s)}(\dotg,\dotg)}\ , $$ where $\dotg$ is a shorthand for
$d\vg(s)/ds$. It follows that for any $\Gamma\in\cP(x)$ it holds
that
\begin{eqnarray*}
  t(\Gamma(a)) & = & t(\Gamma(b)) - \int _a ^b \frac{dt}{ds} ds \\ & =
  & 
  \sigma- \int _a ^b \sqrt{h_{\tilde \sigma}(\dotg,\dotg)} ds\ .
\end{eqnarray*}
This allows us to rewrite \eq{tau} as \be \label{taun} \tau(\vx) =
\sigma - \mu(\vx)\ , \qquad \mu(\vx)\equiv\inf_{\Gamma \in
  \cP(x)} \int _a ^b \sqrt{h_{t(s)}(\dotg,\dotg)} ds\ .  \ee We note that
in static space--times $\mu(\vx)$ is the Riemannian distance from
$\vx$ to $\mcH_\sigma$. In particular Equation \eq{taun} implies the
well known fact, that in globally hyperbolic static space--times
Cauchy horizons of open subsets of level sets of $t$ are graphs of
the distance function from the boundary of those sets.

Let $\vg:[a,b]\to \hyp$ be a piecewise differentiable path, for
any $p\in \R\times \{\vg(b)\}$ we can find a null future directed
curve $\hat\gamma :[a,b] \to \mcM$ of the form $\hat\gamma(s)
=(\phi(s),\vg(s))$ with future end point $p$ by solving the problem
\be \cases{ \phi(b)=t(p)\ , & \cr
  \displaystyle{\frac{d\phi(s)}{ds}}=\sqrt{h_{\phi(s)}(\dotg(s),\dotg(s))}\
  . & } \label{nulll}\ee
  %
  %
  %
  %
The path $\hat\gamma$ will be called the {\em null lift of $\gamma$
with endpoint $p$}.

As an example of application of Proposition \ref{P1} we recover the
following well known result~\cite{PenroseDiffTopo}:

\begin{Corollary}\label{C1} $f$ is Lipschitz continuous
  on any compact subset of its domain of definition.
\end{Corollary}

\begin{proof}
For $\vec y,\vec z\in B(2R)$ let $K\subset \R\times B(2R)$ be a
compact set which contains all the null lifts $\Gamma_{\vec y, \vz}$ of
  the coordinate segments $[\vec y,\vec z]:=\{\lambda \vec y
  +(1-\lambda)\vec z\ , \ \lambda \in [0,1]\}$ with endpoints
  $(\tau(\vec z),\vec z)$. Define \be \label{Cdef} \hat C = \sup
  \{\sqrt{h_p(n,n)}| p\in K, |n|_\delta =1\}\;, \ee where the supremum
  is taken over all points $p\in K$ and over all vectors $n\in T_p\mcM$
  the coordinate components $n^i$ of which have Euclidean length
  $|n|_\delta$ equal to one.  Choose $I$ to be a bounded interval
  large enough so that $K\subset I\times B(2R)$ and, as before, choose
  $\sigma$ large enough so that $I\times B(2R)$ lies to the past of
  $\hyp_\sigma$.  Let $\vec y,\vec z\in B(2R)$ and consider the
  causal curve $\Gamma=(t(s),\gamma(s))$ obtained by following the
  null lift $\Gamma_{{\vec y}, \vz}$ in the parameter interval $s\in[0,1]$,
  and then a generator of $\mcH$ from $(\tau(\vec z),\vec z)$ until
  $\mcH_\sigma$ in the parameter interval $s\in[1,2]$. Then we have
$$
  \mu(\vz)=\int_1^2 \sqrt{h_{\tilde \sigma}(\dotg,\dG)} ds
  \;.
$$
  Further
  $\Gamma\in \cP(x)$ so that
\begin{eqnarray}
  \mu({\vec y}) & \le & \int_0^2 \sqrt{h_{\tilde \sigma}(\dotg,\dotg)}
  ds\nonumber \\ & = & \int_0^1 \sqrt{h_{\tilde \sigma}(\dotg,\dotg)} ds +
  \int_1^2 \sqrt{h_{\tilde \sigma}(\dotg,\dotg)} ds \nonumber\\ & \le &
  \hat C |{\vec y}-\vz|_\delta + \mu(\vz) \label{tin} \ , \end{eqnarray}
where $|\cdot|_\delta$ denotes the Euclidean norm of a vector, and
with $\hat C$ defined in \eq{Cdef}. Setting 1) first ${\vec y} = \vx$,
$\vz = \vx + {\vec h}$ in \eq{tin} and 2) then $\vz = \vx$, ${\vec y} = \vx +
{\vec h}$, the Lipschitz continuity of $f$ on $B(2R)$ follows.  The
general result is obtained now by a standard covering argument. \qed
\end{proof}

Returning to the proof of Theorem \ref{semiConvT1}, for $\vx\in
B(R)$
  let $\Gamma_{\vx}$ 
  be a generator of $\mcH$ such that $\Gamma_{\vx}(0)=(\tau(\vx),\vx)$,
  and, if we write $\Gamma_{\vx}(s)= (\phi_{\vx}(s),\gamma_{\vx}(s))$,
  then we require that $\gamma_{\vx}(s)\in B(2R) $ for $s\in[0,1]$.
  For $s\in[0,1]$ and ${\vec h}\in B(R)$ let $\gamma_{\vx,\pm}(s)\in\hyp$
  be defined by
  $$\gamma_{\vx,\pm}(s) = \gamma_{\vx}(s)\pm (1-s){\vec h} = s
  \gamma_{\vx}(s)+(1-s)(\gamma_{\vx}(s)\pm {\vec h})\in B(2R)\ . $$
  We note
  that
  $$\gamma_{\vx,\pm}(0)=\vx \pm {\vec h}\ , \qquad \gamma_{\vx,\pm}(1)=
  \gamma_{\vx}(1)\ , \qquad \dot \gamma_{\vx,\pm} - \dot \gamma_{\vx}=
  \mp {\vec h}\ . $$
  Let $\Gamma_{\vx,\pm}=(\pxpm,\gamma_{\vx,\pm})$ be the
  null lifts of the paths $\gamma_{\vx,\pm}$ with endpoints
  $\Gamma_{\vx}(1)$. Let $K$ be a compact set containing all the
  $\Gamma_{\vx,\pm}$'s, where $\vx$ and ${\vec h}$ run through $ B(R)$. Let
  I be any bounded interval such that $I\times B(2R)$ contains $K$. As
  before, choose $\sigma$ so that $I\times B(2R)$ lies to the past of
  $\hyp_\sigma$, and let $b$ be such that $\Gamma_{\vx}(b)\in
  \mcH_\sigma$. (The value of the parameter $b$ will of course depend
  upon $\vx$). Let $\Gamma_\pm$ be the null curve obtained by
  following $\Gamma_{\vx,\pm}$ for parameter values $s\in [0,1]$, and
  then $ \Gamma_{\vx}$ for parameter values $s\in [1,b]$.  Then
  $\Gamma_\pm\in {\cP}(\vx\pm {\vec h})$ so that we have
  $$
  \mu(\vx\pm {\vec h})\le \int_0^1 \sqrt{h_\pxpm(\dot
    \gamma_{\vx,\pm},\dot \gamma_{\vx,\pm})} ds + \int_1^b
  \sqrt{h_\pxpm(\dot \gamma_{\vx},\dot \gamma_{\vx})} ds\ .
  $$
  Further
  $$\mu(\vx)= \int_0^b \sqrt{h_\pxhere(\dot \gamma_{\vx},\dot
    \gamma_{\vx})} ds\ , $$
  hence \begin{eqnarray} \lefteqn{
      \frac{\mu(\vx+ {\vec h})+ \mu(\vx- {\vec h})}{2} - \mu(\vx) \le }&&
    \nonumber \\ &&\int_0^1 \left(\frac{\sqrt{h_\pxpm(\dot
          \gamma_{\vx,+},\dot \gamma_{\vx,+})} + \sqrt{h_\pxpm(\dot
          \gamma_{\vx,-},\dot \gamma_{\vx,-})}}{2} -
      \sqrt{h_\pxhere(\dot \gamma_{\vx},\dot \gamma_{\vx})} \right)ds
    \; . \nonumber\\ && \label{concineq} \end{eqnarray} Since
  solutions of ODE's with parameters are differentiable functions of
  those, we can write
\begin{equation}
  \label{star}
  \phi _\pm(s)=\pxhere+\psi_i(s)h^i +r(s,h), \qquad |r(s,h)| \le C
|h|_\delta^2\ ,
\end{equation}
for some functions $\psi_i$, with a constant $C$ which is
independent of $\vx,{\vec h}\in B(R)$ and $s\in [0,1]$. Inserting
\eq{star} in \eq{concineq}, second order Taylor expanding the
function $\sqrt{h_{\phi _\pm(s)}(\dot \gamma_{\vx,\pm},\dot
  \gamma_{\vx,\pm})}(s)$ in all its arguments around
$(\pxhere,\gamma_{\vx}(s),\dot \gamma_{\vx}(s))$ and using
compactness of $K$ one obtains \be \label{concineq2} \frac{\mu(\vx+
{\vec h})+ \mu(\vx- {\vec h})}{2} - \mu(\vx) \le C |{\vec h}|_\delta^2 \ , \ee
for
some constant $C$. Set
$$
\psi(\vx)=\mu(\vx) - C |\vx|_\delta^2 \ . $$ Equation \eq{concineq2}
shows that
$$\forall \vx, {\vec h} \in B(R) \qquad \psi(\vx)\ge \frac{\psi(\vx+ {\vec h})+
  \psi(\vx- {\vec h})}{2} \ . $$
A standard argument implies that $\psi$ is concave. It follows that
$$f(\vx)+ C |\vx|_\delta^2=\tau(\vx)+ C
|\vx|_\delta^2=\sigma-\mu(\vx)+ C |\vx|_\delta^2=\sigma-\psi(\vx)$$
is convex, which is what had to be established. \qed
\end{proof}

\section{Cauchy surfaces}
 \label{S10X11.1}

A topological hypersurface $\hyp$ is said to be a \emph{Cauchy
surface} if
$$\mcDSJ=\mcM\;.
$$
(Note that it does not matter whether $\mcDSJ$ or $\mcDSI$ is chosen in the definition when the metric is $C^2$.)
 \ptcx{what about continuous?}
Theorem~\ref{Tdd1}, p.~\pageref{Tdd1}, shows that a necessary condition for this
equality is that $\mcM$ be globally hyperbolic. A celebrated
theorem, due independently to  Geroch and Seifert, shows that this
condition is also sufficient:

\begin{Theorem}[Geroch~\cite{GerochDoD}, Seifert~\cite{Seifert}]\label{TCS} A space-time $\Mgthreek$ is
globally hyperbolic if and only if there exists on $\mcM$ a time
function $\tau$ with the property that all its level sets are
Cauchy surfaces. The function $\tau$ can be chosen to be smooth if the
manifold is.
\end{Theorem}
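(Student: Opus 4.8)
The plan is to prove both implications separately, the reverse being essentially a restatement of Theorem~\ref{Tdd1} and the forward direction resting on Geroch's volume-function construction followed by a smoothing step. For the direction ``$\Leftarrow$'', suppose $\tau$ is a time function all of whose level sets $\hyp_c:=\{\tau=c\}$ are Cauchy surfaces, so that $\mcDSJ=\mcM$ for $\hyp=\hyp_c$. Since $\mcM$ is open in itself, its interior (as a subset of itself) is all of $\mcM$, and Theorem~\ref{Tdd1} applied to $\hyp_c$ shows that $\mcM$, with its given metric, is globally hyperbolic.

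For the direction ``$\Rightarrow$'', assume $\mcM$ is globally hyperbolic. First I would fix a finite Borel measure $\mu$ on $\mcM$, with $\mu(\mcM)<\infty$, charging every nonempty open set and absolutely continuous with respect to coordinate Lebesgue measure; such a $\mu$ is built from a partition of unity subordinate to a locally finite atlas. Define the volume functions
$$
t^-(p):=\mu(I^-(p))\;,\qquad t^+(p):=\mu(I^+(p))\;.
$$
Monotonicity of $I^\mp$ under the causal relation, sharpened by Lemma~\ref{Lpushup0}, shows that $t^-$ is strictly increasing and $t^+$ strictly decreasing along future-directed timelike curves. The first nontrivial point is continuity of $t^\pm$, and this is exactly where global hyperbolicity enters. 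Lower semicontinuity of $t^-$ is automatic: any compact $K\subset I^-(p)$ satisfies $K\subset I^-(p_n)$ once $p_n\to p$ and $n$ is large, because each $I^+(q)$ is open. For upper semicontinuity I would show $\limsup_n I^-(p_n)\subset J^-(p)$ using the accumulation of causal curves (Proposition~\ref{Pgh1}) together with $\overline{I^-(p)}=J^-(p)$ (Corollary~\ref{Cgh1}); since the achronal boundary $J^-(p)\setminus I^-(p)$ is a topological hypersurface and hence $\mu$-null, the reverse Fatou lemma (valid as $\mu$ is finite) yields $\limsup_n t^-(p_n)\le t^-(p)$.

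Next I would show that along any future-inextendible causal curve $\gamma$ the nested sets $I^+(\gamma(s))$ have empty intersection: a point $r$ in the intersection would confine $\gamma$ to the compact set $J^+(\gamma(0))\cap J^-(r)$, which is impossible because an inextendible causal curve has infinite $h$-length (Theorem~\ref{TP4a}) while causal curves contained in a compact set have uniformly bounded length (Lemma~\ref{Lgh1.0} and the remark following Proposition~\ref{Pgh1}). Hence $t^+(\gamma(s))\to 0$, and dually $t^-(\gamma(s))\to\mu(\mcM)$. Setting $\tau:=\log(t^-/t^+)$ thus produces a continuous function that is strictly increasing along future causal curves (by Lemma~\ref{LP3.0} in spirit) and sweeps out all of $\R$ along every inextendible causal curve; consequently each level set is acausal and is met exactly once by every inextendible causal curve, that is, is a Cauchy surface.

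The hard part will be the final upgrade of this merely continuous Geroch function to a genuinely \emph{smooth} time function, as the statement demands --- recall that our notion of time function requires a differentiable function with timelike gradient. A naive mollification of $\tau$ need neither have everywhere-timelike gradient nor keep its level sets acausal and Cauchy, so this step is not routine. I would therefore follow the Bernal--S\'anchez smoothing scheme: exploit the strict monotonicity margin of $\tau$ along causal directions to approximate it in $C^0$ by smooth functions whose gradients are controlled so as to remain timelike and past-pointing, arranging the approximation so that each smooth level set stays spacelike, acausal, and crossed exactly once by every inextendible causal curve, whence it is a smooth Cauchy surface. This smoothing, rather than the volume-function construction, is where the real work lies.
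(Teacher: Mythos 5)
Your proposal follows essentially the same route as the paper's proof: the reverse implication via Theorem~\ref{Tdd1}, and the forward implication via Geroch volume functions, with continuity obtained from accumulation curves (Proposition~\ref{Pgh1}), $\overline{I^\pm(p)}=J^\pm(p)$ (Corollary~\ref{Cgh1}) and the measure-zero boundary, vanishing of the volume functions along inextendible causal curves, a ratio/logarithm to produce the time function with Cauchy level sets, and a smoothing step deferred to external technology (you cite Bernal--S\'anchez, the paper widens the light cones via a smooth auxiliary metric and then mollifies -- a citation-level difference only). The single slip is your claim that $t^-(\gamma(s))\to\mu(\mcM)$ along a future-inextendible causal curve, which is false in general (the chronological past of a uniformly accelerated observer in Minkowski space-time is a proper wedge), but it is also unneeded: since $t^-$ is positive and non-decreasing along $\gamma$ while $t^+\to 0$, the quotient already forces $\tau\to+\infty$, and the time-dual argument at the past end gives $\tau\to-\infty$.
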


\proof The proof uses volume functions, defined as follows: let
$\varphi_i$, $i \in \N$, be any partition of unity on $\mcM$, set
$$V_i:= \int_\mcM \varphi_i d\mu\;,$$
where $d\mu$ is, say, the Riemannian measure associated to the
auxiliary Riemannian metric $h$ on $\mcM$. Define
$$\nu:= \sum_{i\in \N} \frac1 {2^iV_i}\varphi_i \;.$$
Then $\nu$ is smooth, positive, nowhere vanishing, with
$$\int_M \nu\,d\mu=1\;.$$
Following Geroch, we define
$$V_{\pm}(p) := \int _{J^\pm(p)}\nu\,d\mu\;.$$
We clearly have
$$\forall p\in \mcM \quad 0 < V_\pm(p)< 1\;.$$
The functions $V_\pm$ may fail to be continuous in general,
an
example is given in Figure~\ref{Fmissing}.
\begin{figure}[tbh]
\begin{center}
\includegraphics[width=.5\textwidth]{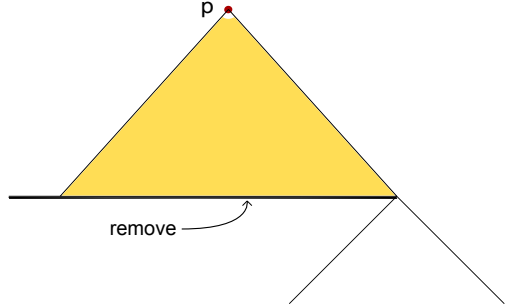}
\caption{The volume function $V_-$ is discontinuous at $p$. \label{Fmissing}}
\end{center}
\end{figure}
%
%
It turns out that such
behavior cannot occur under the current conditions:

\begin{Lemma}\label{LCS} On $C^2$ globally hyperbolic space-times the functions $V_\pm$ are continuous.
\end{Lemma}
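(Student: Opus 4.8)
The plan is to prove continuity of $V_+$ at an arbitrary point $p\in\mcM$; the corresponding statement for $V_-$ then follows by reversing the time orientation. Fix a sequence $p_n\to p$. First I would record two preliminary facts valid under the present hypotheses. The chronology relation is open: if $q\in I^+(p)$, then choosing an interior point $m$ of a timelike curve from $p$ to $q$ and combining openness of $I^\pm$ of points (Proposition~\ref{P2}) with concatenation (Proposition~\ref{PM1.a}) yields neighbourhoods $U\ni p$ and $W\ni q$ such that $W\subset I^+(p')$ for every $p'\in U$. Second, $\partial J^+(p)$ is an achronal Lipschitz hypersurface (a graph over a hyperplane, as in the achronal-boundary structure underlying Corollary~\ref{C1}), hence of zero $d\mu$-measure; together with $\overline{I^+(p)}=J^+(p)$ (Corollary~\ref{Cgh1}) this gives $V_+(p)=\int_{I^+(p)}\nu\,d\mu$.

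\emph{Lower semicontinuity.} Let $K\subset I^+(p)$ be compact. For each $q\in K$ the openness above supplies $W_q\ni q$ and $U_q\ni p$ with $W_q\subset I^+(p')$ for all $p'\in U_q$. Extract a finite subcover $W_{q_1},\dots,W_{q_m}$ of $K$; for $n$ large $p_n$ lies in $U_{q_1}\cap\cdots\cap U_{q_m}$, so $K\subset I^+(p_n)$ and hence $\int_K\nu\,d\mu\le V_+(p_n)$. Taking $\liminf_n$ and then the supremum over compact $K\subset I^+(p)$ yields $V_+(p)=\int_{I^+(p)}\nu\,d\mu\le\liminf_n V_+(p_n)$. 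This step uses no more than openness of the futures.

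\emph{Upper semicontinuity.} This is the step that genuinely uses global hyperbolicity and is the crux. I claim that $\limsup_n\chi_{J^+(p_n)}(r)\le\chi_{J^+(p)}(r)$ for \emph{every} $r\in\mcM$. If this failed, some $r\notin J^+(p)$ (so $r\ne p$) would lie in $J^+(p_n)$ for infinitely many $n$, giving causal curves $\gamma_n$ from $p_n$ to $r$. Fixing $p'\in I^-(p)$, for $n$ large one has $p_n\in I^+(p')$, so $\gamma_n\subset J^+(p')\cap J^-(r)$, which is \emph{compact} by global hyperbolicity. Parameterising the $\gamma_n$ by $\distb$-arclength and applying Lemma~\ref{Lgh1.0} over a finite cover of this causal diamond bounds their lengths uniformly, so the accumulation-curve theorem (Proposition~\ref{Paccum}, which requires only the $C^2$ hypothesis) produces a causal limit curve from $p$ to $r$, i.e.\ $r\in J^+(p)$ --- a contradiction. (This is essentially the content of Proposition~\ref{Pgh1}.) Since $\chi_{J^+(p_n)}\nu\le\nu$ with $\int_\mcM\nu\,d\mu=1<\infty$, the reverse Fatou lemma gives $\limsup_n V_+(p_n)\le\int_\mcM\bigl(\limsup_n\chi_{J^+(p_n)}\bigr)\nu\,d\mu\le V_+(p)$.

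Combining the two semicontinuity statements yields $V_+(p_n)\to V_+(p)$, and continuity of $V_-$ follows by symmetry. The main obstacle is the upper-semicontinuity claim: compactness of the causal diamonds $J^+(p')\cap J^-(r)$ is indispensable, since Figure~\ref{Fmissing} exhibits a non-globally-hyperbolic space-time in which a whole region appears in $J^-(p_n)$ for $p_n$ near $p$ without lying in $J^-(p)$, destroying continuity. The remaining points are purely technical: securing the uniform length bound so that Proposition~\ref{Paccum} applies, and checking that the limit curve actually reaches $r$, which follows from uniform convergence together with convergence $\ell_n\to\ell_*$ of the arclength endpoints.
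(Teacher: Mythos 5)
Your proof is correct and follows essentially the same route as the paper's: the easy direction rests on openness of chronological futures, the hard direction on the limit-curve argument of Proposition~\ref{Pgh1} (where global hyperbolicity and the $C^2$ hypothesis enter), and the interchange of $I^\pm$ with $J^\pm$ under the integral on the measure-zero boundary guaranteed by Corollary~\ref{Cgh1}. The only difference is measure-theoretic bookkeeping: the paper establishes a.e.\ pointwise convergence of the characteristic functions $\varphi_{J^-(p_i)}$ and applies dominated convergence once, whereas you split the claim into lower semicontinuity (via compact exhaustion of the open set $I^+(p)$) and upper semicontinuity (via reverse Fatou) --- the two packagings are interchangeable.
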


\proof Let $p_i$ be any sequence converging to $p$,
and let the symbol $\varphi_\Omega$ denote the characteristic
function of a set $\Omega$. Let $q$ be any point such that $q\in
I^-(p)\Leftrightarrow p\in I^+(q)$, since $I^+(q)$ forms a
neighborhood of $p$
 we have $p_i\in I^+(q)\Leftrightarrow q\in
I^-(p_i)$ for $i$ large enough. Equivalently, \bel{eCS1}\forall
i\ge i_0 \qquad \varphi_{I^{-}(p_i)}(q)= 1=
\varphi_{I^-(p)}(q)\;.\ee
Since the right-hand-side of \eq{eCS1}
is zero for $q\not\in I^-(p)$ we obtain \bel{eCS2} \forall q
\qquad \liminf_{i\to\infty} \varphi_{I^{-}(p_i)}(q)\ge
\varphi_{I^-(p)}(q)
 \;.
\ee
By Corollary~\ref{Cgh1} $J^-(p)$ differs
from $I^-(p)$ by a topological hypersurface
so that
\bel{eCS3}  \liminf_{i\to\infty}
\varphi_{J^{-}(p_i)}\ge \varphi_{J^-(p)}\ \mbox{\ a.e.}\ee To
obtain the inverse inequality, let $q$ be such that
$$ \limsup_{i\to\infty} \varphi_{J^{-}(p_i)}(q)=1\;,$$
hence there exists a sequence $\gamma_j$ of future directed,
$\distb$-parameterised  causal curves from $q$ to $p_{i_j}$. By
Proposition~\ref{Pgh1} there exists a future directed accumulation
curve of the $\gamma_j$'s from $q$ to $p$. We have thus shown the
implication
$$ \limsup_{i\to\infty} \varphi_{J^{-}(p_i)}(q)=1 \quad \Longrightarrow \quad \varphi_{J^-(p)}(q)=1\;.$$
Since the function appearing at the left-hand-side of the
implication above can only take values zero or one, it follows
that \bel{eCS5} \limsup_{i\to\infty} \varphi_{J^{-}(p_i)}\le
\varphi_{J^-(p)}\;.\ee \Eqs{eCS1}{eCS5} show that
$$\lim_{i\to\infty} \varphi_{J^{-}(p_{i})} \ \mbox{ exists a.e., and equals }\
\varphi_{J^-(p)} \ \mbox{ a.e.}$$ Since
$$0\le \varphi_{J^-(p)}\le 1 \in {\mycal L}^1(\nu\, d\mu)\;,$$
the Lebesgue dominated convergence theorem gives
$$V_-(p)= \int_{\mcM} \varphi_{J^-(p)}\nu\,d\mu = \lim_{i\to\infty}
\int_{\mcM} \varphi_{J^-(p_i)}\nu\,d\mu =
\lim_{i\to\infty}V_-(p_i)\;. $$ Changing time orientation one also
obtains continuity of $V_+$.\qed

We continue with the following observation:

\begin{Lemma}\label{LCS2} $V_-$ tends to zero along any
past-inextendible causal curve $\gamma:[a,b)\to\mcM$.
\end{Lemma}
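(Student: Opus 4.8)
The plan is to exploit the monotonicity of $V_-$ along $\gamma$ together with global hyperbolicity, showing that the causal pasts of the points of $\gamma$ shrink down to a set of vanishing $\nu\,d\mu$--measure.

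First I would reduce to a convenient parameterization: since inextendibility is a property of the image rather than of the parameterization, I may reparameterize $\gamma$ by $\backmg$--arc length, so that by the past-directed version of Theorem~\ref{TP4a} (obtained from the meta-rule by reversing time orientation) its parameter interval becomes $[a,\infty)$ and the limit $s\to b$ becomes $s\to\infty$. Because $\gamma$ is a past-directed causal curve, for $a\le s_1\le s_2$ we have $\gamma(s_2)\in J^-(\gamma(s_1))$, so the causal pasts are nested:
\[
 s_1\le s_2 \quad\Longrightarrow\quad J^-(\gamma(s_2))\subseteq J^-(\gamma(s_1))\;.
\]
Hence $V_-(\gamma(s))=\int_\mcM \varphi_{J^-(\gamma(s))}\,\nu\,d\mu$ is non-increasing and bounded below by $0$, so its limit $L\ge 0$ exists. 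Setting $A:=\bigcap_{s}J^-(\gamma(s))$, the nesting gives $\varphi_{J^-(\gamma(s))}\to\varphi_A$ pointwise, and the dominated convergence theorem (with dominating function $1\in{\mycal L}^1(\nu\,d\mu)$) yields $L=\int_\mcM \varphi_A\,\nu\,d\mu$. Thus it suffices to prove that $A=\emptyset$.

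The main step, and the one I expect to be the main obstacle, is to show $A=\emptyset$, i.e. that no point lies to the past of the \emph{entire} curve. Suppose $q\in A$. Then $\gamma(s)\in J^+(q)$ for every $s$, while also $\gamma(s)\in J^-(\gamma(a))$ for $s\ge a$; hence the whole image of $\gamma$ is trapped in the set $K:=J^+(q)\cap J^-(\gamma(a))$, which is compact by global hyperbolicity. I would then derive a contradiction from the impossibility of imprisoning an inextendible causal curve in a compact set, by an argument patterned on the proof of Proposition~\ref{Pgh1}: using strong causality, cover $K$ by finitely many elementary regions $\mcO_1,\dots,\mcO_N$ (Definition~\ref{Delem}) together with subneighborhoods $\mcU_i\subset\mcO_i$ meeting every causal curve in a connected set, so that each preimage $\gamma^{-1}(\mcU_i)$ is a single subinterval of $[a,\infty)$. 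By Lemma~\ref{Lgh1.0} the $\backmg$--length of $\gamma$ inside each elementary region is bounded by a constant $\ell_i$, whence $|\gamma|_h\le \ell_1+\dots+\ell_N<\infty$. This contradicts the fact that, being past-inextendible and $\backmg$--arc-length parameterized, $\gamma$ has infinite length (Theorem~\ref{TP4a}). Therefore $A=\emptyset$, so $L=0$, which is the assertion.
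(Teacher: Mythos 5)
Your proof is correct, and it is organized differently from the paper's, even though both ultimately rest on the same toolkit (dominated convergence, strong causality, the length bound of Lemma~\ref{Lgh1.0} in elementary neighborhoods, and Theorem~\ref{TP4a}). The paper fixes a partition $X_i$ of $\mcM$ by sets with compact closure and argues that, for each $k$, the past $J^-(\gamma(s))$ must eventually miss $\cup_{i\le k}X_i$, bounding $V_-(\gamma(s))$ by the tail sum $\sum_{i>k}\int_{X_i}\nu\,d\mu$; the contradiction there is obtained by producing a sequence $\gamma(s_i)$ accumulating at a point of a fixed compact set and applying strong causality plus Lemma~\ref{Lgh1.0} in a \emph{single} elementary neighborhood. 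You instead exploit the nestedness of the pasts to identify $\lim V_-(\gamma(s))$ with the measure of $A=\bigcap_s J^-(\gamma(s))$, and kill $A$ by an imprisonment argument: any $q\in A$ traps the whole curve in the compact diamond $J^+(q)\cap J^-(\gamma(a))$, after which the finite-covering/finite-total-length argument of Proposition~\ref{Pgh1} contradicts Theorem~\ref{TP4a}. Two remarks on what each route buys. Your version invokes compactness of causal diamonds (the defining property of global hyperbolicity) directly, and in fact supplies exactly the missing justification for the paper's delicate step: the paper's assertion that ``$J^-(\gamma(s))\cap(\cup_{i\le k}X_i)\ne\emptyset$ for all $s$'' is \emph{equivalent} to the existence of a sequence $s_i\to b$ with $\gamma(s_i)\in\overline{\cup_{i\le k}X_i}$ does not hold as literally stated (a point of the compact set lying in every $J^-(\gamma(s))$ is not the same as the curve returning to the compact set); one repairs it precisely by your nested-intersection/imprisonment reasoning, e.g.\ extracting $q\in\bigcap_s\bigl(J^-(\gamma(s))\cap K\bigr)$ and concluding that $\gamma$ is confined to a compact diamond. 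Conversely, the paper's argument is slightly more economical at the contradiction stage, needing only one strong-causality neighborhood around an accumulation point rather than a finite cover of the whole compact set; but your cover-based bound is exactly the one already used in Proposition~\ref{Pgh1}, so nothing new is required.
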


\proof Let $X_i$ be any partition of $\mcM$ by sets with compact
closure, the dominated convergence theorem shows that
\bel{eCS7}\lim_{k\to\infty}\sum_{i\ge k} \int_{X_i}\nu\,d\mu =
0\;.\ee Suppose that there exists $k<\infty$ such that
$$\forall s \quad J^-(\gamma(s))\cap \left(\cup_{i=1}^k X_i\right)\ne
\emptyset\;.$$ Equivalently, there exists a sequence $s_i\to b$
such that $$\gamma(s_i)\in K:= \overline{\cup_{i=1}^k X_i}\;.$$
Compactness of $K$ implies that there exists (passing to a
subsequence if necessary) a point $q_\infty\in K$ such that
$\gamma(s_i)\to q_\infty$. Strong causality of $\mcM$ implies that
there exists an elementary neighborhood $\mcO$ of $q_\infty$ such
that $\gamma\cap\mcO$ is connected, and Lemma~\ref{Lgh1.0} shows
that $\gamma\cap \mcO$ has finite $h$-length, which contradicts
inextendibility of $\gamma$ (compare Theorem~\ref{TP4a}). This
implies that for any $k$ we have
$$ J^-(\gamma(s))\cap \left(\cup_{i=1}^k X_i\right)=
\emptyset$$ for $s$ large enough, say $s\ge s_k$. In particular
$$s\ge s_k \quad \Longrightarrow \quad \int_{J^-(\gamma(s))\cap \left(\cup_{i=1}^k X_i\right)}\nu\,d\mu=0\;.$$
 This implies $$\forall s\ge s_k \quad V_-(\gamma(s))=
\int_{J^-(\gamma(s))\cap \left(\cup_{i=k+1}^\infty
X_i\right)}\nu\,d\mu\le \sum_{i\ge k+1} \int_{X_i}\nu\,d\mu\;,$$
which, in view of \eq{eCS7}, can be made as small as desired by
choosing $k$ sufficiently large. \qed

We are ready now to pass to the proof of Theorem~\ref{TCS}. Set
$$\tau:= \frac{V_-}{V_+}\;.$$
Then $\tau$ is continuous by Lemma~\ref{LCS}. Let $\gamma:(a,b)\to
\mcM$ be any inextendible future-directed causal curve. By
Lemma~\ref{LCS2}
$$\lim_{s\to b} \tau(\gamma(s))=\infty\;, \qquad\lim_{s\to a}
\tau(\gamma(s))=0\;. $$ Thus $\tau$ runs from $0$ to $\infty$ on
any such curves, in particular $\gamma$ intersects every level set
of $\tau$ at least once.  From the
definition of the measure $\nu \, d\mu$ it should be clear that
$\tau$ is actually strictly increasing on any causal curve, hence
the level sets of $\tau$ are met by causal curves precisely once.

The differentiability properties of $\tau$ constructed above are
not clear. It thus remains to show that   $\tau$ can be modified, if necessary, so that it is
as smooth as the atlas of $\mcM$ allows (except perhaps for analyticity).
 This can be done as follows
(compare~\cite{BernalSanchez,Seifert,BGP}): By \cite{NavarroMinguzzi}
 \ptcx{add crossreferencing}
there
exists a smooth metric $\hg$ with cones wider than those of $g$ so that
$(\mcM,\hg)$ is globally hyperbolic. We can thus apply the construction just
carried-out to construct a $\hg$--time function $\hat \tau$.
The property that the light-cones of $\hg$ are strictly wider than those of $g$
implies that the
$g$-gradient of $\hat \tau$ is everywhere timelike. A small smoothing of $
\hat
\tau$, using convolutions in local coordinates, leads to the desired smooth time
function. (Note that the smoothness of $\tau$ depends only upon the smoothness
of $\mcM$, regardless of the smoothness of the metric.)
\qed

An important corollary of Theorem~\ref{TCS} is:

\begin{Corollary}\label{CCS}
A globally hyperbolic space-time $\Mgthreek$ is necessarily diffeomorphic to
$\R\times\hyp$, with the coordinate along the $\R$ factor having
timelike gradient.
\end{Corollary}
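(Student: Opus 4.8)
The plan is to manufacture a global time function from the Geroch--Seifert Theorem~\ref{TCS} and then trivialise the resulting Cauchy foliation by flowing along the rescaled gradient. First I would invoke Theorem~\ref{TCS} to obtain a time function $\tau$ on $\mcM$, which may be taken smooth since the atlas is smooth, and all of whose level sets $\hyp_s:=\tau^{-1}(s)$ are Cauchy surfaces, i.e. $\mcD(\hyp_s)=\mcM$ for every $s\in\R$. Fix $\hyp:=\tau^{-1}(0)$; since $\nabla\tau$ is timelike and hence nowhere zero, $\hyp$ is a smooth spacelike hypersurface by the regular value theorem. Next I would introduce
\[
X:=\frac{\nabla\tau}{g(\nabla\tau,\nabla\tau)}\;,
\]
which is well defined and of class $C^2$ (the inverse metric being $C^2$) because $g(\nabla\tau,\nabla\tau)<0$ everywhere; it is timelike, future directed, and normalised so that $X(\tau)=1$. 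Being a $C^2$ timelike field, each of its integral curves is a future directed timelike causal curve in the sense of Section~\ref{SCp}, along which $\tau$ increases at unit rate, so that it is automatically parameterised by $\tau$.

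The core of the argument is to show that every maximal integral curve $\gamma$ of $X$ is defined for all $\tau\in\R$. I would argue by contradiction: if the maximal $\tau$-interval of $\gamma$ had a finite upper endpoint $\beta$, the escape lemma for maximal integral curves of a vector field forces $\gamma(s)$ to leave every compact subset of $\mcM$ as $s\to\beta$, so $\gamma$ acquires no future end point and is therefore future inextendible as a causal path (an extension would produce a limit point as $\tau\to\beta$). But $\hyp_{\beta+1}$ is a Cauchy surface, and by the defining property of domains of dependence every future inextendible future directed timelike curve starting below it must meet it; this is impossible since $\tau<\beta$ along $\gamma$. Hence $\beta=+\infty$, and symmetrically the lower endpoint is $-\infty$. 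The same Cauchy surface property applied to $\hyp=\hyp_0$ shows that each such complete integral curve meets $\hyp$ at exactly one point, uniqueness coming from achronality of $\hyp$.

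With completeness of the flow in hand, I would define
\[
\Phi\colon \R\times\hyp\to\mcM\;,\qquad \Phi(s,p)=\phi_s(p)\;,
\]
where $\phi_s$ denotes the time-$s$ flow of $X$. Since $X$ is $C^2$, both $\phi$ and $\Phi$ are $C^2$. Injectivity follows because $\tau(\Phi(s,p))=s$ pins down $s$, after which uniqueness of integral curves pins down $p$; surjectivity is exactly the statement, established above, that the integral curve through an arbitrary $q$ meets $\hyp$, giving $q=\Phi\bigl(\tau(q),p\bigr)$ for the unique $p\in\hyp$ on that curve. To see that $\Phi$ is a local diffeomorphism I would differentiate $\tau\circ\Phi(s,p)=s$: this gives $d\Phi(\partial_s)=X\notin\ker d\tau$, while $\phi_s$ restricts to a diffeomorphism of $\hyp$ onto $\hyp_s$, so $d\Phi$ maps $T_p\hyp$ isomorphically into $\ker d\tau=T\hyp_s$. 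Thus $d\Phi$ is everywhere an isomorphism, a bijective $C^2$ local diffeomorphism is a $C^2$ diffeomorphism, and the coordinate along the $\R$ factor is $\tau$, whose gradient is timelike by construction.

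The main obstacle is the completeness step: a priori $X$ need not be a complete vector field, so one cannot simply quote global existence for its flow. The delicate point is the interaction between the purely ODE-theoretic escape lemma and the global causal hypothesis, namely that inextendibility (a consequence of escape) combined with the fact that \emph{every} level set of $\tau$ is a Cauchy surface is precisely what upgrades each maximal integral curve to one defined on all of $\R$. A secondary point requiring care is the bookkeeping of differentiability: $\tau$ may be chosen smooth, but $\nabla\tau$ and hence $X$ inherit only the $C^2$ regularity of the inverse metric, so the diffeomorphism produced is $C^2$ rather than smooth, which is all that is needed for the assertion.
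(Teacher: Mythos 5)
Your proof is correct and follows essentially the same route as the paper: both trivialise $\mcM$ by flowing along a timelike vector field transverse to the Cauchy level sets of the Geroch--Seifert time function of Theorem~\ref{TCS}, with inextendibility of maximal integral curves (your escape-lemma step) combined with the Cauchy property guaranteeing that each integral curve meets the chosen level set exactly once. The only cosmetic differences are that you normalise $X=\nabla\tau/g(\nabla\tau,\nabla\tau)$ and build the map $\R\times\hyp\to\mcM$ as the flow map, whereas the paper takes an arbitrary smooth timelike field and constructs the inverse map $p\mapsto(\tau(p),q(p))$ directly, obtaining the diffeomorphism property from the implicit function theorem.
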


\proof Let $X$ by any smooth
timelike vector field on $X$ --- if both the metric and
the time function $\tau$ of Theorem~\ref{TCS} are smooth then
$\nabla \tau$ will do, but any other choice works equally well.
Choose any number $\tau_0$ in the range of $\tau$. Define a
bijection $\varphi: \mcM\to \R\times \hyp$ as follows: for $p\in \mcM$ let
$q(p)$ be the point on the level set $\hyp_0=\{r\in \mcM:
\tau(r)=\tau_0\}$ which lies on the integral curve of $X$ through
$p$. Such a point exists because any inextendible timelike curve
in $\mcM$ meets $\hyp_0$; it is unique by achronality of $\hyp_0$.
The map $\varphi$ is continuous by continuous dependence of ODE's
upon initial values. If $\tau$ is merely continuous, one can
invoque the invariance of domain theorem
 \ptcx{give ref}
to prove that $\varphi$ is a homeomorphism; if $\tau$ is
differentiable, its level sets are differentiable,  $X$ meets those
level sets transversely, and the fact that $\varphi$ is a
diffeomorphism follows from the implicit function theorem. \qed

It is not easy  to decide whether or not a hypersurface $\hyp$ is
a Cauchy hypersurface, except in \emph{spatially compact}
space-times:

\begin{Theorem}[Budi\v{c} \emph{et al.}\/~\cite{BILY}, Galloway~\cite{Galloway:cauchy}]
\label{TBILY}
 Let $(\mcM,g)$ be a smooth globally hyperbolic space-time and suppose that
$\mcM$ contains a smooth, compact, connected spacelike hypersurface
$\hyp$. Then $\hyp$ is a Cauchy surface for $\mcM$.\ptcx{give proof}
\end{Theorem}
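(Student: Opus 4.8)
The plan is to realise $\hyp$ as a graph over a smooth Cauchy surface produced by the splitting theorem, to show that this graph map is a diffeomorphism, and then to read off achronality and the Cauchy property from spacelikeness. First I would invoke Theorem~\ref{TCS} and Corollary~\ref{CCS} to write $\mcM\cong\R\times S$, where $S$ is a smooth Cauchy surface, $t$ is a smooth time function with timelike gradient whose level sets $S_\tau:=\{t=\tau\}$ are Cauchy surfaces, and the integral curves of $X:=\partial_t$ are inextendible timelike curves meeting each $S_\tau$ exactly once. Composing $t$ with an increasing diffeomorphism of its range, I may assume $t(\mcM)=\R$. Let $\pi:\mcM\to S$ be the projection along the integral curves of $X$. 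Since $\hyp$ is spacelike and $X$ is timelike, $X$ is transverse to $\hyp$, so $d\pi$ restricts to an isomorphism on each $T_p\hyp$ and $\pi|_\hyp$ is a local diffeomorphism.

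Next I would promote $\pi|_\hyp$ to a covering map and prove it single-sheeted. Its image is open (local diffeomorphism) and compact (continuous image of the compact $\hyp$), hence closed; as $S$ is connected this forces $\pi|_\hyp$ onto, and compactness makes it a proper local diffeomorphism, i.e.\ a finite covering of some degree $k$. The key point is $k=1$: the fibre $\pi^{-1}(s)\cap\hyp$ consists of $k$ points of $\hyp$, all lying on the single integral curve of $X$ through $s$, along which $t$ is \emph{strictly} increasing by Lemma~\ref{LP3.0}; their $t$-values are therefore distinct and can be ranked. Assigning to each $p\in\hyp$ its rank within its fibre defines a function $\hyp\to\{1,\dots,k\}$ which is locally constant (over a small connected base neighbourhood the $k$ sheets are disjoint graphs whose $t$-values never coincide, so their order cannot change) and which is surjective over any single fibre; connectedness of $\hyp$ then gives $k=1$. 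Here connectedness of $\hyp$ is used essentially. Thus $\pi|_\hyp$ is a diffeomorphism and $\hyp=\{(f(s),s):s\in S\}$ is the graph of a smooth $f:S\to\R$, bounded since $S$ is compact.

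I would then deduce achronality from spacelikeness by the warped-product estimate underlying Proposition~\ref{PM1warp}. Writing the metric in the splitting as $g=\alpha(-dt^2+h_t)$, spacelikeness of the graph is equivalent to the pointwise strict bound $|df(v)|<|v|_{h_t}$ for all $v\neq0$, while any future causal curve $\sigma(s)=(t(s),\gamma(s))$ satisfies $\dot t\ge|\dot\gamma|_{h_t}$. If a timelike $\sigma$ had both endpoints on $\hyp$ then, on one hand, $t(1)-t(0)=f(\gamma(1))-f(\gamma(0))=\int df(\dot\gamma)\,ds$, which is strictly less than $\int|\dot\gamma|_{h_t}\,ds$ unless $\dot\gamma\equiv0$; on the other hand $t(1)-t(0)=\int\dot t\,ds\ge\int|\dot\gamma|_{h_t}\,ds$, a contradiction (the degenerate case $\dot\gamma\equiv0$ is an integral curve of $X$, already excluded). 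Hence $\hyp$ is achronal, so every inextendible timelike curve meets it at most once.

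Finally I would check that every inextendible timelike curve $\gamma$ meets $\hyp$ at least once, whence exactly once, giving $\mcDSI=\mcM$. Since each $S_\tau$ is Cauchy, $t\circ\gamma$ attains every real value; as $t(\hyp)=[\min f,\max f]$ is a bounded interval, the continuous function $s\mapsto t(\gamma(s))-f(\pi(\gamma(s)))$ takes both signs and therefore vanishes somewhere, i.e.\ $\gamma$ crosses $\hyp$. Combined with achronality this shows $\hyp$ is a Cauchy surface. I expect the main obstacle to be the single-sheetedness of the covering: that is exactly where global hyperbolicity (through the monotone time function of Theorem~\ref{TCS}), compactness, and connectedness of $\hyp$ must be combined, and it is the step that genuinely rules out $\hyp$ ``wrapping around'' the space-time.
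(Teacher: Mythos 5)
The paper itself contains no proof of this theorem: it is quoted from Budi\v{c} \emph{et al.}\ and Galloway, with the proof deferred to those references (note the author's own ``give proof'' marker). So your argument must stand on its own, and most of it does. The use of Theorem~\ref{TCS} and Corollary~\ref{CCS} to project $\hyp$ along the flow lines (take $X=\nabla t$, as in the proof of Theorem~\ref{semiConvT1}, so that the splitting has the cross-term--free form $g=\alpha(-dt^2+h_t)$), the covering-map argument, the rank function built from strict monotonicity of $t$ along the flow lines (Lemma~\ref{LP3.0}) together with connectedness of $\hyp$ forcing the covering to be one-sheeted, and the final intermediate-value argument showing every inextendible timelike curve meets the graph at least once (this is where compactness, through boundedness of $f$, enters correctly) are all sound, and they reproduce the skeleton of the classical proofs.

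The gap is in the achronality step. You estimate $t(1)-t(0)=\int df(\dot\gamma)\,ds<\int|\dot\gamma|_{h_t}\,ds$ from spacelikeness and compare with $\int\dot t\,ds\ge\int|\dot\gamma|_{h_t}\,ds$ from causality, but these two inequalities involve \emph{different} metrics: spacelikeness of the graph bounds $|df(\dot\gamma(s))|$ only against $|\dot\gamma(s)|_{h_{f(\gamma(s))}}$, the spatial metric evaluated at the height of the graph, whereas causality of $\sigma$ bounds $\dot t(s)$ against $|\dot\gamma(s)|_{h_{t(s)}}$, evaluated at the height of the curve. Since $\sigma$ lies off the graph except at its endpoints and $h_t$ genuinely depends on $t$ in the Geroch splitting, these norms are not comparable and the chain breaks; as written your argument proves achronality only for product-type (e.g.\ static) metrics, where $h_t$ is $t$-independent as in Proposition~\ref{PM1warp}. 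This matters, because spacelikeness is purely local and never implies achronality by itself, so this step carries the real content of the theorem. The statement is nevertheless true, and your idea can be repaired by localising it at the graph instead of integrating: set $u(s):=t(s)-f(\gamma(s))$ along $\sigma$; whenever $u(s_*)=0$ the point $\sigma(s_*)$ lies \emph{on} the graph, the two metrics coincide there, and any future causal vector $(v^t,\vec v)$ at such a point satisfies $v^t\ge|\vec v|_{h_{f}}>df(\vec v)$ if $\vec v\ne0$, and $v^t>0$ if $\vec v=0$; hence $u$ has strictly positive (one-sided) derivative at every zero. Replacing $\sigma$ by a piecewise broken timelike geodesic with the same endpoints (Corollary~\ref{CP3.1a}), so that one-sided derivatives exist everywhere, one gets $u>0$ just after $s=0$ and $u<0$ just before $s=1$, while a first-crossing argument produces a zero of $u$ through which $u$ fails to increase --- a contradiction. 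This uses spacelikeness only where it is actually available and closes the gap.
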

\begin{coco}
\begin{Remark} Some further results concerning
Cauchy surface criteria can be found
in~\cite{Galloway:cauchy,Harris}.
\end{Remark}
\end{coco}

\ptcx{add the new guys reference, with their Cauchy surface
criterion, and smoothing argument}

The following  shows the key role of global hyperbolicity for the wave equation:

\begin{Theorem}
\label{Tkeygh} Let $\hyp$ be a smooth spacelike hypersurface in a smooth
space-time $(\mcM,g)$. Then the Cauchy problem for the wave
equation has a unique globally defined solution on $\imcDSI$ for all smooth
initial data on $\hyp$.
\end{Theorem}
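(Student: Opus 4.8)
The plan is to combine the local theory of linear hyperbolic equations with the global causal control established in the preceding sections. First I would record the geometric input. By Theorem~\ref{Tdd1} the interior $\imcDSI$, equipped with the restricted metric, is itself a globally hyperbolic space-time, and Lemma~\ref{Ldd1} shows that every inextendible causal curve in $\imcDSI$ meets $\hyp$, so that $\hyp$ is a Cauchy surface for $\imcDSI$. Invoking Theorem~\ref{TCS} and Corollary~\ref{CCS} I would fix a smooth time function $\tau$ on $\imcDSI$ with $\hyp=\{\tau=0\}$ and timelike gradient, whose level sets $\hyp_\tau$ are Cauchy surfaces, together with the induced identification of $\imcDSI$ with a product of $\hyp$ and an interval of $\tau$-values along the flow of a fixed timelike field $X$. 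In coordinates adapted to this slicing, $\boxg$ is a linear second-order operator whose principal symbol is the Lorentzian metric, hence strictly hyperbolic with respect to the foliation $\{\hyp_\tau\}$, with smooth coefficients.

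Next I would treat local existence and uniqueness. Working in a coordinate slab $\{|\tau|<\epsilon\}$ over a relatively compact piece of $\hyp$, the Cauchy problem $\boxg u=f$ with prescribed smooth data $(u,Xu)|_{\hyp}$ reduces, after the standard choice of coordinates, to a strictly hyperbolic equation to which the classical theory applies: reducing to a first-order symmetric hyperbolic system, or applying energy estimates directly, yields a unique smooth solution on the slab, with the domain of influence of the data confined by the light cones. The essential feature is that this local solution is canonical, since it depends only on the data in its causal past; consequently local solutions built over different pieces of $\hyp$ automatically agree on overlaps.

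The heart of the argument is the energy estimate, which simultaneously yields uniqueness and the a priori bounds needed to globalise. For a solution of $\boxg u=0$ I would introduce the stress–energy tensor $T_{\mu\nu}=\partial_\mu u\,\partial_\nu u-\frac12 g_{\mu\nu}\,g^{\alpha\beta}\partial_\alpha u\,\partial_\beta u$, which is divergence-free on solutions and satisfies the dominant energy condition. Contracting with the timelike field $\nabla\tau$ and integrating the resulting divergence identity over a lens-shaped region bounded below by $\hyp$ and above by $\hyp_\tau$, the divergence theorem — legitimate because the enclosed region lies inside a compact causal diamond, by global hyperbolicity — produces an inequality bounding the energy flux through $\hyp_\tau$ by that through $\hyp$. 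A Gr\"onwall argument then controls all $\tau$-derivatives of $u$ in terms of the Cauchy data; in particular vanishing data forces $u\equiv 0$ on $\imcDSI$, giving global uniqueness, and the inhomogeneous case follows by Duhamel's principle.

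Finally, global existence follows by continuation along the slicing. Let $T^+$ be the supremum of those $T>0$ for which a smooth solution exists on $\{0\le\tau<T\}\cap\imcDSI$; local existence makes this set nonempty. The energy bounds show that the solution and its derivatives stay controlled as $\tau\uparrow T^+$, so the solution extends to $\{\tau=T^+\}$; were $T^+$ attained at an interior slice, a further application of local existence with data on $\hyp_{T^+}$, patched using uniqueness, would produce a solution past $T^+$, contradicting maximality. Hence the solution exists for all $\tau\ge 0$, and the time-reversed argument covers $\tau\le 0$, giving a global solution on $\imcDSI$. I expect the main obstacle to be exactly this globalisation step: making the continuation uniform over a Cauchy surface that need not be compact, and confirming through the compactness of the causal diamonds that the energy identity can be applied on regions exhausting $\imcDSI$ without losing control of the boundary terms — which is precisely where global hyperbolicity, rather than mere strong causality, is indispensable.
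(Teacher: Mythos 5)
Be aware, first, that the paper does not prove Theorem~\ref{Tkeygh} at all: immediately after the statement it remarks that the result ``is well known to the community'' and that an adequate reference does not seem to be available. So there is no argument in the text to compare yours with; your sketch must stand on its own. Judged that way, your overall strategy --- global hyperbolicity of $\imcDSI$ via Theorem~\ref{Tdd1}, $\hyp$ as a Cauchy surface via Lemma~\ref{Ldd1}, the local theory of strictly hyperbolic equations, energy estimates with the stress--energy tensor, and patching by uniqueness --- is the standard and correct skeleton. Two of your steps, however, contain genuine gaps as written.

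First, you ``fix a smooth time function $\tau$ on $\imcDSI$ with $\hyp=\{\tau=0\}$'' by invoking Theorem~\ref{TCS} and Corollary~\ref{CCS}. Geroch's theorem produces \emph{some} time function all of whose level sets are Cauchy surfaces; it does not produce one having a \emph{prescribed} Cauchy surface as a level set. That upgrade is a separate and substantially harder theorem (Bernal--S\'anchez), which you cannot obtain from the results quoted in the paper. Second, your continuation-in-$\tau$ globalisation founders on exactly the point you flag as ``the main obstacle'' and do not resolve: since $\hyp$ is not assumed compact and the data are arbitrary smooth functions (no decay), the energy of a slice $\hyp_\tau$ need not be finite, so there is no global quantity to propagate up to $\tau=T^+$; and even granting extension to the slice $\{\tau=T^+\}$, the local existence time furnished by the local theory depends on the point and can shrink to zero along a noncompact slice, so one cannot pass beyond $T^+$ by a uniform step, and the supremum argument stalls. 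Both gaps disappear under the standard localisation, which you almost state but do not carry out: by Theorem~\ref{Tdd3}, every $p\in\imcDpSI$ has $J^-(p)\cap\hyp$ compact, so $\imcDSI$ is exhausted by the interiors of the domains of dependence $\mcD(K)$ of compact pieces $K\subset\hyp$; on each such region one runs local existence together with energy estimates integrated over the compact diamonds $J^-(p)\cap J^+(\hyp)$ (where the divergence theorem needs no further justification), obtaining existence and uniqueness on $\mcD(K)$ directly, and finite speed of propagation makes the solutions associated with different $K$'s agree on overlaps, so they patch to a solution on all of $\imcDSI$. Organised this way, no time function adapted to $\hyp$ and no uniform-in-space time step is ever needed; the continuation in a global time coordinate is what manufactures the difficulty.
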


The theorem is well known to the community, but we note that an adequate reference does not seem to be available.\restrict{
 \ptc{use Hahn Banach a la Hoermander?} A sketch of the proof proceeds as follows:}

It is of interest to enquire what happens with solutions of the wave equation when Cauchy horizons occur:

First, examples are known where solutions of wave equations blow up at the event horizon, cf., e.g., \cite{ChIM,KichenassamyRendall}.
 \ptcx{discuss and/or crossrefer to the examples section}

Next, a simple example where solutions extend smoothly to solutions of the wave equation, but uniqueness fails, proceeds as follows: let $\hyp$ be the unit ball within the hypersurface $\{t=0\}$ in Minkowski space-time. Let $p=(1,\vec 0)$ and let $q=(-1,\vec 0)$, then the Cauchy horizon is the union of two inverted cones with tips at $p$ and $q$:
$$
 \mcHSI= (\dot J^-(p)\cap \{t>0\})\cup \dot J^+(q) \cap \{t<0\})
 \;.
$$
Any two distinct solutions of the wave equation on Minkowski space-time which have the same Cauchy data on $\hyp$  coincide on $\mcDSI$, and provide examples of solutions which differ beyond the event horizon.

To conclude, we have both global existence and uniqueness of solutions of the Cauchy problem for the wave equation in globally hyperbolic spacetimes. On the other hand, uniqueness or existence are problematic when the space-time is not globally hyperbolic space-times.

\section{Some applications}
Any formalism is only useful to something if it leads to
interesting applications. In this section we will list some of
those.

We start by pointing out the already-mentioned Theorem~\ref{Tkeygh}. Its counterpart for the Einstein equations is the celebrated Choquet-Bruhat -- Geroch theorem:
 \ptcx{should probably be removed and replaced by a crossreference; but note that this one is smooth
 while \ref{T11XI11.1} is for Sobolev}

\begin{Theorem}[Choquet-Bruhat, Geroch \cite{ChoquetBruhatGeroch69}]
 \label{T11XI11.1x}
Consider  a smooth triple
$(\hyp ,\gamma,K)$,
where $\hyp$ is an $n$-dimensional manifold, $\gamma  $ is a
Riemannian metric on $\hyp $, and $K  $ is a symmetric two--tensor on
$\hyp $, satisfying the general relativistic vacuum constraint equations.
Then
there exists a {\em unique up to isometries} vacuum
space--time $(M,g)$, called the {\em maximal globally hyperbolic vacuum development of
$(\hyp ,\gamma,K)$}, with an embedding $i:\hyp \rightarrow M$ such
that
$i^*g=\gamma$, and such that $K$ corresponds to the extrinsic curvature tensor (``second fundamental form")
of
$i(\hyp )$ in $M$. $(M,g)$ is {\em inextendible} in the class of
globally hyperbolic space--times with a vacuum metric.
\end{Theorem}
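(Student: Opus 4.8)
The plan is to split the argument into a local, PDE-theoretic existence-and-uniqueness result, followed by a purely causal-theoretic globalization. First I would pass to \emph{harmonic} (wave) coordinates, characterised by $\boxg x^\mu = 0$. A computation shows that in such coordinates the Ricci tensor becomes
$$R_{\mu\nu} = -\frac12 g^{\alpha\beta}\partial_\alpha\partial_\beta g_{\mu\nu} + F_{\mu\nu}(g,\partial g),$$
so the vacuum equations $R_{\mu\nu}=0$ reduce to a quasilinear system of wave equations --- the \emph{reduced Einstein equations} --- to which the local existence and uniqueness theory underlying Theorem~\ref{Tkeygh} (quasilinear hyperbolic theory, via energy estimates) applies. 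Starting from $(\hyp,\gamma,K)$ one prescribes Cauchy data for $g_{\mu\nu}$ by demanding $i^*g=\gamma$, the correct second fundamental form $K$, and the harmonic gauge condition on $\hyp$. The contracted Bianchi identity then shows that the gauge defect satisfies a homogeneous linear wave equation with vanishing Cauchy data --- here the \emph{constraints} are exactly what forces that data to vanish --- so the gauge condition, and hence the full system $R_{\mu\nu}=0$, propagates off $\hyp$. This yields a globally hyperbolic vacuum development on a neighbourhood of $\hyp$, establishing non-emptiness of the class of developments.

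Next I would upgrade uniqueness from the gauge-dependent PDE statement to a \emph{geometric} one. Given two globally hyperbolic developments $(M_1,g_1)$ and $(M_2,g_2)$ of the same data, I would show they admit a common sub-development: near $\hyp$ one builds harmonic coordinates on each from identical Cauchy data, and PDE uniqueness produces an isometry $\psi$ between neighbourhoods of $i_1(\hyp)$ and $i_2(\hyp)$ fixing the data. The set on which such an isometry extends is then shown to be open (local uniqueness) and, using that both spacetimes are globally hyperbolic with $\hyp$ a Cauchy surface (Theorem~\ref{TCS}) together with the domain-of-dependence machinery (Theorem~\ref{Tdd1}, Lemma~\ref{Ldd1}), to have closed graph in $M_1\times M_2$; continuation of the isometry along causal curves, controlled by the accumulation results of Proposition~\ref{Paccum} and Proposition~\ref{Pgh1}, yields a maximal common development embedding isometrically in both. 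This \emph{amalgamation property} is the conceptual heart of the theorem.

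With amalgamation in hand, I would globalize by Zorn's lemma. Let $\mathcal{D}$ be the class of globally hyperbolic vacuum developments of $(\hyp,\gamma,K)$, partially ordered by ``is a development-preserving isometric embedding onto an open subset''. For a chain $\{(M_a,g_a)\}$ I would form the upper bound as the quotient of the disjoint union $\bigsqcup_a M_a$ by the equivalence relation generated by the chain's embeddings, equipped with the induced metric. The main obstacle is exactly here: one must verify that this direct limit is a \emph{Hausdorff} manifold and again a globally hyperbolic development. Hausdorffness is the notorious delicate point --- two limit points failing to be separated would contradict the amalgamation/uniqueness property, and ruling this out cleanly is what required care in~\cite{ChoquetBruhatGeroch69}; the separation is forced by the compatibility of all the embeddings on the common Cauchy surface $\hyp$ and by propagation of uniqueness. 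Global hyperbolicity of the limit then follows because any inextendible causal curve meets $\hyp$ already at some finite stage of the chain.

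Finally, Zorn's lemma provides a maximal element $(M,g)$. Uniqueness up to isometry is immediate from amalgamation: two maximal developments share a common development into which each embeds, and maximality forces both embeddings to be onto, hence an isometry. Inextendibility in the class of globally hyperbolic vacuum spacetimes is precisely the maximality statement. I expect the genuinely hard work to be concentrated in the second and third paragraphs --- converting analytic uniqueness into geometric uniqueness, and establishing that the direct limit over a chain is a Hausdorff globally hyperbolic manifold --- while the harmonic-gauge reduction and the final uniqueness/inextendibility deductions are comparatively routine once those two points are secured.
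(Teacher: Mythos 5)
The first thing to note is that the paper does not prove this statement at all: Theorem~\ref{T11XI11.1x} appears in the applications section as a quoted result, attributed to \cite{ChoquetBruhatGeroch69}, so there is no internal proof to compare yours against. Your outline reproduces the classical strategy of that reference and its modern expositions: harmonic-gauge reduction of $R_{\mu\nu}=0$ to a quasilinear wave system, propagation of the gauge condition via the contracted Bianchi identity and the constraints, promotion of PDE uniqueness to geometric uniqueness by an open--closed extension argument, and a Zorn-type construction of the maximal development in which Hausdorffness of glued developments is the delicate point. As a roadmap this is accurate, and you correctly locate where the real work lies.

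As a proof, however, there are concrete gaps beyond the two you flag (which are themselves the actual content of the theorem, so naming them is not discharging them). First, your justification of global hyperbolicity of the chain limit --- ``any inextendible causal curve meets $\hyp$ already at some finite stage of the chain'' --- is wrong as stated: only \emph{compact} subsets of the union $\bigcup_a M_a$ of an increasing family of open sets are captured by a single stage, and an inextendible causal curve need not lie in any one $M_a$. The correct argument follows the curve from a point $p\in M_a$ until it first exits $M_a$, notes that the exit point lies in some larger stage $M_b$, and uses the partition of the globally hyperbolic stage $M_b$ into $I^+(\hyp)$, $\hyp$, $I^-(\hyp)$ together with a connectedness argument of the type used in Lemma~\ref{Ldd1} to force an intersection with $\hyp$. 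Second, Zorn's lemma cannot be applied to ``the class of globally hyperbolic vacuum developments'' as such, since this is a proper class rather than a set; one must either bound the cardinality of developments and fix a set of representatives, or follow the original route of applying Zorn only to the common developments of two given spacetimes --- which is also where the Hausdorffness argument actually does its work, since a failure of separation in the glued spacetime would allow one to enlarge a maximal common development, yielding the contradiction. Both points are repairable, but as written they are genuine holes rather than routine omissions.
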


This theorem is the starting point of many studies in mathematical general relativity.
Similarly to the wave equation, examples show that uniqueness fails beyond horizons.
 \ptcx{expand eventually}

To continue, we say that $(\mcM,g)$ satisfies the \emph{timelike focussing
condition}, or \emph{timelike convergence condition}, if the Ricci
tensor satisfies
\bel{pec1} R_{\mu\nu}n^\mu n^\nu \ge 0 \ \mbox{ for all
\underline{timelike} vectors } \ n^\mu\;.\ee By continuity, the
inequality in \eq{pec1} will also hold for  causal vectors.
Condition \eq{pec1} can of course be rewritten as a condition on
the matter fields using the Einstein equation, and is satisfied in
many cases of interest, including vacuum general relativity, or
the Einstein-Maxwell theory, or the Einstein-Yang-Mills theory.
This last two examples actually have the property that the
corresponding energy-momentum tensor is trace-free; whenever this
happens, \eq{pec1} is simply the requirement that the energy
density of the matter fields is non-negative for all
observers:\bel{tfcc}8\pi T_{\mu\nu} n^\mu n^\nu =
(R_{\mu\nu}-\frac 12 \underbrace{R}_{=0\ \mbox{\scriptsize if}\
g^{\alpha\beta}T_{\alpha\beta}=0} g_{\mu\nu})n^\mu n^\nu =
R_{\mu\nu}n^\mu n^\nu \;.\ee

We say that $(\mcM,g)$ satisfies the \emph{null energy condition}
if
\bel{nec1} R_{\mu\nu}n^\mu n^\nu \ge 0 \ \mbox{ for all
\underline{null}
vectors } \ n^\mu\;.\ee Clearly, the timelike focussing condition
implies the null energy condition. Because $g_{\mu\nu}n^\mu
n^\nu=0 $ for all null vectors, the $R$ term in the calculation
\eq{tfcc} drops out regardless of whether or not $T_{\mu\nu}$ is
traceless, so the null energy condition is equivalent to
positivity of energy density of matter fields without any
provisos.

The simplest \emph{geodesic incompleteness theorem}
is:\ptcx{Hawking has a claim on this theorem too? give proofs?}

\begin{Theorem}[Geroch's geodesic incompleteness
theorem~\cite{Geroch:singularity}] Let $(\mcM,g)$ be a smooth globally
hyperbolic satisfying the timelike focussing condition, and
suppose that $\mcM$ contains a compact
 \ptcx{is this needed? remove smoothness if giving proof}
Cauchy surface $\hyp$ with strictly
negative mean curvature:
$$\trh K <0\;,$$
where $(h,K)$ are the usual Cauchy data induced on $\hyp$ by $g$.
Then $(\mcM,g)$ is future timelike geodesically incomplete.
\end{Theorem}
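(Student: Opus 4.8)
The plan is to run the focusing argument of Hawking--Geroch type, combining the Raychaudhuri equation with the global control furnished by global hyperbolicity. Write $\dim\mcM=n+1$, so that $\hyp$ is $n$-dimensional, and consider the congruence of future-directed unit-speed timelike geodesics issuing orthogonally from $\hyp$, with unit tangent field $u$ and expansion $\theta=\nabla_\mu u^\mu$. On $\hyp$ the expansion equals the mean curvature, $\theta|_\hyp=\trh K$ in the sign convention implicit in the statement, so the hypothesis $\trh K<0$, the continuity of $\theta|_\hyp$, and above all the \emph{compactness} of $\hyp$ together yield a constant $\theta_0<0$ with
\[
\theta|_\hyp\le \theta_0<0\;.
\]
This uniform strictly negative upper bound is precisely what compactness buys, and it is where the compactness hypothesis is used.

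Next I would integrate the Raychaudhuri equation along each geodesic of the congruence. Since the congruence is orthogonal to $\hyp$ it is hypersurface-orthogonal, hence vorticity-free, and the equation reads
\[
\frac{d\theta}{d\tau}=-\frac{\theta^2}{n}-\sigma_{\mu\nu}\sigma^{\mu\nu}-R_{\mu\nu}u^\mu u^\nu\;,
\]
with $\tau$ proper time and $\sigma$ the shear. The shear term is non-positive, and the timelike focussing condition \eq{pec1} makes $-R_{\mu\nu}u^\mu u^\nu\le 0$, so $d\theta/d\tau\le -\theta^2/n$. Substituting $v=1/\theta$ turns this into $dv/d\tau\ge 1/n$, whence $v(\tau)\ge \theta_0^{-1}+\tau/n$; since $\theta_0^{-1}<0$, $v$ reaches $0$ by $\tau_*:=n/|\theta_0|$, forcing $\theta\to-\infty$. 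Thus each orthogonal geodesic develops a focal point to $\hyp$ at or before parameter $\tau_*$, and the bound $\tau_*$ is uniform.

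The final, global step turns local focusing into incompleteness. Because $\hyp$ is a compact Cauchy surface, $\mcM$ is globally hyperbolic, every point to the future of $\hyp$ lies in $J^+(\hyp)$, and the Lorentzian distance $d(\hyp,p)$ from $\hyp$ is finite and realized by a timelike geodesic meeting $\hyp$ orthogonally (existence of maximizing geodesics between causally related objects in globally hyperbolic space-times, in the spirit of Theorem~\ref{Tgh1}). A distance-maximizing geodesic can contain no focal point to $\hyp$ strictly before its endpoint, so the focusing bound forces $d(\hyp,p)\le \tau_*$ for every $p$ in the future of $\hyp$. If $(\mcM,g)$ were future timelike geodesically complete, any future-directed unit geodesic normal to $\hyp$ could be extended to proper time $\tau_*+1$, producing a point $p$ with $d(\hyp,p)\ge \tau_*+1>\tau_*$, a contradiction. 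Hence $(\mcM,g)$ is future timelike geodesically incomplete.

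The main obstacle is the second-variation input invoked in the last paragraph: the theory of Jacobi fields, of focal points of a spacelike hypersurface, and of the associated index form, together with the statement that a timelike geodesic maximizing the distance from $\hyp$ has no interior focal point. None of this machinery is developed in the causality-theoretic material above, so establishing it --- or importing it from a Morse-theoretic treatment --- is where the real work lies. By comparison the Raychaudhuri estimate and the compactness bound are routine, and the existence of maximizing geodesics can be extracted from the global hyperbolicity results already at our disposal.
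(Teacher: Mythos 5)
Note first that the paper does not actually prove this theorem: it is stated in the applications section as a known result, with a citation to Geroch's article, so there is no internal proof to compare yours against. Judged on its own terms, your argument is the standard focusing proof (the route usually attributed to Hawking), and its skeleton is correct: compactness of $\hyp$ upgrades the pointwise hypothesis $\trh K<0$ to a uniform bound $\theta\le\theta_0<0$; the Raychaudhuri inequality $d\theta/d\tau\le-\theta^2/n$ (vorticity vanishes by hypersurface orthogonality, shear enters with a favourable sign, and \eq{pec1} kills the Ricci term) forces $\theta\to-\infty$, hence a focal point, by proper time $\tau_*=n/|\theta_0|$ along every future normal geodesic that lives that long; and the existence of a maximizing geodesic from $\hyp$ to any $p\in I^+(\hyp)$, necessarily orthogonal to $\hyp$ and free of interior focal points, caps $d(\hyp,p)$ at $\tau_*$, which is incompatible with future timelike completeness once one extends a normal geodesic past $\tau_*$.

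The caveat you flag at the end is the genuine issue, and it has two distinct halves. The variational half --- that divergence of $\theta$ signals a focal point, and that a maximizing geodesic admits no focal point strictly before its endpoint --- rests on Jacobi fields, the index form and the second variation of arc length, none of which is developed anywhere in these notes. The existence half is also not available here: Theorem~\ref{Tgh1} is a point-to-point statement given without proof in the text, and what you need is the hypersurface-to-point version, producing a \emph{maximizing} (not merely causal) geodesic that meets $\hyp$ orthogonally. Compactness of $\hyp$ does make $J^-(p)\cap\hyp$ compact, so the limit-curve technology of Proposition~\ref{Pgh1} yields a candidate longest curve, but identifying that curve as a geodesic orthogonal to $\hyp$ again requires the variational toolkit. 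So what you have is a correct outline, by the standard route, whose missing ingredients are classical and citable (Hawking--Ellis, O'Neill) but cannot be discharged from within this paper --- which, to be fair, offers no proof of the statement at all.
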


Yet another incompleteness theorem involves \emph{trapped surfaces},
this requires introducing
some terminology:
 \ptcx{should be done in terms of null geometry, and what follows should
 be a translation, after the theorem}
Let $\hyp$ be a spacelike hypersurface in $(\mcM,g)$, and
consider a surface $S\subset \hyp$. We shall also assume that $S$
is two-sided in $\hyp$, this means that there exists a globally
defined field $m$ of unit normals to $S$ within $\hyp$. There are
actually two such fields, $m$ and $-m$, we arbitrarily choose one
and call it {\em outer pointing}. In situations where $S$ does
actually bound a compact region, the outer-pointing one should of
course be chosen to point away from the compact region. We let $H$
denote the mean extrinsic curvature of $S$ within $\hyp$:
\bel{meancur} H:= D_i m^i\;,\ee
where $D$ is the covariant extrinsic of the metric $h$ induced on
$\hyp$. We say that $S$ is \emph{outer-future-trapped}
if
\bel{outertr} \theta_+:=H +K_{ij}(g^{ij}-m^im^j)\le 0\;,\ee
with an obvious symmetric definition for
\emph{inner-future-trapped}:
\bel{innertr} \theta_-:=-H +K_{ij}(g^{ij}-m^im^j)\ge 0\;,\ee
(One also has the obvious \emph{past} version thereof, where the
sign in front of the $K$ term should be changed.) A celebrated
theorem of Penrose\footnote{Penrose's theorem is slightly more
general, using a definition of $\theta_\pm$ which involves
a discussion of null geometry which we prefer to avoid here. This
is the reason why we have stated this theorem in the current
form.} asserts that:

\begin{Theorem}[Penrose's geodesic incompleteness  theorem~\cite{Penrose:singularity}]
Let $(\mcM,g)$ be a smooth
 \ptcx{remove smoothness if giving proof}
globally hyperbolic space-time satisfying the
\emph{null energy condition}, and suppose that $\mcM$ contains a
\emph{non-compact} Cauchy surface $\hyp$. If there exists a
\emph{compact} trapped surface within $S$ which is both
\emph{inner-future-trapped} and \emph{outer-future-trapped}, then
$(\mcM,g)$ is geodesically incomplete.
\end{Theorem}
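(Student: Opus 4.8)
The plan is to argue by contradiction: assume that $(\mcM,g)$ is \emph{future null geodesically complete} and derive a contradiction with the assumed non-compactness of the Cauchy surface $\hyp$; since null incompleteness implies geodesic incompleteness, this proves the theorem. The strategy is the classical one of Penrose. First I would show that, under this assumption, the achronal boundary $\dotJ^+(S)$ is a nonempty \emph{compact} topological hypersurface, and then I would show that such a boundary cannot be flowed into a non-compact Cauchy surface.

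First I would record the structure of $\dotJ^+(S)=\partial J^+(S)$. Since $S$ is a compact surface without boundary and $S\ne\mcM$, the set $\dotJ^+(S)$ is a nonempty closed achronal $C^0$ hypersurface; moreover every point of $\dotJ^+(S)$ lies on a future-directed null geodesic issuing orthogonally from $S$, along one of the two future null normal directions $\ell_\pm = n\pm m$, with no focal point of $S$ strictly before that point. This is the standard description of achronal boundaries, which can be assembled from the local causal analysis of Proposition~\ref{P3} together with the accumulation-curve machinery of Section~\ref{SAc} (in particular Proposition~\ref{Paccum}).

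The second step is the focusing estimate. The inner- and outer-future-trapped conditions assert precisely that the two future-directed null normal congruences of $S$ both have non-positive expansion at $S$; since $S$ is compact there is a constant $c>0$ with $\theta_\pm\le -c$ on all of $S$. Feeding this, together with the null energy condition $R_{\mu\nu}n^\mu n^\nu\ge 0$, into the Raychaudhuri equation $\dfrac{d\theta}{d\lambda} = -\dfrac{\theta^2}{n-1}-\sigma_{ab}\sigma^{ab}-R_{\mu\nu}\ell^\mu\ell^\nu \le -\dfrac{\theta^2}{n-1}$ for an orthogonal null congruence (whose transverse screen is $(n-1)$-dimensional), one concludes that $\theta\to-\infty$, i.e. a focal point occurs, within affine parameter $\lambda\le a_0:=(n-1)/c$ along every such geodesic. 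By future null completeness the exponential map is defined on the whole compact set $\mcK:=\{(q,\ell,\lambda): q\in S,\ \ell\in\{\ell_+(q),\ell_-(q)\},\ 0\le\lambda\le a_0\}$, and since a generator of $\dotJ^+(S)$ must leave the boundary once it passes a focal point, $\dotJ^+(S)$ is contained in the image of $\mcK$ under this map. Being a closed subset of a compact set, $\dotJ^+(S)$ is therefore compact.

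Finally I would derive the contradiction. By global hyperbolicity and Corollary~\ref{CCS} we may write $\mcM\cong\R\times\hyp$ with projection along the integral curves of a timelike vector field $X$; since $\hyp$ is a Cauchy surface, the flow of $X$ defines a continuous map $\rho:\mcM\to\hyp$ whose restriction to the achronal set $\dotJ^+(S)$ is injective (two distinct points on a common integral curve of the timelike $X$ would be timelike related, violating achronality). Thus $\rho(\dotJ^+(S))$ is a compact, hence closed, subset of $\hyp$. On the other hand $\dotJ^+(S)$ is an $n$-dimensional topological manifold without boundary (being an achronal boundary, it is edgeless), so by invariance of domain $\rho(\dotJ^+(S))$ is also open in the $n$-manifold $\hyp$. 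Since $\hyp$ is connected and $\rho(\dotJ^+(S))$ is nonempty, open and closed, we obtain $\rho(\dotJ^+(S))=\hyp$, whence $\hyp$ is compact, contradicting the hypothesis. I expect the main obstacle to be the second step: rigorously justifying that every generator of $\dotJ^+(S)$ leaves the boundary past its first focal point (this needs the focal-point comparison theory for null geodesics and the failure of achronality after a focal point), and extracting compactness of $\dotJ^+(S)$, where future null completeness is used essentially and where the limit-curve results of Section~\ref{SAc} underpin the closedness argument.
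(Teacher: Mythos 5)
First, a point of reference: these notes do not actually prove this theorem --- it is stated as an application, with a citation to Penrose and a footnote explaining that the formulation has been adapted to avoid developing null geometry --- so your proposal can only be measured against the classical argument in the literature. What you outline is exactly that argument (Penrose's): the structure of the achronal boundary $\dotJ^+(S)$ as a topological hypersurface ruled by null geodesics orthogonal to $S$ without prior focal points, the Raychaudhuri focusing estimate with the correct constant $(n-1)$ for the $(n-1)$-dimensional screen space, compactness of $\dotJ^+(S)$ from completeness, and the final contradiction via the timelike flow onto $\hyp$, injectivity from achronality, and invariance of domain. The architecture is correct, and the obstacles you flag at the end (orthogonality of generators, ``a focal point forces a generator to leave the boundary'', compactness of $\dotJ^+(S)$) are the genuinely technical ingredients; they require the null second-variation/focal-point theory, which is nowhere developed in these notes.

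There is, however, one genuine gap, and it is fatal to the argument as written. You claim that ``since $S$ is compact there is a constant $c>0$ with $\theta_\pm\le -c$ on all of $S$''. This does not follow from the hypothesis that both expansions are merely \emph{non-positive}: a continuous non-positive function on a compact set can vanish at a point, or identically, and then no uniform strictly negative bound exists. The failure is not cosmetic, because with weak inequalities the theorem itself is false. Consider the flat spacetime $\R^{1,1}\times\mathbb{T}^2$ with metric $-dt^2+dx^2+dy^2+dz^2$, $y,z$ periodic: it is geodesically complete, globally hyperbolic with non-compact Cauchy surfaces $\{t=0\}\cong\R\times\mathbb{T}^2$, and satisfies the null energy condition trivially; yet the compact surface $S=\{t=x=0\}\times\mathbb{T}^2$ has both future null normals $\partial_t\pm\partial_x$ parallel, hence both expansions identically zero. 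So no focal points ever form, $\dotJ^+(S)=\{t=|x|\}$ is non-compact, and no contradiction can be derived. The proof requires the \emph{strict} pointwise inequalities $\theta_\pm<0$ on $S$ (Penrose's actual hypothesis); only then does compactness yield the uniform bound $\theta_\pm\le -c$ that drives $\theta\to-\infty$ within affine parameter $(n-1)/c$. Relatedly, your assertion that the stated conditions ``assert precisely'' that both future null expansions are non-positive silently corrects a sign: as written in \eq{innertr}, the inner condition reads $\theta_-\ge 0$, which under any fixed sign convention for $K$ makes one of the two null expansions non-negative; with that literal reading even Minkowski space-time provides counterexamples (take $m$ on a round sphere to point towards the centre). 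So you should restate the hypothesis as: both null expansions strictly negative everywhere on the compact $S$; with that correction your outline is the right proof, modulo supplying the standard focal-point machinery.
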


The significance of this theorem stems from the fact, that the Schwarzschild
solution, as well as the non-degenerate Kerr black holes, possess trapped
surfaces beyond the horizon. A small perturbation of the metric will preserve
this. It follows that the geodesic incompleteness of these black holes is not
an accident of the large isometry group involved, but is stable under
perturbations of the metric.

More generally, future-trapped surfaces signal the existence of black holes.
Formal statements to this effect require  the introduction of the notion of a
black hole, as well as several global regularity conditions, and
will therefore not be given here.\ptcx{should be done in the black
holes section, give crossreference}

We close the list of applications with the \emph{area theorem}, which plays a
fundamental role in black-hole thermodynamics:

\begin{Theorem}[\cite{HE,ChDGH}]
\MCinfty
 Let $\mcE$ be a future geodesically complete
acausal null hypersurface, and let $\hyp_1$, $\hyp_2$ be two
spacelike acausal hypersurfaces. If
$$\mcE\cap \hyp_1 \subset J^-(\mcE\cap \hyp_2)\;,$$
then $$ \mbox{Area}(\mcE\cap \hyp_1) \le \mbox{Area}(\mcE \cap
\hyp_2)\;.$$
\end{Theorem}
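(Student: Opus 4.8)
The plan is to reduce the statement to one analytic fact — that the null expansion $\theta$ of the generators of $\mcE$ is non-negative — and then to obtain the inequality by flowing the earlier cross-section forward along the generators onto the later one. Recall that $\mcE$ is ruled by future-inextendible null geodesics: by Proposition~\ref{Pro:acausalgeo} every achronal causal curve contained in $\mcE$ is a null geodesic, and future geodesic completeness lets each generator be affinely extended to arbitrarily large parameter while remaining in $\mcE$. Throughout I invoke the null energy condition \eq{nec1}, without which the conclusion is false.

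First I would prove $\theta\ge 0$ everywhere. Suppose instead that $\theta(p)<0$ at some $p$ on a generator $\Gamma$ with affine future tangent $k$. The Raychaudhuri equation $\frac{d\theta}{ds}=-\frac{\theta^2}{n-1}-\sigma_{ab}\sigma^{ab}-R_{ab}k^ak^b$, combined with $R_{ab}k^ak^b\ge 0$ from \eq{nec1} and $\sigma_{ab}\sigma^{ab}\ge 0$, gives $\frac{d\theta}{ds}\le-\frac{\theta^2}{n-1}$, so $\theta\to-\infty$ within a bounded affine interval and a focal point $q$ forms to the future of $p$; future completeness guarantees $\Gamma$ reaches $q$. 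Past a focal point the generator fails to be achronal: the relevant segment is a causal curve that is no longer a maximising null geodesic, hence by Proposition~\ref{P14X11.1} (or Corollary~\ref{CPushup0}) it can be deformed, keeping endpoints fixed, to a timelike curve, so points beyond $q$ lie in the timelike future of earlier points of $\Gamma$. This contradicts achronality of $\mcE$, so $\theta\ge 0$.

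Next I would construct the generator flow. Fix $x\in\mcE\cap\hyp_1$. Since $x\in J^-(\mcE\cap\hyp_2)$ there is a future causal curve from $x$ to some $y\in\mcE\cap\hyp_2$; as $x$ and $y$ both lie in the achronal set $\mcE$, this curve cannot be deformed to a timelike one, so by Proposition~\ref{P14X11.1} it is a null geodesic and hence coincides with the generator segment through $x$. This defines an injective map $\Phi\colon\mcE\cap\hyp_1\to\mcE\cap\hyp_2$, $\Phi(x)=y$ (single-valuedness and injectivity from acausality of $\hyp_2$ and achronality of $\mcE$). Along each generator the cross-sectional area element changes by the Jacobian $\exp\!\big(\int\theta\,ds\big)\ge 1$, using $\theta\ge 0$; integrating over $\mcE\cap\hyp_1$ and then using $\Phi(\mcE\cap\hyp_1)\subset\mcE\cap\hyp_2$ gives $\mbox{Area}(\mcE\cap\hyp_1)\le\mbox{Area}(\mcE\cap\hyp_2)$.

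The genuine obstacle is regularity. Event-type horizons are in general only Lipschitz, so $\theta$ is undefined classically and the focal-point argument cannot be run at every point. The way around this is to use the semi-convexity of $\mcE$ proved in Theorem~\ref{semiConvT1}: by Alexandrov's theorem a semi-convex function is twice differentiable almost everywhere, which furnishes an expansion defined on a full-measure set of Alexandrov points where the Raychaudhuri inequality holds in the pointwise-a.e.\ sense, and a well-defined notion of $\mbox{Area}$ as Hausdorff measure. One must then propagate this second-order structure forward along the generators and control the non-differentiability set so that the change-of-variables step survives with the merely Lipschitz flow $\Phi$. This measure-theoretic bookkeeping — not the formal Raychaudhuri computation — is the technical heart, and is the content of~\cite{ChDGH}.
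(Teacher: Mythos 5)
The paper does not actually prove this theorem: it appears in the closing list of applications, stated with a citation to \cite{HE,ChDGH}, so there is no internal proof to compare yours with. Your outline follows the standard strategy of those references --- non-negativity of the expansion of the generators via Raychaudhuri, the generator flow mapping the earlier cut into the later one, the Jacobian estimate, and Alexandrov-point arguments to handle low regularity of $\mcE$ --- and you are right to insert the null energy condition \eq{nec1}: the statement as printed omits it, and the conclusion is false without it.

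There are, however, genuine gaps in the way you anchor the argument to this paper's results. The central one is the focal-point step: the generator segment beyond the focal point $q$ is still a smooth, unbroken null geodesic, so Proposition~\ref{P14X11.1} (whose hypothesis is that the curve is \emph{not} a null geodesic) and Corollary~\ref{CPushup0} (which requires a timelike sub-arc) simply do not apply to it; being a \emph{non-maximising} null geodesic is not the same as not being a null geodesic. The implication ``focal point $\Rightarrow$ points beyond it lie in the timelike future of a cross-section'' is exactly the variational/index-form theory of focal points along null geodesics (developed, e.g., in \cite{HE}), which this paper nowhere provides --- its conjugate-point material is incomplete and confined to the restricted portion --- so this step is imported from outside while being attributed to results that cannot deliver it. Two smaller points: (i) the claim that every point of $\mcE$ lies on a future-inextendible null geodesic contained in $\mcE$ does not follow from Proposition~\ref{Pro:acausalgeo} alone, since that proposition presupposes an achronal causal curve already lying in $\mcE$; for smooth $\mcE$ this is the standard computation that the null direction field of a null hypersurface is geodesic, and for Lipschitz $\mcE$ it is part of the structure theory of \cite{ChDGH}. (ii) Theorem~\ref{semiConvT1}, which you invoke for semi-convexity, assumes a globally hyperbolic ambient space-time and a null geodesically ruled hypersurface, neither of which is among the hypotheses here, so even that step needs an argument. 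Deferring the a.e.\ second-differentiability and change-of-variables bookkeeping to \cite{ChDGH} is fair --- it is all the paper itself does --- but with the above repairs your text is an outline of the known proof rather than a proof.
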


\ptcx{show existence of maximising geodesics in globally hyperbolic
space-times}

\ptcx{a sections on lorentzian length, restricted,   should probably be ignored at the moment}
\restrict
{

\section{The Lorentzian distance function}
 \label{SLlf2}
 Let $ \Omega(p,q)$ be the set of all  future
directed causal curves from $p$ to $q$. We define the
\emph{Lorentzian distance function} $d : M \times M \to [0,
\infty]$ as follows:
\bel{lorddef} d(p,q) = \left\{%
\begin{array}{ll}
    \sup \{ \ell(\gamma): \gamma \in \Omega(p,q) \} , & \hbox{$q\in J^+(p)$,} \\
    0, & \hbox{otherwise,} \\
\end{array}%
\right.    \ee
 where  $\sup$ is understood
in $\R\cup\{\infty\}$. We shall sometimes write $d_g$ for $d$ when
the need to indicate explicitly the metric arises.

\begin{coco}
It is tempting to define $d(p,q)=-\infty$ for $q\not \in J^+(p)$,
but this leads to a function $d$ which is never continuous
(compare Proposition~\ref{Pro:contlord}), which is the reason for
using \eq{lorddef}.
\end{coco}

It is legitimate to ask the question, what would happen if one
took the infimum rather than the supremum  in \eq{lorddef}. The
result is not very interesting, the reader should easily convince
himself that one can approach a $C^0$ curve $\gamma$ as close as
desired by threading back and forth near $\gamma$ along null
geodesics, each of which has zero Lorentzian length. So taking the
infimum in \eq{lorddef} always gives zero.

In any case the calculation of $\sigma$ in Minkowski space-time,
which we are about to do, should make it clear that the right
thing to do is to take the supremum:

\begin{Example}
Let $q=(x^\mu)=(x^0,\vec x)\in \R^{1,n}$ be in the timelike future
of the origin, thus $x^0>|\vec x|_\delta$ by
Proposition~\ref{PM1}. Let us make a Gram-Schmidt
orthonormalization starting from $(x^\mu)$, viewed as a vector
tangent to $\R^{1,n}$ at $0$, thus $x^\mu e_\mu =
\sqrt{(x^0)^2-|\vec x|_\delta^2}e_0$. Using normal coordinates
based on the new basis $e_\mu$ one obtains a Minkowskian
coordinate system in which $q=(t,\vec 0)$, with
\bel{tdef}t=\sqrt{(x^0)^2-|\vec x|_\delta^2}\;.\ee Consider, now
any causal curve $\gamma(s) = (\gamma^0(s),\vec \gamma(s))$, from
the origin to $q$, causality gives
$$|\dot \gamma^0|\ge |\dot {\vec \gamma}|_\delta\;,$$ with $|\dot
\gamma^0|$ without zeros (since the inequality is strict for
timelike vectors, and if it is an equality than neither side can
vanish, otherwise $\dot \gamma$ would vanish, which is not allowed
for causal vectors). This shows that we can reparameterize
$\gamma$ so that
$$\gamma^0(s)=s\;,\quad s\in [0,t]\;.$$ In this parameterization
we have
$$\ell(\gamma)= \int_0^t \sqrt{1- |\dot {\vec \gamma}|^2_\delta}\;,$$
and this formula makes it clear that the supremum is attained on
the path
$$\gamma(s)=(s,\vec 0) \quad \Longrightarrow \quad \ell (\gamma) = t \;.$$
It follows that
\bel{lordMin} d_\eta(p,q) = \left\{%
\begin{array}{ll}
    \sqrt{(x^0)^2-\sum_i(x^i)^2 }\;, & \hbox{$q\in J^+(p)$,} \\
    0, & \hbox{otherwise.} \\
\end{array}%
\right.    \ee We note that \eq{lordMin} follows also from
Proposition~\ref{Pelmax} below.
\end{Example}

\begin{Example}
 The two-dimensional space-time $\mcM=S^1\times S^1$ with the
flat metric $-dt^2+dx^2$, where $t$ is a $\mod 2 \pi$--coordinate
on the first $S^1$ factor, and $x$ is a similar coordinate on the
second factor, provides an example where $d(p,q)=\infty$ for all
$p,q\in \mcM$. This is seen by noting that if $\gamma$ is a
timelike curve from $p=(t_0,x_0)$ to $q$, then one can obtain a
causal curve of length $2\pi n+\ell(\gamma)$ be going $n$ times
around the timelike circle $x=x_0$, and then following $\gamma$
from $p$ to $q$.
\end{Example}

 It follows immediately from its definition
that the Lorentzian distance function obeys the reverse triangle
inequality: for $p \in J^-(q)$ and $q\in J^-(r)$ it holds that
\bel{revtrin} d(p,r) \geq d(p,q) + d(q,r) \, .
\ee Indeed, if $p \in J^-(q)$ and $q\in J^-(r)$, then the class
$\Omega(p,r)$ of future directed causal curves from $p$ to $r$
contains the class $\Omega(p,q)\cup \Omega(q,r)$, where the union
$\cup$ is understood as the concatenation operation on paths. It
follows that the sup of $\ell$ over $\Omega(p,q)$ is greater than
or equal to the sup of $\ell$ over $\Omega(p,q)\cup \Omega(p,r)$,
which implies \eq{revtrin}

The function $d$  needs not to be continuous in general, as seen
in the following example:
\begin{Example}[\protect{\protect\cite[p.~141]{Beem-Ehrlich:Lorentz2}}]
\label{Exnotcontlord} Let $\mcM$ be $$\{(t,x):0\le t\le
2\}\setminus \{(1,x):-1\le x\le 1\}\;,$$
 with the identifications
$(x,0)\sim(x,2)$\restrict{ (see Figure~\ref{Fnotcontlord})\ptcx{make figure}},
equipped with the flat metric $g=-dt^2+dx^2$. Let $p$ be the
origin and let $q=(0,1/2)$, set $p_n = (0,1/n)\to p$. Clearly
$p_n\in I^+(p_n)$ which implies $d(p_n,p_n)=\infty$, and also
$d(p_n,q)=\infty$ for $n>2$. By Proposition~\ref{Pelmax}    we
have $d(p,q)=1/2$, so that the function $d(\cdot,q)$ is not
continuous at $p$.
\end{Example}
\begin{figure}
\caption{\label{Fnotcontlord} The space-time of
Example~\ref{Exnotcontlord}.}
\end{figure}

Let us show that  $d$ is \emph{lower semi-continuous}:

\begin{Proposition}
\label{Pro:lowsemcontlord} If $p_n\to p$ and $q_n\to q$ then
\bel{lowsemcontlord} d(p,q)\le \lim\inf d(p_n,q_n)\;.\ee (This implies in particular that $d$ is continuous
on $d^{-1}(\{\infty\})$.)
\end{Proposition}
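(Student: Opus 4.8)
The plan is to reduce everything to producing, for each real $\lambda<d(p,q)$ and each $\epsilon>0$, a future directed causal curve from $p_n$ to $q_n$ of length at least $\lambda-\epsilon$ for all large $n$. This yields $\liminf_n d(p_n,q_n)\ge\lambda-\epsilon$, and letting $\epsilon\to0$ and then taking the supremum over $\lambda<d(p,q)$ gives \eq{lowsemcontlord} uniformly in the finite and the infinite cases; the infinite case is exactly the asserted continuity on $d^{-1}(\{\infty\})$, since then $\liminf_n d(p_n,q_n)=\infty$ forces $d(p_n,q_n)\to\infty$. If $d(p,q)=0$ there is nothing to prove because $d\ge0$, so I may assume $d(p,q)>0$ and fix $0<\lambda<d(p,q)$. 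By definition \eq{lorddef} there is a future directed causal curve $\gamma:[0,1]\to\mcM$ from $p$ to $q$ with $\ell(\gamma)>\lambda>0$; in particular $\gamma$ is not a null geodesic, since null geodesics have zero Lorentzian length.

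The key step is to truncate $\gamma$ slightly at both ends so that the truncated endpoints lie in the \emph{open} chronological future of $p$ and chronological past of $q$, while losing almost none of the length. Since a subsegment of a null geodesic is again a null geodesic, the set of $s$ for which $\gamma|_{[0,s]}$ is a null geodesic is an interval $[0,s^*]$, and $\gamma|_{[0,s^*]}$, being a null geodesic, has zero length; symmetrically there is $s^{**}$ at the far end with $\gamma|_{[s^{**},1]}$ of zero length. Positivity of $\ell(\gamma)$ forces $s^*<s^{**}$, for otherwise $\gamma$ would be a broken null geodesic on all of $[0,1]$ and hence of zero length. For any $s_1\in(s^*,s^{**})$ the curve $\gamma|_{[0,s_1]}$ is not a null geodesic, so Proposition~\ref{P14X11.1} produces a timelike curve from $p$ to $a:=\gamma(s_1)$, i.e.\ $a\in I^+(p)$; likewise any $s_2\in(s_1,s^{**})$ gives $b:=\gamma(s_2)\in I^-(q)$. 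Because $s\mapsto\ell(\gamma|_{[0,s]})$ is continuous and vanishes on $[0,s^*]$ and on $[s^{**},1]$, choosing $s_1$ near $s^*$ and $s_2$ near $s^{**}$ makes $\ell(\gamma|_{[s_1,s_2]})>\lambda-\epsilon$.

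Next I would splice in the moving endpoints. As $a\in I^+(p)$ we have $p\in I^-(a)$, and $I^-(a)$ is open by Proposition~\ref{P2}; since $p_n\to p$, this gives a future directed timelike curve $\delta_n$ from $p_n$ to $a$ for all large $n$. Symmetrically, from $b\in I^-(q)$, openness of $I^+(b)$ and $q_n\to q$ give a timelike curve $\eta_n$ from $b$ to $q_n$. The concatenation $\delta_n\cup\gamma|_{[s_1,s_2]}\cup\eta_n$ is a future directed causal curve from $p_n$ to $q_n$, and since Lorentzian length is additive under concatenation and nonnegative, its length is at least $\ell(\gamma|_{[s_1,s_2]})>\lambda-\epsilon$. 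Hence $d(p_n,q_n)>\lambda-\epsilon$ for all large $n$, which is exactly the bound required by the reduction above.

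The main obstacle is precisely this truncation: one cannot in general take $a=\gamma(s)$ for $s$ arbitrarily small while keeping $a\in I^+(p)$, because $\gamma$ may begin (or end) with a genuinely null segment, so that its early points sit on $\dot J^{+}(p)$ rather than in the open set $I^+(p)$. The device that removes the difficulty is the observation that such null initial and final segments carry no length, so discarding them is free, combined with Proposition~\ref{P14X11.1} to certify that the truncated endpoints are chronologically related to $p$ and $q$. I should also flag that the argument relies on the chronological futures and pasts being open (Proposition~\ref{P2}) and on Proposition~\ref{P14X11.1}, both established here only for $C^2$ metrics, so the proof as planned is valid at that regularity, in accordance with the standing hypotheses of this part of the notes.
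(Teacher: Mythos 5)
Your proof is correct, and it shares the paper's overall skeleton: discard the zero-length null end-segments of a near-maximizing causal curve $\gamma$ from $p$ to $q$, then for large $n$ splice in timelike connectors from $p_n$ and to $q_n$ using openness of chronological futures and pasts (Proposition~\ref{P2}), and conclude by concatenation and additivity of length. Where you genuinely differ is in the truncation device and the key lemma invoked. The paper locates the initial null piece as $\gamma|_{[0,s_*]}$, where $s_*$ is the last parameter with $\gamma(s)\notin I^+(p)$: achronality of that piece together with Proposition~\ref{Pro:acausalgeo} shows it is a null geodesic of zero length, and then the explicit push-up construction from the proof of Lemma~\ref{Lpushup0} is used to \emph{deform} $\gamma$ into a curve $\tilde\gamma$ which is timelike near both endpoints, losing at most $\epsilon/2$ of length; the timelike connectors are then attached at interior points of those timelike pieces. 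You instead truncate at the maximal initial and final null-geodesic segments and invoke Proposition~\ref{P14X11.1} to certify $\gamma(s_1)\in I^+(p)$ and $\gamma(s_2)\in I^-(q)$, splicing directly onto the unmodified curve. Your route is more modular: all deformation work is hidden inside Proposition~\ref{P14X11.1} (which is itself proved with the same push-up machinery), the $\epsilon/4$ bookkeeping disappears, and your reduction to arbitrary $\lambda<d(p,q)$ treats the finite and infinite cases of the supremum uniformly. What the paper's route buys in exchange is the reusable intermediate statement that any causal curve of positive length can be replaced, at arbitrarily small cost in length, by one timelike near its endpoints; this is exactly what is recycled later in this section in the proof that accumulation curves of maximising curves are maximising. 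One small imprecision in your write-up: you assert that $\{s:\gamma|_{[0,s]}\ \mbox{is a null geodesic}\}$ is a \emph{closed} interval $[0,s^*]$, which you do not justify (it requires an ODE/limit argument at the endpoint). You do not actually need it: defining $s^*$ as the supremum of that set, the non-geodesic property of $\gamma|_{[0,s_1]}$ for $s_1>s^*$ is immediate, and continuity of $s\mapsto\ell(\gamma|_{[0,s]})$ (an integral of an $L^\infty_{\loc}$ function) already gives $\ell(\gamma|_{[0,s^*]})=0$, which is all your length estimate uses. With that remark your argument is complete, at the same $C^2$ regularity the paper requires.
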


\proof If $d(p,q)=0$ there is nothing to prove. Otherwise let
$\gamma:[0,1]\to\mcM$ be any causal curve from $p$ to $q$ with
non-zero Lorentzian length. Let us start by showing that we can
always deform $\gamma$ to a new causal curve $\tilde \gamma$,
which is \emph{timelike near $p$ and $q$},  reducing the length by
less than $\epsilon/2$ (if at all). Indeed, let $s_*>0$ be the
last point on $\gamma$ such that $\gamma(s)\not \in I^+(p)$, then
$s<1$, and by Proposition~\ref{Pro:acausalgeo} $\gamma|_{[0,s_*]}$
is a null geodesic, in particular $\ell(\gamma|_{[0,s_*]})=0$. By
definition of $s_*$ there exists a sequence $s_i\searrow s_*$ such
that $\gamma(s_i)\in I^+(\gamma(s_*))$. Since $$\ell(\gamma)=
\ell(\gamma|_{[s_*,1]})\;,$$  by choosing $i$ large enough we will
have $\ell(\gamma|_{[s_i,1]})\ge \ell(\gamma)-\epsilon/4$. One can
use the path  constructed in the proof of Lemma~\ref{Lpushup0} to
replace $\gamma|_{[0,s_i]}$ by a timelike curve from $p$ to
$\gamma(s_i)$, leading to a new curve from $p$ to $q$ which is
timelike near $p$ and which has length not less than
$\ell(\gamma)-\epsilon/4$. A similar construction near $q$ leads
to a causal path $\tilde \gamma:[0,1]\to \mcM$ from $p$ to $q$,
timelike near $p$ and $q$, such that
$$\ell(\tilde \gamma)\ge \ell(\gamma)-\epsilon/2\;.$$

Let, now,  $s_->0$ be small enough so that $$\ell(\tilde
\gamma|_{[0,s_-]})<\epsilon/4\;.$$ Since $\tilde \gamma$ is
timelike near $p$ the set $I^-(\gamma(s_-))$ is an open
neighborhood of $p$ and therefore, for $n$ large enough, there
exists a timelike curve from $p_n$ to $\gamma(s_-)$. Similarly we
can find $s_+$ close to $q$ so that $$\ell(\tilde
\gamma|_{[s_+,1]})<\epsilon/4\;,$$ and there exists a timelike
curve from $\gamma(s_+)$ to $q_n$ for $n$ large enough.
Concatenating those curves results in a causal curve from $p_n$ to
$q_n$ with length not less than $\ell(\gamma)-\epsilon$. Taking a
supremum over $\gamma$'s proves \eq{lowsemcontlord} except for
$-\epsilon$ at the left-hand-side. However, as $\epsilon$ is
arbitrary, \eq{lowsemcontlord} follows.

If $d(p,q)=\infty$, then \eq{lowsemcontlord} gives $\liminf
d(p_n,q_n)=\infty$, which clearly implies $\lim
d(p_n,q_n)=\infty$, as desired.\qed

We shall say that a future directed causal path $\gamma:I\to \mcM$
is \emph{maximising} if \bel{maxcp}\forall \ s,s'\in I\;,\ s<s'
\qquad d(\gamma(s),\gamma(s'))=\ell(\gamma|_{[s,s']})\;.\ee We
have the following simple observation:

\begin{Proposition}
\label{Pmaxgeo} $d(p,q)$ is attained on $\gamma:[0,1]\to \mcM$ if
and only if  $\gamma$ is maximising.
\end{Proposition}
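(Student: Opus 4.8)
The plan is to read both implications off directly from the definition of $d$ as a supremum of Lorentzian lengths, the only real content being an elementary cut-and-paste argument for the forward direction.

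The reverse implication is immediate. If $\gamma:[0,1]\to\mcM$ is maximising in the sense of \eq{maxcp}, then specialising to $s=0$, $s'=1$ gives $d(\gamma(0),\gamma(1))=\ell(\gamma|_{[0,1]})$; that is, $d(p,q)=\ell(\gamma)$, so $d(p,q)$ is attained on $\gamma$.

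For the forward implication I would assume $\ell(\gamma)=d(p,q)$ and fix $s<s'$ in $[0,1]$. Since $\gamma|_{[s,s']}$ is itself a future directed causal curve from $\gamma(s)$ to $\gamma(s')$, the definition \eq{lorddef} of $d$ as a supremum gives $d(\gamma(s),\gamma(s'))\ge\ell(\gamma|_{[s,s']})$ for free. To get the reverse inequality I would argue by contradiction: were $d(\gamma(s),\gamma(s'))>\ell(\gamma|_{[s,s']})$, there would exist a causal curve $\sigma$ from $\gamma(s)$ to $\gamma(s')$ with $\ell(\sigma)>\ell(\gamma|_{[s,s']})$. Concatenating, via the operation introduced in \eq{concat}, to form the causal curve $\hat\gamma:=\gamma|_{[0,s]}\cup\sigma\cup\gamma|_{[s',1]}$ from $p$ to $q$, and using additivity of the Lorentzian length under restriction and concatenation, one obtains
\[
\ell(\hat\gamma)=\ell(\gamma|_{[0,s]})+\ell(\sigma)+\ell(\gamma|_{[s',1]})>\ell(\gamma|_{[0,s]})+\ell(\gamma|_{[s,s']})+\ell(\gamma|_{[s',1]})=\ell(\gamma)\;.
\]
But $\hat\gamma$ is a causal curve from $p$ to $q$, so $\ell(\hat\gamma)\le d(p,q)=\ell(\gamma)$, a contradiction. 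Hence $d(\gamma(s),\gamma(s'))=\ell(\gamma|_{[s,s']})$ for all $s<s'$, i.e. $\gamma$ is maximising.

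There is essentially no deep obstacle here; the work is bookkeeping rather than geometry. Two points need checking. First, additivity of $\ell$ over subintervals and concatenations, which follows from $\ell(\gamma)=\int\sqrt{-g(\dot\gamma,\dot\gamma)}\,ds$ together with additivity of the integral, legitimate for locally Lipschitz curves by Theorem~\ref{TRademacher}. Second, the case $d(p,q)=\infty$: since $\gamma$ is Lipschitz on the compact interval $[0,1]$ and $g$ is continuous, the integrand $\sqrt{-g(\dot\gamma,\dot\gamma)}$ is bounded, so $\ell(\gamma)<\infty$ for any such $\gamma$; thus $d(p,q)=\infty$ can be attained on no such $\gamma$ and forces no such $\gamma$ to be maximising, and the equivalence holds vacuously with both sides false. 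The conceptual heart of the matter is simply the observation that a strictly longer connector on any subsegment would yield a strictly longer global competitor, contradicting maximality of $\ell(\gamma)$.
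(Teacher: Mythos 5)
Your proof is correct and follows essentially the same route as the paper's: the forward direction is the identical cut-and-paste contradiction (splice a longer connector $\sigma$ into $\gamma$ to beat the supremum), and the reverse direction is the trivial specialisation $s=0$, $s'=1$. Your extra remarks on additivity of $\ell$ and the vacuous case $d(p,q)=\infty$ are harmless refinements of bookkeeping the paper leaves implicit.
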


\proof  The triangle inequality gives \bel{trineqmax} d(p,q)\ge
d(p,\gamma(s_1))+d(\gamma(s_1),\gamma(s_2))+d(\gamma(s_2),q)\;.\ee
Since $d(p,q)$ is attained on $\gamma$ we have
\beaa d(p,q)&=&\int_0^1 \sqrt{g(\dot \gamma,\dot \gamma)}(s)ds
\\ & = & \int_0^{s_1} \sqrt{g(\dot \gamma,\dot \gamma)}(s)ds+
\underbrace{\int_{s_1}^{s_2} \sqrt{g(\dot \gamma,\dot
\gamma)}(s)ds}_{=\ell(\gamma|_{[s_1,s_2]})}+ \int_{s_2}^1
\sqrt{g(\dot \gamma,\dot \gamma)}(s)ds\;.\eeaa If $\gamma$ were
not maximising, then there would exist $s_1,s_2\in [0,1]$ and a
causal curve $\sigma$ from $\gamma(s_1)$ to $\gamma(s_2)$ with
Lorentzian length larger than $\ell(\gamma|_{[s_1,s_2]})$. One
would then obtain a causal path longer than $\gamma$ by following
$\gamma$ from $p$ to $\gamma(s_1)$, then following $\sigma$, and
then following $\gamma$ from $\gamma(s_2)$ to $q$. This
contradicts the fact that $d(p,q)$ is attained on  $\gamma$. It
follows that
$\ell(\gamma|_{[s_1,s_2]})=d(\gamma(s_1),\gamma(s_2))$ for all
$s_1,s_2\in [0,1]$. \qed

We also note:

\begin{Proposition}
\label{Pmaxgeo2} A null geodesic $\gamma$ is maximising if and
only if it is achronal.
\end{Proposition}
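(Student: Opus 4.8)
The plan is to reduce everything to a single clean equivalence: for any pair of points,
$d(p,q)>0$ if and only if $q\in I^+(p)$. The reverse implication is immediate, since a timelike curve from $p$ to $q$ has strictly positive Lorentzian length, so the supremum defining $d(p,q)$ is positive. For the forward implication, $d(p,q)>0$ forces the existence of a future directed causal curve $\sigma$ from $p$ to $q$ with $\ell(\sigma)>0$; such a $\sigma$ cannot be a null geodesic, because null geodesics have vanishing Lorentzian length, and hence Proposition~\ref{P14X11.1} produces a timelike curve from $p$ to $q$, i.e. $q\in I^+(p)$. Thus $d(p,q)=0$ if and only if $q\notin I^+(p)$.

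Next I would specialise to the null geodesic $\gamma$. Since $\dot\gamma$ is null wherever defined, $\ell(\gamma|_{[s,s']})=0$ for all $s\le s'$, so the maximising condition \eq{maxcp} collapses to the requirement that $d(\gamma(s),\gamma(s'))=0$ for all $s<s'$. By the equivalence just established, this is precisely the statement that $\gamma(s')\notin I^+(\gamma(s))$ for all $s<s'$, which I will call the \emph{forward part} of achronality. If $\gamma$ is achronal then in particular this forward part holds, so $\gamma$ is maximising; this direction is essentially definitional.

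The one genuine point, and the part I expect to be the main obstacle, is the converse. Maximising yields only the forward non-chronality of pairs, whereas achronality also forbids the \emph{backward} relation $\gamma(s)\in I^+(\gamma(s'))$ for $s<s'$, which can only occur in chronology-violating situations. Here I would argue by contradiction using the deformation lemma. Suppose $\gamma(s)\in I^+(\gamma(s'))$ with $s<s'$, so there is a timelike future directed curve $\sigma$ from $\gamma(s')$ to $\gamma(s)$. Then the concatenation running along $\gamma$ from $\gamma(s)$ to $\gamma(s')$, then along $\sigma$ back to $\gamma(s)$, then once more along $\gamma$ to $\gamma(s')$, is a future directed causal curve from $\gamma(s)$ to $\gamma(s')$ containing the timelike subarc $\sigma$. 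By Corollary~\ref{CPushup0} it deforms to a timelike curve from $\gamma(s)$ to $\gamma(s')$, whence $\gamma(s')\in I^+(\gamma(s))$, contradicting the forward part already forced by maximising. (Equivalently, one may feed the resulting closed causal loop at $\gamma(s)$ carrying a timelike arc into Lemma~\ref{Lpushup0} to exhibit a closed timelike curve through $\gamma(s)$ and reach the same contradiction.) Hence both the forward and backward parts of achronality hold, and maximising is equivalent to achronality, which is what had to be shown. \qed
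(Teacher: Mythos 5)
Your proof is correct, and its skeleton is the same as the paper's: both directions are reduced to the equivalence between positivity of the Lorentzian distance and the chronological relation, which for a null geodesic turns the maximising condition into the statement that no later point of $\gamma$ lies in the timelike future of an earlier one. The paper's printed proof is however very terse, and you fill in two points that it leaves implicit. First, the implication $d(p,q)>0\Rightarrow q\in I^+(p)$, needed for the ``achronal $\Rightarrow$ maximising'' direction, is --- exactly as you argue --- a consequence of Proposition~\ref{P14X11.1} applied to a causal curve of positive length (such a curve cannot be a null geodesic); the paper uses this fact without comment. Second, and more substantively, the paper's argument via the reverse triangle inequality only excludes $\gamma(s')\in I^+(\gamma(s))$ for $s<s'$, i.e.\ your ``forward part'', whereas achronality as defined in Section~\ref{subsec:Achrona-causal-curves} quantifies over all pairs $s,s'$, so the backward relation $\gamma(s)\in I^+(\gamma(s'))$ must also be ruled out; this is a genuine case in chronology-violating space-times, and the paper is silent about it. Your concatenation argument (follow $\gamma$, then the timelike curve, then $\gamma$ again, and deform using Corollary~\ref{CPushup0}) disposes of it correctly; alternatively one can note that $\gamma(s)\in I^+(\gamma(s'))\subset I^+\bigl(J^+(\gamma(s))\bigr)=I^+(\gamma(s))$ by Lemma~\ref{Lpushup0}, and one more push-up then gives $\gamma(s')\in I^+(\gamma(s))$, contradicting the forward part. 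So your proof is not only correct but somewhat more complete than the one in the paper.
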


\proof $\Longrightarrow$: From the definition of $d$ we have
$d(\gamma(s_1),\gamma(s_2))>0$ whenever $\gamma(s_2)\in
I^+(\gamma(s_1))$, therefore all the terms at the right-hand-side
of \eq{trineqmax} are zero if $d(p,q)$ vanishes. It follows that
$\gamma$ is achronal, and the fact that $\gamma$ is a null
geodesic follows from Proposition~\ref{Pro:acausalgeo}.

 $\Longleftarrow$: Since $\gamma$ is achronal the inequality in
 \eq{trineqmax} is an inequality, with all terms vanishing.
 \qed

A Jacobi field along a geodesic $\gamma$ is a solution of
\emph{Jacobi equation}:
\bel{Jacobi} \frac {D^2Z}{ds^2}(s)
= R(\dot \gamma, Z)\dot \gamma\;.\ee
The point $\gamma(s_2)$ is said to be \emph{conjugate} to
$\gamma(s_1)$  if there exists a non-trivial solution of
\eq{Jacobi} which vanishes at both points.

A key result concerning maximising curves is:

\begin{Theorem}
 \label{Tmaxgeo3}
\MCtwok
If $\gamma$ is maximising, then $\gamma$ is a
geodesic without conjugate points. \ptc{the proof of the theorem
to be finished}
\end{Theorem}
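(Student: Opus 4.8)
The plan is to establish the two conclusions --- that a maximising causal curve $\gamma$ is a geodesic, and that it has no conjugate points --- separately, exploiting the variational characterisation of $d$ given in \eq{lorddef} together with the Lorentzian analogue of the classical Riemannian first- and second-variation calculus. First I would reduce to a local statement using Proposition~\ref{Pmaxgeo}: since $\gamma$ is maximising, every subsegment $\gamma|_{[s,s']}$ is itself maximising, so it suffices to work within a single normal-coordinate neighborhood $\mcO_p$ as in Proposition~\ref{PC1}, where the geodesic-convexity property (Proposition~\ref{Pgc}) guarantees that nearby points are joined by unique geodesics contained in $\mcO_p$. By Corollary~\ref{CP3.1a} the maximising curve between two timelike-related points can be taken to be a piecewise broken timelike geodesic; the key point is to rule out genuine breaks.

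To show $\gamma$ is a geodesic, the approach is the first-variation argument. At an interior break point, I would construct a one-parameter family of competitor curves (a proper variation with fixed endpoints, supported near the break) and compute $\frac{d}{du}\ell(\gamma_u)\big|_{u=0}$. The standard computation, valid here because the metric is $C^2$ so that $\nabla$ and the geodesic equation \eq{geodeq} make sense, yields a boundary term proportional to the jump in the unit tangent at the break, together with an interior integral of $g(\nabla_{\dot\gamma}\dot\gamma, \text{variation field})$. Since $\gamma$ is maximising, $\ell(\gamma_u)\le \ell(\gamma)$ for all competitors staying in the causal class, so the first variation must vanish for all admissible variation fields; this forces both $\nabla_{\dot\gamma}\dot\gamma = 0$ along the smooth pieces and the absence of a genuine corner, i.e.\ $\gamma$ is an (unbroken, affinely parameterisable) geodesic. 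Here one must be careful that the variations remain \emph{causal} and keep the curve timelike where $\gamma$ is timelike, so that $\ell$ is differentiable along the family; Proposition~\ref{Pelmax}-type local maximisation and the openness of $I^+$ (Proposition~\ref{P2}) make the admissible variations available.

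For the no-conjugate-points conclusion, the plan is the second-variation argument. Suppose, for contradiction, that $\gamma(s_1)$ and $\gamma(s_2)$ are conjugate, so there is a nontrivial Jacobi field $Z$ along $\gamma$ solving \eq{Jacobi} with $Z(s_1)=Z(s_2)=0$. The classical Morse-theoretic fact is that the existence of such a $Z$ allows one to build, from $Z$ together with a suitable ``corner-rounding'' modification just beyond the conjugate point, a fixed-endpoint timelike variation whose index form (second variation of $\ell$) is \emph{strictly positive}, producing a strictly longer causal curve from $\gamma(s_1)$ to a point slightly beyond $\gamma(s_2)$. This contradicts that $\gamma$ is maximising on $[s_1,s_2+\delta]$. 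Concretely I would write the second variation as the index form $I(V,V) = \int \big( g(\tfrac{DV}{ds},\tfrac{DV}{ds}) - g(R(\dot\gamma,V)\dot\gamma, V)\big)\,ds$ for variation fields $V$ orthogonal to $\dot\gamma$, show that $I$ is negative-semidefinite for a maximiser (the sign convention being reversed from the Riemannian case because timelike length is maximised), and exhibit a $V$ making $I(V,V)>0$ once a conjugate point is passed, using the standard Jacobi-field splicing construction.

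The hard part will be the technical bookkeeping around differentiability and the causal constraints on the variations: because our curves are only \emph{locally Lipschitz} and a priori maximisers need not be smooth until we have proved they are geodesics, one must either first establish the geodesic property and then work in the smooth category for the index-form computation, or carry the variational analysis in a setting robust to the low regularity. The cleanest route is to prove the geodesic statement first (so that $R$, $\nabla$, and Jacobi fields along $\gamma$ are well-defined and $C^0$, given the $C^2$ metric and hence $C^1$ curvature would be needed --- a point to check, since \eq{Jacobi} involves $R$), and then invoke the standard second-variation machinery. I would also need to ensure the competitor curves produced in both variations genuinely lie in the causal cone; this is where the local model of Proposition~\ref{P3} and the push-up Lemma~\ref{Lpushup0} are the right tools to confirm that small deformations of a timelike or non-null-geodesic segment remain causal (indeed timelike), so that $\ell$ is being compared within the correct class $\Omega(p,q)$.
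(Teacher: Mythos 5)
First, a point of reference: the paper's own proof of this theorem is explicitly left unfinished (it records only the reduction via Proposition~\ref{Pmaxgeo}, the announcement that the result ``is proved by variational arguments'', and the \emph{unproved} approximation Lemma~\ref{Ldiff}), so your proposal is being compared against a fragment; that fragment does indicate the same classical variational strategy you adopt. Within that strategy, however, your plan has a genuine gap at its pivot. You write that ``by Corollary~\ref{CP3.1a} the maximising curve between two timelike-related points can be taken to be a piecewise broken timelike geodesic''. Corollary~\ref{CP3.1a} asserts only that \emph{some} piecewise broken geodesic joins two causally related points; it says nothing about the maximiser itself, which a priori is merely a locally Lipschitz curve. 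This matters because your tool for the geodesic property, the first variation of arc length, is only defined for (piecewise) differentiable curves: one cannot integrate by parts to extract $\nabla_{\dot\gamma}\dot\gamma$ and corner terms from a curve whose tangent exists only almost everywhere. You notice this circularity yourself in your final paragraph, but your proposed resolution (``prove the geodesic statement first'', again by first variation) is precisely the step that the low regularity blocks.

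What closes the gap is not a variation argument at all, but the \emph{equality case} of Proposition~\ref{Pelmax}: in an elementary neighbourhood, any causal curve from $p$ to $q$ whose Lorentzian length equals $\sqrt{-\sigma_p(q)}$ must be a reparameterization of the radial geodesic; this follows from the Gauss-lemma-type computation behind Proposition~\ref{PC1}, where equality in the length estimate forces $\dot\gamma$ to be parallel to $\nabla\sigma_p$ almost everywhere, exactly as in the argument of Corollary~\ref{CP3.1}. Since every subsegment of a maximiser is maximising (Proposition~\ref{Pmaxgeo}), covering $\gamma$ by elementary neighbourhoods and applying this local uniqueness on overlapping subsegments shows that $\gamma$ is locally, hence globally, an unbroken geodesic --- with no differentiability assumption on $\gamma$ and no first variation needed. (The alternative is to first prove a regularization statement such as Lemma~\ref{Ldiff}, which is evidently the route the paper intended but never carried out.) Once the geodesic property is secured, your second-variation/index-form argument for excluding conjugate points is the standard one and is sound for a $C^2$ metric, since the Jacobi equation is a linear ODE with continuous coefficients; note only that the splicing construction produces a longer curve strictly \emph{past} the conjugate point, so it excludes conjugate points interior to the domain of $\gamma$, not one occurring exactly at an endpoint, and the statement of the theorem should be read accordingly.
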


\proof  Proposition~\ref{Pmaxgeo} reduces the problem to showing
that if $d(p,q)$ is attained on $\gamma$, then $\gamma$ is a
geodesic without conjugate points. This is proved by variational
arguments, as follows. We start with a Lemma:

\begin{Lemma}
\label{Ldiff} Let $q\in I^+(p)$, and let $\Omega_{C^{1,1}}(p,q)$
denote the class of $C^{1,1}$ timelike paths from $p$ to $q$. Then
$$\sup_{\gamma \in \Omega(p,q)}\ell(\gamma)=\sup_{\gamma \in
\Omega_{C^{1,1}}(p,q)}\ell(\gamma)\;.$$
\end{Lemma}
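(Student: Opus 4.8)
The inequality $\sup_{\gamma\in\Omega_{C^{1,1}}(p,q)}\ell(\gamma)\le \sup_{\gamma\in\Omega(p,q)}\ell(\gamma)$ is immediate, since a $C^{1,1}$ timelike path is in particular a causal path, so $\Omega_{C^{1,1}}(p,q)\subset\Omega(p,q)$. The whole content is the reverse inequality, and for this it suffices to show: given any causal $\gamma\in\Omega(p,q)$ and any $\eta>0$, there is a $C^{1,1}$ (indeed smooth) timelike curve $\tilde\gamma$ from $p$ to $q$ with $\ell(\tilde\gamma)\ge\ell(\gamma)-\eta$. The case $\ell(\gamma)=0$ is trivial, since $q\in I^+(p)$ guarantees a smooth timelike curve from $p$ to $q$ of positive length, so assume $\ell(\gamma)>0$. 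The plan is to turn $\gamma$ into a broken timelike geodesic of almost the same length and then round its corners. First I would replace $\gamma$ by a causal curve $\gamma^*$ from $p$ to $q$ which is timelike near $p$ and near $q$ and satisfies $\ell(\gamma^*)\ge\ell(\gamma)-\eta/3$; this is exactly the deformation carried out at the beginning of the proof of Proposition~\ref{Pro:lowsemcontlord} (the initial arc of $\gamma$ that fails to enter $I^+(p)$ is a null geodesic by Proposition~\ref{Pro:acausalgeo}, hence of zero length, and is pushed up to a timelike curve via Lemma~\ref{Lpushup0}, and symmetrically near $q$).

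Next I would subdivide: choose $0=t_0<\cdots<t_N=1$ so fine that each $\gamma^*|_{[t_i,t_{i+1}]}$ lies in a normal-coordinate neighbourhood centred at $x_i:=\gamma^*(t_i)$ (the sizes of such neighbourhoods are uniformly controlled on the compact image, as in the proof of Proposition~\ref{Paccum}). Because $\gamma^*$ is timelike near the endpoints we have $x_1\in I^+(p)$ and $q\in I^+(x_{N-1})$. The key step is to perturb the vertices to points $\tilde x_i$ (with $\tilde x_0=p$, $\tilde x_N=q$) so that consecutive ones are \emph{strictly} timelike related. I would do this inductively, maintaining the invariant that $\tilde x_i\in I^+(\tilde x_{i-1})$ and $x_{i+1}\in I^+(\tilde x_i)$. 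The induction starts since $x_1\in I^+(p)=I^+(\tilde x_0)$. At the inductive step, $x_i\in I^+(\tilde x_{i-1})$ which is open by Proposition~\ref{P2}, while $x_i\in J^-(x_{i+1})\subset\overline{I^-(x_{i+1})}$ by Corollary~\ref{Cpushup}; hence $I^+(\tilde x_{i-1})\cap I^-(x_{i+1})$ contains points arbitrarily close to $x_i$, and I take $\tilde x_i$ to be any such point within distance $\delta_i$ of $x_i$. This yields $\tilde x_i\in I^+(\tilde x_{i-1})$ and $x_{i+1}\in I^+(\tilde x_i)$, closing the induction; at $i=N-1$ the latter reads $q\in I^+(\tilde x_{N-1})$. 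Joining $\tilde x_i$ to $\tilde x_{i+1}$ by the radial timelike geodesic $\sigma_i$ (in normal coordinates centred at $\tilde x_i$) produces a broken timelike geodesic $\tilde\gamma_0$ from $p$ to $q$.

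For the length estimate, the radial geodesic $\sigma_i$ has $g$-length $\sqrt{-\sigma_{\tilde x_i}(\tilde x_{i+1})}$ directly from the normal-coordinate construction of Proposition~\ref{PC1}, while the local maximisation of radial geodesics in a normal neighbourhood — proved by the same normal-coordinate computation as in Minkowski space-time (Proposition~\ref{PM1}, using Proposition~\ref{P3}) — gives $\ell(\gamma^*|_{[t_i,t_{i+1}]})\le\sqrt{-\sigma_{x_i}(x_{i+1})}$. Since $(x,y)\mapsto\sigma_x(y)$ is continuous (Remark~\ref{R25X11.1}), choosing the $\delta_i$ small enough forces $\ell(\tilde\gamma_0)=\sum_i\sqrt{-\sigma_{\tilde x_i}(\tilde x_{i+1})}\ge\sum_i\sqrt{-\sigma_{x_i}(x_{i+1})}-\eta/3\ge\ell(\gamma^*)-\eta/3\ge\ell(\gamma)-2\eta/3$. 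Finally I would round the corners of $\tilde\gamma_0$ inside small normal neighbourhoods of the $\tilde x_i$: the future timelike cone being convex, the incoming and outgoing (timelike, future-directed) tangents at each vertex interpolate through timelike future-directed vectors, so the rounded curve $\tilde\gamma$ is a genuine $C^{1,1}$ timelike curve, and confining each modification to a sufficiently small neighbourhood keeps $\ell(\tilde\gamma)\ge\ell(\tilde\gamma_0)-\eta/3\ge\ell(\gamma)-\eta$, as required.

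The main obstacle is the vertex perturbation: one must handle the coupling between consecutive junctions (a \emph{staircase}) simultaneously, and, crucially, null sub-segments carry no length and may occur at the ends of $\gamma$, where they genuinely obstruct keeping the perturbed vertices timelike related near $p$ or $q$ (indeed $I^+(p)\cap I^-(x_1)=\emptyset$ when $x_1\in\dot J^+(p)$). The preprocessing step that makes $\gamma^*$ timelike near the endpoints is exactly what removes this obstruction there, after which openness of $I^+$ (Proposition~\ref{P2}) together with the density $J^-\subset\overline{I^-}$ (Corollary~\ref{Cpushup}) lets the induction pass through the interior null segments as well; one should note that Corollary~\ref{CPushup0} is the deformation mechanism implicitly underlying the density used here.
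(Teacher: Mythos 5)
You should know at the outset that the paper itself contains no proof of this lemma: the statement is followed by a bare \qed{} (the whole Lorentzian-distance section is draft material), so your proposal cannot be compared to an existing argument, and its overall architecture --- push the curve up near the endpoints, subdivide into normal neighbourhoods, perturb the vertices, join by a broken timelike geodesic, round the corners --- is indeed the natural one, with the endpoint preprocessing correctly motivated. But there is a genuine gap at the central step. Your staircase induction, run with the \emph{global} sets $I^{\pm}$ (openness from Proposition~\ref{P2}, density from Corollary~\ref{Cpushup}), only delivers vertices satisfying $\tilde x_{i+1}\in I^{+}(\tilde x_i)$ in the global sense. You then join $\tilde x_i$ to $\tilde x_{i+1}$ by the \emph{radial} geodesic of the normal chart centred at $\tilde x_i$ and compute its length as $\sqrt{-\sigma_{\tilde x_i}(\tilde x_{i+1})}$. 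For that geodesic to be timelike --- indeed for the square root to be defined at all --- you need the \emph{local} relation $\sigma_{\tilde x_i}(\tilde x_{i+1})<0$, i.e.\ $\tilde x_{i+1}\in I^{+}(\tilde x_i;\mcO_{\tilde x_i})$, and global timelike relatedness does not imply it: on the flat Lorentzian torus of Example~\ref{Ex1} \emph{every} pair of points satisfies $q\in I^{+}(p)$, including pairs whose radial geodesic is spacelike; this is exactly the warning of Remark~\ref{RP3a}. Continuity of $(x,y)\mapsto\sigma_x(y)$ rescues you on segments where $\sigma_{x_i}(x_{i+1})<0$ strictly, but on segments with $\sigma_{x_i}(x_{i+1})=0$ --- the locally null pieces, which are precisely the case the lemma is about --- the perturbed pair can end up with $\sigma_{\tilde x_i}(\tilde x_{i+1})>0$ while still being globally timelike related, so your ``broken timelike geodesic'' may contain spacelike arcs and the length bookkeeping collapses.

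The repair is to run the very same staircase with local objects in place of global ones. Take as inductive invariant the two conditions $\sigma_{\tilde x_{i-1}}(\tilde x_i)<0$ (future pointing) and $\sigma_{\tilde x_i}(x_{i+1})<0$, and use: (i) Proposition~\ref{P3}, which guarantees that each unperturbed segment, being a causal curve contained in $\mcO_i$, gives $\sigma_{x_i}(x_{i+1})\le 0$ with $x^0(x_{i+1})\ge 0$; (ii) the joint continuity of $(x,y)\mapsto\sigma_x(y)$ (Remark~\ref{R25X11.1}), which makes the conditions in the invariant open in all arguments, so they survive small perturbations of either endpoint; and (iii) the elementary Minkowskian fact, read off in normal coordinates, that the open cone $\{\sigma_{x_i}<0,\ x^0>0\}$ accumulates at every point of $\{\sigma_{x_i}\le 0,\ x^0\ge 0\}\setminus\{x_i\}$ --- this is the local substitute for Corollary~\ref{Cpushup}, and unlike that corollary it requires no push-up lemma, only the explicit cone structure of Proposition~\ref{PC1}. (One also needs the perturbed points to stay inside uniformly-sized normal neighbourhoods of one another, which the compactness argument you already invoke from the proof of Proposition~\ref{Paccum} provides.) With the invariant localised in this way, the radial geodesics are genuinely timelike, the comparison $\ell(\gamma^{*}|_{[t_i,t_{i+1}]})\le\sqrt{-\sigma_{x_i}(x_{i+1})}$ from Proposition~\ref{Pelmax} and the continuity estimate $\sqrt{-\sigma_{\tilde x_i}(\tilde x_{i+1})}\ge\sqrt{-\sigma_{x_i}(x_{i+1})}-\eta/(3N)$ are both legitimate, and the rest of your argument --- endpoint preprocessing and corner rounding inside the convex timelike cones --- goes through unchanged.
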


 \qed

It follows from Proposition~\ref{Pncj0} in Appendix~\ref{Sstenc}
that the exponential map ceases to be a local diffeomorphism at
conjugate points. This implies, by definition, that there are no
points conjugate to the origin in elementary neighborhoods,
leading to:\ptcr{this does not need the conjugate points story and
can be proved using Gauss coordinates on a normal neighborhood so
it could be used in the proof of the previous theorem; note that there is a
serious problem with crossreferencing, as this label does not exist}

\begin{Proposition}\label{Pelmax} Let $\mcO$ be an elementary neighborhood
centred at $p$. For $q\in \mcO$ the geodesic segment from $p$ to
$q$ is the longest causal curve from $p$ to $q$ entirely contained
in $\mcO$. Equivalently, the distance function $\sigma(p,\cdot)$
within the space-time $(\mcO,g|_{\mcO})$ coincides with
$\sqrt{-\sigma_p}$ on $J^+(p,\mcO)$, and vanishes elsewhere. \qed
\end{Proposition}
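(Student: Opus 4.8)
\medskip
\noindent\textbf{Proof proposal.} The plan is to prove the first (geometric) formulation directly and to read off the distance formula from it: I would show that every future directed causal curve $\gamma$ from $p$ to $q$ whose image lies in $\mcO$ satisfies $\ell(\gamma)\le\sqrt{-\sigma_p(q)}$, and that this bound is attained by the radial geodesic $s\mapsto\exp_p(sx^\mu)$. Since the Lorentzian distance within $(\mcO,g|_\mcO)$ is the supremum of $\ell$ over such curves (and is zero off $J^+(p;\mcO)$, where no causal curve from $p$ exists), the equality $\sigma(p,\cdot)=\sqrt{-\sigma_p}$ on $J^+(p;\mcO)$ follows at once. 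That the radial geodesic has length exactly $\sqrt{-\sigma_p(q)}$ is immediate: $g(\dot\gamma,\dot\gamma)$ is constant along a geodesic and equals its value $\eta_{\mu\nu}x^\mu x^\nu=\sigma_p(q)$ at $p$, so $\ell=\int_0^1\sqrt{-\sigma_p(q)}\,ds=\sqrt{-\sigma_p(q)}$.

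The analytic core is an eikonal identity for $u:=\sqrt{-\sigma_p}$ on the open set $I^+(p;\mcO)=\{\sigma_p<0,\ x^0>0\}$, where $u$ is smooth. Writing $R:=x^\mu\partial_\mu$ for the radial field, Proposition~\ref{PC1} gives $R\parallel\nabla\sigma_p$ (it is normal to the level sets of $\sigma_p$), Euler's relation for the homogeneous function $\sigma_p=\eta_{\mu\nu}x^\mu x^\nu$ gives $R(\sigma_p)=2\sigma_p$, and the constancy computation above gives $g(R,R)=\sigma_p$; together these force $\nabla\sigma_p=2R$, hence $g(\nabla\sigma_p,\nabla\sigma_p)=4\sigma_p$. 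Consequently $g(\nabla u,\nabla u)=-1$, and $\nabla u$ is timelike past directed there, since $\nabla\sigma_p$ is timelike future directed by \eq{eq:PC1.4}.

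Given this, the length bound follows from the reverse (wrong--way) Cauchy--Schwarz inequality for causal vectors lying in a common cone: as $-\nabla u$ is future directed unit timelike and $\dot\gamma$ is future directed causal almost everywhere, one gets $g(\nabla u,\dot\gamma)\ge\sqrt{-g(\dot\gamma,\dot\gamma)}$ almost everywhere, so that
$$\ell(\gamma|_{[s',1]})=\int_{s'}^{1}\sqrt{-g(\dot\gamma,\dot\gamma)}\,ds\le\int_{s'}^{1}\frac{d(u\circ\gamma)}{ds}\,ds=u(q)-u(\gamma(s'))$$
for any $s'$ with $\gamma(s')\in I^+(p;\mcO)$. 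To push $s'$ down towards $p$, where $u$ is only continuous, I would use Corollary~\ref{CP3.1}: the portion of $\gamma$ lying on the cone $\{\sigma_p=0\}$ is a reparametrised null geodesic and so contributes zero length, while Lemma~\ref{LP3} shows that $\sigma_p\circ\gamma$ is non-increasing once it becomes negative, so that $u\circ\gamma$ is non-decreasing. Letting $s'$ decrease to the first parameter $s_0$ at which $\gamma$ enters $\{\sigma_p<0\}$ gives $u(\gamma(s'))\to 0$ and $\ell(\gamma|_{[s',1]})\to\ell(\gamma)$, whence $\ell(\gamma)\le\sqrt{-\sigma_p(q)}$. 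The boundary case $q\in\dot J^+(p;\mcO)$ is covered directly by Corollary~\ref{CP3.1}, every causal curve from $p$ to such a $q$ being a null geodesic of length $0=\sqrt{-\sigma_p(q)}$.

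The hard part will be exactly the degeneracy of $u=\sqrt{-\sigma_p}$ at the vertex $p$ and along the null cone, where its gradient blows up; the step that must be executed with care is therefore the reduction to the region $\{\sigma_p<0\}$, using that the null part of $\gamma$ has zero length (Corollary~\ref{CP3.1}) and that $\sigma_p\circ\gamma$ is monotone (Lemma~\ref{LP3}). Everything else is the standard time-function computation, which is legitimate for the $C^2$ metrics assumed here because it relies only on the differentiability and the structural properties of $\sigma_p$ recorded in Proposition~\ref{PC1}.
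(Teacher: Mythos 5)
Your proof is correct, but it follows a genuinely different route from the one the paper takes. The paper deduces Proposition~\ref{Pelmax} from the absence of conjugate points in elementary neighborhoods, invoking a proposition on the exponential map and conjugate points from an appendix (a reference which, as the author's own margin note concedes, is broken in this document, and which would in any case require the machinery of Jacobi fields and the second variation). You instead argue directly: the Gauss-lemma identity $\nabla\sigma_p=2\,x^\mu\partial_\mu$, obtained by combining the normality statement of Proposition~\ref{PC1} with Euler's relation and the constancy of $g(\dot\gamma,\dot\gamma)$ along radial geodesics, yields the eikonal equation $g(\nabla u,\nabla u)=-1$ for $u=\sqrt{-\sigma_p}$ on $I^+(p;\mcO)$; the reverse Cauchy--Schwarz inequality and integration along the (Lipschitz) curve then give the length bound, and the degenerate vertex/cone portion is disposed of by Corollary~\ref{CP3.1} (zero length on $\dot J^+(p;\mcO)$) together with the terminal property of $\{\sigma_p<0\}$ from Lemma~\ref{LP3}. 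What your approach buys: it is entirely self-contained within the causality toolkit already established in the paper (Propositions~\ref{PC1} and~\ref{P3}, Lemma~\ref{LP3}, Corollary~\ref{CP3.1}, and Rademacher's theorem for the integration step), it applies to arbitrary locally Lipschitz causal competitors without first reducing to piecewise-smooth ones, and it stays at the $C^2$ regularity level the paper tracks -- note that for $C^2$ metrics $\sigma_p$ is $C^1$, which is exactly what your chain-rule-plus-fundamental-theorem step needs, the same reasoning the paper uses in proving Proposition~\ref{P3}. Indeed, your argument is essentially the ``Gauss coordinates'' proof that the author's margin note suggests as the preferable alternative. What the paper's route buys, had the appendix existed, is contact with the general variational theory of maximising curves and conjugate points that is needed elsewhere (e.g.\ in Theorem~\ref{Tmaxgeo3}), whereas your computation is specific to the normal-neighborhood situation.
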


Let us show that accumulation curves of maximising paths are
maximising:

\begin{Proposition}
\label{Pmaxaccum} Let $\gamma_n$ be a sequence of maximising
curves accumulating at $\gamma$, then $\gamma$ is a maximising
geodesic.
 \ptcr{this is already done in a previous section, so crossrefer and not use}
\end{Proposition}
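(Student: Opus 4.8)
The plan is to prove the two inequalities that together express that $\gamma$ is maximising in the sense of \eq{maxcp}, and only then to upgrade ``maximising'' to ``geodesic'' by invoking Theorem~\ref{Tmaxgeo3}. Fix $s<s'$ in the common domain $I$. One inequality is free: since $\gamma|_{[s,s']}$ is itself a future directed causal curve from $\gamma(s)$ to $\gamma(s')$, the definition \eq{lorddef} of $d$ gives $d(\gamma(s),\gamma(s'))\ge \ell(\gamma|_{[s,s']})$. All the content lies in the reverse inequality $d(\gamma(s),\gamma(s'))\le \ell(\gamma|_{[s,s']})$.

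For that I would assemble two semicontinuity facts. By Definition~\ref{DSAc} we may pass to a subsequence with $\gamma_n\to\gamma$ uniformly on compact sets, so in particular $\gamma_n(s)\to\gamma(s)$ and $\gamma_n(s')\to\gamma(s')$. Lower semicontinuity of the Lorentzian distance (Proposition~\ref{Pro:lowsemcontlord}) then yields
$$d(\gamma(s),\gamma(s'))\le \liminf_{n\to\infty} d(\gamma_n(s),\gamma_n(s'))\;.$$
Because each $\gamma_n$ is maximising, \eq{maxcp} gives $d(\gamma_n(s),\gamma_n(s'))=\ell(\gamma_n|_{[s,s']})$. Hence it suffices to establish the \emph{upper} semicontinuity of Lorentzian length,
$$\limsup_{n\to\infty}\ell(\gamma_n|_{[s,s']})\le \ell(\gamma|_{[s,s']})\;,$$
which, combined with the previous display, closes the argument.

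The heart of the matter, and the step I expect to be the main obstacle, is exactly this upper semicontinuity under $C^0$ convergence of uniformly Lipschitz causal curves. Here I would proceed as follows. Since all the curves are uniformly Lipschitz on the compact interval $[s,s']$, their Lorentzian lengths are finite and uniformly bounded. Given $\epsilon>0$, choose a partition $s=t_0<t_1<\dots<t_N=s'$ fine enough that each sub-arc $\gamma|_{[t_i,t_{i+1}]}$ lies in an elementary neighborhood $\mcO_i$ (Definition~\ref{Delem}), and fine enough that the sum of local Lorentzian distances satisfies $\sum_i \sqrt{-\sigma_{\gamma(t_i)}(\gamma(t_{i+1}))}\le \ell(\gamma|_{[s,s']})+\epsilon$; by Proposition~\ref{Pelmax} this square root is precisely the Lorentzian distance within $\mcO_i$ between the causally related endpoints. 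By uniform convergence, for $n$ large each $\gamma_n|_{[t_i,t_{i+1}]}$ lies in $\mcO_i$ as well, so $\ell(\gamma_n|_{[t_i,t_{i+1}]})\le \sqrt{-\sigma_{\gamma_n(t_i)}(\gamma_n(t_{i+1}))}$ by definition of the local distance, and summing gives $\ell(\gamma_n|_{[s,s']})\le \sum_i \sqrt{-\sigma_{\gamma_n(t_i)}(\gamma_n(t_{i+1}))}$. Letting $n\to\infty$ and using that $(p,q)\mapsto\sigma_q(p)$ is continuous for $C^2$ metrics (Remark~\ref{R25X11.1}), the right-hand side tends to $\sum_i \sqrt{-\sigma_{\gamma(t_i)}(\gamma(t_{i+1}))}\le \ell(\gamma|_{[s,s']})+\epsilon$, and since $\epsilon$ is arbitrary the claimed upper semicontinuity follows.

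Putting the displays together gives $d(\gamma(s),\gamma(s'))\le \ell(\gamma|_{[s,s']})$ for all $s<s'$, so with the free reverse inequality $\gamma$ satisfies \eq{maxcp} and is maximising; being an accumulation curve of causal curves, $\gamma$ is itself future directed causal (Proposition~\ref{Paccum}), so Theorem~\ref{Tmaxgeo3} applies and shows $\gamma$ is a geodesic, completing the proof. The one genuinely delicate point is the partition estimate $\sum_i \sqrt{-\sigma_{\gamma(t_i)}(\gamma(t_{i+1}))}\to \ell(\gamma)$ as the mesh shrinks: the local distances are superadditive by the reverse triangle inequality \eq{revtrin} and dominate the sub-arc lengths, so the sums decrease toward an infimum that is $\ge\ell(\gamma)$, but showing the infimum equals $\ell(\gamma)$ is subtle because $\dot\gamma$ exists only almost everywhere. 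I would handle this either by first approximating $\gamma$ by $C^1$ causal curves (in the spirit of Lemma~\ref{Ldiff}) before taking Riemann sums, or by estimating the chordal quantities $\sqrt{-\sigma_{\gamma(t_i)}(\gamma(t_{i+1}))}$ directly from the almost-everywhere differentiability supplied by Rademacher's theorem (Theorem~\ref{TRademacher}); the second-order excess of each local maximising geodesic over its sub-arc is $O((t_{i+1}-t_i)^2)$, which sums to $O(\text{mesh})\to 0$.
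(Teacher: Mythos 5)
Your overall architecture is genuinely different from the paper's, and it is sound in outline: sandwich $d(\gamma(s),\gamma(s'))$ between $\liminf_n \ell(\gamma_n|_{[s,s']})$ (lower semicontinuity of $d$, Proposition~\ref{Pro:lowsemcontlord}, plus the maximising property of the $\gamma_n$'s) and $\ell(\gamma|_{[s,s']})$ (upper semicontinuity of length), then upgrade ``maximising'' to ``geodesic'' via Theorem~\ref{Tmaxgeo3}. The paper orders the steps the other way around: by Theorem~\ref{Tmaxgeo3} each $\gamma_n$ is already a geodesic, so by Proposition~\ref{Pro:inexgeo} (continuous dependence of geodesics on initial data) the convergence upgrades from $C^0$ to $C^1$ on compact sets, whence $\ell(\gamma_n)\to\ell(\gamma)$ with no semicontinuity of the length functional needed at all; the maximising property of $\gamma$ then follows by contradiction, using the deformation construction from the proof of Proposition~\ref{Pro:lowsemcontlord} to transplant a hypothetical longer curve between points of $\gamma$ into a longer curve between points of $\gamma_n$. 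This reordering is precisely what lets the paper avoid the lemma your proof hinges on.

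That lemma --- upper semicontinuity of Lorentzian arc-length under uniform convergence of uniformly Lipschitz causal curves --- is where your argument has a genuine gap. The statement is true for $C^2$ metrics, but neither of your two justifications establishes it. The claim that the excess of each chord $\sqrt{-\sigma_{\gamma(t_i)}(\gamma(t_{i+1}))}$ over the corresponding sub-arc length is $O\big((t_{i+1}-t_i)^2\big)$ is false for curves that are merely Lipschitz: in $\R^{1,1}$ take a causal curve zigzagging along null directions, so that $\ell(\gamma|_{[t_i,t_{i+1}]})=0$, while a chord straddling a corner, with parameter lengths $a$ and $b$ on either side, has Lorentzian length $2\sqrt{ab}$, which is of \emph{first} order in $t_{i+1}-t_i$; the $O(h^2)$ estimate is a $C^2$-curve statement. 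What is true at almost every parameter (at Lebesgue points of $\dot\gamma$) is only an $o(t_{i+1}-t_i)$ bound, and converting that pointwise information into the sum estimate requires a Lebesgue-differentiation and dominated-convergence argument (the chordal sum is the integral of the Minkowski norm of the block averages of $\dot\gamma$ in suitable coordinates, and these averages converge a.e.\ to $\dot\gamma$), together with a uniform comparison of $\sigma_p$ with the flat quadratic form in normal coordinates on compact sets --- none of which is in your sketch. Your alternative suggestion, approximation by $C^{1,1}$ curves ``in the spirit of Lemma~\ref{Ldiff}'', does not help either: that lemma equates suprema of length over two classes of curves joining \emph{fixed endpoints}; it does not approximate a \emph{given} curve with control of its length from above, and if you interpolate the partition points of $\gamma$ by local maximising geodesics you merely reproduce the chordal sum, so the argument becomes circular. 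In short, the fact you need is correct but unproven by your methods, and filling it in is substantially more work than your sketch suggests, whereas the paper's route (geodesic first, length convergence for free) dissolves the difficulty.
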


\proof By Theorem~\ref{Tmaxgeo3} the $\gamma_n$'s are geodesics,
so that $\gamma$ is a geodesic by Proposition~\ref{Pro:inexgeo}.
Reparameterizing if necessary, we can assume that the $\gamma_n$'s
are affinely parameterised and defined on a common compact
interval $I$. Passing to a subsequence if necessary, the
$\gamma_n$'s converge to $\gamma$ in $C^1(I,\mcM)$, so that
$\ell(\gamma_n)\to\ell(\gamma)$. Suppose that $\gamma$ is not
maximising between $p=\lim\gamma_n(s_1)$ and $q=\lim
\gamma_n(s_2)$, then there exists a causal curve $\tilde \gamma$
from $p$ to $q$ with $$r:=\ell(\tilde
\gamma)-\ell(\gamma|_{[s_1,s_2]})>0\;,$$ in particular $q\in
I^+(p)$. Since  $\ell(\gamma_n)\to\ell(\gamma)$ we have
$\ell(\gamma_n)<\ell(\gamma)-r/2$ for $n$ large enough. However,
the construction in the proof of
Proposition~\ref{Pro:lowsemcontlord} provides, for $n$ large
enough, a causal curve from $\gamma_n(s_1)$ to $\gamma_n(s_2)$
strictly longer than $\ell(\gamma)-r/2$ if we choose $\epsilon$
there to be smaller than $r/2$, contradicting the maximising
property of $\gamma_n$. \qed

We are almost ready to show that $d$ is continuous on globally
hyperbolic space-times. We start with the proof of a slightly
stronger version of Theorem~\ref{Tgh1}:
 \ptcr{note that the wording near that theorem needs changing if not done}

\begin{Theorem}\label{Tgh1n}
Let $(\mcM,g)$ be globally hyperbolic, if $q\in I^+(p)$,
respectively $q\in J^+(p)$, then there exists a timelike,
respectively causal, future directed, maximising geodesic from $p$
to $q$, on which $d(p,q)$ is attained.
\end{Theorem}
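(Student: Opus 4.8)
The plan is to realize $d(p,q)$ as the Lorentzian length of an accumulation curve of a length-maximizing sequence of causal curves, exactly in the spirit of the proof of Theorem~\ref{Tgh1}, and then to quote the structure theory of maximizers already established. As a preliminary I would record that $d(p,q)<\infty$. Since $J^+(p)\cap J^-(q)$ is compact, cover it by finitely many elementary neighborhoods $\mcO_1,\dots,\mcO_N$; strong causality together with Lemma~\ref{Lgh1.0} lets one split any causal curve from $p$ to $q$ into a number of arcs --- bounded independently of the curve --- each lying in a single $\mcO_i$, and on each such arc Proposition~\ref{Pelmax} bounds the Lorentzian length by $\sup_{\mcO_i}\sqrt{-\sigma}$. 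Hence $\ell$ is bounded on the class $\Omega(p,q)$ of causal curves from $p$ to $q$, and $d(p,q)<\infty$.

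Next I would choose $\gamma_n\in\Omega(p,q)$ with $\ell(\gamma_n)\to d(p,q)$, reparameterize by $h$-arc length, and extend them to inextendible causal curves. As they accumulate at both $p$ and $q$, Proposition~\ref{Pgh1} produces a causal accumulation curve $\gamma$ passing through $p$ and $q$; passing to a subsequence, $\gamma_{n_i}\to\gamma$ uniformly on compact sets, and I restrict attention to the sub-arc of $\gamma$ from $p$ to $q$. By definition of $d$ we have $\ell(\gamma)\le d(p,q)$, so everything reduces to proving the reverse inequality $\ell(\gamma)\ge d(p,q)$.

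This reverse inequality is the main obstacle: it is precisely \emph{upper} semicontinuity of the Lorentzian length functional under $C^0_\loc$ convergence of causal curves (note the contrast with the \emph{lower} semicontinuity of $d$ in Proposition~\ref{Pro:lowsemcontlord}). I would prove it by subdivision. Fix a partition $s_0<\cdots<s_k$ of the parameter interval of $\gamma$ fine enough that each arc $\gamma|_{[s_{j-1},s_j]}$ lies in an elementary neighborhood $\mcO_j$; by uniform convergence the corresponding arcs of $\gamma_{n_i}$ eventually lie in $\mcO_j$ as well, so Proposition~\ref{Pelmax} gives $\ell(\gamma_{n_i}|_{[s_{j-1},s_j]})\le\sqrt{-\sigma_{\gamma_{n_i}(s_{j-1})}(\gamma_{n_i}(s_j))}$. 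Summing over $j$ and letting $i\to\infty$, continuity of $(p',q')\mapsto\sigma_{q'}(p')$ --- guaranteed by the $C^2$ hypothesis, cf.\ Remark~\ref{R25X11.1} --- yields
$$\limsup_i\ell(\gamma_{n_i})\le\sum_j\sqrt{-\sigma_{\gamma(s_{j-1})}(\gamma(s_j))}\,.$$
The right-hand side dominates $\ell(\gamma)$ for every partition and, by the reverse triangle inequality for the local distance $\sqrt{-\sigma}$ (Proposition~\ref{Pelmax}), decreases monotonically to $\ell(\gamma)$ as the mesh tends to zero (a standard computation using that $\gamma$ is Lipschitz). Therefore $d(p,q)=\lim_i\ell(\gamma_{n_i})\le\ell(\gamma)$, and with the trivial bound we get $\ell(\gamma)=d(p,q)$, i.e.\ $d(p,q)$ is attained on $\gamma$.

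Finally, Proposition~\ref{Pmaxgeo} shows that a curve on which $d(p,q)$ is attained is maximizing, and Theorem~\ref{Tmaxgeo3} shows that a maximizing curve is a geodesic (in fact without conjugate points), which gives the causal statement. When $q\in I^+(p)$ there is a timelike curve of positive length from $p$ to $q$, so $d(p,q)>0$; since the causal character of a geodesic is constant and a null geodesic has zero length, a maximizing geodesic of positive length must be timelike, which yields the timelike statement and completes the argument.
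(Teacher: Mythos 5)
Your proposal follows the same skeleton as the paper's argument for Theorem~\ref{Tgh1n} --- a length-maximizing sequence of $\distb$-parameterized causal curves, an accumulation curve through $p$ and $q$ supplied by Proposition~\ref{Pgh1}, and then ``attained $\Rightarrow$ maximizing $\Rightarrow$ geodesic'' --- but you should know that the paper's own proof is left explicitly unfinished at precisely the step you work hardest on: it asserts that the accumulation curve is the longest causal curve from $p$ to $q$ without proving it (the required upper semicontinuity of $\ell$ under uniform convergence is nowhere established in the text, and is flagged in the source as an item still to be written), and the subsequent local argument trails off. So your chord-sum argument --- bounding each arc of $\gamma_{n_i}$ by the local maximal geodesic length $\sqrt{-\sigma}$, passing to the limit using continuity of $(p',q')\mapsto\sigma_{p'}(q')$, and then shrinking the mesh --- is the genuine added content, as is your preliminary proof that $d(p,q)<\infty$, which the paper also omits although it is needed to make sense of a maximizing sequence with attained limit. (Be aware also that the paper's proof of Theorem~\ref{Tmaxgeo3}, which you quote, is itself incomplete in the text; quoting the stated theorem is legitimate, but your proof inherits that dependency.)

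Two points need tightening. First, Proposition~\ref{Pelmax} concerns causal curves contained in an elementary neighborhood \emph{centred at the initial endpoint}; your arcs $\gamma_{n_i}|_{[s_{j-1},s_j]}$ lie in a fixed $\mcO_j$, not, as written, in an elementary neighborhood centred at $\gamma_{n_i}(s_{j-1})$. This is repaired by the uniform control of the size of normal neighborhoods over compact sets, which the paper already invokes in the proof of Proposition~\ref{Paccum}: for a fine enough partition and $i$ large, each arc does lie in an elementary neighborhood centred at its own left endpoint, so that $\sigma_{\gamma_{n_i}(s_{j-1})}(\gamma_{n_i}(s_j))$ is defined and the bound applies. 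Second, the parenthetical ``standard computation'' --- that the inscribed broken-geodesic lengths $\sum_j\sqrt{-\sigma_{\gamma(s_{j-1})}(\gamma(s_j))}$ tend to $\ell(\gamma)$ as the mesh tends to zero --- is exactly the upper-semicontinuity statement on which the whole proof rests, so it should be written out rather than waved at. It is true: $\sqrt{-\sigma_{p'}(q')}$ is the Minkowski chord length in normal coordinates centred at $p'$, so on each partition interval the chord length is $|I_j|$ times the Minkowski norm of the average velocity; at Lebesgue points of $\dot\gamma$ these averages converge to $\dot\gamma(s)$, dominated convergence (using the Lipschitz bound) handles the sum, and the discrepancy between $g$ and $\eta$ in the moving normal coordinates is controlled uniformly on compact sets by the $C^2$ regularity of the metric (cf.\ Remark~\ref{R25X11.1}). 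With these two repairs your argument is complete and, unlike the text's, actually closes the proof.
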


\proof Let $\gamma_n$ be any sequence of $\distb$-parameterized
causal paths from $p$ to $q$ such that $\ell(\gamma_n)\to
\dist_g(p,q)$. By Proposition~\ref{Pgh1} there exists a causal curve
$\gamma$ through $p$ and $q$ which is an accumulation curve of the
$\gamma_n$'s.

We want to  show that $\gamma$ has to be a geodesic --- this uses
the fact that $\gamma$ is the longest causal curve from $p$ to $q$.
If $q\in I^+(p)$ then $\dist_g(p,q)>0$ and the resulting geodesic
has to be timelike.\ptc{the details of this should be filled in;
perhaps choose Gauss coordinates associated with $x^0=0$, then the
local inequality is more obvious then the normal coordinates?}

Assume, first, that $\gamma$ is timelike, suppose that there exist
two points $\gamma(s_1)$ and $\gamma(s_2)$  such that
$\dist_g(\gamma(s_1),\gamma(s_2))>0$ and such that $\gamma(s_2)$
lies within the domain of definition of normal coordinates centred
on $\gamma(s_1)$. Without loss of generality we can assume that
$\gamma(s_2)=(\lambda,\vec 0)$. Then
 \ptc{the idea is to reduce this to a Minkowskian problem by the
 triangle inequality, but I do not quite see it now; to be finished}

\qed

\begin{Proposition}
\label{Pro:contlord} On globally hyperbolic space-times $d$ is
continuous, everywhere finite.
\end{Proposition}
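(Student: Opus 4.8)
The plan is to obtain continuity by combining the lower semicontinuity already established in Proposition~\ref{Pro:lowsemcontlord} with upper semicontinuity, and to dispose of everywhere-finiteness separately. Finiteness comes first. If $q\notin J^+(p)$ then $d(p,q)=0$, so suppose $q\in J^+(p)$. Every future directed causal curve from $p$ to $q$ lies in the compact diamond $K:=J^+(p)\cap J^-(q)$, and as noted in the remark following Proposition~\ref{Pgh1}, global hyperbolicity yields a constant $L=L(p,q)$ bounding the $\backmg$-length $|\gamma|_{\backmg}$ of all such curves. On the compact $K$ the continuous function $Y\mapsto\sqrt{-g(Y,Y)}$, restricted to $\backmg$-unit causal vectors, is bounded by some $C$ (this is exactly the inverse Cauchy--Schwarz estimate of Lemma~\ref{Lgh1}), whence $\ell(\gamma)\le C\,|\gamma|_{\backmg}\le CL$. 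Taking the supremum gives $d(p,q)\le CL<\infty$.

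For upper semicontinuity I would take $p_n\to p$, $q_n\to q$, and pass to a subsequence realising $\limsup_n d(p_n,q_n)$. If this value is $0$ it is trivially $\le d(p,q)$, so assume it is positive; then $d(p_n,q_n)>0$ for large $n$, which forces $q_n\in I^+(p_n)$ (a causal curve of positive Lorentzian length cannot be achronal, so it pushes up to a timelike curve). Theorem~\ref{Tgh1n} then supplies maximising future directed timelike geodesics $\gamma_n$ from $p_n$ to $q_n$ with $\ell(\gamma_n)=d(p_n,q_n)$. Choosing auxiliary points $p''\in I^-(p)$ and $q''\in I^+(q)$, openness of $I^+(p'')$ and $I^-(q'')$ gives $p_n\in I^+(p'')$ and $q_n\in I^-(q'')$ for $n$ large, so every $\gamma_n$ lies in the single compact set $K':=J^+(p'')\cap J^-(q'')$. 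The uniform $\backmg$-length bound on $K'$ permits reparameterisation of the $\gamma_n$ by $\backmg$-distance on uniformly bounded domains, and Proposition~\ref{Pgh1} produces an accumulation curve $\gamma$ through both $p$ and $q$.

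By Proposition~\ref{Pmaxaccum} this accumulation curve $\gamma$ is a maximising geodesic, so the segment of $\gamma$ from $p$ to $q$ has Lorentzian length exactly $d(p,q)$. It remains to match lengths, i.e.\ to prove $\ell(\gamma_n)\to d(p,q)$; this is the crux. I would reparameterise the $\gamma_n$ affinely on $[0,1]$ with $\gamma_n(0)=p_n$ and $\gamma_n(1)=q_n$, so that $\ell(\gamma_n)=\sqrt{-g(\dot\gamma_n(0),\dot\gamma_n(0))}$. The key is to bound the initial tangents $\dot\gamma_n(0)$ in $\backmg$: since the $\gamma_n$ are $g$-geodesics confined to $K'$, on $K'$ one has $\bigl|\tfrac{d}{ds}|\dot\gamma_n|_{\backmg}\bigr|\le C'|\dot\gamma_n|_{\backmg}^2$ (the difference $\nabla^{\backmg}-\nabla^{g}$ is a tensor, bounded on $K'$), and a Gronwall argument against the uniform bound $|\gamma_n|_{\backmg}\le L$ keeps $|\dot\gamma_n(0)|_{\backmg}$ bounded. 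Continuous dependence of geodesics on their initial data, precisely as in the proof of Proposition~\ref{Pro:inexgeo}, then gives $C^1$ convergence of a subsequence of the $\gamma_n$ to the affinely parameterised limiting geodesic from $p$ to $q$, whence $\ell(\gamma_n)\to\ell(\gamma|_{p\to q})=d(p,q)$. As every limsup-subsequence admits a further subsequence along which $d(p_n,q_n)\to d(p,q)$, we get $\limsup_n d(p_n,q_n)\le d(p,q)$, which with Proposition~\ref{Pro:lowsemcontlord} yields continuity.

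The main obstacle is exactly this length-convergence step. One must rule out that Lorentzian length is lost in passing to the limit --- equivalently, that the limiting geodesic is imprisoned in the compact $K'$, which is excluded by the uniform $\backmg$-length bound and hence by strong causality --- and one must verify that the affine-parameter endpoints of the $\gamma_n$ converge to the correct points $p$ and $q$, rather than collapsing to interior points of $\gamma$. All other ingredients (finiteness, the reduction to a fixed compact diamond, and the extraction of a maximising accumulation curve) are routine once these estimates are in hand.
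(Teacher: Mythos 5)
Your proof is correct and follows essentially the same route as the paper's: existence of maximising geodesics (Theorem~\ref{Tgh1n}), confinement in a compact causal diamond, extraction of an accumulation curve through $p$ and $q$ which is a maximising geodesic (Propositions~\ref{Pgh1} and~\ref{Pmaxaccum}), and convergence of lengths giving $d(p_n,q_n)\to d(p,q)$. The paper's own proof is a three-line sketch that delegates the compactness, endpoint, and $C^1$/length-convergence issues to Propositions~\ref{Pgh1}, \ref{Pro:inexgeo} and~\ref{Pmaxaccum}; your write-up simply makes those steps explicit, together with the everywhere-finiteness and the initial-tangent bound that the paper leaves implicit.
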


\proof Let $q_n\in I^+(p_n)$ and suppose that  $(p_n,q_n)$
converge to $(p,q)$. Let $\gamma_n$ be any maximising geodesic
from $p_n$ to $q_n$, then (passing to a subsequence if necessary),
by Proposition~\ref{Pmaxaccum}, $\gamma_n$ converges to a
maximising geodesic from $p$ to $q$ in $C^2$ topology. This
implies that $\ell(\gamma_n)\to \ell(\gamma)$, proving the result.
\qed

 \ptc{greg would like to see a proof of upper semi-continuity of the
 lorentzian arc length functional, and note that a beginning is already in the file above}

\ptcr{file NullCords, in principle complete but irrelevant without
further text}

%
%

\ptcr{here could come the file Completeness? about the Choquet-Bruhat Geroch
theorem}
 \ptcr{Conformal Completions file as an input, duplications? but it is in the
original levoca directory, not in James Grant, needs inputting eventually but
irrelevant for the moment}

}

{\noindent\sc Acknowledgements} The author is grateful to Gregory Galloway and James Grant for useful discussions.

\backmatter
\bibliographystyle{amsplain}
\bibliography{../references/hip_bib,%
../references/reffile,%
../references/newbiblio,%
../references/newbiblio2,%
../references/bibl,%
../references/howard,%
../references/bartnik,%
../references/myGR,%
../references/newbib,%
../references/Energy,%
../references/dp-BAMS,%
../references/prop2,%
../references/besse2,%
../references/netbiblio}

\def\polhk#1{\setbox0=\hbox{#1}{\ooalign{\hidewidth
  \lower1.5ex\hbox{`}\hidewidth\crcr\unhbox0}}} \def\cprime{$'$}
  \def\cprime{$'$}
\providecommand{\bysame}{\leavevmode\hbox to3em{\hrulefill}\thinspace}
\providecommand{\MR}{\relax\ifhmode\unskip\space\fi MR }
\providecommand{\MRhref}[2]{%
  \href{http://www.ams.org/mathscinet-getitem?mr=#1}{#2}
}
\providecommand{\href}[2]{#2}
\begin{thebibliography}{10}

\bibitem{AP}
F.~Antonacci and P.~Piccione, \emph{A {F}ermat principle on {L}orentzian
  manifolds and applications}, Appl.\ Math.\ Lett. \textbf{9} (1996), 91--95.

\bibitem{BahouriChemin}
H.~Bahouri and J.-Y. Chemin, \emph{{\'E}quations d'ondes quasiline\'aires et
  estimations de {S}trichartz}, Am.\ Jour.\ Math. \textbf{121} (1999),
  1337--1377.

\bibitem{BGP}
C.~B{\"a}r, N.~Ginoux, and F.~Pf{\"a}ffle, \emph{Wave equations on {L}orentzian
  manifolds and quantization}, ESI Lectures in Mathematics and Physics,
  European Mathematical Society (EMS), Z\"urich, 2007. \MR{MR2298021}

\bibitem{Beem-Ehrlich:Lorentz2}
J.~K. Beem, P.~E. Ehrlich, and K.~L. Easley, \emph{Global {Lorentzian}
  geometry}, 2 ed., Pure and Applied Mathematics, vol. 202, Marcel Dekker, New
  York, 1996.

\bibitem{BernalSanchez}
A.N. Bernal and M.~S{\'a}nchez, \emph{Smoothness of time functions and the
  metric splitting of globally hyperbolic space-times}, Commun.\ Math.\ Phys.
  \textbf{257} (2005), 43--50. \MR{MR2163568 (2006g:53105)}

\bibitem{BILY}
R.~Budic, J.~Isenberg, L.~Lindblom, and P.~Yasskin, \emph{On the determination
  of the {C}auchy surfaces from intrinsic properties}, Commun.\ Math.\ Phys.
  \textbf{61} (1978), 87--95.

\bibitem{YCB:GRbook}
Y.~Choquet-Bruhat, \emph{General relativity and the {E}instein equations},
  Oxford Mathematical Monographs, Oxford University Press, Oxford, 2009.
  \MR{MR2473363 (2010f:83001)}

\bibitem{ChoquetBruhatGeroch69}
Y.~Choquet-Bruhat and R.~Geroch, \emph{Global aspects of the {C}auchy problem
  in general relativity}, Commun.\ Math.\ Phys. \textbf{14} (1969), 329--335.
  \MR{MR0250640 (40 \#3872)}

\bibitem{SCC}
P.T. Chru\'{s}ciel, \emph{On uniqueness in the large of solutions of {E}instein
  equations (``{S}trong {C}osmic {C}ensorship'')}, Australian National
  University Press, Canberra, 1991.

\bibitem{ChDGH}
P.T. Chru\'{s}ciel, E.~Delay, G.~Galloway, and R.~Howard, \emph{Regularity of
  horizons and the area theorem}, Annales Henri Poincar\'e \textbf{2} (2001),
  109--178, arXiv:gr-qc/0001003. \MR{MR1823836 (2002e:83045)}

\bibitem{ChGrant}
P.T. Chru\'{s}ciel and J.~Grant, \emph{{On Lorentzian causality with continuous
  metrics}},  (2011), arXiv:1110.xxxx [gr-qc].

\bibitem{ChIM}
P.T. Chru\'{s}ciel, J.~Isenberg, and V.~Moncrief, \emph{Strong cosmic
  censorship in polarized {G}owdy space--times}, Class.\ Quantum Grav.
  \textbf{7} (1990), 1671--1680.

\bibitem{Clarke:optimization}
F.H. Clarke, \emph{Optimization and nonsmooth analysis}, second ed., Society
  for Industrial and Applied Mathematics (SIAM), Philadelphia, PA, 1990.

\bibitem{EvansGariepy}
L.C. Evans and R.F. Gariepy, \emph{Measure theory and fine properties of
  functions}, CRC Press, Boca Raton, FL, 1992.

\bibitem{Galloway:cauchy}
G.J. Galloway, \emph{{Some results on Cauchy surface criteria in Lorentzian
  geometry}}, Illinois Jour.\ Math. \textbf{29} (1985), 1--10.

\bibitem{Galloway:globasp}
Gregory~J. Galloway, \emph{{Some global aspects of compact space-times}},
  Arch.\ Math. \textbf{42} (1984), 168--172.

\bibitem{Geroch:singularity}
R.~Geroch, \emph{{Singularities in closed universes}}, Phys.\ Rev.\ Lett.
  \textbf{17} (1966), 445--447.

\bibitem{Geroch:topology}
\bysame, \emph{Topology in general relativity}, Jour.\ Math.\ Phys. \textbf{8}
  (1967), 782--786. \MR{MR0213139 (35 \#4004)}

\bibitem{GerochDoD}
\bysame, \emph{Domain of dependence}, Jour.\ Math. Phys. \textbf{11} (1970),
  437--449.

\bibitem{GMP}
F.~Giannoni, A.~Masiello, and P.~Piccione, \emph{A variational theory for light
  rays in stably causal {L}orentzian manifolds: regularity and multiplicity
  results}, Commun.\ Math.\ Phys. \textbf{187} (1997), 375--415.

\bibitem{GuilleminPollack}
V.~Guillemin and A.~Pollack, \emph{Differential topology}, Prentice--Hall,
  Englewood Cliffs, N.J, 1974.

\bibitem{Harris}
S.G. Harris, \emph{What is the shape of space in a spacetime?}, Differential
  geometry: geometry in mathematical physics and related topics ({L}os
  {A}ngeles, {CA}, 1990), Proc. Sympos. Pure Math., vol.~54, Amer. Math. Soc.,
  Providence, RI, 1993, pp.~287--296. \MR{1216546 (94e:53065)}

\bibitem{Hartman}
P.~Hartman, \emph{Ordinary differential equations}, J. Wiley \&\ Sons,
  Baltimore, 1973.

\bibitem{HE}
S.W. Hawking and G.F.R. Ellis, \emph{The large scale structure of space-time},
  Cambridge University Press, Cambridge, 1973, Cambridge Monographs on
  Mathematical Physics, No. 1. \MR{MR0424186 (54 \#12154)}

\bibitem{HopfRinow}
H.~Hopf and W.~Rinow, \emph{{Ueber den Begriff der vollst\"aendingen
  differentialgeometrischen Fl\"ache}}, Comment.\ Math.\ Helv. \textbf{3}
  (1931), 209--225.

\bibitem{KichenassamyRendall}
S.~Kichenassamy and A.~Rendall, \emph{Analytic description of singularities in
  {G}owdy space-times}, Class.\ Quantum Grav. \textbf{15} (1998), 1339--1355.

\bibitem{KlainermanRodnianski:r2}
S.~Klainerman and I.~Rodnianski, \emph{The causal structure of microlocalized
  rough {E}instein metrics}, Ann.\ of Math.\ (2) \textbf{161} (2005),
  1195--1243. \MR{MR2180401 (2007d:58052)}

\bibitem{KlainermanRodnianski:r1}
\bysame, \emph{Rough solutions of the {E}instein-vacuum equations}, Ann.\ of
  Math.\ (2) \textbf{161} (2005), 1143--1193. \MR{MR2180400 (2007d:58051)}

\bibitem{Kriele}
M.~Kriele, \emph{Spacetime}, Lecture Notes in Physics. New Series m:
  Monographs, vol.~59, Springer-Verlag, Berlin, 1999, Foundations of general
  relativity and differential geometry. \MR{2001g:53126}

\bibitem{KeyeMartin}
K.~Martin, \emph{Compactness of the space of causal curves}, Class.\ Quantum
  Grav. \textbf{23} (2006), 1241--1251. \MR{MR2205482}

\bibitem{Maxwell:Compact}
D.~Maxwell, \emph{Rough solutions of the {E}instein constraint equations on
  compact manifolds}, Jour.\ Hyperbolic Diff.\ Equ. \textbf{2} (2005),
  521--546, arXiv:gr-qc/0506085. \MR{MR2151120 (2006d:58027)}

\bibitem{Maxwell:Rough}
\bysame, \emph{Rough solutions of the {E}instein constraint equations}, J.
  Reine Angew. Math. \textbf{590} (2006), 1--29, arXiv:gr-qc/0405088.
  \MR{MR2208126 (2006j:58044)}

\bibitem{MilnorMorse}
J.~Milnor, \emph{Morse theory}, Annals of Mathematics Studies, vol.~51,
  Princeton Univ.\ Press, 1963.

\bibitem{MinguzziLL}
E.~Minguzzi, \emph{Chronological spacetimes without lightlike lines are stably
  causal}, Comm. Math. Phys. \textbf{288} (2009), 801--819. \MR{2504855
  (2010e:83089)}

\bibitem{MinguzziK}
\bysame, \emph{{$K$}-causality coincides with stable causality}, Comm. Math.
  Phys. \textbf{290} (2009), 239--248. \MR{2520513 (2010i:53133)}

\bibitem{MinguzziSanchez}
E.~Minguzzi and M.~S{\'a}nchez, \emph{The causal hierarchy of spacetimes},
  Recent developments in pseudo-{R}iemannian geometry, ESI Lect. Math. Phys.,
  Eur. Math. Soc., Z\"urich, 2008, pp.~299--358. \MR{2436235 (2010b:53128)}

\bibitem{MisnerTaub}
C.W. Misner and A.~Taub, \emph{A singularity-free empty universe}, Soviet.\
  Phys.\ JEPT \textbf{28} (1969), 122--133.

\bibitem{MoncriefNI}
V.~Moncrief, \emph{An integral equation for spacetime curvature in general
  relativity}, Surveys in differential geometry. {V}ol. {X}, Surv. Differ.
  Geom., vol.~10, Int. Press, Somerville, MA, 2006, pp.~109--146. \MR{2408224
  (2009h:53166)}

\bibitem{MSV}
U.~Mueller, C.~Schubert, and A.E.M. van~de Ven, \emph{{A closed formula for the
  Riemann normal coordinate expansion}}, Gen.\ Rel.\ Gravitation \textbf{31}
  (1999), 1759--1781.

\bibitem{NavarroMinguzzi}
J.J.B. Navarro and E.~Minguzzi, \emph{{The stability of global hyperbolicity}},
   (2011), arXiv:1108.5210 [gr-qc].

\bibitem{NUT}
E.~Newman, L.~Tamburino, and T.~Unti, \emph{Empty-space generalization of the
  {S}chwarzschild metric}, Jour.\ Math.\ Phys. \textbf{4} (1963), 915--923.
  \MR{MR0152345 (27 \#2325)}

\bibitem{NomizuOzeki}
K.~Nomizu and H.~Ozeki, \emph{The existence of complete {R}iemannian metrics},
  Proc.\ Amer.\ Math.\ Soc. \textbf{12} (1961), 889--891. \MR{MR0133785 (24
  \#A3610)}

\bibitem{BONeill}
B.~O'Neill, \emph{Semi-{R}iemannian geometry}, Pure and Applied Mathematics,
  vol. 103, Academic Press, New York, 1983. \MR{MR719023 (85f:53002)}

\bibitem{Penrose:singularity}
R.~Penrose, \emph{Gravitational collapse and space-time singularities}, Phys.\
  Rev.\ Lett. \textbf{14} (1965), 57--59. \MR{0172678 (30 \#2897)}

\bibitem{PenroseDiffTopo}
\bysame, \emph{Techniques of differential topology in relativity}, SIAM,
  Philadelphia, 1972, (Regional Conf.\ Series in Appl.\ Math., vol. 7).

\bibitem{Perlick}
V.~Perlick, \emph{On {F}ermat's principle in general relativity: I. {T}he
  general case}, Class.\ Quantum Grav. \textbf{7} (1990), 1319--1331.

\bibitem{Seifert}
H.J. Seifert, \emph{Smoothing and extending cosmic time functions}, Gen.\ Rel.\
  Grav. \textbf{8} (1977), 815--831. \MR{MR0484260 (58 \#4185)}

\bibitem{SmithTataru}
H.~Smith and D.~Tataru, \emph{Sharp local well-posedness results for the
  nonlinear wave equation}, Ann.\ of Math. (2) \textbf{162} (2005), 291--366.
  \MR{MR2178963 (2006k:35193)}

\bibitem{SorkinWoolgar}
R.D. Sorkin and E.~Woolgar, \emph{A causal order for space-times with {$C\sp
  0$} {L}orentzian metrics: proof of compactness of the space of causal
  curves}, Class.\ Quantum Grav. \textbf{13} (1996), 1971--1993. \MR{MR1400951
  (97e:53123)}

\bibitem{Taub}
A.H. Taub, \emph{Empty space-times admitting a three parameter group of
  motions}, Ann.\ of Math.\ (2) \textbf{53} (1951), 472--490. \MR{MR0041565
  (12,865b)}

\bibitem{Teschl}
G.~Teschl, \emph{Ordinary differential equations and dynamical systems}, 2011,
  \url{http://www.mat.univie.ac.at/~gerald/ftp/book-ode/ode.pdf}.

\bibitem{Wald:book}
R.M. Wald, \emph{General relativity}, University of Chicago Press, Chicago,
  1984.

\bibitem{WangRicci}
Q.~Wang, \emph{{On Ricci coefficients of null hypersurfaces with time foliation
  in Einstein vacuum space-time}}, arXiv:1006.5963.

\bibitem{WangCones}
\bysame, \emph{On the geometry of null cones in {E}instein-vacuum spacetimes},
  Ann.\ Inst. H. Poincar\'e Anal.\ Non Lin\'eaire \textbf{26} (2009), 285--328.
  \MR{2483823 (2011b:53172)}

\end{thebibliography}
\end{document}